\documentclass[11pt]{article}
\def\draft{1}
\usepackage{macros}

\title{Marton's conjecture in polynomial time}
\author{ 
Srinivasan Arunachalam \thanks{\href{mailto:Srinivasan.Arunachalam@ibm.com}{srinivasan.arunachalam@ibm.com}}
\\ \small IBM Research
\and
Arkopal Dutt \thanks{\href{mailto:arkopal@ibm.com}{arkopal@ibm.com} } 
\\ \small IBM Research  
\and 
Sabee Grewal \thanks{\href{mailto:sabee@ibm.com}{sabee@ibm.com} }
\\ \small IBM Research
\\ \small Columbia University
\and
Aparna Gupte \thanks{\href{mailto:agupte@mit.edu}{agupte@mit.edu.} This work was done while the author was an intern at IBM Research.}
\\ \small MIT  }

\date{}

\begin{document}

\maketitle

\begin{abstract}

Gowers, Green, Manners, and Tao (Annals '25) recently resolved Marton's polynomial Freiman--Ruzsa conjecture. 
We give an algorithmic counterpart to their result: 
given uniform sampling and membership-oracle access to a set $A \subseteq \F_2^n$ with doubling constant at most $K$, our algorithm outputs a subspace of size at most $|A|$ whose $K^{O(1)}$ translates cover $A$.
The algorithm runs in $\poly(n,K)$ time. 
As applications, we obtain polynomial-time algorithms for a variety of learning problems, including quadratic Goldreich--Levin, improper agnostic tomography of stabilizer states, and tomography of quantum states with bounded stabilizer extent. 
\end{abstract}

\newpage 
\setcounter{tocdepth}{2}
{\small \tableofcontents}

\newpage

\section{Introduction}\label{sec:intro}
In a recent breakthrough, Gowers, Green, Manners, and Tao~\cite{ggmt2025conjecture,ggmt2024torsion} proved Marton's conjecture, also known as the polynomial Freiman--Ruzsa (PFR) conjecture, first over $\F_2^n$ and subsequently over all abelian groups of bounded torsion.
The theorem characterizes, up to polynomial losses, the algebraic structure of sets $A$ with small \emph{doubling constant}.
For a nonempty set $A \subseteq \F_2^n$, define its \emph{doubling constant} as
\[
    K(A):=\frac{|A+A|}{|A|}=\frac{|\{x+y:x,y\in A\}|}{\{x:x\in A\}}.
\]
If $A$ is a subspace, then $K(A) = 1$. 
More generally, PFR\footnote{Ruzsa's bounded-torsion analog of Freiman's theorem and subsequence improvements~\cite{ruzsa1999analog,green2009freiman} already showed that a set with doubling constant $K$ is contained in a single subspace whose size is exponential in $K$ times $|A|$.
Marton~\cite{ruzsa1999analog} conjectured a substantially sharper description: rather than embedding $A$ into one exponentially larger subspace, one should be able to cover $A$ by \emph{polynomially} many translates of a subspace no larger than $A$ itself. This became known as the \emph{polynomial Freiman--Ruzsa conjecture} and was resolved, nearly three decades later, by Gowers, Green, Manners, and Tao~\cite{ggmt2025conjecture}.} asserts that a set with small doubling constant must resemble a subspace. Over characteristic $2$, which is the version relevant to this work, the theorem is as~follows:

\begin{theorem}[Polynomial Freiman--Ruzsa (PFR) theorem~\cite{ggmt2025conjecture}]
\label{thm:combinatorial_pfr}
There is a polynomial $P \colon \RR_+ \to \RR_+$ such that, for every $n\geq 1$, $K\geq 1$, and nonempty set $A\subseteq \F_2^n$ satisfying $|A+A|\leq K|A|$, there is a subspace $V\leq \F_2^n$ with $|V|\leq |A|$ such that $A$ is covered by at most $P (K)$ translates of $V$.
\end{theorem}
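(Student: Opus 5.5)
The plan is to follow the entropic route of Gowers--Green--Manners--Tao. I first pass from sets to random variables, prove an entropic version of the statement, and then translate back to sets via Ruzsa covering. For $\F_2^n$-valued random variables $X,Y$, write $d[X;Y]=H(X'+Y')-\tfrac12 H(X)-\tfrac12 H(Y)$ for the entropic Ruzsa distance, where $X',Y'$ are independent copies. If $U_A$ is uniform on $A$, then $H(U_A+U_A')\le \log|A+A|$, so $d[U_A;U_A]\le \log K$.

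The core claim is the \emph{entropic PFR}: there is an absolute constant $C$ such that for all $X,Y$ there is a subspace $H$ with $d[X;U_H]+d[Y;U_H]\le C\, d[X;Y]$. The proof is by a minimizer argument. Among all pairs $(X_1,X_2)$, minimize the functional
\[
\tau[X_1;X_2]=d[X_1;X_2]+\eta\bigl(d[X_1^0;X_1]+d[X_2^0;X_2]\bigr),
\]
where $X_1^0=X_2^0=U_A$ and $\eta$ is a small constant such as $1/9$. The goal is to show that any minimizer has $d[X_1;X_2]=0$. A pair with zero distance is (a translate of) a uniform distribution on a coset $H$, and the triangle inequality then gives $d[U_A;U_H]\le (1+1/\eta)\cdot$const$\cdot d[U_A;U_A]$.

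To get $d[X_1;X_2]=0$, take independent copies $X_1,X_2,\tilde X_1,\tilde X_2$ and consider the sums $U=X_1+X_2$, $V=\tilde X_1+X_2$, $W=X_1+\tilde X_1$. In characteristic $2$ these satisfy $U+V+W=0$. Fibring (submodularity) identities then bound the conditional distances $d[X_1+\tilde X_2;\,X_2+\tilde X_1\mid \cdot]$ and related quantities. Averaging the resulting inequalities over the three sums, and using minimality of $\tau$ for each candidate pair, forces $d[X_1;X_2]\le c\,\eta\, d[X_1;X_2]$ for some $c<1$ once $\eta$ is small, so $d[X_1;X_2]=0$. I expect this step to be the main obstacle. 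It requires the right choice of auxiliary pairs, conditioned on $U$, $V$, or $W$, so that the $\eta$-penalty terms telescope via Ruzsa distance inequalities, and then careful bookkeeping of constants. I would first try the symmetric choice $(U,V)$ conditioned on $W$, averaged with its two permutations.

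Finally, I return to sets. From $d[U_A;U_H]\le C\log K$, we get $H(U_A+U_H)\le \tfrac12\log|A|+\tfrac12\log|H|+C\log K$. A pigeonhole/Plünnecke-type step then yields an $x$ with $|A\cap (x+H)|\ge K^{-O(1)}\sqrt{|A||H|}$. The entropy comparison also gives $|H|\in [K^{-O(1)}|A|, K^{O(1)}|A|]$. Let $A'=A\cap(x+H)$. Ruzsa's covering lemma, applied with $|A+A'|\le |A+A|\le K|A|\le K^{O(1)}|A'|$, covers $A$ by $K^{O(1)}$ translates of $A'+A'\subseteq H$. If $|H|>|A|$, I replace $H$ by a subspace $V\le H$ with $|A|/2<|V|\le|A|$. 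This costs a factor $|H|/|V|\le K^{O(1)}$ in the number of translates, so $P(K)=K^{O(1)}$.
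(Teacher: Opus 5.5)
\textbf{Gap: the minimizer step.} The step you flag as the main obstacle is a genuine gap, and it is not just bookkeeping. Showing that a $\tau$-minimizer has $k:=d[X_1;X_2]=0$ is the entire content of the GGMT theorem. Your proposal gives no argument for it beyond introducing $U,V,W$ with $U+V+W=0$.

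Your concrete plan, testing minimality on $(U,V)$ conditioned on $W$, is also missing the key fibre variable. The mutual-information smallness that minimality delivers is conditional on the total sum $S:=X_1+X_2+\tilde X_1+\tilde X_2$. The argument needs the following pieces.
\begin{enumerate}
\item Apply the fibring identity to $(X_1,X_2,\tilde X_2,\tilde X_1)$. This writes $2k$ as $d[X_1+\tilde X_2;X_2+\tilde X_1]+d[X_1\mid X_1+\tilde X_2;\,X_2\mid X_2+\tilde X_1]+I[U:V\mid S]$. Testing $\tau$-minimality against these two candidate pairs then gives $I[U:V\mid S]=O(\eta k)$. The changes in the penalty terms are controlled by standard Ruzsa-distance inequalities of Kaimanovich--Vershik type.
\item A second application, to $(X_1,\tilde X_1,X_2,\tilde X_2)$, controls $I[U:W\mid S]$. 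The symmetry $X_1\leftrightarrow\tilde X_1$ then gives $I[V:W\mid S]$.
\item The endgame needs a further tool: the entropic Balog--Szemer\'edi--Gowers bound $d[T_1\mid T_3;T_2\mid T_3]\le 3I[T_1:T_2]+2\mathbf{H}[T_3]-\mathbf{H}[T_1]-\mathbf{H}[T_2]$ for $T_1+T_2+T_3=0$. It is applied to $(U,V,W)$ inside each fibre $\{S=s\}$ and averaged over the three permutations.
\item You also need explicit bounds on the resulting penalty terms, such as $d[X_1^0;U\mid S]$, in terms of $d[X_i^0;X_i]$ and $k$.
\end{enumerate}
Only after all of this does one reach $k\le c\,\eta\,k$ with an explicit absolute $c$. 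That is what forces $\eta$ to be a small constant such as $1/9$. The same structure appears in the paper's endgame update (\Cref{sec:ggmt-five}): it conditions $X_1+Y_1$ on a pairwise sum $X_1+Y_0$ \emph{and} on the total sum $X_0+X_1+Y_0+Y_1$. As written, your proposal is an outline of the known proof with its core step missing.

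\textbf{What is correct.} Your return to sets works:
\begin{itemize}
\item the heavy coset comes from $\max_q p_Q(q)\ge 2^{-\mathbf{H}[Q]}$ with $\mathbf{H}[Q]=d[U_A;U_H]+\tfrac12\log(|A|/|H|)$;
\item the size comparison comes from $|\mathbf{H}[X]-\mathbf{H}[Y]|\le 2d[X;Y]$;
\item Ruzsa covering and then passing to a subspace of size in $(|A|/2,|A|]$ finish the argument.
\end{itemize}
This is essentially what the paper does in \Cref{clm:coset-localization} and in the covering argument at the end of the proof of \Cref{thm:main}. The order differs slightly: the paper shrinks the subspace first, then finds the heavy coset by Markov's inequality.

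The paper does not prove this theorem itself. It cites GGMT, and its overview presents the GGMT argument as an iterative decrement over five update families, the quantitative counterpart of your minimizer. It also uses \Cref{thm:entropic-pfr} as a black box. So if you may cite entropic PFR the same way, your final paragraph alone is a complete derivation. If not, the minimizer argument above is what you still owe.
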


The theorem formalizes the principle that small additive growth forces algebraic structure. If $K(A)$ is small, then $A$ cannot behave like an arbitrary subset of $\F_2^n$; rather, it must be covered by only $K^{O(1)}$ translates of a subspace $V$ whose size is at most $|A|$.  We refer to such a subspace as a \emph{PFR subspace} for $A$. Thus, sets with small doubling exhibit subspace-like structure, with the extremal example being a subspace itself. 

PFR and related additive-combinatorial structure theorems have played an important role throughout theoretical computer science, with applications to property testing~\cite{samorodnitsky2007low}, pseudorandomness~\cite{zewi2011affine}, communication complexity~\cite{ben2014additive}, coding theory~\cite{bhowmick2013new,aggarwal2014non}, worst-case-to-average-case reductions~\cite{asadi2022worst,asadi2024quantum}, sparsification~\cite{bedert2025strong}, and quantum computing~\cite{ad2024tolerant,arunachalam2026tomography,bao2025tolerant,mehraban2024improved,hinsche2025clifford,SCforstabilizers}. In these applications, PFR is used as a structural theorem: it guarantees the existence of a subspace with the desired additive structure.

From an algorithmic perspective, the PFR theorem leaves open a natural question. 
The theorem guarantees that a small-doubling set is governed by a low-complexity algebraic object---a single subspace and polynomially many
of its translates---but the statement is purely \emph{existential}.  In the settings we consider, the set $A\subseteq\F_2^n$ may be exponentially large and therefore cannot be given explicitly. 
We instead assume two natural forms of oracle access: uniform samples from $A$ and a membership oracle that, on input $x \in \F_2^n$, determines whether $x \in A$.
Such implicit access models are standard through theoretical computer science when the underlying object is too large to enumerate: sampling reveals typical elements of $A$, while membership queries allow the algorithm to probe points of its choice. Thus, the model gives succinct access to $A$
 without assuming an explicit representation of the set.
This leads to the algorithmic analogue of the PFR problem:
\begin{center}
    \emph{Given uniform sample access as well as membership oracle access to a set $A \subseteq \FF_2^n$ \\
    with doubling constant at most $K$, can we find a PFR subspace for $A$ in time $\poly(n, K)$?}
\end{center}

\subsection{Main result}
Our main result answers this question in the affirmative. 

\begin{theorem}[Algorithmic PFR]
\label{thm:main}
Let $0<\delta<1$, and let $A\subseteq \F_2^n$ be nonempty and satisfy $|A+A|\leq K|A|$. Given uniform sampling and membership-oracle access to $A$, there is a randomized algorithm that, with probability at least $1-\delta$, outputs a basis for a subspace $V\leq \F_2^n$ such that $|V| \le |A|$ and $A$ can be covered by at most $K^{O(1)}$ translates of $V$.
The algorithm uses $\poly(n,K,\log(1/\delta))$ time, samples, and membership queries.
\end{theorem}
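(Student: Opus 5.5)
We make the entropy-decrement argument of Gowers--Green--Manners--Tao constructive. In that proof one tracks a pair of $\F_2^n$-valued random variables $(X_1,X_2)$, starting from two independent uniform copies of $A$. The key quantity is the entropic Ruzsa distance $d[X_1;X_2]$ plus a small penalty $\eta\,(d[X_1;U_A]+d[X_2;U_A])$. At each step one either has $d[X_1;X_2]=0$, in which case $X_1$ is uniform on a coset of a subgroup $H$, or one finds a new pair $(X_1',X_2')$ on which the functional drops by a constant factor. Each new pair is one of finitely many explicit candidates: sums $X_1+X_2$ or $X_1+\tilde X_1$ of independent copies, conditioned on values of other sums. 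Our plan is to never write these distributions down. Instead we represent every intermediate variable by a \emph{sampler}, which draws independent samples of $A$ and combines them by additions and conditioning. After $O(\log K)$ decrement steps the terminal subgroup $H$ has $|H|$ within $K^{O(1)}$ of $|A|$ and covers $A$ with $K^{O(1)}$ translates, by the covering-lemma part of the combinatorial proof. We then trim $H$ to a subspace $V$ with $|V|\le|A|$ by taking a random subspace of the right codimension.

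\textbf{Step 1: represent the intermediate variables.} Conditioning can be exponentially expensive, so we first restrict the kinds of conditioning allowed. Following the $\F_2$ proof, each variable has the form $X=\sum_i Y_i$ conditioned on $\sum_{i\in S} Y_i = z$, where the $Y_i$ are independent copies of earlier variables. We sample from it by rejection sampling. For a typical $z$, the conditioning event has probability at least $2^{-H(\cdot)}$ divided by a $K^{O(1)}$ factor, because all the entropies involved are within $O(\log K)$ of $\log|A|$. Since the recursion depth is $O(\log K)$ and each level only blows up the cost polynomially in $K$, the total cost stays $\poly(n,K)$. We also need rejection sampling to reach a typical $z$, and we get this by choosing $z$ itself from the sampled marginal.

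\textbf{Step 2: choose the decrement step.} At each step we must decide which candidate pair actually decreases the functional. For this we estimate entropies of our sampled variables to additive accuracy $\epsilon=1/\poly(K)$. Every variable is supported on a set of size at most $K^{O(1)}|A|$, and this is where the membership oracle enters: it lets us compute the min-entropy and collision probability of a variable relative to $A$. Concretely, we estimate $\Pr[X=X']$ from collisions among pairs of samples. We also estimate the ratios $\Pr[X=x]\cdot|A|$ by importance sampling, which replaces Shannon entropy by a collision-entropy surrogate. The decrement argument can be rerun with Rényi-2 entropies, or it is robust to additive errors of $1/\poly(K)$. We then pick any candidate whose estimated functional drops by $\Omega(1)$ times the current value. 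Once $d[X_1;X_2]$ falls below a small constant, we stop and extract $H$.

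\textbf{Step 3: recover the subspace.} When the distance is small, the combinatorial proof converts the distribution into a subgroup via the Balog--Szemerédi--Gowers approximate-group step: $H$ is the set of popular differences $x-y$ with $x,y\sim X_1$. We recover a basis for $H$ by repeatedly sampling differences $x+y$. We keep only those differences that pass a popularity test, estimated by sampling; this test certifies that a difference lies in the popular set. The span of polynomially many such samples equals the span of the popular set with high probability, because each new sample escapes any proper subspace that captures too little mass. Finally we verify the covering number by sampling $a\sim A$, checking $\dim V$, and greedily collecting coset representatives.

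\textbf{Main obstacle.} The hardest part is Step 2. The GGMT argument is stated with exact Shannon entropies, and these are intractable to estimate for distributions with exponentially large support. I would handle this by proving a robust version of their key inequality: a decrement that survives additive estimation errors of $1/\poly(K)$, with entropies estimated through $\log|A|$ and collision probabilities. I would likely also need to apply the membership oracle to $A$ to normalize, since $|A|$ itself is unknown. If that approach failed, the fallback would be an explicit union-bound argument over the $O(\log K)$ steps.
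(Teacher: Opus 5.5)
Your proposal has genuine gaps in Steps 1 and 2, and both Step 3 and your final check depend on them.

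\textbf{Step 1: a sampler alone cannot implement conditioning.} Suppose an intermediate variable is available only through a sampler. Conditioning on a point event $\{\sum_{i\in S}Y_i=z\}$ by drawing the $Y_i$ and checking the sum accepts with probability about $2^{-\mathbf{H}}$. The relevant entropies are close to $\log|A|$, so this is roughly $K^{-O(1)}/|A|$. That is exponentially small in $n$ whenever $A$ is large, not $K^{-O(1)}$. The missing idea is a second access primitive, a \emph{density coin}. On input $x$ it outputs $1$ with probability $p_X(x)/M_X$, for some envelope $M_X\geq\|p_X\|_\infty$ whose numerical value is never needed. The membership oracle is such a coin for $U_A$, with $M=1/|A|$. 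Coins propagate exactly:
\begin{itemize}
\item For a sum, $\mathsf{Coin}_{X+Y}(z)$ draws $x\sim X$ and returns $\mathsf{Coin}_Y(z+x)$.
\item For a fiber $X\mid X+Y=z$, you draw $x\sim X$ and accept via $\mathsf{Coin}_Y(z+x)$. The acceptance rate is $p_{X+Y}(z)/M_Y$, which is typically $2^{-O(d+\Delta)}$, where $\Delta[Y]=\mathbf{H}[Y]+\log M_Y$ is the slack. The fiber's own coin is $\mathsf{Coin}_X(x)\cdot\mathsf{Coin}_Y(z+x)$.
\end{itemize}
Slack adds under fibering ($\mathbb{E}_z\Delta[X_z]=\Delta[X]+\Delta[Y]$), so you must also regularize. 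The paper conditions each new variable on the dyadic scale $\lfloor\log N\rfloor$ of the coin's first success time $N$. This restores $O(1)$ expected slack while revealing only $O(\log\Delta)$ bits, a loss dominated by the $\Omega(d)$ decrement above a constant threshold.

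\textbf{Your cost count is also wrong.} The GGMT decrement is by a constant factor, so $T=O(\log\log K)$ steps suffice. Conversely, $O(\log K)$ levels each blowing up the cost by $K^{O(1)}$ would give $K^{O(\log K)}$, not a polynomial. What actually makes the total $K^{O(1)}$ is that the level-$t$ overhead is $2^{O(d_t)}$ with $\sum_t d_t=O(\log K)$.

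\textbf{The coin is also what makes your Step 3 popularity test possible.} The probability $p_{X+X'}(z)$ is about $1/|H|$ and cannot be estimated from samples. The ratio $p_{X+X'}(z)/M_X$, by contrast, is a constant for heavy $z$.

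\textbf{Step 2 cannot be carried out as proposed, and the paper sidesteps it.} Several obstacles block it:
\begin{itemize}
\item Entropies of distributions with support of size about $|A|$ cannot be estimated to additive accuracy from $\mathrm{poly}(n,K)$ samples. Already for $U_A$ the collision probability is $1/|A|$, so a collision appears only after about $\sqrt{|A|}$ draws.
\item The membership oracle answers questions about $A$, not about the intermediate variables.
\item Even with coin access, $\mathbf{H}[X]=\mathbb{E}_{x\sim X}\log(M_X/p_X(x))-\log M_X$ is hidden behind the unknown $\log M_X$.
\item The GGMT argument relies on the chain rule and submodularity, which R\'enyi-2 entropy does not satisfy.
\end{itemize}
The fix is not to choose at all. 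At each of the $T$ steps, apply one of the five GGMT families uniformly at random, with fiber labels drawn from their natural laws. The decrement theorem plus Markov's inequality makes each step good with constant probability, so a whole trajectory is good with probability $\Omega(1)^{T}=1/\mathrm{polylog}(K)$.

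\textbf{Random choice forces a sound verifier for amplification, and yours does not certify $|V|\le|A|$.} ``Checking $\dim V$'' and greedily collecting representatives fail because $|A|$ is unknown. The paper instead proceeds as follows. It shrinks the extracted $V$ by $O(\log K)$ dimensions to $V'$ and takes $a\sim U_A$. It estimates $\alpha=|A\cap(a+V')|/|A|$ by sampling $A$, and $\beta=|A\cap(a+V')|/|V'|$ by sampling $a+V'$ and querying membership. It accepts when $\widehat\alpha\ge K^{-O(1)}$ and $\widehat\beta\ge\frac32\widehat\alpha$. Since $|V'|=|A|\alpha/\beta$, acceptance certifies $|V'|<|A|$. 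Covering by $K^{O(1)}$ translates then follows from Ruzsa's covering argument without ever being checked. Repeating $\mathrm{polylog}(K)\log(1/\delta)$ budgeted trials yields the theorem.
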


\paragraph{Prior work.}
The algorithmic study of PFR was initiated recently in~\cite{arunachalam2026classical,castro2026algorithmic}, motivated in part by its role in quantum learning~\cite{ad2024tolerant,SCforstabilizers}. These works gave classical and quantum algorithms for recovering PFR structure with running time $\poly(n,2^K)$. 
The underlying ideas arose through a somewhat indirect route: from a quantum algorithm for agnostically learning stabilizer states~\cite{chen2025stabilizer}, through its subsequent dequantization~\cite{briet2026near}, and ultimately to classical algorithms for recovering additive structure~\cite{arunachalam2026classical}. 
In additive combinatorics, often $K$ is viewed as being ``small" (say constant or logarithmic) in terms of the ambient dimension, in which case their algorithm is efficient. The exponential dependence on $K$, however, is a significant limitation when the doubling constant grows with the problem size; which happens to be the case for several applications of interest, including those considered here, $K$ may itself be polynomially related to the underlying parameters. This work improves the dependence on $K$ from exponential to polynomial.

\paragraph{Our approach.} 
Conceptually, our approach differs substantially from the previous algorithmic PFR results.
Rather than reaching PFR through quantum learning-theoretic machinery, we return directly to the GGMT proof and show how to turn its entropic argument into an efficient algorithm.  Several new ideas are necessary for algorithmizing the existential proof. The GGMT proof follows an iterative argument that starts out with the uniform distribution over the set $A$, and repeatedly transforms the distribution in order to eventually be close to the uniform distribution on a PFR subspace for $A$. This is done by making choices and comparing quantities appear difficult to compute efficiently, and the intermediate distributions may have exponentially large support and therefore cannot be held explicitly.
Moreover, even if these distributions can be represented implicitly, one must ensure that the cost of accessing them does not grow prohibitively over the course of the recursion.

The main technical contribution of this work is to overcome these obstacles while preserving polynomial dependence on both $n$
and $K$. 
\cref{sec:tecoverview} gives a self-contained overview of the GGMT argument and explains how each ingredient is implemented algorithmically.

\subsection{Applications}

Beyond giving an efficient algorithmic form of PFR, our result can be used as a general-purpose primitive for recovering additive structure inside larger
algorithms. We illustrate this through two groups of applications. First, on the classical side, we obtain a polynomial-time quadratic Goldreich--Levin algorithm and, consequently, related learning problems. 
Second, we prove a robust variant of algorithmic PFR and use it to obtain polynomial-time
algorithms for several problems in quantum learning,
including agnostic tomography of stabilizer states, estimating stabilizer fidelity, and tomography of states with bounded stabilizer extent.

\subsubsection{Quadratic Goldreich-Levin}
A fundamental algorithmic primitive in Fourier analysis is the \emph{Goldreich-Levin} algorithm~\cite{goldreich1989hard}. 
Given query access to a bounded function $f:\F_2^n\to[-1,1]$ and a threshold $\tau>0$, the algorithm efficiently finds all linear characters $x\mapsto (-1)^{\langle \alpha,x\rangle}$
having correlation at least $\tau$ with $f$. Equivalently, it finds the large Fourier coefficients of $f$ in time polynomial in $n$ and $1/\tau$. 
Goldreich--Levin and its variants have become basic tools in coding theory~\cite{goldreich1989hard,TrevisanCoding}, computational learning theory~\cite{KM93}, pseudorandomness and cryptography~\cite{goldreich1989hard,HILL99}, and complexity theory.

There is a useful higher-order perspective on this result. 
The $U^2$ Gowers norm measures linear structure, and its inverse theorem states that
\[
    \|f\|_{U^2}\ \text{large}
    \qquad\Longrightarrow\qquad
    \left|\E_x f(x)(-1)^{\langle \alpha,x\rangle}\right|
    \ \text{large}
\]
for some $\alpha\in\F_2^n$. 
From this viewpoint, Goldreich--Levin is an
\emph{algorithmic inverse theorem for the $U^2$ norm}: 
it not only guarantees the existence of a correlated linear phase, but efficiently finds one.

The next level of this hierarchy is the $U^3$ norm, whose associated structured objects are quadratic phases. 
In a seminal work, Samorodnitsky~\cite{samorodnitsky2007low} proved, assuming the PFR conjecture,
an inverse theorem of the form
\[
    \|f\|_{U^3}\ \text{large}
    \qquad\Longrightarrow\qquad
    \left|\E_x f(x)(-1)^{q(x)}\right|
    \ \text{large}
\]
for some quadratic polynomial $q:\F_2^n\to\F_2$. 
The resolution of PFR therefore gives an unconditional polynomial inverse theorem for the $U^3$ norm.

This raises the corresponding algorithmic question: can one efficiently find the correlated quadratic phase promised by the inverse theorem? 
Tulsiani and Wolf~\cite{tulsiani2014quadratic} initiated the study of this \emph{quadratic Goldreich--Levin problem} and gave an algorithm using $O(n^4)$ queries, but with running time $2^n$. 
More recently, Bri\"et and Castro~\cite{briet2026near}
improved the query complexity to $O(n^2)$ and obtained
$n^{O(\log n)}$ running time. 
Using our algorithmic PFR theorem, we obtain polynomial dependence on both parameters.\footnote{The term \emph{quadratic Goldreich--Levin} is used in the literature for several closely related guarantees. Tulsiani and Wolf~\cite{tulsiani2014quadratic} study a multiplicative approximation to the maximum quadratic correlation, whereas Bri\"et and Castro~\cite{briet2026near} obtain an additive approximation. Our result is in the former regime.}

\begin{theorem}[Quadratic Goldreich--Levin]
\label{thm:quadratic-gl}
Let $f:\F_2^n\to[-1,1]$, let
$\varepsilon,\delta>0$. Let
\[
\tau\coloneqq   \max_{\substack{q:\F_2^n\to\F_2\\ q\ \mathrm{quadratic}}}
    \left|
        \E_{x\in\F_2^n} [f(x)(-1)^{q(x)}]
    \right|
\]
There is an algorithm that, given query
access to $f$, runs in time
\[
    \poly\!\left(n,1/\tau,\log1/\delta\right)
\]
and, with probability at least $1-\delta$, outputs a quadratic polynomial
$q:\F_2^n\to\F_2$ satisfying
\[
    \left|
        \E_{x\in\F_2^n} [f(x)(-1)^{q(x)}]
    \right|
    \geq
   \tau^C,
\]
for some universal constant $C>1$. 
\end{theorem}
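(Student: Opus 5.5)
We follow the Samorodnitsky / Green–Tao route to the $U^3$ inverse theorem and replace its single existential step, the application of PFR, by the algorithm of \cref{thm:main}.

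\textbf{Step 1: large $U^3$ norm.} Since $|\E_x f(x)(-1)^{q(x)}|\le\|f\|_{U^3}$ for every quadratic $q$, we have $\|f\|_{U^3}\ge\tau$. Hence $\E_h\|\Delta_h f\|_{U^2}^4\ge\tau^8$, where $\Delta_h f(x)=f(x)f(x+h)$.

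\textbf{Step 2: a large set of derivatives with linear structure.} By Cauchy--Schwarz and $\sum_\alpha\widehat{g}(\alpha)^2\le 1$, a $\tau^{O(1)}$ fraction of directions $h$ have some $\alpha$ with $|\widehat{\Delta_h f}(\alpha)|\ge\tau^{O(1)}$. For a random $h$, we find such coefficients with the ordinary Goldreich--Levin algorithm applied to $\Delta_h f$, which we can query using two queries to $f$. We define $\phi(h)$ as a random choice among the $\poly(1/\tau)$ heavy coefficients it returns. The standard Balog--Szemerédi--Gowers argument shows that the graph $\Gamma=\{(h,\phi(h))\}$ has $\tau^{-O(1)}2^n$ additive quadruples, i.e.\ $h_1+h_2=h_3+h_4$ with $\phi(h_1)+\phi(h_2)=\phi(h_3)+\phi(h_4)$.

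\textbf{Step 3: an algorithmic BSG step.} We next need a subset $A'\subseteq\Gamma\subseteq\F_2^{2n}$ of density $\tau^{O(1)}$ with $|A'+A'|\le\tau^{-O(1)}|A'|$, together with sampling and membership access to it. This is the step I expect to be the main obstacle, because \cref{thm:main} needs exact uniform samples and an exact membership oracle, while $\phi$ is randomized and $A'$ is defined implicitly. My plan is to use the algorithmic BSG of Tulsiani--Wolf, which defines $A'$ by a path-counting condition: $x\in A'$ iff $x$ has many length-$3$ paths to a fixed random anchor. Membership is then tested by sampling, and uniform samples come from rejection sampling over $\Gamma$. To make membership deterministic, I would fix the randomness of Goldreich--Levin as part of the description of $\phi$, make $\phi$ canonical (for example, the lexicographically smallest coefficient above a threshold chosen at random), and use random thresholds in the path-counting test so that only a $\tau^{O(1)}$ fraction of points are ever borderline. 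Alternatively, this is exactly where the robust variant of algorithmic PFR mentioned in the introduction would be used: it tolerates approximate oracles, and I would invoke it if it is available.

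\textbf{Step 4: apply algorithmic PFR.} Running \cref{thm:main} on $A'$ gives a subspace $V\le\F_2^{2n}$ with $|V|\le|A'|\le 2^n$ and $\tau^{-O(1)}$ translates covering $A'$. By pigeonhole, one translate $c+V$ contains a $\tau^{O(1)}$ fraction of $A'$. Projecting $V\cap(c+V$-part$)$ onto the first coordinate, as in Samorodnitsky, yields a linear map $M$ and a vector $b$ such that $\phi(h)=Mh+b$ on a $\tau^{O(1)}$ fraction of $h$. We can compute $M$ and $b$ from the basis of $V$ by linear algebra, sampling points of $A'\cap(c+V)$ to identify the good translate.

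\textbf{Step 5: from linear derivative structure to a quadratic phase.} Following Green--Tao/Samorodnitsky, symmetrization lets us assume $M$ is symmetric up to a $\tau^{O(1)}$ loss and write $M=B+B^T$. A Cauchy--Schwarz argument then shows that $f\cdot(-1)^{x^TBx}$ has a Fourier coefficient of size $\tau^{O(1)}$. We find it with Goldreich--Levin, obtaining $q(x)=x^TBx+\langle\beta,x\rangle$. Finally, we estimate the correlation of each candidate $q$ by sampling, repeat $\log(1/\delta)$ times, and output the best candidate. All of these steps run in $\poly(n,1/\tau,\log 1/\delta)$ time.
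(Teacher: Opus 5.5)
Your architecture is the same as the paper's. Your Steps 2--3 correspond to \Cref{lem:qgl-pfr-input}: Goldreich--Levin on the derivatives, a random choice function $\phi$, and the Tulsiani--Wolf algorithmic BSG. These are followed by \Cref{thm:main}, the linearization of \Cref{clm:qgl-linearization}, Samorodnitsky's symmetrization, and Goldreich--Levin on $f$ times a quadratic phase.

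The gap is in Step 3, where you expected it. Your mechanism does not produce what \Cref{thm:main} requires, namely an exact membership oracle for one fixed small-doubling set. Having only a $\tau^{O(1)}$ \emph{fraction} of borderline points is not enough, for two reasons. First, the PFR algorithm queries membership adaptively at points such as $z+x$ built from fiber labels. Their distribution need not resemble the uniform distribution on the graph, so a small fraction of inconsistently answered points does not bound the chance of querying one. Second, a few stray points destroy small doubling. If $A_0\subseteq W_0$ for a subspace $W_0$, and $S$ contains $A_0$ plus $m$ points in distinct nontrivial cosets of $W_0$, then $|S+S|\geq m|A_0|$. So misclassified points must be $\poly(1/\tau)$ in \emph{number}, not in fraction.

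The paper closes this with a sandwich argument:
\begin{itemize}
\item Tulsiani--Wolf provide sets $A^-\subseteq A^+$ inside the graph of $\phi$, with $|A^-|\geq\tau^{O(1)}2^n$ and $|A^++A^+|\leq\tau^{-O(1)}2^n$.
\item They also provide a tester that accepts each point of $A^-$ and rejects each graph point outside $A^+$, except with probability $2^{-n-2}$ per point. Reaching this error level costs only $O(n)$ repetitions.
\item Fix the tester's random tape once, and let $S$ be the set of graph points it accepts. A union bound over all $2^n$ graph points gives $A^-\subseteq S\subseteq A^+$ with probability at least $3/4$ (on the Tulsiani--Wolf good event).
\item Hence $|S|\geq|A^-|$ and $|S+S|\leq|A^++A^+|\leq\tau^{-O(1)}|S|$.
\end{itemize}
Borderline points then lie in $A^+$ and are harmless however many there are, so random thresholds are unnecessary. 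Rejection sampling over uniform $x$ gives exactly uniform samples from $S$, because $S$ is a graph.

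A few smaller points.
\begin{itemize}
\item Falling back on the robust variant does not help. \Cref{thm:approxmain} relaxes only the sampler, to an $R$-uniform distribution, and still needs an exact membership oracle; exact samples are available here anyway.
\item Do not replace the random choice of $\phi(h)$ by a canonical rule such as the lexicographically smallest heavy coefficient. The additive-quadruple bound in Step 2 holds in expectation over independent random choices, and a deterministic rule has no such guarantee. Fix the randomness instead, e.g.\ by sampling lazily and caching.
\item That bound should read $\tau^{O(1)}2^{3n}$, not $\tau^{-O(1)}2^n$.
\item In Step 4 you need not recover $b$ or identify the good translate. By \Cref{clm:removing-affine-offset}, $\E_x|\widehat{f_x}(Tx+b)|^2\leq\E_x|\widehat{f_x}(Tx)|^2$, so the linear part computed from a basis of $V$ alone suffices.
\item The Tulsiani--Wolf and symmetrization ingredients are stated for $\pm1$-valued $f$. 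The paper therefore proves the Boolean case first and reaches $[-1,1]$-valued $f$ by randomized rounding (\Cref{lem:boolean-rounding}).
\end{itemize}
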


As in the work of Tulsiani and Wolf, a quadratic Goldreich--Levin algorithm also yields an efficient quadratic decomposition theorem: a bounded function can be decomposed into a polynomial number of quadratic phases together with error terms that are small in $U^3$ and $L_1$.

\begin{corollary}
Let $\varepsilon\geq 0$ and $f:\F_2^n\rightarrow [-1,1]$. Given query access to $f$, there is a $\poly(n,1/\varepsilon)$-time algorithm that 
outputs with probability \(\geq 2/3\) a decomposition
\[
    f
    =
    c_1(-1)^{p_1(\cdot)}
    + \cdots
    + c_r(-1)^{p_r(\cdot)}
    + g + h,
\]
where the \(c_i\) are constants, \(p_i\)s are quadratic polynomials, 
\(r\leq \poly(1/\epsilon)\), $  \|g\|_{U^3}\leq \epsilon$ and $    \|h\|_1\leq \epsilon.$
\end{corollary}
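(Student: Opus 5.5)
We prove the corollary with the standard energy-increment (greedy) argument of Tulsiani and Wolf, using \cref{thm:quadratic-gl} as the subroutine that finds a correlating quadratic phase. We maintain a current approximation $F_t=\sum_{i\le t} c_i(-1)^{p_i}$ and the residual $f-F_t$. At each step we first decide whether $\|f-F_t\|_{U^3}$ is already small. We can estimate $\|f-F_t\|_{U^3}^8$ to additive accuracy $\poly(\varepsilon)$ by sampling: it is an average over $(x,h_1,h_2,h_3)$ of products of eight values of $f-F_t$, and each such value can be computed with $1$ query to $f$ plus $t$ evaluations of quadratics. If the estimate shows $\|f-F_t\|_{U^3}$ is large, the inverse $U^3$ theorem (Samorodnitsky plus GGMT, i.e.\ \cref{thm:combinatorial_pfr}) guarantees a quadratic $q$ with correlation $\tau\ge \poly(\varepsilon)$. \cref{thm:quadratic-gl} then finds a $q$ with correlation at least $\tau^C=\poly(\varepsilon)$ in $\poly(n,1/\varepsilon)$ time.

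Boundedness is the one technical wrinkle: \cref{thm:quadratic-gl} requires a function with range in $[-1,1]$, but $f-F_t$ is not bounded. Tulsiani--Wolf resolve this by working with a truncated or projected version, and we follow them. Instead of an unconstrained $L_2$ projection, we use the iteration of the regularity / dense model style proof, in which the approximant is kept bounded, e.g.\ $F_t=\mathrm{trunc}_{[-1,1]}\bigl(\sum_i c_i(-1)^{p_i}\bigr)$ with $c_i=\pm\gamma$ for a small step size $\gamma=\poly(\varepsilon)$. The residual then lies in $[-2,2]$, and we apply \cref{thm:quadratic-gl} to $(f-F_t)/2$. The potential is $\|f-\sum_i c_i(-1)^{p_i}\|_2^2$, or the Bregman-style potential used in the boosting proof of the regularity lemma. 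Adding $\gamma\,\mathrm{sign}(\langle f-F_t,(-1)^q\rangle)(-1)^{q}$ decreases the potential by at least $\Omega(\gamma\,\poly(\varepsilon)-\gamma^2)$. So after $r\le\poly(1/\varepsilon)$ steps the $U^3$ test must pass, which bounds $r$.

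At termination we set $g=f-F_r$, which has $\|g\|_{U^3}\le\varepsilon$. If truncation is used, we set $h=F_r-\sum_i c_i(-1)^{p_i}$ as the truncation error. Its $L_1$ norm is controlled in the standard way: the untruncated sum stays close to $[-1,1]$ in $L_2$ when the potential is monotone, and otherwise we bound it by a separate $L_2$-projection step. If needed, we instead take the Tulsiani--Wolf route, where $h$ comes from the $\|\cdot\|_{U^3}^*$ vs.\ $L_1$ duality (a Hahn--Banach decomposition). For the failure probability, each of the $\poly(1/\varepsilon)$ calls to \cref{thm:quadratic-gl} and each sampling estimate is run with failure probability $\delta=\poly(\varepsilon)/10$, and a union bound gives overall success probability $2/3$.

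The main obstacle is getting the $L_1$ error term $h$ with only polynomial loss while keeping the residual bounded so \cref{thm:quadratic-gl} applies. My first attempt is the truncated-boosting potential argument above. If it fails to give $\|h\|_1\le\varepsilon$, I would fall back on the Tulsiani--Wolf decomposition, which yields the $g+h$ split with $\poly$ dependence once the quadratic Goldreich--Levin subroutine is polynomial.
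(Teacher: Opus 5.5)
Your overall plan is the right one: a greedy energy increment, with the quadratic Goldreich--Levin algorithm used as a weak learner on a bounded function. But the step you flag as the main obstacle is a genuine gap, and with your choice of truncation the claim is false.

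\textbf{Where the argument breaks.} Write $S_t=\sum_{i\le t}c_i(-1)^{p_i}$ and $F_t=\mathrm{trunc}_{[-1,1]}(S_t)$. The subroutine only certifies correlation with $f-F_t$.
\begin{itemize}
\item The decrement of $\|f-S_t\|_2^2$ involves $\langle f-S_t,(-1)^q\rangle$. This differs from the certified quantity by up to $\|F_t-S_t\|_1$, which is exactly the term you are trying to bound, so that argument is circular.
\item The Bregman (Huber) potential in $S_t$ does decrease, but it does not penalize overshoot in the direction of $f$ where $|f|=1$.
\end{itemize}

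\textbf{A counterexample.} Take $f\equiv1$, and consider a run in which every added phase has correlation $\ge\poly(\varepsilon)$ with $f-F_t$; this is all your analysis uses.
\begin{itemize}
\item First add $\gamma(-1)^{x_1x_2}$ a few times, reaching $S=a(-1)^{x_1x_2}$ for a constant $a<1/2$. The certified correlation is $\tfrac12-a$.
\item Then add the constant phase repeatedly, so $S=a(-1)^{x_1x_2}+b$. Its correlation with $f-F_t$ is $\E[1-\mathrm{trunc}(S)]$, which stays positive until $b\approx 1+a$. For $b\ge 1-a$ it is even the best response, since then $f-F_t=(1+a-b)\mathbf{1}[x_1x_2=1]$.
\end{itemize}
The run ends with $f-F_r\approx 0$, so your $U^3$ test passes. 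But $S_r\approx 1+2a$ on the three quarters of the cube where $x_1x_2=0$, so $\|h\|_1\approx 3a/2=\Omega(1)$. Also, $\|f-S_t\|_2^2$ is not monotone along this run.

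\textbf{The fallbacks do not close the gap.} A ``separate $L_2$-projection'' onto $[-1,1]$-valued functions is just truncation again, and it destroys the linear-combination form. The Hahn--Banach decomposition is existential. So your fallback amounts to a pointer to Tulsiani--Wolf. To be fair, that pointer is also all the paper itself offers for this corollary.

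\textbf{The missing idea.} Truncate the residual rather than the approximant, and at height $B:=1/\varepsilon$ rather than $1$.
\begin{itemize}
\item Let $\psi_B(u)=u^2/2$ for $|u|\le B$ and $\psi_B(u)=B|u|-B^2/2$ for $|u|>B$. Put $\Psi(S)=\E[\psi_B(f-S)]$ and $g_S=\mathrm{trunc}_{[-B,B]}(f-S)$.
\item Since $\psi_B'=\mathrm{trunc}_{[-B,B]}$ is $1$-Lipschitz, $\E[g_S(-1)^q]\ge\tau$ implies that the step $S\mapsto S+\tau(-1)^q$ lowers $\Psi$ by at least $\tau^2/2$.
\item Because $0\le\Psi\le\Psi(0)=\E[f^2]/2\le 1/2$, there are at most $\tau^{-2}$ steps.
\item Run \Cref{thm:qgl} on the $[-1,1]$-valued function $g_S/B$ with parameter $\varepsilon/B=\varepsilon^2$. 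Each query costs one query to $f$ plus $r$ quadratic evaluations.
\item Soundness gives $\tau=\Omega(\varepsilon^{2c_2-1})$ for every accepted step. When the algorithm returns $\bot$, completeness gives $\|g_S\|_{U^3}<\varepsilon$ with high probability.
\item Output $g:=g_S$ and $h:=f-S-g_S$. Since $\psi_B(u)\ge B|u|/2$ for $|u|>B$,
\[
\|h\|_1\le\E\bigl[|f-S|\,\mathbf{1}[|f-S|>B]\bigr]\le\frac{2\Psi(S)}{B}\le\varepsilon .
\]
\end{itemize}
Take $c_i=\tau$, so $r\le\tau^{-2}=\poly(1/\varepsilon)$, and set the failure probability to $\tau^2/10$ per call. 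A union bound then gives success probability $2/3$ in $\poly(n,1/\varepsilon)$ time.
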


Furthermore, \cref{thm:main} also has other learning-theoretic consequences. 
Since quadratic polynomials over $\F_2$ are precisely the codewords of the second-order Reed--Muller code $\mathsf{RM}(n,2)$, it gives a polynomial-time improper agnostic learner for $\mathsf{RM}(n,2)$.\footnote{An open question is whether one can obtain a \emph{proper} agnostic learner for $\mathsf{RM}(n,2)$, as well as for stabilizer states.} 
In addition, standard reductions from agnostic learning~\cite{kearns1992toward,feldman2009distribution} yield polynomial-time PAC learners for low-weight linear thresholds of quadratic phases.

\subsubsection{Quantum applications}

The study of learning near-stabilizer structure has developed through a sequence of works, including stabilizer learning and estimation~\cite{montanaro-bell-sampling,gross2021schur,grewal2023improved}, tomography of states with high stabilizer dimension~\cite{grewal2023efficient,Leone2024learningtdoped}, and agnostic tomography of stabilizer states~\cite{Grewal2026agnostictomography,chen2025stabilizer}.

More recently, Arunachalam and Dutt~\cite{ad2024tolerant,SCforstabilizers,arunachalam2026tomography} uncovered a surprising connection between these quantum learning problems and additive combinatorics, showing that efficient algorithmic PFR would imply efficient algorithms for several of them. 
Their reductions, however, require a slightly more robust form of algorithmic PFR than the one stated in \cref{thm:main}: instead of receiving perfectly uniform samples from the underlying small-doubling set, the algorithm must tolerate samples drawn from a distribution that is only approximately uniform. We show that our algorithm extends to precisely this setting, and this allows us to make the resulting quantum algorithms fully polynomial-time.

\paragraph{Approximate algorithmic PFR theorem.} 

Recall that \cref{thm:main} assumes uniform sampling and membership-oracle access to a small-doubling set $A$. For the quantum applications, the sampling distribution arising from the reduction need not be exactly uniform. We therefore consider a distribution $\mu_A$ supported on $A$ that is $R$-uniform, meaning that
\[
\frac{1}{R|A|}\leq \mu_A(x)\leq \frac{R}{|A|}
\]
for every $x \in A$.
We prove that algorithmic PFR remains efficient under this weaker sampling model: given membership-oracle access to $A$ and samples from an $R$-uniform distribution on $A$, one can recover a PFR subspace in $\poly\left(n,K,R,\log\frac1\delta\right)$ time.

The main idea is to \emph{uniformize} the samples before invoking our PFR algorithm. We define a lazy random walk on $A$ whose stationary distribution is uniform on $A$. The upper bound in the $R$-uniformity condition gives a warm start for this walk, while the lower bound, together with the small-doubling hypothesis, yields a polynomial lower bound on its conductance. A warm-start mixing bound of Lov\'asz and Simonovits then implies that the walk reaches a distribution arbitrarily close to uniform in $\poly(K,R,1/\varepsilon)$ steps. Replacing each uniform sample required by \cref{thm:main} with an independently uniformized sample gives the desired robust version of algorithmic PFR.

\paragraph{Consequences for quantum learning.} 
We now combine approximate algorithmic PFR with the reductions developed in~\cite{SCforstabilizers,arunachalam2026tomography}. For a pure state $\ket{\psi}$, define its \emph{stabilizer fidelity} by
\[
\stabfidelity{\ket{\psi}}=\max_{\ket{\phi}\in \textsf{Stab}}\{|\langle \phi|\psi\rangle|^2\},
\]
where $\textsf{Stab}$ is the class of $n$-qubit stabilizer states.
We obtain the following consequences.

\begin{enumerate}
    \item \textbf{Self-correction.}
    Given copies of a pure state $\ket{\psi}$ satisfying   $\stabfidelity{\ket{\psi}}\geq \tau$,
        there is a $\poly(n,1/\tau)$-time algorithm that outputs a stabilizer state $\ket{\phi}$ satisfying
    \[
        |\langle\phi|\psi\rangle|^2\geq \tau^C
    \]
    for some universal constant $C > 1$.
    \item \textbf{Improper agnostic tomography of stabilizer states.}
    Combining self-correction with the boosting framework of~\cite{arunachalam2026tomography}
    gives a polynomial-time improper agnostic tomography algorithm for stabilizer states. 
    Given copies of an arbitrary pure state $\ket{\psi}$, the algorithm outputs an efficiently representable state $|\widehat \psi\rangle$ satisfying 
    \[
        |\langle\widehat{\psi}|\psi\rangle|^2
        \geq   \stabfidelity{\ket{\psi}}-\epsilon
    \]
    in time polynomial in $n$ and $1/\varepsilon$.

    \item \textbf{Tomography of bounded-stabilizer-extent states.}
    Combining self-correction with the extent-boosting framework of~\cite{arunachalam2026tomography} gives a tomography algorithm for pure states of bounded stabilizer extent. If \(\ket{\psi}\) has stabilizer extent at most \(\xi\), then it can be learned to trace-distance error \(\varepsilon\) using $\poly(n,\xi,1/\varepsilon)$ time and copies.
   \item \textbf{Improper agnostic tomography of high stabilizer-dimension states.} 
   Prior work~\cite{SCforstabilizers} showed that an efficient algorithmic PFR theorem implies self-correction for states of high stabilizer dimension.\footnote{An $n$-qubit stabilizer state has stabilizer dimension $k$ if it is stabilized by an Abelian group of $2^k$ Paulis.}. 
   Concretely, if $\ket\psi$ has fidelity at least $\tau$ with a state of stabilizer dimension $n-t$, then in time $\poly(n, 2^t, 1/\tau)$ we can find a stabilizer state $\ket{\phi}$ satisfying 
   \[
   |\langle \phi | \psi \rangle|^2 \geq \poly(2^{-t}\tau).
   \]
   Combined with the boosting framework of~\cite{arunachalam2026tomography}, this gives a $\poly(n,2^t,1/\varepsilon)$-time improper agnostic tomography algorithm for states with stabilizer dimension $n-t$. This extends the polynomial-time algorithm of~\cite{grewal2023efficient} to the agnostic setting, improving on the previously known quasipolynomial-time algorithm even for $t=O(\log n)$~\cite{chen2025stabilizer}.
   
   \item \textbf{Learning Clifford structure.} 
   The reduction of~\cite{dutt2026learning} turns stabilizer self-correction into an efficient learner for Clifford structure. 
    In particular, given query access to a unitary $U$ of Clifford fidelity at least $\tau$, we can efficiently find a unitary $V$ satisfying
    $|\langle\!\langle V|U \rangle\!\rangle|^2 \geq \poly(\tau)$.
    Together with boosting~\cite{arunachalam2026tomography,dutt2026learning}, this yields tomography algorithms for unitaries and Hamiltonians of bounded Clifford extent $\xi$, running in $\poly(n,\xi,1/\varepsilon)$ time.\footnote{We write $\choiket{U}\coloneqq(U\otimes I)\ket{\mathrm{EPR}_n}$, where $\ket{\mathrm{EPR}_n}$ consists of $n$ EPR pairs.}
   
\end{enumerate}

In each case, the previous reductions were conditional on an efficient robust form of algorithmic PFR, or consequently inherited a quasipolynomial dependence on the relevant accuracy parameter. Our approximate algorithmic PFR theorem supplies the missing polynomial-time primitive and yields polynomial dependence on these parameters.

\paragraph{Acknowledgments.} AG thanks Seyoon Ragavan for useful discussions, and the UK AISI alignment project for supporting access to ChatGPT Pro. AG is supported in part by a Simons investigator award and NSF grant CNS-2534400. SG is supported by the Herman Goldstine Memorial Postdoctoral Fellowship. SA and AD thank Jop Briet,  Davi Castro Silva, and Tom Gur for several discussions surrounding this topic in the past.

\paragraph{AI Disclosure.} 
The algorithm and proof strategy underlying the main theorem were suggested by ChatGPT 5.6 Sol. 
We found it crucial to prompt the model specifically to study the GGMT proof and search for a way to algorithmize its argument; more open-ended attempts to solve the problem without this direction were not successful. 
Subsequent interactions with AI were used to refine and clarify the proof. 
The authors independently verified, developed, and simplified the argument presented here and take full responsibility for its correctness, exposition, and attribution.

\section{Technical overview}
\label{sec:tecoverview}
To motivate the algorithm underlying \cref{thm:main}, we first revisit the proof of the PFR theorem due to Gowers, Green, Manners and Tao~\cite{ggmt2024torsion,ggmt2025conjecture} (\Cref{sec:tech-overview-ggmt}). \Cref{sec:tech-overview-algo} then discusses the main ideas necessary to algorithmize the existential proof.

\subsection{The GGMT existential argument}
\label{sec:tech-overview-ggmt}
\paragraph{An entropic formulation of the PFR theorem.} The starting point of the proof by \cite{ggmt2025conjecture} is an entropic formulation of the PFR theorem, that deals with \emph{random variables} instead of sets.
Throughout, all random variables take values in
$\F_2^n$. Given random variables $X$ and $Y$, 
let $X'$ and $Y'$ be independent copies of $X$ and $Y$, respectively. For a random variable $X$, its entropy is $\entropy[X] \coloneqq \sum_{x\in \FF_2^n}\Pr[X=x]\log \frac{1}{\Pr[X=x]}$.
The \emph{entropic Ruzsa distance}\footnote{Traditional Ruzsa calculus studies the \emph{combinatorial} Ruzsa distance $\textsf{d}_{\rm comb} [A;B] := \log |A+B| - \frac12 \log |A| - \frac12 \log |B|$ between \emph{sets} $A, B$. The entropic Ruzsa distance was first introduced by \cite{tao2010sumset}. It turns out that sets and the combinatorial Ruzsa distance are too rigid for the iterative argument in \cite{ggmt2025conjecture} to go through. A key proof idea of \cite{ggmt2025conjecture} is to instead work with \emph{distributions} over sets and the entropic Ruzsa distance, to capture more flexible and fine-grained information.} between $X$ and $Y$ is defined as 
$$
    \ruzsadist[X;Y]
    \coloneqq 
    \entropy [X'+Y']
    -\frac12 \entropy [X']
    -\frac12 \entropy[Y'].
$$
The entropic Ruzsa distance may be viewed as a measure of entropy growth under convolution. It is not a true metric---in particular, the self-distance $\ruzsadist[X;X]$ measures the entropy growth produced by adding two independent samples from the same distribution, and need not be $0$. 
It vanishes precisely when $X$ is uniformly distributed on an affine subspace of $G$.
Thus, small entropic self-distance indicates that $X$ has approximate additive structure.

To connect the distributional notion with the usual combinatorial notion of small doubling, let $A\subseteq \FF_2^n$ be nonempty and let $U_A$ denote the uniform distribution on $A$.  
If $|A+A|\leq K|A|$, then, for an independent
copy $U_A'$ of $U_A$, 
$$
    \ruzsadist[U_A;U_A]
    =
    \entropy[U_A+U_A']-\entropy[U_A]
    \leq
    \log |A+A|-\log |A|
    \leq
    \log K,
$$
using the trivial bound of $\entropy[D] \leq \log \supp(D)$. 
Hence, a set with doubling constant at most $K$ gives rise to a distribution with entropic self-distance at most $\log K$. 
The entropic formulation of PFR established by \cite{ggmt2025conjecture} states that for every $\FF_2^n$-valued random variable $X$, there is a subspace $H\leq \FF_2^n$ such that
$$
\ruzsadist[X;U_H]
=
O\left(\ruzsadist[X;X]\right),
$$
where $U_H$ denotes the uniform distribution on $H$. In this language, the PFR theorem is an \emph{inverse theorem} for entropic Ruzsa distance: a distribution with small self-distance must be close, in entropic Ruzsa distance, to the uniform distribution on a subspace.  See \Cref{thm:entropic-pfr} for the formal statement.

\paragraph{An entropy-decrement argument.} 
The proof of Gowers, Green, Manners, and Tao~\cite{ggmt2025conjecture} is an iterative entropy-decrement argument. Starting from 
$$ X^{(0)}=Y^{(0)}=U_A, \qquad \ruzsadist[X^{(0)};Y^{(0)}]\leq\log K,$$
the aim is to repeatedly replace the current pair by one with smaller entropic Ruzsa distance, eventually reaching the structured regime. 
At the same time, one must control how far the distributions move along the way, otherwise the structured distribution found at the end could have little to do with the original distribution $U_A$.
To balance these two objectives, given a current pair $(X,Y)$, one measures the quality of a candidate update $(X', Y')$ using the potential
$$
    \Phi_{X,Y}(X',Y')
    \coloneqq
    \ruzsadist[X';Y']
    +
    \eta\left(
        \ruzsadist[X;X']
        +
        \ruzsadist[Y;Y']
    \right),
$$
where $\eta>0$ is a sufficiently small absolute constant. 
The first term measures how much additive complexity remains in the new pair, 
while 
the latter two terms penalize moving too far from the current pair. 
Thus, a good update must both reduce the entropic Ruzsa distance and remain quantitatively connected to the distributions from which it arose.

The central decrement statement is as follows: whenever $\ruzsadist[X;Y]$ is above a fixed absolute constant, GGMT showed that there are five explicit update families to the random variables $(X,Y)$, one of which produces a pair $(X',Y')$ satisfying
\[
 \mathbb E\bigl[
    \Phi_{X,Y}(X',Y')\bigr]
    \leq
    (1-\Omega(1))\cdot \ruzsadist[X;Y].
\]
Some of these updates involve randomness, and the expectation is over this randomness. All the five update families are built from two basic operations that we define shortly: \emph{summing} and \emph{fibering}.
In a bit more detail, the five updates consist of self-sums, cross-sums, self-fibers, mixed fibers, and an ``endgame'' transformation consisting of first fibering and then summing.

Summing replaces the current variables $(X,Y)$ by self-sums $(X+X', Y+Y')$ or cross-sums $(X+Y, X'+Y')$. The structured distributions we seek are \emph{stable under addition}: if $U_H$ is uniform on a subspace $H$, then $U_H + U_H \sim U_H$.
For certain kinds of distributions resembling $U_H$, summing can achieve a large decrement in the entropic Ruzsa distance. For example, uniform distributions over dense random subsets $A \subseteq H$ that are contained within a subspace $H$, must remain within $H$, but expand to take up more room inside $H$, become closer to $U_H$ and therefore decreasing the entropic Ruzsa distance.

However, there are other random variables for which summing \emph{increases} the entropic Ruzsa distance. For example, consider (a uniform distribution over) the union of a small number of unevenly spaced cosets of a subspace $H \le \FF_2^n$. Summing is a bad idea here---it approximately doubles the entropic Rusza distance. Luckily, for such distributions, our second operation, \emph{fibering}, effectively filters out a small number of cosets, making the distance drop to $O(1)$. Fibering can be thought of as ``quotienting'' operation that isolates algebraic structure in a way that complementary to summing: it restricts attention to preimages giving rise to a particular sum. Concretely, given a pair $(X,Y)$, fibering outputs the conditional distribution $X| (X+Y) = z$, for a randomly sampled $z \sim X+Y$.
We refer to $z$ as the \emph{fiber label}. 
The randomness in the decrement statement above comes from sampling these fiber labels according to their natural distribution.

Summing and fibering therefore provide two complementary ways to make progress. Summing studies global behavior of the current distributions under addition, whereas fibering searches locally for a structured component hidden inside an additive level set. Miraculously, \cite{ggmt2025conjecture} prove that these operations are sufficient.
Four of five candidate updates arise directly from these operations: self-sums, cross-sums, self-fibers, and mixed fibers. 
Roughly speaking, either one of the relevant sums already exhibits lower additive complexity or one of its fibers reveals a more structured conditional distribution. The fifth, or \emph{endgame}, update handles the case in which none of the four simpler transformations gives sufficient progress. Its analysis exploits additional structure forced by this failure; importantly for us, the resulting update can still be implemented using only the same primitives, namely two fiber operations followed by a sum.

So the high-level idea of GGMT is, starting from
$$
    X_0=Y_0=U_A,
    \qquad
    \ruzsadist[X_0;Y_0]\leq \log K,
$$
each favorable transformation decreases the entropic Ruzsa distance by a constant factor.  
After \(O(\log\log K)\) favorable steps, the process reaches a pair at constant entropic Ruzsa distance, while the movement penalties ensure that the total distance traveled from \(U_A\) remains \(O(\log K)\). For our purposes, the essential takeaway is that each step comes from five transformations built from sums and fibers, one of which makes substantial progress. 

The next subsection explains how to implement this recursion efficiently without computing the relevant entropies or identifying which transformation is favorable.

\subsection{Our algorithmic approach}
\label{sec:tech-overview-algo}
The exposition above suggests a natural algorithmic strategy.
Recall that the algorithm is given uniform sample access to an unknown set $A \subseteq \F_2^n$ with doubling constant at most $K$, together with a membership oracle for $A$. 
Starting from $X_0=Y_0=U_A$, we would like to follow the GGMT recursion, repeatedly applying transformations that decrease the potential until the entropic Ruzsa distance becomes constant. 
Since each favorable transformation reduces the distance by a constant factor, only $O(\log\log K)$ successful transformations are needed.  Turning this strategy into an efficient algorithm presents three main difficulties.
\begin{enumerate}
    \item  The GGMT proof identifies a favorable transformation through entropy inequalities, but estimating the relevant entropies of the intermediate distributions appears to be computationally infeasible.
    \item The intermediate random variables are defined recursively through sums and fibers, and we need an efficient way to sample from and manipulate these distributions without ever writing down their probability mass function, which could have exponential support.
    \item Once the recursion reaches constant entropic Ruzsa distance, we need to recover an explicit subspace from oracle access to the terminal distribution. 
\end{enumerate}
We address each of these difficulties in turn.

\paragraph{Random traversal of the GGMT tree.} 
We begin with the first difficulty. 
For a current pair $(X,Y)$, the decrement theorem of GGMT guarantees that one of the five GGMT update families has small average potential. 
Determining which family achieves the decrement, by estimating $\Phi_{X,Y}(X',Y')$, would require estimating entropies of distributions that may have exponentially large support. This appears computationally infeasible in general.\footnote{At the root of the recursion, the self-distance of $U_A$ can in fact be estimated efficiently in $\poly(n,K)$ time using uniform samples and membership queries. It is unclear how to extend this approach to the intermediate, generally nonuniform distributions, for which no analogous membership oracle is available.}

Our algorithm therefore \emph{never computes the entropy, entropic Ruzsa distance, or the potential.}
Instead, at every step it chooses one of the five candidate update families uniformly at random, and samples any required fiber labels from their natural distributions.
The decrement theorem of \cite{ggmt2025conjecture,ggmt2024torsion} shows that a random step has a constant positive probability of producing a constant-factor
decrement.
Consequently, a sequence of $O(\log \log K)$ favorable random steps reaches the constant-distance regime with probability $\exp\bigl(-O(\log\log K)\bigr) = 1/{\operatorname{polylog} K}$.

\paragraph{Implicit access to intermediate distributions.} 
The second difficulty is maintaining algorithmic access to the random variables encountered along this traversal. 
Sampling access alone is not sufficient, since it does not allow us to continue the traversal.
For example, suppose we wish to implement the fiber $X_z \coloneqq X \mid X+Y=z$.
Writing $p_X$ and $p_Y$ for the probability mass functions of $X$ and $Y$, we have 
\[
    \Pr[X_z=x]
    =
    \Pr[X=x \mid X+Y=z]
    =
    \frac{p_X(x)p_Y(z-x)}{p_{X+Y}(z)}.
\]
A natural way to sample from $X_z$ is by rejection sampling: repeatedly sample $x \sim X$ and accept with probability proportional to $p_Y(z-x)$.
A sampler for $Y$, however, gives no way to implement such an acceptance probability at a specified point. 

We therefore augment sampling access with precisely the additional primitive needed for this rejection sampler as above: for every intermediate random variable $X$, we maintain not only an exact sampler $\Sample_X$, but also a procedure $\Coin_X$ which, on input $x \in \F_2^n$, samples and outputs a bit satisfying
\[
\Pr[\Coin_X(x) = 1] = \frac{p_X(x)}{M_X}, 
\]
where $M_X \ge \norm{p_X}_\infty$ is an implicit normalization we call the \emph{envelope}. 
Importantly, the envelope need not beknown to the algorithm.
This fact turns out to be crucial in ensuring efficiency.
Thus, $\Coin_X$ allows us to accept $x$ with probability proportional to its probability mass $p_X(x)$, without ever explicitly evaluating $p_X(x)$.
The pair $(\Sample_X, \Coin_X)$ provides our needed algorithmic access to $X$.

Importantly, this access is available at the root of the recursion directly from the input. For $X^{(0)}, Y^{(0)} = U_A$, the sampler $\Sample_{U_A}$ is given by the hypothesis, and the membership oracle serves as $\Coin_{U_A}$, since we may take $M_{U_A} = {1} / {|A|}$.
Although $|A|$ is unknown to the algorithm, the corresponding coin satisfies 
\[
\Pr[\Coin_{U_A}(x)=1]
=
\frac{p_{U_A}(x)}{M_{U_A}}.
\]
Hence the membership oracle for $A$ implements $\Coin_{U_A}$ exactly.

This form of access can be propagated exactly through both of the basic GGMT operations, summing and fibering.
For a sum $Z = X+Y$, an exact sample is obtained by drawing $x \sim X$ and $y\sim Y$ independently and returning $x+y$.
On input $z$, we may implement $\Coin_Z(z)$ by sampling $x \sim X$ and returning the outcome of $\Coin_Y(z-x)$, since 
\[
\Pr[\Coin_Z(z)=1]
= \sum_x p_X(x)\frac{p_Y(z-x)}{M_Y}
= \frac{p_Z(z)}{M_Y}.
\]
Thus, this is a valid coin for $Z$ with implicit normalization $M_Z=M_Y$.

For a fiber $X_z = X \mid X+Y =z$, we repeatedly sample $x \sim X$ until $\Coin_Y(z-x)$ accepts, and return the first accepted sample. 
The probability that a particular $x$ is returned is proportional to $p_X(x) p_Y(z-x)$, so the returned sample is distributed exactly according to $X_z$.
A coin for $X_z$ can be implemented by independently running $\Coin_X(x)$ and $\Coin_Y(z-x)$ and accepting if both succeed. Its acceptance probability is 
\[
\frac{p_X(x)p_Y(z-x)}{M_XM_Y}
= \frac{p_{X_z}(x)} {M_XM_Y/p_{X+Y}(z)},
\]
so it is again of the required form.
Consequently, starting only from the sample and membership oracle for $A$, we can recursively construct exact algorithmic access to every intermediate random variable appearing in the GGMT recursion.

\paragraph{Bucketing to maintain efficient access.}
We have explained how to maintain the algorithmic access needed for every intermediate random variable appearing in the GGMT recursion. 
We have to ensure that this access can be implemented efficiently.  The main difficulty comes from fibering. 
If $Z=X+Y$, then for a fiber label $z$, each sample $x \sim X$ is accepted by the rejection sampler with probability ${p_Z(z)}/{M_Y}$.
Thus, the expected number of samples needed to obtain one sample from $X_z$ is ${M_Y}/{p_Z(z)}$. To track how these rejection-sampling costs evolve through the recursion, we define the \emph{slack} of our algorithmic access to $X$ by 
\[
\Delta[X]
\coloneqq
\entropy[X]+\log M_X
=
\mathbb E_{x\sim X}
\left[
\log\frac{1}{r_X(x)}
\right], 
\]
where we set $r_X(x) \coloneqq \Pr[\Coin_X(x) = 1]$ for convenience.\footnote{Note that the slack $\Delta[X]$ depends on the particular envelope $M_X$ we use in the algorithmic representation of $X$, and, contrary to what the notation suggests, is not determined by $X$ alone.}
Thus, $\Delta[X]$ measures the typical logarithmic inverse acceptance probability of $\Coin_X$ on a sample $x \sim X$. 
Small slack means that $\Coin_X$ has reasonably large acceptance probability on a typical sample from $X$.

The efficiency of our representations can be quantified through the slack and the entropic Ruzsa distance. We start out with no slack,
\begin{align*}
    \Delta[U_A] = \entropy[U_A] + \log \frac{1}{|A|} = 0.
\end{align*}
Although the root distribution has zero slack, sums and fibers can increase it.
For sums, a direct calculation gives 
\[
\Delta[X+Y]
\leq
\Delta[Y]+2\der{X}{Y},
\]
and for a fiber $X_Z = X \mid X+Y=Z$, 
\[
\E_Z[\Delta[X_Z]] = \Delta[X]+\Delta[Y].
\]
This is problematic, because for a fixed fiber label $z$, the rejection sampler requires $\tfrac{M_Y}{p_Z(z)}$ trials in expectation, and 
\[
\E_{z\sim Z} \left[ \log\frac{M_Y}{p_Z(z)} \right] \leq \Delta[Y]+2\der{X}{Y}.
\]
Thus, controlling the slack also controls the typical cost of sampling fibers.

To address this, we insert a \emph{bucketing} step after each GGMT transformation.\footnote{Our bucketing step is analogous to the probability bucketing used in distribution testing~\cite{batu2001testing,chakraborty2010newresultsquantumproperty}, where the domain is partitioned into ranges of comparable probability mass so that the conditional distribution on each bucket is approximately flat.} 
Informally, bucketing conditions the current random variable on an approximate level set of the probability mass function, so that retained points have comparable values of $r_X(x)$.
Crucially, the bucketing can itself be implemented using only the sampler and coin access described above. 
Given a random variable $Z$, we sample $z \sim Z$ and repeatedly run $\Coin_Z(z)$. 
If the first success occurs on trial $T$, we record $J \coloneqq \lfloor \log_2 T \rfloor$ and condition on the value of $J$. 
Since $T$ is concentrated around $1/r_Z(z)$, fixing $J=j$ effectively restricts $Z$ to a probability scale on which $r_Z(z)$ is roughly $2^{-j}$.
The resulting conditional law $Z_J$ again admits exact sample-and-coin access, and the key guarantee we show is $\E_J[\Delta[Z_J]] = O(1)$.
Thus, bucketing regularly restores constant slack and prevents the rejection-sampling costs from accumulating throughout the recursion. 

There is one additional concern: conditioning the intermediate distributions could interfere with the entropy decrement that drives the GGMT argument. 
Fortunately, conditioning a random variable $Z$ by another random variable $J$ can increase the entropic Ruzsa distance only by their mutual information, and bucketing reveals only 
\[
\mathbf{I}[Z;J]=
O\bigl(\log\Delta[Z]\bigr)
\]
bits of information.\footnote{Strictly speaking, bounds such as $O(\log\Delta[Z])$ should be read as $O(\log(\Delta[Z]+2))$, so that the expression remains well behaved when $\Delta[Z]$ is small. We suppress such harmless additive constants inside logarithms here and throughout this paper.}
At entropic Ruzsa distance $d$, the relevant slack is $O(d)$, so bucketing incurs only an $O(\log d)$ loss, whereas a good GGMT step yields a decrement of order $\Omega(d)$. 
Above a sufficiently large constant threshold, the decrement therefore dominates the cost of bucketing.

\paragraph{The total cost of a GGMT trajectory.}
Above a fixed constant threshold $d_{\rm term}$, a random GGMT update followed by bucketing has constant probability of both achieving the desired potential decrement and producing a new pair with constant slack and controlled access cost. 
Once the distance falls below $d_{\rm term}$, the same analysis gives a constant-probability update that keeps the distance, movement penalties, and slack bounded by absolute constants. 
We therefore run the recursion for a fixed $T=O(\log\log K)$ iterations without attempting to detect when the terminal regime is reached. 
With probability $1/\polylog K$, the resulting trajectory is good, and its terminal variable satisfies
\[
\der{X_T}{X_T}=O(1),\qquad
\der{U_A}{X_T}=O(\log(2K)),\qquad
\Delta[X_T]=O(1).
\]

It remains to bound the cost of accessing $X_T$.
Let $d_t \coloneqq \ruzsadist[X_t, Y_t]$.
On a good trajectory, the rejection-sampling bounds above imply that the sampler-and-coin cost grows from level $t$ to level $t+1$ by at most a factor $2^{O(d_t)}$.
Hence, the cumulative overhead is
\[
\prod_{t<T}2^{O(d_t)}
=2^{O(\sum_{t<T}d_t)}.
\]
Before reaching the terminal regime, the distances decrease by a constant factor from $d_0 \le \log K$; afterward, they remain $O(1)$. 
Since $T= O(\log\log K)$, we have $\sum_{t<T}d_t = O(\log K)$ an therefore the total overhead is 
\[
2^{O(\log K)}=K^{O(1)}.
\]
Thus, along a good trajectory, both constructing the terminal access procedures and calling them have polynomial expected cost (\Cref{claim:cost_good_trajectories}).

\paragraph{Extracting a candidate subspace.} 
It remains to extract a PFR subspace from the terminal random variable $X_T$. 
A naive approach would be to sample from $X_T$ and take the span, but even a few rare outliers can make this span much too large. 
Instead, we look at the self-sum $X_T + X_T'$. 
A value $z$ that occurs with large probability in this self-sum has many likely representations $z = x + x'$, and therefore captures an additive direction that is genuinely present in the high-mass part of $X_T$. 
We use $\Coin_{X_T + X_T'}$ to identify these \emph{heavy self-sums} and take the span of $O(n)$ independent heavy samples.

The constant self-distance and slack of $X_T$ ensure that heavy self-sums carry constant mass, while the entropic PFR theorem implies that they are confined to a constant-index enlargement of an underlying subspace.
Consequently, with constant probability their span $V$ satisfies $\der{X_T}{U_V}=O(1)$, and the triangle inequality gives $\der{U_A}{U_V}=O(\log K )$. 
This subspace extraction phase uses only $O(n\log n)$ calls to the terminal access algorithms for $X_T$ and polynomial additional computation.

\paragraph{Verifying a PFR subspace.}
A remaining issue is that the candidate subspace $V$ can be larger than $A$. 
The distance bound implies $\abs{V}\le K^{O(1)}\abs{A}$.
Thus, by shrinking $V$ arbitrarily by $O(\log K)$ dimensions, we obtain a subspace $V' \le V$ such that either $V' = \{0\}$ or $\abs{V'} \le \abs{A}/2$. 
Importantly, this does not require knowing $\abs{A}$.
Moreover, for a uniform sample $a \sim U_A$, the coset $a + V'$ contains a $K^{-O(1)}$ fraction of $A$ with constant probability.

To recognize successful candidates, we estimate
\[
\alpha\coloneqq\frac{|A\cap(a+V')|}{|A|},\qquad
\beta\coloneqq\frac{|A\cap(a+V')|}{|V'|}.
\]
We estimate $\alpha$ by sampling from $A$ and testing membership in $a + V'$, and estimate $\beta$ by sampling uniformly from $a + V'$ and testing membership in $A$.
The first quantity measures the fraction of $A$ captured by the coset, while the identity $\beta/\alpha=|A|/|V'|$ lets us certify that $|V'| \le |A|$ without knowing $|A|$. 
With appropriate margins, every accepted candidate has $|V'|\le |A|$ and captures a $K^{-O(1)}$ fraction of $A$, 
except with the allotted verification error probability.

Finally, these guarantees imply the desired covering property. Let $B\coloneqq A\cap(a+V')$,
so that \(|B|\geq K^{-O(1)}|A|\), and take a maximal family of pairwise disjoint translates \(B+t_1,\ldots,B+t_m\) with \(t_i\in A\). Since each translate lies in \(A+A\),
\[
    m|B|\leq |A+A|\leq K|A|,
\]
and hence \(m=K^{O(1)}\). By maximality, for every \(t\in A\), the translate \(B+t\) intersects some \(B+t_i\). Thus there exist \(b,b'\in B\) such that \(b+t=b'+t_i\). Since \(b,b'\) lie in the same coset \(a+V'\), we have \(b'-b\in V'\), and therefore \(t-t_i\in V'\). Hence
\[
A\subseteq\bigcup_{i=1}^m(t_i+V'),
\]
so \(A\) is covered by \(K^{O(1)}\) translates of \(V'\).

\paragraph{Amplifying success.}
A single trial consists of (1) traversing the GGMT tree to obtain a terminal random variable $X_T$, (2) extracting a candidate subspace $V$, and (3) verifying the resulting candidate. 
If the trial makes bad choices along the way, the rejection sampling can become too expensive, so we impose a sufficiently large polynomial cutoff time for traversal and extraction, after which the trial is aborted. 
A single trial succeeds with probability $\Omega(1/\polylog K )$. 
Repeating $\polylog(K)\log(2/\delta)$ independent trials, with verification error set appropriately, gives overall success probability at least $1-\delta$ and total running time $\poly(n,K,\log(1/\delta))$, proving \Cref{thm:main}.

\section{Preliminaries}\label{sec:prelims}
\paragraph{Notation and Terminology.}
Throughout this paper, we will work in the ambient space of $\FF_2^n$. For $n \in \mathbb{N}$, we define $[n]:= \{1,2,\ldots,n\}$ and $[n]_0 := \{0\} \cup [n]$. All the random variables in this paper will take values in $\FF_2^n$. For a non-empty finite set $A \subseteq G$, we let $U_A$ denote the uniform distribution on $A$. All logarithms (denoted by $\log$) are base $2$ and we will specifically use $\ln$ to denote the natural logarithm. For a random variable $X$, we implicitly denote $p_X$ to the distribution of the random variable $X$. 
Throughout this paper, we use the shorthand $\mathrm{polylog}(K)$ for $\log^{O(1)}(2K)$ to avoid weird behaviour when $K\in [1,2]$, $O(\log K)$ as a shorthand for $O(\log K+2)$ for similar reasons. 

\subsection{Additive combinatorics}
For a set $A \subseteq \FF_2^n$, we denote the sumset $A+A := \{a_1 + a_2 : a_1, a_2 \in A\}$, and more generally, we will use the sumset notation $\ell A:= A + \ldots + A$ to denote the sum of $\ell$ copies of $A$. 

\paragraph{Doubling constant.}
We call the ratio $|A+A|/|A|$ the \emph{doubling constant} of $A$. The PFR theorem \Cref{thm:combinatorial_pfr} tightly characterizes the doubling constant in terms of the algebraic structure it imposes, and is seen as a measure of how close $A$ is to being a subgroup/subspace.

\paragraph{Additive energy.}
Another notion that captures approximate subgroup structure of a set is its \emph{additive energy} $E(A)$, which has several equivalent expressions
\begin{align*}
    E(A) &:= \{ (a, b, c, d) : a, b, c, d \in A, a+b = c+d\}\\
        &= \sum_{v \in \FF_2^n} \bigl( \# \{(a, b) : a, b \in A, a+b = v\} \bigr)^2\\
        &= \sum_{v \in \FF_2^n} |A \cap (A+v)|^2.
\end{align*}
It can then be shown that a set $A$ with doubling constant $K$ satisfies $E_4(A) \geq |A|^3/K$.

\paragraph{Ruzsa distance.} For finite nonempty subsets \(A,B\) of $\F_2^n$, the (combinatorial) Ruzsa distance measures how much larger the sum/difference set \(A+B\) is than the geometric mean of the sizes of \(A\) and \(B\), so small Ruzsa distance means that \(A\) and \(B\) share strong additive structure. Formally, the Ruzsa distance is defined as
\[
    \textsf{d}_{\rm comb}(A,B)
    =
    \log\frac{|A+B|}{\sqrt{|A||B|}}.
\]
Although this is not a true distance, it satisfies non-negativity, symmetry and the triangle inequality.

\subsection{Information theory}
\paragraph{Entropy.} 
The entropy $\entropy[X]$ of a $G$-valued random variable is defined as
\begin{equation}\label{def:entropy}
    \entropy[X] := \sum_{x \in G} p_X(x)
    \log\frac{1}{p_X(x)},
\end{equation}
where $\log$ is base $2$, $p_X(x)$ is the probability distribution
function of $X$.

The \emph{conditional} entropy $\entropy[X|Y]$ of $X$ relative to $Y$ is given by
\begin{equation}\label{def:conditional_entropy}
\entropy[X|Y] := \sum_{y} p_Y(y) \entropy[X|Y=y],
\end{equation}
where $y$ ranges over the support of $p_Y$. We quickly note that the chain rules of 
\begin{equation}\label{eq:entropy_chain_rules}
\entropy[X,Y] = \entropy[X|Y] + \entropy[Y], \quad \entropy[X,Y|Z] = \entropy[X|Y,Z] + \entropy[Y|Z].
\end{equation}

Additionally, we have the following facts regarding entropy.
\begin{fact}\label{fact:ub_entropy_nonneg_int_RV}
Suppose $Z$ is a nonnegative integer-valued random variable with mean $\mu$. The geometric distribution then maximizes entropy of $Z$ and
$$
\entropy[Z] \leq (\mu + 1) \log (\mu + 1) - \mu \log \mu.
$$
\end{fact}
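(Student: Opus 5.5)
The plan is to use Gibbs' inequality, that is, nonnegativity of relative entropy, against a geometric reference distribution. The case $\mu=0$ is trivial: then $Z\equiv 0$ and $\entropy[Z]=0$, which matches the right-hand side under the convention $0\log 0=0$. So assume $\mu>0$ and introduce the geometric law on $\{0,1,2,\dots\}$ with mean $\mu$,
\[
q(k)=(1-\theta)\theta^k,\qquad \theta\coloneqq\frac{\mu}{\mu+1},
\]
whose mean is $\theta/(1-\theta)=\mu$.

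The first step is to apply Gibbs' inequality: since $D(p_Z\|q)\ge 0$ and $q$ has full support on the nonnegative integers,
\[
\entropy[Z]\le \sum_{k\ge 0} p_Z(k)\log\frac{1}{q(k)}
= \log\frac{1}{1-\theta}+\E[Z]\log\frac{1}{\theta}.
\]
Substituting $1/(1-\theta)=\mu+1$ and $1/\theta=(\mu+1)/\mu$, the right-hand side becomes
\[
\log(\mu+1)+\mu\log\frac{\mu+1}{\mu}=(\mu+1)\log(\mu+1)-\mu\log\mu,
\]
which is exactly the claimed bound.

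The second step is the maximality statement. Running the same computation with $p_Z=q$ gives equality, so the geometric distribution attains the bound. Hence it maximizes entropy among nonnegative integer-valued random variables with mean $\mu$, and equality holds if and only if $p_Z=q$.

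The only point that needs care is the cross-entropy sum $\sum_k p_Z(k)\log(1/q(k))$ when $Z$ has infinite support. It is a series of nonnegative terms equal to $\log(\mu+1)+\mu\log((\mu+1)/\mu)<\infty$, so it converges. Gibbs' inequality also holds for countable supports, for instance via $\ln t\le t-1$ applied termwise, which yields $\sum_k p_Z(k)\ln\frac{q(k)}{p_Z(k)}\le \sum_k q(k)-1=0$. Beyond this, I expect no real obstacle.
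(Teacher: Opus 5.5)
The paper states this as a standard fact and gives no proof, so there is no argument of theirs to compare with. Your proof is the standard one and is correct: you apply Gibbs' inequality against the geometric law $q(k)=(1-\theta)\theta^k$ with $\theta=\mu/(\mu+1)$, whose cross-entropy with any mean-$\mu$ law equals $(\mu+1)\log(\mu+1)-\mu\log\mu$, and you also handle the $\mu=0$ case and the equality case.

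One small fix: sum the termwise bound $\ln t\le t-1$ over $\supp(p_Z)$, which gives $\sum_{k\in\supp(p_Z)}\bigl(q(k)-p_Z(k)\bigr)=q(\supp(p_Z))-1\le 0$. If you do this over finite partial sums and pass to the limit, you also get $\entropy[Z]<\infty$ without assuming it beforehand.
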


\begin{lemma}
\label{lem:collision-at-most-shannon}
For every discrete random variable $Z$,
$\entropy[Z]\geq -\log\left(\sum_z p_Z(z)^2\right)$. Equivalently,
\[
\sum_z p_Z(z)^2\geq 2^{-\entropy[Z]}.
\]
Moreover, for every $\varepsilon>0$, there is a set $T\subseteq\supp(Z)$ such that $\Pr[Z\in T]\geq1-\varepsilon$ and $|T|\leq2^{\entropy[Z]/\varepsilon}$.
\end{lemma}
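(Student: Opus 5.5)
The two claims have separate proofs: a convexity (Jensen) argument for the collision-probability inequality, and a Markov-inequality argument on the surprisal for the covering statement.

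For the first claim, write the Shannon entropy as an expectation of the surprisal, $\entropy[Z]=\E_{z\sim Z}\bigl[-\log p_Z(z)\bigr]$. The function $-\log$ is convex, so Jensen's inequality gives
\[
\entropy[Z]=\E_{z\sim Z}\bigl[-\log p_Z(z)\bigr]\;\geq\;-\log\Bigl(\E_{z\sim Z}\bigl[p_Z(z)\bigr]\Bigr)=-\log\Bigl(\sum_z p_Z(z)^2\Bigr).
\]
Exponentiating base $2$ yields $\sum_z p_Z(z)^2\geq 2^{-\entropy[Z]}$. The two forms are equivalent because $t\mapsto 2^{-t}$ is monotone. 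If $Z$ has infinite support but finite entropy, the sums converge and the same argument applies. If the entropy is infinite, the statement is trivial.

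For the second claim, consider the surprisal $S(z)\coloneqq\log\frac{1}{p_Z(z)}$. It is a nonnegative random variable under $z\sim Z$, with mean $\entropy[Z]$. Define
\[
T\coloneqq\{z\in\supp(Z): S(z)\leq \entropy[Z]/\varepsilon\}=\{z\in\supp(Z): p_Z(z)\geq 2^{-\entropy[Z]/\varepsilon}\}.
\]
Markov's inequality gives $\Pr[S(Z)>\entropy[Z]/\varepsilon]\leq\varepsilon$, hence $\Pr[Z\in T]\geq 1-\varepsilon$.

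For the size bound, every $z\in T$ has mass at least $2^{-\entropy[Z]/\varepsilon}$, and these masses sum to at most $1$. Therefore $|T|\leq 2^{\entropy[Z]/\varepsilon}$.

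One degenerate case needs care: $\entropy[Z]=0$. Markov's inequality then requires $S(Z)=0$ almost surely, which means $Z$ is a point mass. In that case $T$ is the single point of the support, and both conclusions hold. With this case handled, no step is a real obstacle; the only subtlety is choosing the threshold $\entropy[Z]/\varepsilon$ so that the Markov bound and the counting bound match.
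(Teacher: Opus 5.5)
Your proposal is correct and follows the paper's proof essentially verbatim: Jensen's inequality for the convex function $-\log$ applied to $p_Z(Z)$, then Markov's inequality on the surprisal $\log(1/p_Z(Z))$ with $T$ defined by the threshold $\entropy[Z]/\varepsilon$. You additionally spell out two points the paper leaves implicit: the counting bound $|T|\cdot 2^{-\entropy[Z]/\varepsilon}\leq 1$, and the degenerate case $\entropy[Z]=0$, where Markov's inequality with threshold $0$ does not directly apply.
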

\begin{proof}
Since $-\log$ is convex, Jensen's inequality applied to the random
variable $p_Z(Z)$ gives
\[
\entropy[Z]
=\E\left[\log\frac1{p_Z(Z)}\right]
\geq -\log\E[p_Z(Z)]
=-\log\left(\sum_z p_Z(z)^2\right).
\]
Exponentiating proves the equivalent formulation.  Finally, apply Markov's inequality to $\log(1/p_Z(Z))$ and take
$T=\{z:\log(1/p_Z(z))\leq\entropy[Z]/\varepsilon\}$.
\end{proof}

\paragraph{Mutual Information.}
The mutual information $\MI[X:Y]$ is defined as
\begin{equation}\label{def:mutual_info}
\MI[X:Y] := \entropy[X] + \entropy[Y] - \entropy[X,Y].
\end{equation}

\subsection{Entropic Ruzsa distance}
The \emph{entropic Ruzsa distance} between two independent $G$-valued random variables $X,Y$, denoted by $\ruzsadist[X;Y]$, is defined as
\begin{equation}\label{def:ruzsadist}
\ruzsadist[X;Y] := \entropy[X' + Y'] - \frac{1}{2} \entropy[X'] - \frac{1}{2} \entropy[Y'],
\end{equation}
where $X',Y'$ are independent copies of $X,Y$
respectively. Despite it being called a distance, the Ruzsa distance is not a true notion of distance as $\ruzsadist[X;X]=0$ only when $X$ is a coset of a subgroup~\cite{ggmt2025conjecture}. It however satisfies the triangle inequality, as stated below.
\begin{lemma}[Entropic Ruzsa triangle inequality]
\label{lem:ruzsa-triangle}
For $G$-valued random variables $X,Y,Z$,
\begin{equation}\label{eq:ruzsa_triangle}
    \ruzsadist[X;Z]
    \leq \ruzsadist[X;Y]+\ruzsadist[Y;Z]. 
\end{equation}
\end{lemma}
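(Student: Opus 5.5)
The approach is the standard Ruzsa argument: pass to independent copies and reduce everything to submodularity of entropy. First, the statement only concerns the laws of the three variables. So we may take $X,Y,Z$ mutually independent, and the claim becomes
\[
\entropy[X+Z]+\entropy[Y]\le \entropy[X+Y]+\entropy[Y+Z].
\]
The $\tfrac12\entropy[X]$ and $\tfrac12\entropy[Z]$ terms appear on both sides. The $\tfrac12\entropy[Y]$ terms combine into a single $\entropy[Y]$ on the right, which we have moved to the left. Note also that in $\F_2^n$ subtraction equals addition, so there is no sign bookkeeping.

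The key step is submodularity, $\entropy[A,B,C]+\entropy[C]\le\entropy[A,C]+\entropy[B,C]$. This is equivalent to nonnegativity of the conditional mutual information $\MI[A:B\mid C]$, which follows from the chain rules \eqref{eq:entropy_chain_rules} and the fact that conditioning reduces entropy. We apply it with $A=X+Y$, $B=Y+Z$ and $C=X+Z$. Three identifications are needed.

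First, $C=A+B$, so $(A,B,C)$ is a function of $(A,B)$. We will actually use the sharper inequality with triple $(X,Y,Z)$ in place of $(A,B,C)$ below.

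Second, $(A,C)$ and $(X+Y,X+Z)$ carry the same information. Its entropy is therefore at most $\entropy[X+Y]+\entropy[X+Z]$, but this route is not tight enough.

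A cleaner choice is submodularity for the triple $(X,Z)$, $X+Y$, $X+Z$: that is, take $A=X$, $B=Z$, $C=X+Z$ together with $X+Y$. I will instead use the standard form: for any random variables $U,V,W$ with $W$ determined by $U$ and also by $V$, we have $\entropy[U,V]+\entropy[W]\le\entropy[U]+\entropy[V]$. Apply this with $U=(X+Y,\,X)$ up to relabeling. Concretely, set $U=(X,\,X+Y)$... but the cleanest instantiation is the following. Take the pair $(X, Z)$ and the variable $Y$, and note two facts.

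Fact (a). $(X,Y,Z)$ determines both $(X+Y,\,Z)$ and $(X,\,Y+Z)$.

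Fact (b). Each of $(X+Y,Z)$ and $(X,Y+Z)$ determines $X+Y+Z$, and also determines $X+Z$ once we are given $(X+Y,Y+Z)$.

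Because of this, the bookkeeping is easiest via the inequality
\[
\entropy[X,Y,Z]+\entropy[X+Z]\le \entropy[X,\,Y+Z]+\entropy[X+Y,\,Z]
\]
This is submodularity with $U=(X,Y+Z)$, $V=(X+Y,Z)$ and $W=X+Z$. Here $W=X+(Y+Z)+Y$? That is not a function of $U$ alone. The correct choice is $W=X+Y+Z$, which is a function of $U$ and also of $V$, and $(U,V)$ determines $(X,Y,Z)$.

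So we apply submodularity with $U=(X,\,Y+Z)$, $V=(X+Y,\,Z)$ and $W=X+Y+Z$. By independence, the four terms evaluate as
\[
\entropy[X,Y,Z]=\entropy[X]+\entropy[Y]+\entropy[Z],\qquad \entropy[U]=\entropy[X]+\entropy[Y+Z],\qquad \entropy[V]=\entropy[X+Y]+\entropy[Z].
\]
This yields
\[
\entropy[Y]+\entropy[X+Y+Z]\le\entropy[X+Y]+\entropy[Y+Z].
\]
It remains to replace $\entropy[X+Y+Z]$ by $\entropy[X+Z]$. Since $Y$ is independent of $X+Z$, adding $Y$ cannot decrease entropy:
\[
\entropy[X+Z]\le\entropy[X+Y+Z].
\]
This follows from $\entropy[X+Y+Z]\ge\entropy[X+Y+Z\mid Y]=\entropy[X+Z]$. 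Combining the two displays gives the reduced inequality, and hence \eqref{eq:ruzsa_triangle}.

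The only real subtlety is choosing $U$, $V$ and $W$ so that the functional dependencies needed for submodularity hold. The choice above does this, since $W=X+Y+Z$ is a function of $U$ and of $V$ separately, and $(U,V)$ determines $(X,Y,Z)$. Everything else is the independence bookkeeping carried out above.
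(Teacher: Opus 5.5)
Your proof is correct. The paper itself gives no proof of this lemma; it quotes it as a standard fact from entropic Ruzsa calculus (Tao; Gowers--Green--Manners--Tao).

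You take $U=(X,Y+Z)$, $V=(X+Y,Z)$ and $W=X+Y+Z$. Then $W$ is a function of each of $U$ and $V$, and the pair $(U,V)$ is equivalent to $(X,Y,Z)$. Independence makes all three entropy evaluations equalities, and $\entropy[X+Z]\le\entropy[X+Y+Z]$ follows by conditioning on $Y$.

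This differs from the usual proof in the literature, which is the entropic form of Ruzsa's injection. There one applies the same submodularity principle to $U=(X+Y,\,Y+Z)$ and $V=(X,Z)$, whose common function is $X+Z$. Subadditivity, $\entropy[X+Y,\,Y+Z]\le\entropy[X+Y]+\entropy[Y+Z]$, then reaches the target in one step.

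Your route instead first proves the stronger submodularity-for-sums inequality $\entropy[X+Y+Z]+\entropy[Y]\le\entropy[X+Y]+\entropy[Y+Z]$ (Kaimanovich--Vershik, Madiman), and then discards $Y$.

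Over a general abelian group the two routes prove different conventions. The injection route, with $U=(X-Y,Y-Z)$ and $W=X-Z$, gives the difference-based triangle inequality. Yours gives the sum-based one, which is exactly how the paper defines $\ruzsadist[X;Y]$. Over $\FF_2^n$ the two coincide, so either works here.

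For the write-up, delete the abandoned attempts that precede your final choice of $U,V,W$. The worst is the display $\entropy[X,Y,Z]+\entropy[X+Z]\le\entropy[X,Y+Z]+\entropy[X+Y,Z]$. It is not an instance of submodularity, since $X+Z$ is not a function of $(X,Y+Z)$. It is also literally equivalent to the inequality you are proving, so leaving it in with that justification makes the argument read as circular. Fact (b) is similarly garbled and never used.
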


\begin{lemma}[Basic entropy and Ruzsa-distance facts]
\label{lem:basic-entropic-facts}
Let $X,Y$ be random variables on $\FF_2^n$.
\begin{enumerate}
    \item The entropy-difference bound is
    \[
        |\entropy[X]-\entropy[Y]|
        \leq 2\ruzsadist[X;Y].
    \]
    \item If $\varphi$ is a homomorphism, then entropic Ruzsa distance
    contracts under $\varphi$:
    \[
        \ruzsadist[\varphi(X);\varphi(Y)]
        \leq \ruzsadist[X;Y].
    \]
    \item If $V\leq W\leq\FF_2^n$ are subspaces, then
    \[
        \ruzsadist[U_V;U_W]
        =\frac12\log\frac{|W|}{|V|}.
    \]
\end{enumerate}
\end{lemma}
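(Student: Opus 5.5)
Each of the three parts of \cref{lem:basic-entropic-facts} can be proved directly from the definition of $\ruzsadist$ together with two standard information-theoretic inequalities for independent random variables. The first is the sumset lower bound $\entropy[X'+Y']\geq\max(\entropy[X'],\entropy[Y'])$. The second is data processing under deterministic maps, $\entropy[\varphi(Z)]\leq\entropy[Z]$. Throughout, $X',Y'$ denote independent copies.

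\emph{Part 1.} I will first show $\entropy[X'+Y']\geq\entropy[X']$. Since $X'$ and $Y'$ are independent,
\[
\entropy[X'+Y']\geq\entropy[X'+Y'\mid Y']=\entropy[X'\mid Y']=\entropy[X'],
\]
where the first equality holds because translation by $y$ is a bijection. The same argument with the roles swapped gives $\entropy[X'+Y']\geq\entropy[Y']$. Now suppose without loss of generality that $\entropy[X]\geq\entropy[Y]$. Then
\[
\ruzsadist[X;Y]\geq \entropy[X]-\tfrac12\entropy[X]-\tfrac12\entropy[Y]=\tfrac12\bigl(\entropy[X]-\entropy[Y]\bigr).
\]
This yields $|\entropy[X]-\entropy[Y]|\leq 2\ruzsadist[X;Y]$.

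\emph{Part 2.} This is the step that needs a real idea, since the three entropies change by different amounts under $\varphi$. I will use the fiber decomposition. For any $Z$, write $\entropy[Z]=\entropy[\varphi(Z)]+\entropy[Z\mid\varphi(Z)]$. Applying this to $X'+Y'$ and using $\varphi(X'+Y')=\varphi(X')+\varphi(Y')$, the claim reduces to
\[
\entropy[X'+Y'\mid \varphi(X')+\varphi(Y')]\geq \tfrac12\entropy[X'\mid\varphi(X')]+\tfrac12\entropy[Y'\mid\varphi(Y')].
\]
Conditioning on more information can only lower entropy, so the left side is at least $\entropy[X'+Y'\mid\varphi(X'),\varphi(Y')]$. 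Conditioned on $(\varphi(X'),\varphi(Y'))=(a,b)$, the variables $X'$ and $Y'$ remain independent, with laws $X\mid\varphi(X)=a$ and $Y\mid\varphi(Y)=b$. By the inequality from Part 1, the conditional entropy of the sum is at least the maximum of these two conditional entropies, hence at least their average. Averaging over $(a,b)$ gives exactly the right-hand side. I expect this conditioning reduction to be the only nontrivial point; it is the standard argument of Tao / GGMT.

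\emph{Part 3.} This is a direct computation. Since $V\leq W$, we have $U_V+U_W'\sim U_W$, because adding an independent uniform element of $W$ to any element of $W$ gives a uniform element of $W$. Therefore
\[
\ruzsadist[U_V;U_W]=\log|W|-\tfrac12\log|V|-\tfrac12\log|W|=\tfrac12\log\frac{|W|}{|V|}.
\]
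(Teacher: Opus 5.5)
Your proof is correct. Parts 1 and 3 match the paper's argument: the sumset bound $\entropy[X'+Y']\geq\max\{\entropy[X'],\entropy[Y']\}$, and the observation that $U_V+U_W'$ is uniform on $W$.

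For Part 2 you take a different route. The paper starts from the identity $2\ruzsadist[X;Y]=\MI[X:X+Y]+\MI[Y:X+Y]$. This holds for independent $X,Y$ because $\MI[X:X+Y]=\entropy[X+Y]-\entropy[Y]$. The paper then applies data processing to each term. Here $\varphi(X)$ and $\varphi(X)+\varphi(Y)=\varphi(X+Y)$ are functions of $X$ and $X+Y$ respectively, and $\varphi(X),\varphi(Y)$ stay independent, so each mutual information can only decrease. That is a two-line proof.

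Your chain-rule/fibering argument is longer but proves more. It actually establishes
\[
\ruzsadist[X;Y]\geq\ruzsadist[\varphi(X);\varphi(Y)]+\der{X\mid\varphi(X)}{Y\mid\varphi(Y)},
\]
so the loss under $\varphi$ is at least the conditional Ruzsa distance between the fibers. You then discard that term using its nonnegativity, which is Part 1 applied fiberwise. This is the GGMT fibring inequality with its mutual-information term dropped, and it is the mechanism behind the fiber updates used later in the paper.

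For the contraction statement alone, the mutual-information route is the more economical. Yours makes explicit the quantitative slack that the fibering steps exploit.
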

\begin{proof}
Since conditioning cannot increase entropy,
$\entropy[X+Y]\geq\max\{\entropy[X],\entropy[Y]\}$, which proves the
first assertion by expanding the definition of Ruzsa distance.  The
identity
\[
 2\ruzsadist[X;Y]
 =\MI[X:X+Y]+\MI[Y:X+Y]
\]
and data processing prove the second.  For the third, observe that
$U_V+U_W$ is uniform on $W$.
\end{proof}

\begin{theorem}[Entropic PFR theorem \cite{ggmt2025conjecture}]
\label{thm:entropic-pfr}
For every random variable $X$ on $\FF_2^n$, there is a subspace
$H\leq\FF_2^n$ such that
\[
\ruzsadist[X;U_H]=O(\ruzsadist[X;X]).
\]
\end{theorem}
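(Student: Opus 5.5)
The natural route is to follow GGMT and derive the statement from the entropy-decrement theorem, by iterating a minimization of the potential. I will take as a black box the GGMT decrement statement quoted in \cref{sec:tech-overview-ggmt}: there are absolute constants $\eta,c>0$ and $d_0$ such that for every pair $(X,Y)$ with $\ruzsadist[X;Y]\ge d_0$, one of the five sum/fiber updates produces a pair $(X',Y')$, possibly after averaging over fiber labels, with $\ruzsadist[X';Y']+\eta(\ruzsadist[X;X']+\ruzsadist[Y;Y'])\le(1-c)\ruzsadist[X;Y]$. After choosing a fiber label that beats the average, this holds for a single realization.

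\textbf{Step 1: iterate to constant distance.} Set $X_0=Y_0=X$ and $d_t:=\ruzsadist[X_t;Y_t]$. As long as $d_t\ge d_0$, apply the decrement to obtain $(X_{t+1},Y_{t+1})$ with $d_{t+1}\le(1-c)d_t$ and $\ruzsadist[X_t;X_{t+1}]\le d_t/\eta$. The entropies of the variables stay finite, so the procedure is well defined. Because $d_t$ decays geometrically, it drops below $d_0$ after finitely many steps, say at step $T$. By the triangle inequality (\cref{lem:ruzsa-triangle}), the total movement satisfies
\[
\ruzsadist[X;X_T]\le\sum_{t<T}\ruzsadist[X_t;X_{t+1}]\le\frac1\eta\sum_t d_t\le\frac{d_{\mathrm{init}}}{c\eta}=O(\ruzsadist[X;X]).
\]
The same bound holds for $Y_T$.

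\textbf{Step 2: pass from constant distance to zero distance.} This is the main obstacle. Constant distance is not zero distance, so the loop above cannot reach a subspace by itself. Here I would use GGMT's compactness argument. Among all pairs, minimize the potential $\ruzsadist[X';Y']+\eta(\ruzsadist[X_T;X']+\ruzsadist[Y_T;Y'])$. The minimum exists because, modulo translation, the distributions lie in a compact simplex over the finite group $\F_2^n$ and the functional is continuous. The decrement theorem, in GGMT's sharper form, says a minimizer must satisfy $\ruzsadist[X^*;Y^*]=0$, since otherwise some update would strictly lower the potential. Zero distance forces $X^*$ and $Y^*$ to be uniform on cosets of a common subspace $H$. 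This is the characterization of vanishing Ruzsa distance noted after \eqref{def:ruzsadist}.

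\textbf{Step 3: conclude.} The minimizer's potential is at most the value at $(X_T,Y_T)$, which is $d_T\le d_0$. Hence $\ruzsadist[X_T;U_H]\le d_0/\eta=O(1)$. Translating the coset to $H$ does not change distances. Combining with Step 1 through \cref{lem:ruzsa-triangle} gives
\[
\ruzsadist[X;U_H]\le O(\ruzsadist[X;X])+O(1).
\]

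The additive $O(1)$ must be absorbed when $\ruzsadist[X;X]$ is small. I would handle this by running the minimization directly from $(X,X)$ rather than first doing Step 1. The minimizing pair then has potential at most $\ruzsadist[X;X]$, and the sharper version of the decrement gives $\ruzsadist[X^*;Y^*]=0$ with both movement terms $O(\ruzsadist[X;X])$. This yields the purely multiplicative bound claimed in \cref{thm:entropic-pfr}. In this form the whole argument reduces to the decrement lemma, which is exactly the content of \cite{ggmt2025conjecture} that I am relying on.
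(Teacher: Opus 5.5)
The paper gives no proof of this statement; it imports it from \cite{ggmt2025conjecture}. The argument in your last paragraph is the one behind that citation: minimize $\Phi_{X,X}(X',Y')=\ruzsadist[X';Y']+\eta\bigl(\ruzsadist[X;X']+\ruzsadist[X;Y']\bigr)$ over all pairs (a continuous function on a compact set), show that a minimizer $(X^*,Y^*)$ has $\ruzsadist[X^*;Y^*]=0$, and conclude $\eta\,\ruzsadist[X;U_H]\le\Phi_{X,X}(X^*,Y^*)\le\Phi_{X,X}(X,X)=\ruzsadist[X;X]$. In that form your proposal is correct, but two points should be stated rather than waved at.

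First, the thresholded decrement from the overview, which you took as your black box, cannot do this job. It says nothing about a minimizer with $0<\ruzsadist[X^*;Y^*]<d_0$. The ``sharper form'' you need is simply \Cref{thm:ggmt-decrement} as formally stated, which has no threshold. Second, that theorem controls the potential $\Phi_{X^*,Y^*}$ anchored at the current pair, not your $\Phi_{X,X}$. Applying \Cref{lem:ruzsa-triangle} to each movement term gives
\[
\E\,\Phi_{X,X}(X',Y')\le(1-\eta)\ruzsadist[X^*;Y^*]+\eta\bigl(\ruzsadist[X;X^*]+\ruzsadist[X;Y^*]\bigr)=\Phi_{X,X}(X^*,Y^*)-\eta\,\ruzsadist[X^*;Y^*].
\]
So some realization of the update strictly lowers $\Phi_{X,X}$ unless $\ruzsadist[X^*;Y^*]=0$. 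In that case $\ruzsadist[X^*;X^*]\le2\ruzsadist[X^*;Y^*]=0$ makes $X^*$ uniform on a coset of some $H$.

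Once this is in place, Step 1 and the detour through $(X_T,Y_T)$ are superfluous. They mirror the paper's algorithmic traversal in \Cref{sec:ggmt-recursion}. There, stopping at a constant threshold and paying an additive $O(1)$ is harmless because only $O(\log K)$ is needed, and the passage from constant distance to a subspace is made by invoking this very theorem. Your replacement of that step by the minimization is what makes your argument a genuine reduction to the decrement theorem rather than a circular one.
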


\paragraph{Conditional entropic Ruzsa distance.}
We will also use the conditional variant of the entropic Ruzsa distance. If $X,Y,Z,W$ are $G$-valued random variables, we define
\begin{equation}\label{def:conditional_der}
\der{X\mid Z}{Y\mid W} := \E_{z\sim Z,w\sim W} \der{(X\mid Z=z)}{(Y\mid W=w)},
\end{equation}
If both $X,Y$ are conditioned on the same variable, we write $\der{X}{Y\mid Z} = \E_{z\sim Z}
\der{X}{(Y\mid Z=z)}$. We also have the following equivalent expression considering $X',Y',Z',W'$ as independent copies of $X,Y,Z,W$ respectively~\cite{ggmt2025conjecture}
\begin{equation}\label{def:equiv_conditional_der}
\der{X\mid Z}{Y\mid W} := \entropy[X' + Y' \mid Z', W'] - \frac{1}{2} \entropy[X' \mid Z'] - \frac{1}{2} \entropy[Y' \mid W']
\end{equation}

\begin{lemma}
\label{lem:ruzsa-conditioning}
Let $(X,J)$ be jointly distributed, with $X$ taking values in
$\FF_2^n$, and let $Y$ be an $\FF_2^n$-valued random variable
independent of $(X,J)$.  Then
\begin{align*}
 \E_{j\sim J}\ruzsadist[\law(X\mid J=j);Y]
 \leq\ruzsadist[X;Y]+\frac12\MI[X:J].
\end{align*}
In particular, if $J$ takes at most $m$ values, then
\[
 \E_{j\sim J}\ruzsadist[\law(X\mid J=j);Y]
 \leq\ruzsadist[X;Y]+\frac12\log m.
\]
\end{lemma}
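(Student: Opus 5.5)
We prove the first inequality by expanding the definition of the entropic Ruzsa distance for each conditional law and averaging over $j$. For each $j$ in the support of $J$, let $X_j$ denote a random variable with law $\law(X\mid J=j)$, independent of $Y$. Then
\[
\ruzsadist[X_j;Y]=\entropy[X_j+Y]-\tfrac12\entropy[X_j]-\tfrac12\entropy[Y].
\]
Averaging over $j\sim J$ gives
\[
\E_{j}\ruzsadist[X_j;Y]=\entropy[X+Y\mid J]-\tfrac12\entropy[X\mid J]-\tfrac12\entropy[Y].
\]
This uses the fact that $Y$ is independent of $(X,J)$. Consequently, conditioning $X+Y$ on $J=j$ yields exactly the law of $X_j+Y$ with $Y$ still independent, and $\entropy[Y\mid J]=\entropy[Y]$.

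We now compare this average with $\ruzsadist[X;Y]=\entropy[X+Y]-\tfrac12\entropy[X]-\tfrac12\entropy[Y]$. The term in $Y$ cancels. Since conditioning cannot increase entropy, $\entropy[X+Y\mid J]\le\entropy[X+Y]$. The definition of mutual information gives $\entropy[X\mid J]=\entropy[X]-\MI[X:J]$. Combining the two,
\[
\E_j\ruzsadist[X_j;Y]\le \entropy[X+Y]-\tfrac12\entropy[X]+\tfrac12\MI[X:J]-\tfrac12\entropy[Y]=\ruzsadist[X;Y]+\tfrac12\MI[X:J].
\]
This is the first claim.

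For the second claim, $\MI[X:J]\le\entropy[J]\le\log m$ whenever $J$ takes at most $m$ values. Substituting this bound into the first inequality gives the stated bound with $\tfrac12\log m$.

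The only step needing care is the averaging identity. One must check that $(X+Y)\mid(J=j)$ has the law of $X_j+Y$ with independent summands, and that $\entropy[Y\mid J]$ equals $\entropy[Y]$. Both follow from the independence of $Y$ from the pair $(X,J)$, as opposed to independence from $X$ alone. No other obstacle arises.
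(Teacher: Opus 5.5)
Your proposal is correct and follows essentially the same route as the paper: both rewrite the averaged distance as $\entropy[X+Y\mid J]-\tfrac12\entropy[X\mid J]-\tfrac12\entropy[Y]$ using independence of $Y$ from $(X,J)$. Both then drop the mutual information $\MI[X+Y:J]$ and pick up $\tfrac12\MI[X:J]$ from the $X$ term; the paper simply writes $\entropy[X+Y\mid J]=\entropy[X+Y]-\MI[X+Y:J]$ explicitly before discarding it.
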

\begin{proof}
Using an independent copy of $Y$ in every conditional Ruzsa distance,
the left-hand side equals
\begin{align*}
 \entropy[X+Y\mid J]
 -\frac12\entropy[X\mid J]-\frac12\entropy[Y]
 &=\entropy[X+Y]-\MI[X+Y:J]\\
 &\quad-\frac12\bigl(\entropy[X]-\MI[X:J]\bigr)
       -\frac12\entropy[Y]\\
 &=\ruzsadist[X;Y]-\MI[X+Y:J]+\frac12\MI[X:J].
\end{align*}
Dropping the nonnegative mutual-information term proves the
inequality.  The final assertion follows from
$\MI[X:J]\leq\entropy[J]\leq\log m$.
\end{proof}

\subsection{Boolean Fourier analysis and discrete derivatives}
\label{sec:boolean-fourier}
For $x,y\in\FF_2^n$, let
$\langle x,y\rangle:=\sum_{i=1}^n x_i y_i\in\FF_2$.
All expectations below are uniform over the indicated vector space. 

\paragraph{Fourier analysis.}
For
$g:\FF_2^n\to\RR$, define its normalized Fourier transform by
\[
 \widehat g(\xi):=\E_y g(y)(-1)^{\langle \xi,y\rangle},
 \qquad \xi\in\FF_2^n.
\]
We call $g$ Boolean when $g:\FF_2^n\to\{-1,1\}$.
Fourier inversion and Parseval's identity give
\[
 g(y)=\sum_{\xi\in\FF_2^n}\widehat g(\xi)
       (-1)^{\langle\xi,y\rangle},
 \qquad
 \E_y g(y)h(y)=\sum_{\xi\in\FF_2^n}
                 \widehat g(\xi)\widehat h(\xi).
\]
In particular,
$\sum_\xi\widehat g(\xi)^2=\E_y g(y)^2$, which equals $1$ when
$g$ is Boolean.  We will also use the autocorrelation identity
\[
 \E_y g(y)g(y+x)
 =\sum_{\xi\in\FF_2^n}\widehat g(\xi)^2
   (-1)^{\langle\xi,x\rangle}.
\]
Consequently, if $H:\FF_2^n\to\RR$ has nonnegative Fourier
coefficients, then Fourier inversion implies
\[
 H(w)\leq H(0)\qquad(w\in\FF_2^n).
\]

\begin{theorem}[Goldreich--Levin~\cite{goldreich1989hard}]
\label{thm:goldreich-levin}
Let $g:\FF_2^n\to\{-1,1\}$ and let $0<\gamma,\delta\leq1$. There is a
randomized algorithm which, given oracle access to $g$, outputs a list of
$O(\gamma^{-2})$ frequencies such that, with probability at least
$1-\delta$, the list contains every $\xi\in\FF_2^n$ satisfying
$|\widehat g(\xi)|\geq\gamma$. The algorithm uses
$\poly(n,1/\gamma,\log(1/\delta))$ time and queries to $g$.
\end{theorem}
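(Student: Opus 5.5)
We would prove \Cref{thm:goldreich-levin} by the standard Kushilevitz--Mansour tree search over prefixes of frequencies, using Parseval to bound the width of the tree. For $0\le k\le n$ and a prefix $\alpha\in\FF_2^k$, write frequencies as $\xi=(\alpha,\beta)$ with $\beta\in\FF_2^{n-k}$ and points as $(x,y)$ with $x\in\FF_2^k$, $y\in\FF_2^{n-k}$. Define the bucket weight
\[
 W(\alpha):=\sum_{\beta\in\FF_2^{n-k}}\widehat g(\alpha,\beta)^2 .
\]
The key identity, obtained by applying Parseval in the $\beta$-variables to the partial Fourier transform $g_\alpha(y):=\E_{x}g(x,y)(-1)^{\langle\alpha,x\rangle}$, is
\[
 W(\alpha)=\E_{y}\,g_\alpha(y)^2=\E_{y}\E_{x,x'}\,g(x,y)\,g(x',y)\,(-1)^{\langle\alpha,x+x'\rangle}.
\]
The right-hand side is the mean of a $\{-1,1\}$-valued quantity that can be sampled with two queries to $g$. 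Hence $W(\alpha)$ can be estimated to additive error $\gamma^2/4$, with failure probability $\delta'$, using $O(\gamma^{-4}\log(1/\delta'))$ queries by Hoeffding's inequality.

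The algorithm proceeds level by level. It starts with the empty prefix, for which $W=1$. At level $k$ it maintains a list $L_k$ of prefixes. For each $\alpha\in L_k$ it estimates $W(\alpha,0)$ and $W(\alpha,1)$, and it puts a child into $L_{k+1}$ exactly when its estimate is at least $\gamma^2/2$. The output is $L_n$, whose elements are full frequencies.

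Assume all estimates are accurate to within $\gamma^2/4$. Correctness then follows from two observations.
\begin{enumerate}
\item If $|\widehat g(\xi)|\ge\gamma$, then every prefix $\alpha$ of $\xi$ has $W(\alpha)\ge\widehat g(\xi)^2\ge\gamma^2$. Its estimate is therefore at least $3\gamma^2/4$, so $\alpha$ is never pruned and $\xi\in L_n$.
\item Every retained prefix has true weight at least $\gamma^2/4$. At a fixed level the prefixes partition the frequencies, so $\sum_{\alpha\in\FF_2^k}W(\alpha)=\sum_\xi\widehat g(\xi)^2=1$ by Parseval. Consequently $|L_k|\le 4/\gamma^2$ at every level; in particular the output list has size $O(\gamma^{-2})$.
\end{enumerate}

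For the complexity and error bound, the total number of estimates made is at most $2n\cdot 4\gamma^{-2}$. Setting $\delta'=\delta\gamma^2/(8n)$ and taking a union bound, all estimates are accurate simultaneously with probability at least $1-\delta$. The total query count and running time are then $O(n\gamma^{-6}\log(n/(\gamma\delta)))\cdot\poly(n)=\poly(n,1/\gamma,\log(1/\delta))$.

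The only step needing care is the width bound in observation 2. It holds only on the event that the estimates are accurate, so the union bound has to be taken over at most the a priori number $8n/\gamma^2$ of estimates. On the failure event the list could grow, so we would additionally abort if any $L_k$ exceeds $4/\gamma^2$; this keeps the running time bounded unconditionally. Otherwise the argument is routine.
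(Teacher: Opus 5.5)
Your proof is correct. It is the standard Kushilevitz--Mansour prefix-splitting argument \cite{KM93}, and you handle its one delicate point properly: the union bound is taken over the a priori bound of $8n/\gamma^2$ adaptively chosen estimates, each using fresh randomness, and the abort rule makes both the running time and the $O(\gamma^{-2})$ list length hold unconditionally (output, e.g., the empty list on abort). The paper does not prove this statement; it imports it as a black box from \cite{goldreich1989hard}, so there is no in-paper argument to compare against.
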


\paragraph{Discrete derivatives.}
For $f:\FF_2^n\to\RR$ and $x\in\FF_2^n$, the multiplicative discrete
derivative of $f$ in direction $x$ is
\[
 f_x(y):=f(y)f(y+x).
\]

\paragraph{The Gowers $U^3$ norm.} The $U^3$-norm is defined by
\[
 \|f\|_{U^3}^8
 :=\E_{y,x_1,x_2,x_3}
 \prod_{\omega\in\{0,1\}^3}
 f(y+\omega_1x_1+\omega_2x_2+\omega_3x_3).
\]
Repeated use of Parseval's identity gives
\[
 \|f\|_{U^3}^8
 =\E_x\sum_{\xi\in\FF_2^n}\widehat{f_x}(\xi)^4.
\]

\paragraph{Helper lemmas.}
The following observation is adapted from
Tulsiani--Wolf~\cite[Claim~4.15]{tulsiani2014quadratic}.
\begin{claim}[Removing an affine offset]
\label{clm:removing-affine-offset}
For every $f:\FF_2^n\to\RR$, linear map
$T:\FF_2^n\to\FF_2^n$, and $b\in\FF_2^n$,
\[
 \E_x\left|\widehat{f_x}(Tx+b)\right|^2
 \leq
 \E_x\left|\widehat{f_x}(Tx)\right|^2.
\]
\end{claim}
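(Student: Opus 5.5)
We prove the inequality by expanding both sides in Fourier space and reducing everything to a comparison of averages of four-fold character sums; the inequality will then follow from positivity.

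\textbf{Step 1 (rewrite $\widehat{f_x}$).} By definition,
\[
\widehat{f_x}(\xi)=\E_y f(y)f(y+x)(-1)^{\langle \xi,y\rangle}.
\]
Squaring gives
\[
\widehat{f_x}(\xi)^2=\E_{y,y'} f(y)f(y+x)f(y')f(y'+x)(-1)^{\langle \xi,y+y'\rangle}.
\]
Substitute $\xi=Tx+b$, set $h=y+y'$, and average over $x$. The left-hand side becomes
\[
\E_{x,y,h}\, f(y)f(y+x)f(y+h)f(y+h+x)\,(-1)^{\langle Tx+b,h\rangle}.
\]

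\textbf{Step 2 (isolate the offset).} Define
\[
G(h):=\E_{x,y} f(y)f(y+x)f(y+h)f(y+h+x)(-1)^{\langle Tx,h\rangle}.
\]
Then the left-hand side equals $\E_h G(h)(-1)^{\langle b,h\rangle}=\widehat G(b)$, and the right-hand side equals $\widehat G(0)=\E_h G(h)$. By the remark in the preliminaries, a function with nonnegative Fourier coefficients is maximized at $0$. Applied to $\widehat G$ (Fourier inversion on the dual side), it therefore suffices to show that $\widehat G(b)$ is maximized at $b=0$. This holds if $G$ is a \emph{positive definite} function of $h$, i.e.\ the Fourier transform of a nonnegative function. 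So the goal is to show $\widehat G(b)\ge 0$ for every $b$ and $|\widehat G(b)|\le \widehat G(0)$.

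\textbf{Step 3 (positivity, the main obstacle).} The factor $(-1)^{\langle Tx,h\rangle}$ couples $x$ and $h$, which blocks a naive factorization. I would first try to exhibit $\widehat G(b)$ as an average of squares. Since the left-hand side is an average of squares $|\widehat{f_x}(Tx+b)|^2\ge 0$, positivity is automatic; what needs work is the comparison with $b=0$.

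For the comparison, apply Cauchy--Schwarz after changing variables. Write the left-hand side as
\[
\E_{x}\Big|\E_y f(y)f(y+x)(-1)^{\langle Tx+b,y\rangle}\Big|^2.
\]
Introduce a second copy of the average by replacing $x$ with $x=x_1+x_2$, using the bilinearity of $\langle Tx,y\rangle$. The phase $(-1)^{\langle Tx_1,y\rangle}$ can then be absorbed into functions of $(x_1,y)$, leaving
\[
\E_{x_1,x_2}\Big|\sum_{\xi}\widehat{F_{x_1}}(\xi)\,\widehat{F'_{x_2}}(\xi+b)\Big|^2 .
\]
A Cauchy--Schwarz over $\xi$, together with shift-invariance of $\sum_\xi|\widehat{\cdot}(\xi+b)|^2$, should bound this by the corresponding expression at $b=0$. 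The symmetric structure is what allows this: the quantity is a symmetric bilinear form evaluated on the same vector.

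This follows the Tulsiani--Wolf argument. The delicate point is getting the bilinear phase to split cleanly so that $b$ appears only as a translate inside an $\ell_2$ norm. If that route stalls, the fallback is to work directly with the Gram-matrix formulation: exhibit $G$ as $\E_x \Psi_x(h)$, where each $\Psi_x(h)=\E_y g_x(y)\,g_x(y+h)$ is an autocorrelation (hence positive definite) twisted by the character $(-1)^{\langle Tx,h\rangle}$. Twisting by a character shifts the Fourier transform but preserves nonnegativity, so each term, and hence $G$, has nonnegative Fourier coefficients. The remark from the preliminaries then gives $\widehat G(b)\le \widehat G(0)$.
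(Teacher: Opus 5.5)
There is a genuine gap at the positivity step. Your Steps 1--2 are correct: the left-hand side is $\widehat G(b)=\E_h G(h)(-1)^{\langle b,h\rangle}$ and the right-hand side is $\widehat G(0)$. To conclude $\widehat G(b)\le\widehat G(0)$, however, you need $G(h)\ge 0$ for every $h$. Equivalently, $\widehat G$, as a function of $b$, must have nonnegative Fourier coefficients, since $\widehat{\widehat G}=2^{-n}G$.

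Step 2 asks for the opposite condition: $G$ positive definite, i.e.\ $\widehat G\ge 0$. That is only the trivial fact that the left-hand side is an average of squares, and you then restate the claim itself as the goal. Your fallback has the same defect. You correctly show $\widehat G(\eta)=\E_x\widehat{f_x}(\eta+Tx)^2\ge 0$, which is just the left-hand side again. But the preliminaries' remark applied to $G$ yields only $G(h)\le G(0)$, which says nothing about $\widehat G(b)$ versus $\widehat G(0)$.

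The Cauchy--Schwarz route is not specified enough to check, and as described it cannot work on its own. Cauchy--Schwarz bounds $\sum_\xi a_\xi b_{\xi+b}$ by a product of $\ell_2$ norms, not by the value of the same expression at $b=0$. It would do so only if the underlying bilinear form were already known to be positive semidefinite, which is again equivalent to $G\ge 0$.

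The missing idea is to regroup the four-fold product in $G$ along the direction $h$ rather than $x$. Use $f(y)f(y+h)=f_h(y)$, $f(y+x)f(y+x+h)=f_h(y+x)$, and $\langle Tx,h\rangle=\langle x,T^{\mathsf T}h\rangle$. Substituting $u=y+x$ then gives
\[
G(h)=\E_{x,y} f_h(y)f_h(y+x)(-1)^{\langle T^{\mathsf T}h,x\rangle}=\widehat{f_h}(T^{\mathsf T}h)^2\ge 0 .
\]
Then $\widehat G(b)=\E_h G(h)(-1)^{\langle b,h\rangle}\le\E_h G(h)=\widehat G(0)$ follows immediately.

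This is exactly the paper's argument. Its $F_T(z)$ is your $\widehat G(z)$, and it computes $\widehat{F_T}(t)=2^{-n}\widehat{f_t}(T^{\mathsf T}t)^2\ge 0$ before applying Fourier inversion to get $F_T(b)\le F_T(0)$.
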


\begin{proof}
Define
\[
 F_T(z):=\E_x\left|\widehat{f_x}(Tx+z)\right|^2.
\]
For $t\in\FF_2^n$, the normalized Fourier coefficient of $F_T$ at $t$
is
\begin{align*}
 \widehat{F_T}(t)
 &=\E_zF_T(z)(-1)^{\langle t,z\rangle} \\
 &=\E_{z,x,y,y'}f(y)f(y+x)f(y')f(y'+x)
   (-1)^{\langle Tx+z,y+y'\rangle+\langle t,z\rangle} \\
 &=2^{-n}\E_{x,y}f(y)f(y+x)f(y+t)f(y+t+x)
   (-1)^{\langle Tx,t\rangle} \\
 &=2^{-n}\E_{x,y}f_t(y)f_t(y+x)
   (-1)^{\langle T^{\mathsf T}t,x\rangle} \\
 &=2^{-n}\left|\widehat{f_t}(T^{\mathsf T}t)\right|^2
 \geq0.
\end{align*}
Here the third equality follows because averaging over $z$ forces
$y+y'=t$, and the last equality follows by expanding the square of
$\widehat{f_t}(T^{\mathsf T}t)$. Thus every Fourier coefficient of
$F_T$ is nonnegative. Fourier inversion therefore gives
\[
 F_T(b)
 =\sum_{t\in\FF_2^n}\widehat{F_T}(t)(-1)^{\langle t,b\rangle}
 \leq\sum_{t\in\FF_2^n}\widehat{F_T}(t)
 =F_T(0),
\]
which is the claimed inequality.
\end{proof}

\begin{claim}
  \label{clm:qgl-autocorrelation-identity}
  For a function $g: \FF_2^n \to \{-1, 1\}$, we have that
  \begin{align*}
    \E_x\left|\E_y g(y)g(y+x)\right|^2 = \sum_{\xi \in \FF_2^n} \widehat g(\xi)^4.
  \end{align*}
\end{claim}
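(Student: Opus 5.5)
We expand the square on the left-hand side and then apply the autocorrelation identity for Boolean functions stated earlier in this subsection. The only step that uses more than bookkeeping is Parseval's identity.

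\emph{Step 1: identify the inner average.} Fix $x$. By the autocorrelation identity,
\[
 \E_y g(y)g(y+x)=\sum_{\xi\in\FF_2^n}\widehat g(\xi)^2(-1)^{\langle\xi,x\rangle}.
\]
So the inner average, viewed as a function $h(x):=\E_y g(y)g(y+x)$, is a function on $\FF_2^n$ whose normalized Fourier coefficient at $\xi$ is exactly $\widehat g(\xi)^2$. This is Fourier inversion read in reverse; uniqueness of Fourier coefficients gives $\widehat h(\xi)=\widehat g(\xi)^2$.

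\emph{Step 2: apply Parseval to $h$.} Parseval's identity gives
\[
 \E_x h(x)^2=\sum_\xi \widehat h(\xi)^2=\sum_\xi\widehat g(\xi)^4.
\]
Since $h$ is real-valued, $|h(x)|^2=h(x)^2$, and this is precisely the claimed identity.

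As a sanity check, we can also expand directly. Write $h(x)^2=\sum_{\xi,\xi'}\widehat g(\xi)^2\widehat g(\xi')^2(-1)^{\langle\xi+\xi',x\rangle}$ and average over $x$. The orthogonality relation $\E_x(-1)^{\langle\xi+\xi',x\rangle}=\mathbf 1[\xi=\xi']$ collapses the double sum to the diagonal, giving the same result.

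There is no real obstacle here. The only point needing care is that the autocorrelation identity is being used as a statement about the Fourier transform of $h$. The direct orthogonality expansion sidesteps even that, so we would present that version to keep the argument self-contained. Booleanity of $g$ is not actually needed for the identity; it holds for any real $g$.
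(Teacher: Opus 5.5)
Your proposal is correct and is essentially the paper's proof. The paper re-derives the autocorrelation formula from Fourier inversion and orthogonality instead of citing it, then expands the square and averages over $x$, which is exactly the direct orthogonality computation you say you would present; your Parseval version is the same calculation packaged differently, and your remark that Booleanity is not needed is also right.
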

\begin{proof}
  First, by Fourier inversion, we have that
  \begin{align*}
    \E_y g(y)g(y+x) &= \E_y \sum_{\xi, \xi' \in \FF_2^n} \widehat g(\xi)\widehat g(\xi')(-1)^{\langle y, \xi + \xi' \rangle + \langle x, \xi' \rangle}\\
    &= \sum_{\xi, \xi' \in \FF_2^n} \widehat g(\xi)\widehat g(\xi')(-1)^{\langle x, \xi' \rangle}\E_y (-1)^{\langle y, \xi + \xi' \rangle} \\
    &= \sum_{\xi\in\FF_2^n} \widehat g(\xi)^2 (-1)^{\langle x, \xi\rangle},
  \end{align*}
  where the final equality follows by the orthogonality of characters.
  Then, we have that
  \begin{align*}
    \E_x\left|\E_y g(y)g(y+x)\right|^2 &= \E_x\left(\sum_{\xi\in\FF_2^n} \widehat g(\xi)^2 (-1)^{\langle x, \xi\rangle}\right)^2 &\\
    &= \E_x\sum_{\xi, \xi' \in \FF_2^n} \widehat g(\xi)^2\widehat g(\xi')^2 (-1)^{\langle x, \xi + \xi' \rangle} \\
    &= \sum_{\xi \in \FF_2^n} \widehat g(\xi)^4,
  \end{align*}
  where again the final equality is by the orthogonality of characters.
\end{proof}

\section{Traversing the GGMT tree}
\label{sec:ggmt-recursion}

Throughout this section, we will need access to random variables for the algorithm and we formally define this in the definition below. 
\begin{definition}[Algorithmic access to a random variable]\label{def:algo_access}
We say that we have \emph{algorithmic access} to a random variable $X$ if we have the following pair of algorithms %
\begin{itemize}
    \item $\Sample_X$ outputs an independent sample $x \sim X$ and runs in time expected time $\poly(n, K)$, and
    \item $\Coin_X$, on input $x$, runs in expected time $\poly(n, K)$ and outputs an independent sample from the Bernoulli distribution with parameter 
    \begin{align*}
        r_X(x):= \frac{p_X(x)}{M_X},
    \end{align*}
    where $M_X\geq\|p_X\|_\infty$ is referred to as the \emph{envelope}.\footnote{Note that the envelope is determined by the particular implementation of the density coin, rather than being an intrinsic property of the law of $X$.} The numerical value of $M_X$ need not be known to the algorithm, even approximately.
\end{itemize}
\end{definition}

The input oracles for $A$ give us algorithmic access to the uniform distribution $U_A$ on the set $A$.
The given uniform sampler for $A$ implements
$\Sample_{U_A}$, while the membership oracle implements the density~coin
\[
    \Coin_{U_A}(x)=\mathbf{1}_A(x).
\]
Indeed, taking $M_{U_A}:=1/|A|$ gives
\[
    \Pr[\Coin_{U_A}(x)=1]
    =\mathbf{1}_A(x)
    =\frac{p_{U_A}(x)}{M_{U_A}}.
\]
Neither root algorithm aborts, and the value of $M_{U_A}$ need not be
known to the algorithm. Throughout this section, by algorithmic access to $X$, we mean constructing algorithms for $\Sample_{X}$ and $\Coin_{X}$. We define the \textbf{\textit{slack}} of a density coin $\Coin_X$, denoted by $\Delta[X]$, as
\begin{equation}\label{def:slack_coin}
\Delta[X] := \entropy[X] + \log M_X.
\end{equation} 
For the root algorithmic access above, $\Delta[U_A]=0$. Intuitively, the slack measures how small the density coin's acceptance probability is on a typical sample \(x\sim X\).  As we will see later in this section, a small slack means that the coin can typically be used efficiently, while large slack indicates that rejection-based procedures may require many trials. The main result of this section is the following.
\begin{theorem}
\label{thm:ggmt-recursion}
Let $K \in \mathbb{N}$. Suppose $A \subseteq\FF_2^n$ is non-empty set that satisfies $|A+A|\le K|A|$. Given membership and uniform-sampling access to $A$, there is a randomized algorithm that, with probability at least $\Omega(1/\polylog(K))$, outputs algorithmic access to a random variable $X_T$, where $T=O(\log\log K)$, such that, for some subspace $H \leq \FF_2^n$,
$$
\ruzsadist[X_T;U_H]=O(1),\quad
\ruzsadist[U_A;X_T]=O(\log K),\quad \text{ and }
\Delta[X_T]=O(1).
$$
The algorithm has overall sample and query complexity $\poly(K)$ and runs in time $O(n \cdot \poly(K) )$. The cost of one call to the algorithmic access procedures of $X_T$ has sample and query complexity $\poly(K)$ and time complexity $O(n \cdot \poly(K))$ as well.
\end{theorem}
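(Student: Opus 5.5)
We prove \cref{thm:ggmt-recursion} by following the technical overview. We build a random trajectory $(X_t,Y_t)_{t\le T}$ with $X_0=Y_0=U_A$, maintaining algorithmic access in the sense of \cref{def:algo_access}. At each step we pick one of the five GGMT update families uniformly at random: self-sum, cross-sum, self-fiber, mixed fiber, or endgame. Any fiber labels are drawn from their natural laws $z\sim X+Y$ using $\Sample_{X+Y}$. The update is followed by a bucketing step, and the trajectory is run for a fixed $T=O(\log\log K)$ steps.

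\textbf{Step 1: access propagation.} For sums we set $\Sample_{X+Y}$ to be a sampled pair, $\Coin_{X+Y}(z)=\Coin_Y(z-x)$ with $x\sim X$, and $M_{X+Y}=M_Y$. For fibers we use rejection sampling for $\Sample_{X_z}$ and the product coin for $\Coin_{X_z}$, with envelope $M_XM_Y/p_{X+Y}(z)$. For each we verify exactness and the slack identities $\Delta[X+Y]\le\Delta[Y]+2\der XY$ and $\E_z\Delta[X_z]=\Delta[X]+\Delta[Y]$.

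For bucketing we sample $z$ and record the first-success time $T_z$ of repeated $\Coin_Z(z)$ calls, setting $J=\lfloor\log T_z\rfloor$. We then show that $Z\mid J=j$ has exact access. The sampler rejection-samples on $J=j$, and the coin is $\Coin_Z(z)$ ANDed with an independent run that tests the event $J=j$, with envelope $M_Z\Pr[J=j]^{-1}\cdot O(1)$. We prove $\E_J\Delta[Z_J]=O(1)$ by comparing $\log(1/r_Z(z))$ with $j$. We also prove $\MI[Z:J]\le \entropy[J]=O(\log(\Delta[Z]+2))$ via \cref{fact:ub_entropy_nonneg_int_RV}, since $\E J\lesssim \Delta[Z]+O(1)$ by Jensen.

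\textbf{Step 2: the potential argument.} We invoke the GGMT decrement theorem. Whenever $d=\der{X_t}{Y_t}\ge d_{\rm term}$, one of the five families satisfies $\E\Phi\le(1-c)d$. Since we choose a family with probability $1/5$, Markov's inequality gives probability $\Omega(1)$ that the realized update has $\Phi\le(1-c/2)d$.

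Bucketing then adds at most $O(\log d)$ to $\Phi$, by \cref{lem:ruzsa-conditioning} applied twice with the triangle inequality (\cref{lem:ruzsa-triangle}) for the movement terms. This loss is absorbed once $d_{\rm term}$ is a large constant. After a further Markov step, the bucketed slack is $O(1)$ with constant probability. Below threshold, the same argument only keeps every quantity $O(1)$.

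On a good trajectory, with probability $\Omega(1)^{T}=1/\polylog K$, we get $d_t\le(1-c/2)^t\log K+O(1)$. Summing the penalties gives $\der{U_A}{X_T}\le\eta^{-1}\sum_t\Phi_t=O(\log K)$, since the penalty sum is telescoping-dominated by $d_0/\eta$. We also get $\der{X_T}{X_T}\le 2\der{X_T}{Y_T}=O(1)$ and $\Delta[X_T]=O(1)$. The subspace $H$ then comes from \cref{thm:entropic-pfr}.

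\textbf{Step 3: cost.} We bound the expected cost $C_{t+1}\le 2^{O(d_t+\Delta_t)}C_t$ of one access call at level $t+1$, because fiber rejection costs $M_Y/p_{X+Y}(z)$. The main obstacle is that this is an expectation over $z$ of an exponential, while we only control the logarithm $\E\log(M_Y/p(z))\le\Delta+2d$. We will handle this by conditioning on good labels: by Markov on the logarithm, the realized $z$ satisfies $\log(M_Y/p(z))=O(d_t+1)$ with constant probability, and we fold this event into the definition of a good step. The same issue appears for the bucket index $j$, whose rejection cost is $1/\Pr[J=j]$ and is likewise Markov-controlled, and for the endgame update's two fibers.

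Multiplying over $t$ gives $2^{O(\sum d_t)}=2^{O(\log K)}=K^{O(1)}$, with each call costing $O(n)$ time per primitive oracle use. We add a $\poly(K)$ cutoff so that bad trajectories abort, and Markov's inequality shows that good ones finish within the cutoff with constant conditional probability.
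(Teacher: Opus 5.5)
There is a genuine gap in your bucketing step, and it matters because restoring constant slack is the whole point of bucketing.

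\textbf{Why your bucketed coin fails.} Your coin for $Z_j$ is $\Coin_Z(z)$ ANDed with an independent test of the event $J=j$. It therefore accepts with probability $q_j(z)=r_Z(z)\Pr[J=j\mid Z=z]$, with envelope $M_Z/\Pr[J=j]$. The slack is $\Delta[Z_j]=\E_{z\sim Z_j}\log(1/q_j(z))$, and $q_j\le r_Z$. Averaging over $j$ gives
\[
\E_J\Delta[Z_J]=\Delta[Z]+\entropy[J\mid Z]\ \ge\ \Delta[Z],
\]
so the slack does not decrease at all. No coin of the form ``parent coin AND something'' can lower it. ``Comparing $\log(1/r_Z(z))$ with $j$'' cannot help either, since $j$ never enters your envelope.

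The consequences run through the rest of the proof:
\begin{itemize}
\item The conclusion $\Delta[X_T]=O(1)$ fails.
\item The hypothesis $\Delta[X_t],\Delta[Y_t]\le L_0$ needed for the next good step fails.
\item Since $\E_z\Delta[X_z]=\Delta[X]+\Delta[Y]$, slack can double at every fiber step and reach $(\log K)^{O(1)}$ after $T$ steps.
\item At that point the mutual-information loss $O(\log(\Delta+2))$ is no longer absorbed by the $\eta d$ decrement near $d_{\rm term}$.
\item The per-level cost $2^{O(d_t+\Delta_t)}$ becomes quasipolynomial in $K$.
\end{itemize}

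\textbf{The missing idea.} You need a coin for $Z_j$ whose acceptance probability is amplified by roughly $2^j\approx 1/r_Z(z)$, while remaining exactly proportional to $p_{Z_j}$. The paper draws $2^{j+1}$ parent coins. It accepts iff the first $2^j$ contain exactly one success and the last $2^j$ contain at least one. With $m=2^j$ and $r=r_Z(z)$, this event has probability
\[
mr(1-r)^{m-1}\bigl(1-(1-r)^m\bigr)=m\,r\Pr[J=j\mid Z=z],
\]
which gives envelope $M_Z/(2^j\Pr[J=j])$. Then
\[
\E_J\Delta[Z_J]=\Delta[Z]+\entropy[J\mid Z]-\E[J],
\]
and the bound $\E[J\mid Z=z]\ge\log(1/r_Z(z))-3$ cancels $\Delta[Z]$. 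What remains is a constant; the paper gets $<8$.

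This coin costs $2^{j+1}$ parent calls, and so does each trial of the bucketed sampler. Your cost accounting must therefore carry the factor $2^{J+1}/\Pr[J=j]$, not just $1/\Pr[J=j]$. Its logarithm has expectation at most $\Delta[Z]+O(\log(\Delta[Z]+2))$. With this fix, the rest of your outline is essentially the paper's argument: random family choice, Markov with $d=\max\{d_t,d_{\rm term}\}$, folding good labels into the definition of a good step, and the $2^{O(\sum_t d_t)}$ product with a $K^{O(1)}$ cutoff.
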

We prove this theorem through the section. In Section~\ref{sec:ggmt-five}, we explain the five operations involved in the GGMT tree. In Section~\ref{sec:algoaccesstosumfiber}, we give algorithmic implementations for these operations. In Section~\ref{sec:bucketing}, we introduce bucketting to regularize the algorithmic access of the variables as we go through the GGMT tree and ensure that the slack of the variables is bounded. Finally in Section~\ref{sec:samplingtrajectory}, we show how one can probabilistically sample a path in the GGMT tree to find the random variable $X_T$ satisfying Theorem~\ref{thm:ggmt-recursion}.

\subsection{The five GGMT operations}\label{sec:ggmt-five}
Let $X$ and $Y$ be random variables over $\FF_2^n$.  Let
$X_0,X_1$ be independent copies of $X$ and let $Y_0,Y_1$ be independent
copies of $Y$, with all four variables mutually independent.  The five
GGMT operations applied to $(X,Y)$ are defined as follows: each operation should be thought of as taking a pair of distributions/random variables $(X,Y)$ and producing a new pair of random variables (equivalently, their distribution laws).
\begin{enumerate}
    \item \emph{Self-sum:}
    \[
        \mathsf{SelfSum}(X,Y)
        :=(X_0+X_1,\,Y_0+Y_1).
    \]

    \item \emph{Cross-sum:}
    \[
        \mathsf{CrossSum}(X,Y)
        :=(X_0+Y_0,\,X_1+Y_1).
    \]

    \item \emph{Self-fiber:}
    First sample
    \[
        x\sim \law(X_0+X_1),
        \qquad
        y\sim \law(Y_0+Y_1)
    \]
    independently.  We then restrict to the corresponding fibers of the
    addition map and define
    \[
        \mathsf{SelfFiber}(X,Y)
        :=
        \left(
            \law(X_0\mid X_0+X_1=x),
            \law(Y_0\mid Y_0+Y_1=y)
        \right).
    \]
Intuitively, the first argument is the distribution over a random variable defined as follows: among all pairs of $X_0,X_1$ whose sum equals $x$, how is the first argument distributed.
    \item \emph{Cross-fiber:}
    First sample
    \[
        z_0\sim \law(X_0+Y_0),
        \qquad
        z_1\sim \law(X_1+Y_1)
    \]
    independently.  We then define
    \[
        \mathsf{CrossFiber}(X,Y)
        :=
        \left(
            \law(X_0\mid X_0+Y_0=z_0),
            \law(Y_1\mid Y_1+X_1=z_1)
        \right).
    \]
    Here each fiber is obtained by conditioning on the sum of one copy of
    $X$ and one copy of $Y$.

    \item \emph{Endgame:}
    Define
    \[
        Z:=X_1+Y_0,
        \qquad
        W:=X_0+X_1+Y_0+Y_1.
    \]
    Sample $(z,w)$ according to the joint distribution of $(Z,W)$, and let
    \[
        U_{z,w}
        :=
        \law(X_1+Y_1\mid Z=z,\,W=w).
    \]
    We then set
    \[
        \mathsf{Endgame}(X,Y):=(U_{z,w},U_{z,w}).
    \]
\end{enumerate}

We then have the following promise from~\cite{ggmt2024torsion} which we have written for the special case of $\FF_2^n$.
\begin{theorem}\label{thm:ggmt-decrement}
There is a universal constant $\eta>0$ such that
at least one of these five raw candidate families, with $(X',Y')$ drawn
according to its natural law, satisfies
\[
\E\!\left[
    \der{X'}{Y'}
    +\eta\cdot \der{X}{X'}+\eta \cdot \der{Y}{Y'}
\right]
\leq (1-\eta)\der{X}{Y}.
\]
For the self-sum and cross-sum families the output laws are
deterministic, while for the other three families the expectation is over
their fiber or endgame labels.  We use this constant $\eta$
for the remainder of the section.
\end{theorem}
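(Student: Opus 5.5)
This statement is quoted from \cite{ggmt2024torsion} (specialized to $\FF_2^n$), so the plan is to reconstruct the GGMT entropy-decrement argument rather than prove something new. Write $k\coloneqq\der{X}{Y}$ and let $\eta$ be a small constant to be fixed at the end. Suppose, for contradiction, that none of the five families achieves the stated decrement. Each failure then becomes a lower bound of the form
\[
\E\bigl[\der{X'}{Y'}\bigr]\geq (1-\eta)k-\eta\,\E\bigl[\der{X}{X'}+\der{Y}{Y'}\bigr].
\]
The movement terms are controlled by standard Ruzsa calculus: $\der{X}{X_0+Y_0}$ and the analogous terms are $O(k)$, and fiber movement is bounded by conditional Ruzsa inequalities of the type in \cref{lem:ruzsa-conditioning}. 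Consequently every failing family yields $\E[\der{X'}{Y'}]\geq k-O(\eta k)$.

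\textbf{Step 1: the fibring identity.} Apply the fibring identity to the map $(a,b)\mapsto a+b$ on the four independent variables $X_0,Y_0,X_1,Y_1$. The cross-sum distance $\der{X_0+Y_0}{X_1+Y_1}$ plus the expected conditional distance of the cross-fibers equals $2k-\MI[\,\cdot\,:\,\cdot\mid \cdot\,]$, where the mutual information is between $X_0+Y_0$ and $Y_0+Y_1$ given the total sum $W=X_0+X_1+Y_0+Y_1$. The same identity with the pairings permuted handles the self-sum and self-fiber families. Combining these with the four failure bounds from the first paragraph forces
\[
I\coloneqq\MI[X_0+Y_0 : Y_0+Y_1\mid W]=O(\eta k).
\]
Since in characteristic $2$ the three variables $U=X_0+Y_0$, $V=Y_0+Y_1$ and $U+V=X_0+Y_1$ sum to $W$, the same argument yields symmetric bounds for all three pairs.

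\textbf{Step 2: the endgame.} This is the step I expect to be the main obstacle. Given $W$, the three variables $U,V,U+V$ are pairwise nearly independent. An entropic Balog--Szemer\'edi--Gowers-type lemma applied to the pair $(U,V)$ conditioned on $W$ then shows that a typical fiber $U_{z,w}$ (the variable in the endgame) has small self-distance, roughly
\[
\der{U_{z,w}}{U_{z,w}}\leq O(\eta k)+\text{(small multiple of }k\text{)}.
\]
The construction only uses $U_{z,w}$ as a fiber of $X_1+Y_1$ given $(Z,W)$. Its movement is bounded as follows: $\der{X}{U_{z,w}}$ is at most $\der{X}{X_1+Y_1}$ plus conditioning costs $\tfrac12\MI$, and this total is $O(k)$ by \cref{lem:ruzsa-conditioning} together with the triangle inequality \cref{lem:ruzsa-triangle}. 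The constants have to be tracked so that the endgame potential is at most $(c+O(\eta))k$ with $c<1$. The delicate part is the three-way symmetrization: one has to average the BSG bound over the three choices of pair and use that their entropies add up to the entropy of $W$. This is what yields a \emph{strict} constant-factor gain.

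\textbf{Step 3: closing the contradiction.} The endgame family therefore satisfies $\E[\Phi]\leq(1-\eta)k$ once $\eta$ is small enough relative to the absolute constants from Steps 1--2, contradicting the assumption that every family fails. For the actual write-up I would cite \cite{ggmt2024torsion} directly, noting that their proof is stated for torsion groups and specializes verbatim to $\FF_2^n$, where $-x=x$ removes the sign bookkeeping.
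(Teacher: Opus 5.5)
The paper gives no proof of this statement; it imports it from \cite{ggmt2024torsion}, so your fallback of citing it is exactly what the paper does. Your reconstruction, however, has a genuine gap at Steps~2--3. The three-way symmetrization, taken from the $\FF_2^n$ argument of \cite{ggmt2025conjecture} (where every pair is an admissible competitor for a $\tau$-minimizer), averages the entropic BSG bound over the three ways of conditioning among $U=X_0+Y_0$, $V=Y_0+Y_1$ and $U+V=X_0+Y_1$ given $W$. In that sum the entropy defects $2\entropy[T_k\mid W]-\entropy[T_i\mid W]-\entropy[T_j\mid W]$ cancel, so you only learn that \emph{some} rotation is good. Conditioning on $U$ or on $U+V$ (a cross-sum) is, up to a translate, the listed Endgame. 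Conditioning on the self-sum $V$ instead outputs $(X_0+Y_0\mid X_0+X_1,\,Y_0+Y_1)$, a sum of two self-fibers, which is not one of the five families. So ``the endgame family therefore satisfies $\E[\Phi]\le(1-\eta)k$'' does not follow. For the listed endgame alone, BSG with $A=X_1+Y_1$, $B=Y_0+Y_1$, $A+B=Z$, conditioned on $W$, gives
\[
\E\,\der{U_{z,w}}{U_{z,w}}\le 3\,\MI[X_1+Y_1:Y_0+Y_1\mid W]+2\entropy[X+Y]-\entropy[X+X']-\entropy[Y+Y'].
\]
The entropy part equals $2\der{X}{Y}-\der{X}{X}-\der{Y}{Y}$, which can a priori be of order $k$. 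This is the unexplained ``small multiple of $k$'' in your Step~2, and nothing in your sketch controls it.

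The missing observation is that this entropy part equals $2\bigl(\der{X_0+X_1}{Y_0+Y_1}-\der{X_0+Y_0}{X_1+Y_1}\bigr)$. The self fibring identity gives $\der{X_0+X_1}{Y_0+Y_1}\le 2k-\der{X_0\mid X_0+X_1}{Y_0\mid Y_0+Y_1}\le(1+O(\eta))k$ when SelfFiber fails. Failure of CrossSum gives $\der{X_0+Y_0}{X_1+Y_1}\ge(1-O(\eta))k$. Hence the entropy part is $O(\eta k)$.

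The mutual information is the gap in that same self identity (given $W$, $X_1+Y_1$ and $X_0+Y_0$ determine each other), so it is $O(\eta k)$ when SelfSum and SelfFiber fail. The endgame movement is $O(k)$, for example by \cref{lem:ruzsa-conditioning} with $\MI[X_1+Y_1:(Z,W)]\le 2k$. So the Endgame potential is $O(\eta k)<(1-\eta)k$ for small $\eta$, with no averaging needed (and, incidentally, without using CrossFiber).

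Relatedly, Step~1 has the two gaps swapped. The cross identity's gap is $\MI[X_0+Y_0:X_0+Y_1\mid W]$, between two cross-sums. Your $\MI[X_0+Y_0:Y_0+Y_1\mid W]$ is the self identity's gap, and it is the one the listed endgame actually needs. Also, $U$, $V$ and $U+V$ sum to $0$, not to $W$.

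For the same reason, citing the $\FF_2^n$ argument ``verbatim'' is not quite enough. That argument is phrased for $\tau$-minimizers against a fixed base pair, with all rotations available as competitors. Matching this local five-family statement requires the adaptation above.
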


\subsection{Algorithmic access to candidate families}
\label{sec:algoaccesstosumfiber}
We now describe how the algorithmic access  to $(X_{t+1},Y_{t+1})$ is constructed given algorithmic access to $(X_t,Y_t)$ and corresponding to each of the possible candidate decisions as described in Section~\ref{sec:ggmt-five}.

\subsubsection{Sums}
We first give a procedure that given access to random variables $X,Y$ allows to perform the sum operation (the first two components of the GGMT operations).
\begin{lemma}\label{lem:closure-sum}
Suppose we have algorithmic access to $X,Y$. Then, there is a protocol
that gives algorithmic access to $X+Y$  with envelope $M_Y$ and slack satisfying
$$
\Delta[X+Y] \leq \Delta[Y] + 2 \ruzsadist[X; Y].
$$
\end{lemma}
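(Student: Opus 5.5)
The plan is to implement the two procedures directly and verify the envelope and slack bound by calculation. For $\Sample_{X+Y}$ we call $\Sample_X$ and $\Sample_Y$ independently and return $x+y$. The output is exactly distributed as $X+Y$, with expected cost equal to the sum of the two sampler costs.

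For $\Coin_{X+Y}$ on input $z$, we draw $x\sim\Sample_X$ and return $\Coin_Y(z+x)$, recalling that $z-x=z+x$ in $\FF_2^n$. The acceptance probability is
\[
\sum_x p_X(x)\,\frac{p_Y(z+x)}{M_Y}=\frac{p_{X+Y}(z)}{M_Y},
\]
so this is a valid density coin with envelope $M_{X+Y}=M_Y$. We still need $M_Y\ge\|p_{X+Y}\|_\infty$. This holds because $p_{X+Y}(z)=\sum_x p_X(x)p_Y(z+x)\le \|p_Y\|_\infty\le M_Y$. The expected running time is again the sum of one sample from $X$ and one coin call for $Y$.

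For the slack, we use the definition in \eqref{def:slack_coin}:
\[
\Delta[X+Y]=\entropy[X+Y]+\log M_Y=\entropy[X+Y]-\entropy[Y]+\Delta[Y].
\]
It therefore suffices to show $\entropy[X+Y]-\entropy[Y]\le 2\ruzsadist[X;Y]$. By \eqref{def:ruzsadist}, with $X,Y$ independent,
\[
\entropy[X+Y]-\entropy[Y]=\ruzsadist[X;Y]+\tfrac12\bigl(\entropy[X]-\entropy[Y]\bigr).
\]
The first part of \cref{lem:basic-entropic-facts} gives $\entropy[X]-\entropy[Y]\le 2\ruzsadist[X;Y]$. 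Substituting yields
\[
\entropy[X+Y]-\entropy[Y]\le \ruzsadist[X;Y]+\ruzsadist[X;Y]=2\ruzsadist[X;Y],
\]
as needed. In fact, since $\entropy[X+Y]\ge\max\{\entropy[X],\entropy[Y]\}$, one could sharpen the constant, but the stated bound suffices.

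There is no real obstacle here. The only points needing care are the following. The coin must use fresh independent randomness on each call, so that its outputs are independent Bernoulli samples. The envelope $M_Y$ never needs to be known numerically. The slack is defined relative to the chosen envelope, so we must take $M_{X+Y}=M_Y$ exactly rather than $\|p_{X+Y}\|_\infty$.
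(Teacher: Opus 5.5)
Your proof is correct and follows the paper's. You use the same sampler and coin and the same envelope check $p_{X+Y}(z)\le\|p_Y\|_\infty\le M_Y$, and your slack bound rests on $\entropy[X+Y]\ge\entropy[X]$: the paper uses this inequality directly, while you reach it through the entropy-difference bound of \Cref{lem:basic-entropic-facts}, which is proved from it. Your closing aside is wrong, though: the constant $2$ cannot be sharpened, since for $X=U_W$ with $W$ a subspace and $Y$ a point mass with $M_Y=1$ one gets $\Delta[X+Y]-\Delta[Y]=\entropy[X+Y]-\entropy[Y]=\log|W|=2\ruzsadist[X;Y]$.
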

\begin{proof}
We construct algorithmic access to $X+Y$ as follows:
\begin{enumerate}[$(i)$]
\item $\Sample_{X+Y}$ runs $x \gets \Sample_X, y \gets \Sample_Y$ independently and then outputs $x+y$.
\item $\Coin_{X+Y}(z)$ samples $x \gets \Sample_X$ and outputs $\Coin_Y(z-x)$.
\end{enumerate}
The sampler is correct by definition. We now note that
\begin{align*}
p_{X+Y}(z) = \sum_x p_X(x)p_Y(z-x) \leq \sum_x p_X(x)M_Y = M_Y, \quad \forall z \in \FF_2^n,
\end{align*}
which implies that $M_Y$ is an upper bound on $p_{X+Y}(z), \forall z$ and is a valid envelope for $X+Y$.
We also note that $\Coin_{X+Y}$ gives us the desired output by evaluating
\begin{align*}
\Pr[\Coin_{X+Y}(z) = 1]
= \frac{1}{M_Y}\sum_{x\in\FF_2^n}p_X(x)p_Y(z-x)
= \frac{p_{X+Y}(z)}{M_Y}.
\end{align*}
where the equality follows from 
\[
    p_{X+Y}(z)
    = \Pr[X+Y=z]
    = \sum_x \Pr[X=x,\,Y=z-x]
    =  \sum_x p_X(x)p_Y(z-x),
\]
which used the independence of $X,Y$. Finally, we compute the slack:
\begin{equation}\label{eq:temp-slack-sum}
\Delta[X+Y] = \entropy[X+Y] + \log M_{X+Y} = \entropy[X+Y] + \log M_Y = \Delta[Y] + \entropy[X+Y] - \entropy[Y],    
\end{equation}
where we used the definition of $\Delta[Y]$ in the last equality. Since $X$ and $Y$ are independent, 
\begin{align}\label{eqn:sum-slack-bound}
\entropy[X+Y] \geq \entropy[X+Y \mid Y] = \entropy[X].
\end{align}
From the definition of the entropic Ruzsa distance, we have
\begin{equation}\label{eq:interim_dist_R}
2 \ruzsadist[X;Y] = 2 \entropy[X+Y] - \entropy[X] - \entropy[Y] \geq \entropy[X+Y] - \entropy[Y],
\end{equation}
where we used Eq.~\eqref{eqn:sum-slack-bound} in the last inequality. Substituting the above into Eq.~\eqref{eq:temp-slack-sum} gives us
$$
\Delta[X+Y] \leq \Delta[Y] + 2 \ruzsadist[X;Y],
$$
which proves the lemma statement.
\end{proof}

\subsubsection{Fibers} 
We next give a procedure that given access to random variables $X,Y$ allows to perform the fibering operation (the third and fourth components of the GGMT operations).
\begin{lemma}\label{lem:closure-fiber}
Suppose we have algorithmic access to independent $X,Y$. Let $Z := X + Y$. Then, for every fixed $z \in \supp(Z)$, there is a protocol that constructs algorithmic access to $X_z:=\law(X\mid Z=z)$, with envelope $M_{X_z}:=M_X M_Y/p_Z(z)$. A sample from $\Sample_{X_z}$ is accepted with probability $p_Z(z)/M_Y$, so the expected number of times we run the sampler is $M_Y/p_Z(z)$.

Further, if the label is drawn as $z \sim X + Y$, then
\begin{align}
    \E_{z \sim Z}\Big[\Delta(X_z)\Big] &= \Delta[X] + \Delta[Y], \label{eq:fiber_slack} \\
    \E_{z \sim Z}\left[\log\left(\frac{M_Y}{p_Z(z)}\right)\right] &= \entropy[X+Y] + \log M_Y \leq \Delta[Y] + 2 \ruzsadist[X; Y]. \label{eq:fiber_cost}
\end{align}
\end{lemma}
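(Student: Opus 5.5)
The proof has three parts: build the two procedures, verify their output laws, and compute the expectations over the label $z$ directly from the definition of slack.

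\textbf{Sampler.} For fixed $z\in\supp(Z)$, $\Sample_{X_z}$ repeatedly draws $x\gets\Sample_X$ and runs $\Coin_Y(z-x)$ with fresh randomness. It returns the first $x$ whose coin outputs $1$.

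\textbf{Coin.} On input $x$, $\Coin_{X_z}(x)$ runs $\Coin_X(x)$ and $\Coin_Y(z-x)$ independently and outputs $1$ iff both output $1$.

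\textbf{Correctness of the sampler.} In one round, the probability that $x$ is drawn and accepted is $p_X(x)p_Y(z-x)/M_Y$. Summing over $x$, exactly as in the proof of \Cref{lem:closure-sum}, the per-round acceptance probability is $p_Z(z)/M_Y$, which is positive since $z\in\supp(Z)$. Rounds are i.i.d., so the number of rounds is geometric with mean $M_Y/p_Z(z)$. Conditioned on acceptance, the output has law
\[
\frac{p_X(x)p_Y(z-x)}{p_Z(z)}=p_{X_z}(x),
\]
using independence of $X$ and $Y$.

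\textbf{Correctness of the coin.} Its acceptance probability is
\[
\frac{p_X(x)p_Y(z-x)}{M_XM_Y}=\frac{p_{X_z}(x)}{M_XM_Y/p_Z(z)}.
\]
The envelope $M_{X_z}=M_XM_Y/p_Z(z)$ is valid: since $p_X\le M_X$, we get $p_{X_z}(x)\le M_X p_Y(z-x)/p_Z(z)\le M_XM_Y/p_Z(z)$.

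\textbf{Slack identity \eqref{eq:fiber_slack}.} We have $\Delta[X_z]=\entropy[X\mid Z=z]+\log M_X+\log M_Y-\log p_Z(z)$. Averaging over $z\sim Z$ gives $\E_z\log(1/p_Z(z))=\entropy[Z]$, so
\[
\E_z\Delta[X_z]=\entropy[X\mid Z]+\entropy[Z]+\log M_X+\log M_Y=\entropy[X,Z]+\log M_X+\log M_Y.
\]
The map $(x,y)\mapsto(x,x+y)$ is a bijection, so $\entropy[X,Z]=\entropy[X,Y]$. By independence this equals $\entropy[X]+\entropy[Y]$. Hence $\E_z\Delta[X_z]=\Delta[X]+\Delta[Y]$.

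\textbf{Cost bound \eqref{eq:fiber_cost}.} Directly, $\E_z\log\bigl(M_Y/p_Z(z)\bigr)=\entropy[X+Y]+\log M_Y$. The right-hand side is exactly the quantity $\Delta[X+Y]$ computed with envelope $M_Y$ in \Cref{lem:closure-sum}. Equivalently, by \eqref{eq:interim_dist_R} it equals $\Delta[Y]+(\entropy[X+Y]-\entropy[Y])\le\Delta[Y]+2\ruzsadist[X;Y]$.

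\textbf{Running time.} The expected sampler cost is $M_Y/p_Z(z)$ times the cost of $\Sample_X$ plus $\Coin_Y$. This per-label cost is not uniformly $\poly(n,K)$. The lemma only claims the stated acceptance rate, and typical-case control comes from \eqref{eq:fiber_cost}, so I will report the cost in these terms rather than assert a worst-case polynomial bound.

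The only slightly delicate step is the entropy bookkeeping in \eqref{eq:fiber_slack}, specifically using the bijection to get $\entropy[X,Z]=\entropy[X]+\entropy[Y]$.
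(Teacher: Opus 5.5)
Your proposal is correct and follows the paper's proof essentially step for step. It uses the same rejection sampler and product coin, the same Bayes computation of the output law, the chain-rule identity $\E_z\Delta[X_z]=\entropy[X,Z]+\log M_X+\log M_Y$ with $\entropy[X,Z]=\entropy[X]+\entropy[Y]$, and the same cost bound via $\entropy[X+Y]-\entropy[Y]\le 2\ruzsadist[X;Y]$; your explicit check that $M_XM_Y/p_Z(z)\ge\|p_{X_z}\|_\infty$ and your caveat that the per-label cost $M_Y/p_Z(z)$ is not uniformly polynomial are small additions the paper leaves implicit.
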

\begin{proof}
For a fixed $z \in \supp(Z)$, we note that Baye's rule gives us
\begin{equation}\label{eq:prob_z}
p_{X_z}(x) = \Pr[X = x | X + Y = z] = \frac{p_X(x)p_Y(z-x)}{p_Z(z)}.
\end{equation}
We construct $\Sample_{X_z}$ using the following rejection sampler:
\begin{enumerate}[$(i)$]
    \item Sample $x \gets \Sample_X$.
    \item Run $w \gets \Coin_Y(z-x)$.
    \item If $w=1$, then return $x$ \textbf{else} repeat from $(i)$.
\end{enumerate}
For $\Sample_{X_z}$, the probability of outputting $x$ (i.e., sampling and accepting $x$) on a single trial is
$p_X(x)p_Y(z-x)/M_Y$. The total acceptance probability is then
$$
\sum_{x} p_X(x) \Pr[\Coin_Y(z-x)=1] = \sum_{x} p_X(x) \frac{p_Y(z-x)}{M_Y} = \frac{p_Z(z)}{M_Y}.
$$
Since we repeat trials in $\Sample_{X_z}$ until $\Coin_Y$ accepts, we have that conditioned on acceptance
$$
\Pr[\Sample_{X_z} \text{ returns } x] = \frac{p_X(x)p_Y(z-x)/M_Y}{p_Z(z)/M_Y} =\frac{p_X(x)p_Y(z-x)}{p_Z(z)},
$$
which matches the density of $X_z$ from Eq.~\eqref{eq:prob_z} as desired. This verifies the correctness of the above protocol for $\Sample_{X_z}$.

We construct the density coin $\Coin_{X_z}$ on input $x$ as follows:
\begin{enumerate}[$(i)$]
    \item Run $w_1 \gets \Coin_X(x)$ and $w_2 \gets \Coin_Y(z-x)$.
    \item Output $w_1 w_2$.
\end{enumerate}
The coin $\Coin_{X_z}(x)$ thus only accepts exactly when both density coins in $(i)$ succeed. In particular, $\Coin_{X_z}$ accepts with probability
$$
\Pr[\Coin_{X_z}(x) = 1] = \frac{p_X(x)}{M_X}\frac{p_Y(z-x)}{M_Y}
 =\frac{p_{X_z}(x)}{M_XM_Y/p_Z(z)},
$$
where we used the expression of $p_{X_z}(x)$ from Eq.~\eqref{eq:prob_z} in the seond equality. From the above expression, we then also have that the envelope is $M_{X_z} = M_X M_Y/p_Z(z)$.

We can show that Eq.~\eqref{eq:fiber_slack} holds from direct evaluation:
\begin{align*}
\E_{z \sim Z} \Delta[X_z] &= \E_{z\sim Z} \left[\entropy[X_z] + \log M_X + \log M_Y - \log p_Z(z)\right] \\
&= \entropy[X \mid Z] + \log M_X + \log M_Y + \entropy[Z] \\
&= \entropy[X, Z] + \log M_X + \log M_Y \\
&= \entropy[X] + \entropy[Y] + \log M_X + \log M_Y,
\end{align*}
where we used the definition of $\entropy[X|Z],\entropy[Z]$ in the second line, definition of $\entropy[X,Z]$ in the third line and then noted that $Z = X + Y$ along with the fact that $X,Y$ are independent random variables in the last line (i.e., $\entropy[X,Z] = \entropy[X,Y] = \entropy[X] + \entropy[Y]$). Finally, Eq.~\eqref{eq:fiber_cost} follows 
$$
\E_{z \sim Z}[\log(M_Y/p_Z(z))] = \entropy[Z] + \log M_Y \leq 2 \ruzsadist[X;Y] + \entropy[Y] + \log M_Y = 2\ruzsadist[X;Y] + \Delta[Y],
$$
where we used Eq.~\eqref{eq:interim_dist_R} in the second inequality after noting that $Z=X+Y$ by definition and the definition of $\Delta[Y]$ in the last inequality. This completes the proof.
\end{proof}

\subsubsection{Endgame}
Finally it remains to describe the algorithmic component of the final GGMT operation, i.e., the so-called endgame operation. We describe how to do that  below. 
\begin{lemma}\label{lem:access_endgame}
Suppose we have algorithmic access to $X,Y$.  Let $X_0,X_1$ be
independent copies of $X$, let $Y_0,Y_1$ be independent copies of $Y$,
and suppose all four variables are mutually independent.  Define
\[
Z:=X_1+Y_0,\qquad S:=X_0+Y_1,\qquad W:=Z+S.
\]
For a fixed $(z,w)\in\supp(Z,W)$, let $s:=z+w$ and
\[
U_{z,w}:=\law(X_1+Y_1\mid Z=z,W=w).
\]
Then there is a protocol that constructs exact algorithmic access to
$U_{z,w}$ with envelope $M_{U_{z,w}}=M_XM_Y/p_{X+Y}(s)$. The construction uses two fiber rejection samplers: one whose trial acceptance probability is $p_{X+Y}(z)/M_Y$, and one whose trial acceptance
probability is $p_{X+Y}(s)/M_X$. Moreover, when $(z,w)$ is drawn from
the natural law of $(Z,W)$,
\begin{align}
\E_{(z,w)\sim(Z,W)}\Delta[U_{z,w}]
&\leq \Delta[X]+\Delta[Y]+4\ruzsadist[X;Y],
\label{eq:endgame_slack}\\
\E_{(z,w)\sim(Z,W)}\left[
\log\frac{M_Y}{p_{X+Y}(z)}
+\log\frac{M_X}{p_{X+Y}(w+z)}
\right]
&=\Delta[X]+\Delta[Y]+2\ruzsadist[X;Y].
\label{eq:endgame_cost}
\end{align}
\end{lemma}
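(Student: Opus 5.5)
The plan is to reparametrize the conditioning so that $U_{z,w}$ becomes a sum of two independent fibers, and then reuse \Cref{lem:closure-fiber} and \Cref{lem:closure-sum}.

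\textbf{Reduction to two independent fibers.} Since $W=Z+S$, conditioning on $(Z,W)=(z,w)$ is the same as conditioning on $(Z,S)=(z,s)$ with $s=z+w$. The pair $(X_1,Y_0)$ is independent of $(X_0,Y_1)$. Also $Z$ is a function of the first pair and $S$ of the second. Hence, conditioned on $Z=z,S=s$, the variables $X_1$ and $Y_1$ are independent, with laws
\[
\law(X_1\mid X_1+Y_0=z) \qquad\text{and}\qquad \law(Y_1\mid X_0+Y_1=s).
\]
Therefore $U_{z,w}$ is the law of $X^{(z)}+Y^{(s)}$, where $X^{(z)}\sim\law(X\mid X+Y=z)$ and $Y^{(s)}\sim\law(Y\mid X+Y=s)$ are independent. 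I expect this conditional-independence step to be the only conceptual point needing care. It is routine once $(Z,S)$ is written as a pair of functions of independent blocks.

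\textbf{Access and envelope.} Apply \Cref{lem:closure-fiber} to $(X,Y)$ with label $z$. This gives access to $X^{(z)}$ via a rejection sampler with trial acceptance probability $p_{X+Y}(z)/M_Y$. Apply the same lemma with the roles of $X$ and $Y$ swapped and label $s$: sample $y\sim Y$ and accept via $\Coin_X(s-y)$. This gives access to $Y^{(s)}$ with trial acceptance probability $p_{X+Y}(s)/M_X$ and envelope $M_XM_Y/p_{X+Y}(s)$. Then apply \Cref{lem:closure-sum} to the independent pair $(X^{(z)},Y^{(s)})$. Its envelope is that of the second summand, namely $M_XM_Y/p_{X+Y}(s)$, as claimed. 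The resulting sampler and coin are exact because each component is exact.

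\textbf{Slack and cost bounds.} Rather than chaining the lemma bounds, I will compute directly:
\[
\Delta[U_{z,w}]=\entropy[U_{z,w}]+\log M_X+\log M_Y-\log p_{X+Y}(s).
\]
Averaging over $(z,w)$, the first term becomes $\entropy[X_1+Y_1\mid Z,S]$, which is at most $\entropy[X_1+Y_1]=\entropy[X+Y]$. The term $-\log p_{X+Y}(s)$ averages to $\entropy[S]=\entropy[X+Y]$, since $S\sim X+Y$. Hence the expected slack is at most
\[
2\entropy[X+Y]+\log M_X+\log M_Y=\Delta[X]+\Delta[Y]+2\ruzsadist[X;Y],
\]
using $2\entropy[X+Y]=\entropy[X]+\entropy[Y]+2\ruzsadist[X;Y]$. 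This is at most the stated bound with $4\ruzsadist[X;Y]$.

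For \eqref{eq:endgame_cost}, note that $w+z=s$ and that $Z$ and $S$ each have law $X+Y$. Then
\[
\E\log\frac{M_Y}{p_{X+Y}(z)}=\entropy[X+Y]+\log M_Y,
\qquad
\E\log\frac{M_X}{p_{X+Y}(s)}=\entropy[X+Y]+\log M_X.
\]
Summing these and using the same identity gives exactly $\Delta[X]+\Delta[Y]+2\ruzsadist[X;Y]$.
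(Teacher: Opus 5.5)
Your proof is correct. The reduction $U_{z,w}=\law(F_z+G_s)$, with your $X^{(z)},Y^{(s)}$ playing the role of the paper's independent fibers $F_z,G_s$, is the same as the paper's. So is the construction: \Cref{lem:closure-fiber} once for each label (roles swapped for the $s$-fiber), then \Cref{lem:closure-sum} with the $s$-fiber as the second summand.

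Where you differ is the slack bound. The paper chains the estimate $\Delta[F_z+G_s]\le\Delta[G_s]+2\ruzsadist[F_z;G_s]$ from \Cref{lem:closure-sum} and uses $\E_s\Delta[G_s]=\Delta[X]+\Delta[Y]$. It then bounds the averaged conditional distance $\E_{z,s}\ruzsadist[F_z;G_s]\le 2\ruzsadist[X;Y]$ by computing $\entropy[X_1\mid Z]=\entropy[Y_1\mid S]=\entropy[X]+\entropy[Y]-\entropy[X+Y]$. This produces the stated $4\ruzsadist[X;Y]$.

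You instead expand the slack directly from the envelope. You need only $\entropy[X_1+Y_1\mid Z,S]\le\entropy[X+Y]$ and $\E[-\log p_{X+Y}(S)]=\entropy[X+Y]$. This avoids the conditional Ruzsa-distance computation entirely. It also gives the sharper bound $\Delta[X]+\Delta[Y]+2\ruzsadist[X;Y]$, which implies the statement because $\ruzsadist[X;Y]\ge0$.

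Your cost computation is likewise an exact identity, matching the equality in \eqref{eq:endgame_cost}. The paper's proof writes this step as an inequality citing \eqref{eq:fiber_cost}. Only the equality part of that display yields the constant $2\ruzsadist[X;Y]$; applying its inequality part to both terms would give $4\ruzsadist[X;Y]$. Your direct evaluation is therefore the cleaner justification.
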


\begin{proof}
Consider a fixed $(z,w) \in \supp(Z,W)$ and set $s:=w+z$. We then define
$$
F_z:=\law(X_1\mid X_1+Y_0=z), \qquad G_s:=\law(Y_1\mid X_0+Y_1=s).
$$
Since the pairs $(X_1,Y_0)$ and $(X_0,Y_1)$ are independent, $Z$ and $S$ are independent. Moreover, the condition $Z=z$ and $W=w$ (or $Z+S=z+s$) is then equivalent to $Z=z$ and $S=s$.  Consequently, $U_{z,w} := (X_1 + Y_1 | Z=z, W=w)$ can be equivalently written as
$$
U_{z,w}=F_z+G_s,
$$
where the two summands on the right are independent. To implement algorithmic access to $U_{z,w}$, we first need algorithmic access to $F_z$ and $G_s$. We implement algorithmic access to $F_z$ using Lemma~\ref{lem:closure-fiber} and denote the corresponding algorithms as $\Sample_{F_z}$ and $\Coin_{F_z}$. We note that a single trial from $\Sample_{F_z}$ is accepted with probability $p_{X+Y}(z)/M_Y$ and the envelope is $M_{F_z} = M_X M_Y/p_{X+Y}(z)$. Similarly, we implement algorithmic access to $G_s$ again using Lemma~\ref{lem:closure-fiber} and denote the corresponding algorithms as $\Sample_{G_s}$ and $\Coin_{G_s}$. A single trial from $\Sample_{G_s}$ is accepted with probability $p_{X+Y}(s)/M_X$ and the envelope is $M_{G_s} = M_X M_Y/p_{X+Y}(s)$. Algorithmic access to $U_{z,w}$ is then implemented using Lemma~\ref{lem:closure-sum} with $F_z$ and $G_s$. The corresponding envelope is simply
$$
M_{U_{z,w}} = M_{G_s} = M_X M_Y/p_{X+Y}(s).
$$
It remains to prove the averaged bounds on the slack of $U_{z.w}$ and the acceptance probabilities involved as part of the construction of algorithmic access to $U_{z,w}$. Upon direct evaluation:
\begin{align}
\E_{(z,w) \sim (Z,W)} \Delta [U_{z,w}] = \E_{(z,s) \sim (Z,S)} \Delta[F_z + G_s] 
&\leq \E_{(z,s) \sim (Z,S)} \Delta[G_s] + 2 \E_{(z,s) \sim (Z,S)} \der{F_z}{G_s} \nonumber \\
&\leq \E_{s \sim S} \Delta[G_s] + 2 \E_{(z,s) \sim (Z,S)} \der{F_z}{G_s} \nonumber \\
&\leq \Delta[X] + \Delta[Y] + 2 \E_{(z,s) \sim (Z,S)} \der{F_z}{G_s}, \label{eq:interim1_slack_Uzw}
\end{align}
where we have used the earlier observation $Z=z,W=w$ is equivalent to $Z=z,S=s$ in the first equality in the first line, and used Lemma~\ref{lem:closure-sum} in the second equality in the first line to comment $\Delta[F_z + G_s] \leq \Delta[G_s] + 2 \der{F_z}{G_s}$. In the second line, we used the fact that $Z$ and $S$ are independent and both have the law of $X+Y$. In the final line, we used Lemma~\ref{lem:closure-fiber} from which we have
$$
\E_{s \sim S} \Big[ \Delta[G_s] \Big] = \Delta[X]+\Delta[Y]
$$
We now bound the last term on the right in Eq.~\eqref{eq:interim1_slack_Uzw}. Let us define $h:=\entropy[X]+\entropy[Y]-\entropy[X+Y]$ and note that 
\begin{equation}\label{eq:int2_cond_H_X1_Z}
\entropy[X_1\mid Z] = \entropy[X_1,Z] - \entropy[Z] = \entropy[X_1,Y_0] - \entropy[Z] = \entropy[X] + \entropy[Y] - \entropy[X+Y] = h,    
\end{equation}
where we have used Eq.~\eqref{eq:entropy_chain_rules} in the first equality, noted that the $Y_0 = Z + X_1$ and hence they determine each other in the second equality, and used the fact that $X_1,Y_0$ are independent along with the definition of $Z$ in the final equality. Similarly, we can show that
\begin{equation}\label{eq:int2_cond_H_Y1_S}
\entropy[Y_1\mid S] = \entropy[Y_1,S] - \entropy[S] = \entropy[Y_1,X_0] - \entropy[S] = \entropy[X] + \entropy[Y] - \entropy[X+Y] = h.
\end{equation}
Using the definition of conditional entropic Ruzsa distance (Eq.~\eqref{def:conditional_der}) and its equivalent expression (Eq.~\eqref{def:equiv_conditional_der}), we have that
\begin{align}\label{eq:bound_der_Fz_Gs}
\E_{z,s}\ruzsadist[F_z;G_s] = \entropy[X_1 + Y_1 \mid Z, S] - \frac{1}{2} \entropy[X_1 \mid Z] - \frac{1}{2} \entropy[Y_1 \mid S] \leq \entropy[X + Y] - h = 2\ruzsadist[X;Y]
\end{align}
where we have used the fact that conditioning cannot increase entropy along with Eqs~\eqref{eq:int2_cond_H_X1_Z}--\eqref{eq:int2_cond_H_Y1_S} in the first equality, and the definition of $\der{X}{Y}$ in the last equality. Substituting Eq.~\eqref{eq:bound_der_Fz_Gs} into Eq.~\eqref{eq:interim1_slack_Uzw} gives us the desired result of Eq.~\eqref{eq:endgame_slack}.

Finally, to obtain Eq.~\eqref{eq:endgame_cost}, we observe upon direct evaluation that
$$
\E_{(z,s) \sim (Z,S) }\left[ \log\frac{M_Y}{p_{X+Y}(z)} + \log\frac{M_X}{p_{X+Y}(s)} \right]
= \E_{z \sim Z} \log\frac{M_Y}{p_{X+Y}(z)} + \E_{s \sim S} \log\frac{M_X}{p_{X+Y}(s)}
\leq \Delta[X]+\Delta[Y]+2\ruzsadist[X;Y],
$$
where the first equality follows from $Z,S$ being independent and the last inequality follows from application of Eq.~\eqref{eq:fiber_cost} of Lemma~\ref{lem:closure-fiber}. This completes the proof.
\end{proof}

\subsection{Bucketing to regularize access}
\label{sec:bucketing}
So far, we showed that algorithmic access to all the intermediate random variables used in GGMT operations is possible given access to random variables in the iterative process.   However, we also need to 
control the quantitative \emph{quality} of this algorithmic representation as the recursion
proceeds. To this end, recall that the \emph{slack} of our algorithmic access to a random variable $X$ is 
\[
\Delta[X] := \entropy[X] + \log M_X = \E_{x\sim X} \left[ \log \frac{1}{r_X(x)} \right],
\qquad
r_X(x) := \Pr[\mathsf{Coin}_X(x)=1] = \frac{p_X(x)}{M_X}.
\]
Intuitively, slack measures the typical inverse acceptance probability of the density coin, i.e., measures how small the density-coin acceptance probability
typically is on a sample $x\sim X$. So $\Delta[X]=O(1)$ means that, for a typical sample from $X$, the coin
acceptance probability $r_X(x)$ is not exponentially small. This is exactly
the quantity relevant to the rejection-sampling procedures used later in the
recursion.   Since fiber sampling is implemented by rejection sampling using that coin, large slack means the next fiber can require exponentially many trials cost of the rejection samplers
used in subsequent fiber operations.

At the root of the recursion, $X=U_A$ and $\Delta[U_A]=0$, so the
density-coin representation starts with zero slack. 
The main issue in the GGMT operations is fibering: sampling a fiber is implemented by rejection sampling, whose cost is governed by the acceptance probability of the current density coin. While this cost is controlled for a single typical fiber, successive fiber operations can cause the slack of our representation to accumulate, making later rejection samplers increasingly expensive.
Equivalently, the density coins of the
intermediate distributions may have smaller and smaller acceptance 
probabilities on typical samples, making subsequent rejection samplers
increasingly expensive.  Since a GGMT trajectory contains several such
operations, controlling the cost of each operation in isolation is not
enough.

To prevent this accumulation, after every GGMT transformation we apply a
\emph{bucketing} operation.  Informally, bucketing conditions the current
random variable on a coarse dyadic estimate of the acceptance probability
$r_X(x)$.  Within a selected bucket, the values of $r_X(x)$ are all on
approximately the same scale, and this regularizes the density coin and
restores constant slack on average.  The key point is that this computational regularization is cheap from the information-theoretic point 
of view: the bucket label reveals only $O(\log \Delta[X])$ bits of
information and furthermore the new slack is only a constant.  Thus bucketing can restore efficient algorithmic access keeping slack constant and 
without losing more than a logarithmic amount in the entropic Ruzsa-distance
potential (which we show in the next section).

\begin{lemma}
\label{lem:bucket}
Given algorithmic access to $X$, there is a random variable $J$ taking nonnegative integer values such that $\MI[X:J] = O(\log \Delta[X])$. Moreover, there is an algorithm that constructs algorithmic access to $X_j := (X \mid J = j)$ with envelope and expected slack satisfying
\begin{align*}
\E_{j \sim J} [\Delta[X_{j}] ] = O(1), \text{ and } M_{X_j} = M_X/(2^j \cdot \Pr[J = j]),
\end{align*}
respectively. Further, $\Sample_{X_j}$ uses an expected number of parent calls $O(2^{j+1}/\Pr[J = j])$,\footnote{Here, ``parent calls'' refers to calls to the algorithmic access procedures
for the original random variable $X$ before bucketing, namely
$\mathsf{Sample}_X$ and $\mathsf{Coin}_X$.} and one call to $\Coin_{X_j}$ uses at most $2^{j+1}$ parent coins.
\end{lemma}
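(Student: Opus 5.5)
We define $J$ by the natural experiment. Sample $x\sim X$ and run independent copies of $\Coin_X(x)$ until the first success, at trial $T$. Set $J:=\lfloor\log_2 T\rfloor$. Conditioned on $X=x$, $T$ is geometric with parameter $r:=r_X(x)$. Hence
\[
q_j(x):=\Pr[J=j\mid X=x]=(1-r)^{2^j-1}\bigl(1-(1-r)^{2^j}\bigr),
\]
and $p_{X_j}(x)=p_X(x)q_j(x)/\Pr[J=j]$.

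\textbf{Sampler.} $\Sample_{X_j}$ is rejection sampling. Draw $x\gets\Sample_X$, simulate $J$ by running at most $2^{j+1}-1$ parent coins, and accept iff $J=j$. Each trial costs at most $2^{j+1}$ parent calls and succeeds with probability $\Pr[J=j]$, which gives the stated $O(2^{j+1}/\Pr[J=j])$ expected parent calls.

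\textbf{Coin.} Here we need acceptance probability $p_{X_j}(x)/M_{X_j}=2^j r\,q_j(x)$. The plan is to notice that
\[
2^j r(1-r)^{2^j-1}=\Pr[\text{exactly one success in }2^j\text{ trials}],
\]
while $1-(1-r)^{2^j}$ is the probability of at least one success in $2^j$ further trials. So $\Coin_{X_j}(x)$ runs $2^j$ parent coins and requires exactly one success, then runs $2^j$ more and requires at least one success. This uses at most $2^{j+1}$ parent coins and yields exactly the envelope $M_{X_j}=M_X/(2^j\Pr[J=j])$. Validity of the envelope, i.e.\ that the ratio is at most $1$, is automatic because the ratio is a probability.

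\textbf{Mutual information.} Bound $\MI[X:J]\le\entropy[J]$. Jensen conditioned on $x$ gives $\E[J\mid x]\le\E[\log T\mid x]\le\log\E[T\mid x]=\log(1/r)$. Averaging over $x\sim X$ gives $\E J\le\Delta[X]$. \Cref{fact:ub_entropy_nonneg_int_RV} then gives $\entropy[J]=O(\log(\Delta[X]+2))$.

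\textbf{Slack identity.} Compute directly:
\[
\E_j\Delta[X_j]=\entropy[X\mid J]+\log M_X-\E J+\entropy[J]=\entropy[X,J]+\log M_X-\E J .
\]
Since $\entropy[X,J]=\entropy[X]+\entropy[J\mid X]$, this equals
\[
\E_{x\sim X}\Bigl[\log\tfrac1{r_X(x)}-\E[J\mid x]\Bigr]+\entropy[J\mid X].
\]

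\textbf{Main obstacle.} This is the step I expect to be the main obstacle: showing both terms are $O(1)$ uniformly in $r$. For fixed $r$, let $L:=\log(1/r)$. The lower tail satisfies $\Pr[J\le L-k]\le\Pr[T\le 2^{-k+1}/r]\le 2^{-k+1}$ by a union bound. The upper tail satisfies $\Pr[J\ge L+k]\le(1-r)^{2^{L+k}-1}\le e^{-2^{k}+1}$. So $J-L$ has geometric tails on both sides. Consequently $\E[L-J\mid x]=O(1)$ and $\entropy[J\mid X=x]=O(1)$, the latter because a distribution on the integers concentrated within geometric tails around $L$ has bounded entropy, e.g.\ by \Cref{fact:ub_entropy_nonneg_int_RV} applied to $|J-\lfloor L\rfloor|$ plus one sign bit. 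Care is needed for $r$ close to $1$, where $L<1$. There the same tail bounds still hold with $J\ge 0$, so the constants remain uniform. Summing the two bounds gives $\E_j\Delta[X_j]=O(1)$.
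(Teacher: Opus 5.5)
Your proposal is correct and follows the paper's proof essentially step for step. It uses the same dyadic label $J=\lfloor\log_2 T\rfloor$, the same rejection sampler, the same coin (exactly one success in the first $2^j$ trials, then at least one in the next $2^j$), the same Jensen-plus-geometric-entropy bound for the mutual information, and the same slack decomposition $\entropy[J\mid X]+\E_x\bigl[\log(1/r_X(x))-\E[J\mid x]\bigr]$, controlled by two-sided tail bounds on $J$ around $\log(1/r)$ and a sign bit.
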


\paragraph{Bucketing algorithm.} Before giving the proof of the above lemma, let us introduce some relevant notation and the underlying algorithm. We recall that $\Coin_X$ accepts $x$ with probability $r_X(x) = p_X(x)/M_X$ and the slack is $\Delta[X] = \E_{x \sim x} \log(1/r_X(x))$. We would like to regularize $X$ to have constant slack by conditioning $X$ on a coarse estimate of $\log(1/r_X(x))$. As $r_X(x)$ for a fixed $x \in \supp(X)$ is not known from the algorithmic access to $X$, we need to obtain an estimate of this quantity. This can be done by sampling $x \sim X$, then repeatedly running $\Coin_X(x)$ and letting $N_X$ be the first successful trial. Conditioned on $X=x$, we note that $N_x$ is geometric with mean $1/r_X(x)$ i.e., $N_X \sim \textsf{Geom}(r_X(x))$ and $\E[N_X | X=x] = 1/r_X(x)$. 

Let us define the random variable $J$ which takes value $J=j$ when $2^j \leq N_X < 2^{j+1}$ or in other words $J := \floor{\log N_X}$ is a dyadic estimate of $\log(1/r_X(x))$. We then define
\begin{equation}\label{def:bucket_quantities}
    \quad X_j := (X\mid J=j) = (X \mid 2^j \le N_X < 2^{j + 1}), \quad \text{and} \quad \pi_j := \Pr[J=j].
\end{equation}
The goal is to then implement algorithmic access to $X_j$, which is the desired regularized version of $X$. The above protocol for obtaining $N_X$ and implementing access to $X_j$ is described in Algorithm~\ref{algo:bucket}, which we will use for Lemma~\ref{lem:bucket}.
\begin{myalgorithm}
\begin{algorithm}[H]
    \caption{$\mathsf{Bucket}(X)$}
    \label{algo:bucket}
    \setlength{\baselineskip}{1.5em} %
    \DontPrintSemicolon %
    \KwInput{Algorithmic access to $X$ (Definition~\ref{def:algo_access})}
    \KwOutput{Label $j$, algorithmic access to $X_j$}
    \vspace{2mm}
    Sample $x \gets \Sample_X$. \\ 
    Run $\Coin_X(x)$ until the first success and denote $N_x$ as that time. \\ 
    Set $j:=\floor{\log N_x}$. \label{algo_line:sample_j} \\
    Define \Fn{$\Sample_{X_j}$ \label{algo_line:def_sample_Xj}}{
    Sample $x\gets\Sample_X$ and draw independent $\Coin_X(x)$ outcomes \label{algo_line:bucket_sample_x} \;
    If the first $2^j-1$ coins fail and at least one of the next $2^j$ coins succeeds, output $x$. \;
    Otherwise repeat from step \ref{algo_line:bucket_sample_x}. \;
    } \vspace{2mm}
    Define \Fn{$\Coin_{X_j}$ \label{algo_line:def_coin_Xj}}{
        Draw $2^{j+1}$ independent $\Coin_X(x)$ outcomes. \\
        Accept if the first $2^j$ trials contain exactly one success and the next $2^j$ trials contain at least one success. \label{algo_line:parent_calls_coin_Xj} \\
    } \vspace{2mm}
    \textbf{Return} ($j$, $\Sample_{X_j}$, $\Coin_{X_j}$).
\end{algorithm}
\end{myalgorithm}

\paragraph{Guarantees of bucketing.}
We are now ready to provide a proof of Lemma~\ref{lem:bucket} by first showing the correctness of the implementation of algorithmic access to $X_j$ in Algorithm~\ref{algo:bucket} and then proving two information-theoretic bounds.  
\begin{claim}\label{claim:bucket_algo_access}
Consider the context of Lemma~\ref{lem:bucket}. Algorithm~\ref{algo:bucket} implements algorithmic access to $X_j$, with envelope $M_{X_j} = M_X/(2^j \Pr[J = j])$, for $j$ sampled as in line~\ref{algo_line:sample_j}. $\Sample_{X_j}$ uses an expected number of $O(2^{j+1}/\Pr[J = j])$ calls to the algorithmic access of $X$ and $\Coin_{X_j}$ uses at most $2^{j+1}$ calls to $\Coin_X$.
\end{claim}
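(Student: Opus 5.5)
The plan is to compute everything in terms of the single-point quantity
\[
q_j(r) := (1-r)^{2^j-1}\bigl(1-(1-r)^{2^j}\bigr),
\]
which is exactly $\Pr[J=j\mid X=x]$ when $r=r_X(x)$. Conditioned on $X=x$, the first-success time $N_x$ is geometric with parameter $r_X(x)$. The event $2^j\le N_x<2^{j+1}$ says that trials $1,\dots,2^j-1$ fail and at least one of trials $2^j,\dots,2^{j+1}-1$ succeeds; this window contains $2^j$ trials. So the bucket label in line~\ref{algo_line:sample_j} of Algorithm~\ref{algo:bucket} is a sample of $J$ under its natural law. Moreover, by Bayes' rule,
\[
p_{X_j}(x)=\frac{p_X(x)\,q_j(r_X(x))}{\pi_j},\qquad \pi_j=\sum_x p_X(x)\,q_j(r_X(x)).
\]

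\textbf{Sampler.} $\Sample_{X_j}$ is a rejection sampler. Each trial draws $x\sim X$ and then accepts with probability exactly $q_j(r_X(x))$, since the acceptance rule is precisely the event $2^j\le N_x<2^{j+1}$ for fresh coins. Hence the law of the output is proportional to $p_X(x)q_j(r_X(x))$, which is $p_{X_j}$. The per-trial acceptance probability is $\pi_j$, so the expected number of trials is $1/\pi_j$. Each trial uses one call to $\Sample_X$ and at most $2^{j+1}-1$ calls to $\Coin_X$; we may stop a trial early once a coin within the first $2^j-1$ succeeds or once the window produces a success. This gives $O(2^{j+1}/\pi_j)$ expected parent calls.

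\textbf{Coin.} On input $x$, $\Coin_{X_j}$ draws $2^{j+1}$ fresh parent coins. Write $r=r_X(x)$. The first block of $2^j$ coins has exactly one success with probability $2^j r(1-r)^{2^j-1}$. The second block has at least one success with probability $1-(1-r)^{2^j}$. The blocks are independent, so the acceptance probability is
\[
2^j r\,(1-r)^{2^j-1}\bigl(1-(1-r)^{2^j}\bigr)=2^j\,r\,q_j(r)=\frac{2^j p_X(x)q_j(r_X(x))}{M_X}=\frac{p_{X_j}(x)}{M_X/(2^j\pi_j)}.
\]
This has the form required by Definition~\ref{def:algo_access}, with $M_{X_j}=M_X/(2^j\pi_j)$. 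The call uses exactly $2^{j+1}$ parent coins.

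\textbf{Envelope validity.} The only step that needs care is checking that $M_{X_j}\ge\|p_{X_j}\|_\infty$. This is where the design of the coin matters: I deliberately made the acceptance probability equal to an honest event probability rather than a formal ratio. Since the acceptance probability computed above is at most $1$, we get $p_{X_j}(x)/M_{X_j}\le 1$ for every $x$. So validity of the envelope is automatic, and no bound on $\max_r 2^j r\,q_j(r)$ is needed.

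Finally, all parent calls are to $\Sample_X$ and $\Coin_X$ only. The value of $M_X$ never has to be known, and neither does $\pi_j$, which completes the claim.
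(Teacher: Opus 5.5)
Your proposal is correct and takes essentially the same route as the paper. Both arguments compute $\Pr[J=j\mid X=x]=(1-r)^{2^j-1}\bigl(1-(1-r)^{2^j}\bigr)$, obtain $p_{X_j}$ by Bayes' rule, and note that the rejection sampler's per-trial acceptance probability is $\pi_j$. Both then compute the coin's acceptance probability as $2^j r\Pr[J=j\mid X=x]$, and envelope validity follows because this is an honest probability. Your explicit envelope-validity paragraph spells out what the paper compresses into ``noting that the above is a valid probability law.''
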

\begin{proof}
Let us denote $r := r_X(x)$ and $m=2^j$. Conditioned on $X=x$, the probability of the event $J=j$ (or $2^j\leq N_x<2^{j+1}$) is then
\begin{equation}\label{eq:conditional_prob_J}
\Pr[J=j | X=x] :=(1-r)^{m-1}\bigl(1-(1-r)^m\bigr).        
\end{equation}
Using the notation of $\pi_j:=\Pr[J=j]$,  Bayes' rule then gives us that
\begin{equation}\label{eq:prob_Xj}
p_{X_j}(x) =\frac{p_X(x) \Pr[J = j | X=x]}{\pi_j}.    
\end{equation}
We implement $\Sample_{X_j}$ as given in step~\ref{algo_line:def_sample_Xj} of Algorithm~\ref{algo:bucket}. We note that one trial of the rejection-sampling loop accepts with precisely the following probability as check if the first $m-1$ parent coins fail after sampling $x \sim X$ and require at least one success among the next $m$ parent coins
$$
\sum_x p_X(x) \Pr[J=j | X=x] = \pi_j
$$
Conditioned on acceptance, it then returns $x$ with probability $p_{X_j}$ as given in Eq.~\eqref{eq:prob_Xj}. The sampler is thus exact and the expected number of trials is $1/\pi_j$.  Each iteration uses fewer than $2^{j+1}$ parent coins.  This proves the claimed $O(2^{j+1}/\pi_j)$ expected parent-call bound.    

We implement $\Coin_{X_j}$ using step~\ref{algo_line:def_coin_Xj} of Algorithm~\ref{algo:bucket}. The goal is to ensure that $\Pr[\Coin_{X_j}(x)=1]$ is proportional to $p_{X_j}(x)$ in Eq.~\eqref{eq:prob_Xj} with an appropriately defined envelope. We note that the probability of acceptance associated with line~\ref{algo_line:parent_calls_coin_Xj} is
$$
\Pr[\Coin_{X_j}(x)=1] = mr(1-r)^{m-1}\bigl(1-(1-r)^m\bigr) = m r \Pr[J=j | X=x] =\frac{m\pi_j}{M_X}p_{X_j}(x),
$$
where we have used Eq.~\eqref{eq:conditional_prob_J} in the second equality, and used the expression of $p_{X_j}(x)$ (Eq.~\eqref{eq:prob_Xj}) along with the definition of $r = r_X(x) = p_X(x)/M_X$ in the final equality. Noting that the above is a valid probability law and is true $\forall x$, we then have that $X_j$ has an envelope $M_{X_j}={M_X}/({2^j\pi_j})$. This verifies the correctness of $\Coin_{X_j}$. The cost of one call to $\Coin_{X_j}$ is the $2^{j+1}$ calls to $\Coin_{X}$ to carry out step~\ref{algo_line:def_coin_Xj}. This completes the~proof.  
\end{proof}

We now show that conditioning $X$ on $J$ leads to suppression of the slack and regularizes the access.
\begin{claim}\label{claim:bucket_regularizing}
Consider the context of Lemma~\ref{lem:bucket}. Algorithm~\ref{algo:bucket} implements algorithmic access to $X_j$ with slack satisfying  $\E_{j \sim J}[\Delta[X_j]] < 8$
and $\MI[X:J] \leq \log( 3 \Delta[X] + 3)$.
\end{claim}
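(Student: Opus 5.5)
The plan is to prove the two bounds separately. Both rest on one fact: conditioned on $X=x$, the trial count $N_x$ is $\mathsf{Geom}(r)$ with $r=r_X(x)$, and $J=\lfloor\log N_x\rfloor$.

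\emph{Mutual information.} I would use $\MI[X:J]\le\entropy[J]$ and then Fact~\ref{fact:ub_entropy_nonneg_int_RV} applied to the nonnegative integer variable $J$ with mean $\mu:=\E[J]$. To bound $\mu$, note $J\le\log N_x$, and Jensen gives $\E[\log N_x\mid X=x]\le\log\E[N_x\mid X=x]=\log(1/r_X(x))$. Averaging over $x\sim X$ yields $\mu\le\E_x\log(1/r_X(x))=\Delta[X]$. The bound in Fact~\ref{fact:ub_entropy_nonneg_int_RV} rewrites as
\[
\log(\mu+1)+\mu\log(1+1/\mu)\le\log(\mu+1)+\log e .
\]
This expression is increasing in $\mu$, and $\log e<\log 3$, so $\entropy[J]\le\log(3\Delta[X]+3)$.

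\emph{Slack.} Write $m=2^j$ and $q_j(x)=\Pr[J=j\mid X=x]$. The proof of Claim~\ref{claim:bucket_algo_access} gives $p_{X_j}(x)/M_{X_j}=m\,r_X(x)\,q_j(x)$. Hence $\Delta[X_j]=\E_{x\sim X_j}\log\frac{1}{2^j r_X(x)q_j(x)}$. Averaging over $j\sim J$ turns this into an expectation over the joint law of $(X,J)$, which splits as
\[
\E_j\Delta[X_j]=\E_{(X,J)}\Big[\log\tfrac{1}{2^J r_X(X)}\Big]+\entropy[J\mid X].
\]
For the first term, $2^J> N_X/2$ gives $\log\frac{1}{2^Jr}<1+\log\frac1{N_Xr}$. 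The union bound $\Pr[N_x\le k]\le kr$ gives $\Pr[\log\tfrac{1}{N_xr}>s]\le 2^{-s}$. Integrating the tail then shows $\E[\log\frac1{N_xr}]\le 1/\ln 2$, uniformly in $r$, so the first term is at most $1+1/\ln2<2.5$.

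For $\entropy[J\mid X=x]$, I would show that $\lfloor\log\mathsf{Geom}(r)\rfloor$ has entropy bounded by an absolute constant, uniformly in $r$. This is the main obstacle, because getting the constant below $5.5$ needs some care. Let $j_0=\lfloor\log(1/r)\rfloor$. Below $j_0$ the mass decays geometrically: $\Pr[J=j_0-k]\le 2^{-k+1}$. Above $j_0$ it decays doubly exponentially: $\Pr[J\ge j_0+k]\le(1-r)^{2^{j_0+k}-1}\lesssim e^{-2^{k-1}}$. A distribution on $\mathbb Z$ dominated by such a two-sided profile around $j_0$ has entropy at most a small constant. One route is the bound $\entropy[J]\le\E[\log(1/w(J))]+\log\sum_i w(i)$ for a fixed reference weight $w(j)\propto 2^{-|j-j_0|/2}$, which gives a bound around $3$. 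If that is too lossy, I would compare $J-j_0$ against a geometric law on both sides. Adding the two terms then gives $\E_j\Delta[X_j]<8$.
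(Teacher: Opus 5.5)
Your proposal is correct and follows the paper's proof. The mutual-information part is identical (Jensen on $\log N_x$ gives $\E[J]\le\Delta[X]$, then Fact~\ref{fact:ub_entropy_nonneg_int_RV}), and your split $\E_j\Delta[X_j]=\E[\log\frac{1}{2^J r_X(X)}]+\entropy[J\mid X]$ is exactly the paper's $\entropy[J\mid X]+\Delta[X]-\E[J]$. Both then rely on the same two-sided tail estimates for $J$ around $\lfloor\log(1/r)\rfloor$.

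The differences are only in how the constants are obtained. Tail-integrating $\log\frac{1}{N_x r}$ gives you $1+1/\ln 2$ where the paper gets $3$, and you use Gibbs' inequality against $2^{-|j-j_0|/2}$ where the paper uses a sign bit plus the geometric max-entropy bound. Your first route already suffices. With your tail bounds, $\E[|J-j_0|\mid X=x]<2.42$, so $\entropy[J\mid X=x]<\log_2(3+2\sqrt{2})+1.21<3.8$. This is well inside the roughly $5.5$ budget, so the step you flagged as the main obstacle does not need extra care (your ``around $3$'' estimate is slightly optimistic, but that does not matter).
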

\begin{proof}
We note that the expected slack of $X_j$ is given by
\begin{align}
\E_{j \sim J}[\Delta[X_j]] 
&= \E_{j \sim J}\left[\entropy[X_j] + \log M_X - j - \log \Pr[J=j] \right] \nonumber \\
&= \entropy[X \mid J] + \log M_X - \E[J] + \entropy[J] \nonumber \\
&= \entropy[J \mid X] + \log M_X - \E[J] + \entropy[X] \nonumber \\
&= \entropy[J \mid X] + \log M_X - \E_X \Big[ \E[J \mid X] \Big] + \E_X[\log(1/p_X)] \nonumber\\
&= \entropy[J \mid X] - \E_X \Big[ \E[J \mid X] \Big] + \E_X[\log(1/r_X)], \label{eq:expression_slack_Xj}
\end{align}
where we have used the definition of slack (Eq.~\eqref{def:slack_coin}) and Claim~\ref{claim:bucket_algo_access} in the first line, definition of conditional entropy in the second line, the entropy chain rule of Eq.~\eqref{eq:entropy_chain_rules} to say $\entropy[X \mid J] + \entropy[J] = \entropy[X] + \entropy[J \mid X]$ in the third line, and definition of $r_X$ in the last line. We now bound the three terms in Eq.~\eqref{eq:expression_slack_Xj} separately.

Let us consider a fixed $x\in\supp(X)$. We define  $r:=r_X(x)$ and $k:=\lfloor \log(1/r) \rfloor$. Since $N_x$ is geometric with parameter $r$, we have that
\begin{equation}\label{eq:prob_Jgeqj}
\Pr[J\geq j\mid X=x] = \Pr[N_x \geq 2^{j}] = (1-r)^{2^j-1}.    
\end{equation}
For $j\leq k$, we also have
\begin{equation}\label{eq:prob_Jleqj}
\Pr[J<j\mid X=x] \leq \sum_{i=1}^{2^j - 1} \Pr[N_X = i \mid X = x] = \sum_{i=1}^{2^j - 1} (1-r)^{i-1} r \leq r 2^j,
\end{equation}
where we applied the union bound in the first inequality. We now bound $\E[J \mid X = x]$. Noting that for every non-negative integer-valued random variable $Z$, we have $\E Z = \sum_{j \geq 1} \Pr[Z \geq j]$, we have that
\begin{align}
\E[J \mid X = x] = \sum_{j=1}^\infty \Pr[J \geq j \mid X = x] 
& \geq \sum_{j=1}^k \Pr[J \geq j \mid X = x] \nonumber \\
& \geq \sum_{j=1}^k (1 - \Pr[J < j \mid X = x]) \nonumber \\
& \geq \sum_{j=1}^k (1 - r 2^j) \nonumber \\
& = k - r (2^{k+1} - 2) > k - 2 \geq \log(1/r) - 3, \label{eq:lb_E_J_X}
\end{align}
where we have used Eq.~\eqref{eq:prob_Jleqj} in the third line and definition of $k = \floor{\log(1/r)}$ in the last inequality. We can upper bound $\E[J \mid X = x]$ as
\begin{equation}\label{eq:ub_E_J_X}
    \E[J \mid X = x] = \E[\log N_X \mid X = x] \leq \log \E[N_X \mid X=x] = \log(1/r),
\end{equation}
where we have used Jensen's inequality in the second inequality and the fact that $N_x$ has mean $\log(1/r)$ in the last equality.

We now bound the conditional entropy $\entropy[J \mid X]$. Since $k = \floor{\log(1/r)}$, we equivalently have $2^{-(k+1)} < r \leq 2^{-k}$. For an integer $s \in [1,k]$, we have that
\begin{equation}\label{eq:prob_J_tail1}
\Pr[J \leq k - s \mid X = x] = \Pr[J < k - s + 1 \mid X = x] \leq r 2^{k-s+1} \leq 2^{1-s},
\end{equation}
where we applied Eq.~\eqref{eq:prob_Jleqj} in the second inequality and used $r \leq 2^{-k}$ in the last inequality. Note for $s > k$, we trivially have that the upper bound is $0$ since $J \geq 0$. To obtain an upper tail bound, consider any integer $s \geq 0$ and note that
\begin{equation}\label{eq:prob_J_tail2}
\Pr[J \geq k + s \mid X = x] = (1-r)^{2^{k+s} - 1} \leq \exp(-r(2^{k+s} - 1)) \leq \exp(1 - 2^{s-1}),
\end{equation}
where we used $1 - u \leq \exp(-u), \forall u \geq 0$. By a union bound and using Eqs.~\eqref{eq:prob_J_tail1}--\eqref{eq:prob_J_tail2}, we then have
\begin{equation}\label{eq:prob_J_union_tail_bound}
\Pr[|J-k| \geq s \mid X = x] \leq 2^{1-s} + \exp(1 - 2^{s-1}). 
\end{equation}
Using the fact that for every non-negative integer-valued random variable $Z$, we have $\E Z = \sum_{j \geq 1} \Pr[Z \geq j]$, we can then evaluate
\begin{align}\label{eq:entropy_Jminusk}
\E[ |J-k| \mid X=x] = \sum_{s=1}^\infty \Pr[|J-k| \geq s \mid X = x] \leq \sum_{s=1}^\infty 2^{1-s} + \sum_{s=1}^\infty \exp(1 - 2^{s-1}) < 2 + 3/2 < 4,
\end{align}
where we used Eq.~\eqref{eq:prob_J_union_tail_bound} in the second inequality and the fact that $2^{s-1} \geq 2(s-1), \forall s \geq 2$ in the penultimate inequality. We now obtain
\begin{align}
\entropy[J \mid X = x] = \entropy[ J - k \mid X = x] & \leq 1 + \entropy[|J-k| \mid X = x] \nonumber \\
& < 1 + (1 + \E[|J-k| \mid X = x) \log (1 + \E[|J-k| \mid X = x) < 5,    \label{eq:cond_entropy_ub}
\end{align}
where the entropy remains unchanged in the first equality since $k$ is deterministic, and the second inequality follows from noting that $J-k$ is determined by $|J-k|$ and one sign bit. The third equality follows from Fact~\ref{fact:ub_entropy_nonneg_int_RV} and the final inequality then follows from Eq.~\eqref{eq:entropy_Jminusk}. We can now bound the expected slack from Eq.~\eqref{eq:expression_slack_Xj}
\begin{equation}
\E_{j \sim J}[\Delta[X_j]] = \entropy[J \mid X] - \E_X \Big[ \E[J \mid X] \Big] + \E_X[\log(1/r_X)] \leq 5 + 3 - \E_X[\log(1/r_X)] + \E_X[\log(1/r_X)] = 8,
\end{equation}
where we used Eq.~\eqref{eq:cond_entropy_ub} to bound the first term and Eq.~\eqref{eq:lb_E_J_X} to bound the second term in the second inequality. This gives us the desired result regarding the slack.

To bound $\MI[X:J]$, we note that
\begin{equation}\label{eq:ub_EJ}
\E[J] = \E_X\Big[\E[J \mid X] \Big] \leq \E_{x\sim X}\log\frac1{r_X(x)} =\entropy[X]+\log M_X =\Delta[X],    
\end{equation}
where the second inequality follows from Eq.~\eqref{eq:ub_E_J_X} and then evaluate
\begin{equation}\label{eq:ub_MI_X_J}
\MI[X:J]
\leq \entropy[J] \leq (\E[J] + 1)\log(\E[J] + 1)- \E[J] \log \E[J] \leq \log(e \E[J] + e) = \log(3\Delta[X] + 3),    
\end{equation}
where we used Fact~\ref{fact:ub_entropy_nonneg_int_RV}. This completes the proof.
\end{proof}
Combining Claim~\ref{claim:bucket_algo_access} and Claim~\ref{claim:bucket_regularizing} gives us the proof of the desired Lemma~\ref{lem:bucket}.

\subsection{Sampling a trajectory from the GGMT tree}
\label{sec:samplingtrajectory}
We have so far shown how to implement algorithmic access for each of the possible candidate families (Section~\ref{sec:ggmt-five}) at every iteration of the GGMT tree exactly. Moreover, when there was an approximate post-selection involved (such as in the fibering steps), we showed that bucketing prevents the slack (defined in Eq.~\eqref{def:slack_coin}) of intermediate random variables from accumulating (Lemma~\ref{lem:bucket}) and hence can still be implemented efficiently. What remains is to navigate the GGMT tree itself. In the original combinatorial proof, one uses entropic inequalities to identify a favorable update and continues until the entropic Ruzsa distance becomes small. 

We now explain how to navigate the GGMT recursion (Theorem~\ref{thm:ggmt-decrement}) without ever computing an entropy or identifying which candidate gives the desired decrement (which is what the combinatorial argument of GGMT did). At this point the observation we have is simple, the depth of the GGMT tree is $O(\log \log K)$ and the tree at each layer has fanout size $5$~\footnote{This glosses over the fact that some of the candidate operations involve fiber sampling which have $2^n$ support. However, we will show a large fraction of these labels are favorable.}, so one can potentially just \emph{enumerate} all possible leaves of this tree with just a $\polylog(K)$ blow up. Our idea then is to simply choose among the five candidate families at random in each iteration. When the current entropic Ruzsa distance $d_t := \der{X_t}{Y_t}$ is above some specified terminal threshold $d_{\rm term}$, we will show that the GGMT decrement theorem (Theorem~\ref{thm:ggmt-decrement}) together with bucketing (Lemma~\ref{lem:bucket}) implies that an iteration has constant probability of simultaneously producing a constant-factor decrease in $d_t$, restoring constant slack, and incurring cost at most $2^{O(d)}=\polylog(K)$.

Concretely, this leads to the following algorithm (Algorithm~\ref{algo:ggmt-recursion}) that we will use to prove Theorem~\ref{thm:ggmt-recursion}.
\begin{myalgorithm}
\begin{algorithm}[H]
    \caption{Sampling trajectories from the GGMT tree}
    \label{algo:ggmt-recursion}
    \setlength{\baselineskip}{1.8em} %
    \DontPrintSemicolon %
    \KwInput{Sample access to $U_A$, query access to $1_{A}$, doubling constant $K$}
    \KwOutput{Algorithmic access to $X_T$ (Definition~\ref{def:algo_access}) satisfying Theorem~\ref{thm:ggmt-recursion} w.p. $\Omega(1/\polylog K)$.}
    \vspace{2mm}
    Initialize $X_0 = Y_0 = U_A$. \\
    Set maximum number of iterations $T = O(\log \log K)$ (Claim~\ref{claim:many-good-trajectories}). \\ 
    Set complexity budgets $B_S = \poly(K)$, $B_Q = \poly(K)$, and $B_{TC} = O(n \cdot \poly(K))$ (Claim~\ref{claim:ggmt_tree_with_limited_budget}). \\
    Initialize budget counters $N_S = N_Q = N_{TC} = 0$. \\
    \For{$t=1$ \KwTo $T$}
    {
        Obtain $X'_{t}, Y'_{t}$ from $X_{t-1},Y_{t-1}$ by choosing one of the five GGMT candidates (Section~\ref{sec:ggmt-five}) uniformly at random. \label{algo_line:choose_candidate} \\
        Construct algorithmic access to $X'_{t},Y'_{t}$ using the corresponding lemma (Lemmas~\ref{lem:closure-sum}--\ref{lem:access_endgame}) \label{algo_line:algo_access} \\
        Obtain algorithmic access to $X_{t},Y_{t}$ after bucketing on $X'_{t},Y'_{t}$ using Lemma~\ref{lem:bucket}. \label{algo_line:bucket} \\
        Update $N_S$, $N_Q$, $N_{TC}$ by adding the sample complexity to $U_A$, query complexity to $1_A$, and time complexity, respectively used as part of above three steps. \\
        \If{$N_S > B_S$ or $N_Q > B_Q$ or $N_{TC} > B_{TC}$}{Abort and return $\perp$}
    }
    \textbf{Return} Algorithmic access to $X_T$.
\end{algorithm}
\end{myalgorithm}

We prove Theorem~\ref{thm:ggmt-recursion} using the following three main arguments, whose intuition we summarize below. As part of these arguments and claims to come, we will require the following definition. For the fixed absolute constant $\eta>0$ from \Cref{thm:ggmt-decrement}, define the potential $\Phi$ as
$$
\Phi_{X,Y}(X',Y') :=\der{X'}{Y'} + \eta\bigl(\der{X}{X'}+\der{Y}{Y'}\bigr).
$$
\begin{enumerate}
\item We first comment on the effects of bucketing as part of each iteration in Algorithm~\ref{algo:ggmt-recursion}. Particularly, we show in Claim~\ref{claim:potential-after-bucketing} that bucketing in any iteration $t$ after randomly choosing the GGMT family does not increase the potential too much i.e., $\Phi_{X_{t-1},Y_{t-1}}(X_t,Y_t)$ is close to $\Phi_{X_{t-1},Y_{t-1}}(X_t',Y_t')$ which would have been obtained without bucketing. We also show in Claim~\ref{claim:cost-algo-access-after-bucketing} that the cost of algorithmic access to $X_t,Y_t$ remains bounded due to bucketing, in expectation.
\item We next show that in iteration $t$, whenever the current entropic Ruzsa distance $d_t := \der{X_{t-1},Y_{t-1}}$ is still above the terminal threshold $d_{\rm term}$, a random GGMT step followed by bucketing has constant probability of making a constant-factor decrement while restoring constant slack and keeping the local sampling costs under control. We call such iterations \emph{good}. Moreover, we show that Algorithm~\ref{algo:ggmt-recursion} is \emph{stable}. If the current entropic Ruzsa distance is less than $d_{\rm term}$ then with constant probability, the new entropic Ruzsa distance remains less than $d_{\rm term}$ and thus remain in the terminal regime. This is done in Claim~\ref{claim:good-iteration}.
\item We then finally iterate the one-step guarantee in the step above. Since every good iteration decreases the distance geometrically from its initial value at most $\log K$, only $O(\log\log K)$ good iterations are needed to reach constant entropic Ruzsa distance. Particularly, we show that an all-good trajectory reaches the terminal regime, remains there, and has sample/query complexity $\poly(K)$ and time complexity $O(n \cdot \poly(K))$. Such trajectories occur with probability at least $\Omega(1/\polylog K)$. This is shown in Claims~\ref{claim:good-iteration}--\ref{claim:ggmt_tree_with_limited_budget}.
\end{enumerate}
We prove these three claims in more detail~below.

\subsubsection{Effect of bucketing on the potential}
We first show that performing bucketing does not increase the potential for any of the five GGMT families, by too much.
\begin{claim}\label{claim:potential-after-bucketing}
Suppose we have algorithmic access to random variables $(X,Y)$. Define $d:=\der{X}{Y}$. Fix $f$ to be one of the five families of Theorem~\ref{thm:ggmt-decrement} and let $(X',Y')$ be obtained by applying $f$ to $(X,Y)$. Let $(\widehat{X},\widehat{Y})$ be the variables obtained after applying bucketing of Lemma~\ref{lem:bucket} to $(X',Y')$. If $\Delta[X],\Delta[Y] \leq L_0$ then 
$$
\E_{J_X,J_Y}[\Phi_{X,Y}(\widehat{X},\widehat{Y})] \leq\E_Z[\Phi_{X,Y}(X',Y')] + 2\log(6L_0 + 12d + 3),
$$
where $J_X,J_Y$ are the bucketing labels corresponding to $X',Y'$ respectively and $Z$ is the random variable(s) corresponding to any fiber label of $f$.
\end{claim}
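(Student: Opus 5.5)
The plan is to decompose the potential term by term and apply Lemma~\ref{lem:ruzsa-conditioning} to each Ruzsa distance in which a bucketed variable appears. Fix the family $f$ and a fiber label $z$ (if there is none, skip the conditioning on $z$). Since $\widehat X=\law(X'\mid J_X)$ and $\widehat Y=\law(Y'\mid J_Y)$, with $J_X$ depending only on $X'$ and its own coin randomness, and $J_Y$ likewise on $Y'$, we have, in conditional Ruzsa notation,
\[
\E_{J_X,J_Y}\Phi_{X,Y}(\widehat X,\widehat Y)=\der{X'\mid J_X}{Y'\mid J_Y}+\eta\,\der{X}{X'\mid J_X}+\eta\,\der{Y}{Y'\mid J_Y}.
\]

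\textbf{First term.} Apply Lemma~\ref{lem:ruzsa-conditioning} twice: first condition $X'$ on $J_X$ with $Y'$ held fixed and independent, then condition $Y'$ on $J_Y$ for each fixed value $j_X$. Symmetry of $\ruzsadist$ allows the second application. This gives
\[
\der{X'\mid J_X}{Y'\mid J_Y}\le \der{X'}{Y'}+\tfrac12\MI[X':J_X]+\tfrac12\MI[Y':J_Y].
\]
\textbf{Movement terms.} By the same lemma, $\der{X}{X'\mid J_X}\le\der{X}{X'}+\tfrac12\MI[X':J_X]$, and similarly for $Y$. Since $\eta\le1$, summing gives a total excess of at most $\MI[X':J_X]+\MI[Y':J_Y]$. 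Now average over the fiber label $z$; the $\Phi(X',Y')$ parts reassemble into $\E_Z[\Phi_{X,Y}(X',Y')]$.

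\textbf{Mutual-information terms.} By Claim~\ref{claim:bucket_regularizing}, $\MI[X':J_X]\le\log(3\Delta[X']+3)$ for each fixed label. Since $\log$ is concave, Jensen gives $\E_Z\MI[X':J_X]\le\log(3\E_Z\Delta[X']+3)$, so it suffices to show $\E_Z\Delta[X'],\E_Z\Delta[Y']\le 2L_0+4d$. This gives $\log(6L_0+12d+3)$ for each of the two variables, which matches the constant $2\log(6L_0+12d+3)$ in the claim.

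\textbf{Slack bound, family by family.} This is the step I expect to need the most care.
\begin{itemize}
\item \emph{Sums} (Lemma~\ref{lem:closure-sum}): $\Delta[X_0+X_1]\le \Delta[X]+2\der{X}{X}$. Here I would use $\der{X}{X}\le 2d$ from the triangle inequality through $Y$, giving at most $L_0+4d$. The cross-sum gives $L_0+2d$.
\item \emph{Fibers} (Lemma~\ref{lem:closure-fiber}): $\E\Delta=\Delta[X]+\Delta[\cdot]\le 2L_0$.
\item \emph{Endgame} (Lemma~\ref{lem:access_endgame}): $\E\Delta\le 2L_0+4d$.
\end{itemize}
So $2L_0+4d$ works in every case.

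One subtlety is how the conditioning interacts with the implementation of each family. In the endgame the two outputs coincide, and each is bucketed separately, so I would treat $\widehat X,\widehat Y$ as conditioning on independent bucket labels of independent copies. Also, the bucket label is generated by auxiliary coin randomness; it is a randomized function of $X'$ alone, which is all that Lemma~\ref{lem:ruzsa-conditioning} needs.
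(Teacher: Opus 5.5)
Your proposal is correct and follows essentially the same route as the paper's proof. You fix the fiber label, apply Lemma~\ref{lem:ruzsa-conditioning} twice to the cross term and once to each movement term (so the excess is $\tfrac{1+\eta}{2}(\MI[X':J_X]+\MI[Y':J_Y])\le \MI[X':J_X]+\MI[Y':J_Y]$), then bound each mutual information by $\log(3\Delta+3)$ via Claim~\ref{claim:bucket_regularizing} and Jensen, using the same family-by-family slack bound $\E_Z\Delta[X'],\E_Z\Delta[Y']\le 2L_0+4d$.
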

\begin{proof}
Note that for $f$ being $\textsf{SelfFiber}$, $\textsf{CrossFiber}$, $\textsf{Endgame}$ involves sampling fiber labels as part of implementing $X',Y'$. We denote these natural labels as $z$ and the corresponding random variable(s) as $Z$. We now condition on $Z=z$ so that $X',Y'$ are fixed laws, and let $J_X,J_Y$ be their independent bucket labels.

From Ruzsa's triangle inequality~(Eq.~\eqref{eq:ruzsa_triangle}), we have that $\ruzsadist[X;X],\ruzsadist[Y;Y]\leq 2d$. We now observe that from Lemmas~\ref{lem:closure-sum}--\ref{lem:access_endgame} that $(X',Y')$ obtained after application of any family $f$ from Theorem~\ref{thm:ggmt-decrement} satisfies
\begin{equation}\label{eq:bound_delta_raw_outputs}
\E_{z \sim Z}[\Delta[X']] \leq 2L_0 + 4d, \quad \text{and} \quad \E_{z \sim Z}[\Delta[Y'] ]\leq 2L_0 + 4d.
\end{equation}
For \textsf{SelfSum} and \textsf{CrossSum}, the bounds are $\Delta[X'],\Delta[Y'] \leq L_0 + 4d$  and $\Delta[X'],\Delta[Y'] \leq L_0 + 2d$, respectively~(Lemma~\ref{lem:closure-sum}). For $\textsf{SelfFiber}$ and $\textsf{CrossFiber}$, the bound in both cases is
$\Delta[X'],\Delta[Y'] = 2L_0$ in expectation (Lemma~\ref{lem:closure-fiber}). For $\textsf{Endgame}$, the bounds are $\Delta[X'],\Delta[Y'] \leq 2L_0 + 4d$ in expectation (Lemma~\ref{lem:access_endgame}). This verifies Eq.~\eqref{eq:bound_delta_raw_outputs} for every case of $f$.

Let us define 
$$
I_X:=I[X':J_X \mid z],\qquad I_Y:=I[Y':J_Y \mid z].
$$
Applying Lemma~\ref{lem:ruzsa-conditioning} successively to the two bucket labels then gives
\begin{align}
\E_{J_X,J_Y}[\ruzsadist[\widehat{X};\widehat{Y}]\mid z] & \leq\ruzsadist[X';Y']+\frac12(I_X+I_Y), \nonumber \\
\E_{J_X}[\ruzsadist[X;\widehat{X}]\mid z] &\leq\ruzsadist[X;X']+\frac12 I_X, \nonumber \\
\E_{J_Y}[\ruzsadist[Y;\widehat{Y}]\mid z] &\leq\ruzsadist[Y;Y']+\frac12 I_Y.
\label{eq:int1_ub_dent}
\end{align}
Starting from the definition of $\Phi$, we then obtain
\begin{align}
\E_{J_X,J_Y}[\Phi_{X,Y}(\widehat{X},\widehat{Y})\mid z] 
&= \E_{J_X,J_Y}[\der{\widehat{X}}{\widehat{Y}} \mid z]  + \eta \Big( \E_{J_X}[\der{X}{\widehat{X}} \mid z]  + \E_{J_Y}[\der{Y}{\widehat{Y}} \mid z] \Big) \nonumber \\
&\leq \Phi_{X,Y}(X',Y') + \frac{1 + \eta}{2}(I_X + I_Y), \label{eq:int1_ub_potential}
\end{align}
where we used Eq.~\eqref{eq:int1_ub_dent} and the definition of $\Phi$ in the second line. We now bound the mutual information $I_X,I_Y$. Starting from Claim~\ref{claim:bucket_regularizing} which is true for any $z$, we obtain
\begin{align}
\E_{Z}[I_X + I_Y] &\leq \E_Z[\log(3 \Delta[X'] + 3)] + \E_Z[\log(3 \Delta[Y'] + 3)] \nonumber \\
&\leq \log(3 \E_Z[\Delta[X']] + 3) + \log(3 \E_Z[\Delta[Y']] + 3) \\
&\leq 2\log(6L_0 + 12d + 3), \label{eq:int1_ub_MI}
\end{align}
where we have used Jensen's inequality in the second line after noting that the concavity of $\log(\cdot)$ and used Eq.~\eqref{eq:bound_delta_raw_outputs} in the final line. Noting that $(1+\eta)/2 \leq 1$ and substituting Eq.~\eqref{eq:int1_ub_MI} into Eq.~\eqref{eq:int1_ub_potential} gives us
$$
\E_{J_X,J_Y}[\Phi_{X,Y}(\widehat{X},\widehat{Y})] \leq \E_Z[\Phi_{X,Y}(X',Y')] + 2\log(6L_0 + 12d + 3),
$$
which is the desired result. This completes the proof.
\end{proof}

\paragraph{Cost of algorithmic access after bucketing.}
We first make the notion of cost more precise. In each iteration $t$ of Algorithm~\ref{algo:ggmt-recursion}, we take a pair of independent random variables $X_{t-1},Y_{t-1}$, choose a family $F$ from Theorem~\ref{thm:ggmt-decrement}, set $X_t',Y_t'$ to be the pair obtained after applying $F$ and then obtain $X_t,Y_t$ by applying bucketing (Lemma~\ref{lem:bucket} and Algorithm~\ref{algo:bucket}). We do not store the laws of $X_t,Y_t$ explicitly but rather implement algorithmic access by specifying their corresponding samplers and density coins. These are implemented from the algorithmic access to $X_{t}',Y_{t}'$ as was shown in Lemma~\ref{lem:bucket} which in turn is implemented from access to $X_{t-1},Y_{t-1}$ as was shown in Lemmas~\ref{lem:closure-sum}--\ref{lem:access_endgame}. As access is implemented from prior level access, the cost of the procedures for $X_t,Y_t$ will then depend on the number of calls to the access procedures of $X_{t-1},Y_{t-1}$ which we need to control.
  
Going from $X_{t-1},Y_{t-1}$ to $X_t',Y_t'$, we mainly need for the rejection samplers used for sampling labels of fibers as part of applying $F$. If the fiber proposal acceptance probability is $a$, the sampler makes $a^{-1}$ proposals in expectation. Since these overheads multiply when access procedures are composed across iterations, it is convenient to record them logarithmically as $\log(a^{-1})$. We therefore define $\calC_{\rm fib}$ to be the sum of the logarithms of the reciprocal acceptance probabilities (see Lemmas~\ref{lem:closure-fiber}, and \ref{lem:access_endgame} for formal expressions) of the fiber rejection samplers used by the selected family $F$ as follows:
\begin{equation}\label{eq:cost_fiber}
\calC_{\rm fib}:=
\begin{cases}
0,
&\text{for \(\mathsf{SelfSum}\) and \(\mathsf{CrossSum}\),}\\[2mm]
\displaystyle
\log\frac{M_X}{p_{X+X}(z_X)}+
\log\frac{M_Y}{p_{Y+Y}(z_Y)},
&\text{for \(\mathsf{SelfFiber}\) with labels \((z_X,z_Y)\),}\\[3mm]
\displaystyle
\log\frac{M_Y}{p_{X+Y}(z_0)}+
\log\frac{M_X}{p_{X+Y}(z_1)},
&\text{for \(\mathsf{CrossFiber}\) with labels \((z_0,z_1)\),}\\[3mm]
\displaystyle
\log\frac{M_Y}{p_{X+Y}(z)}+
\log\frac{M_X}{p_{X+Y}(w+z)},
&\text{for \(\mathsf{Endgame}\) with label \((w,z)\).}
\end{cases}    
\end{equation}
For $\textsf{SelfSum}$ and $\textsf{CrossSum}$, no fiber sampling is needed, and hence $\calC_{\rm fib}=0$.

We now account for the overhead due to bucketing as we go from $X_t',Y_t'$ to $X_t,Y_t$. From Claim~\ref{claim:bucket_algo_access}, we have that the bucketed sampler uses $O\!\left(\frac{2^{j+1}}{\pi_{X,j}}\right)$ calls to its parent access procedures (i.e., $X_t',Y_t'$) conditional on the bucketing label $J_X=j$, while its density coin uses at most $2^{j+1}$ parent-coin calls. Since $\pi_{X,j} \leq 1$ and $\pi_{Y,j} \leq 1$, the former expression controls both access types. This motivates defining the overall logarithmic cost overhead of going from $X_{t-1},Y_{t-1}$ to $X_t,Y_t$ as
\begin{equation}\label{def:cost-access-after-bucketing}
\calC_{\rm access} : = \calC_{\rm fib} + \log \frac{2^{J_X + 1}}{\pi_{J_X}} + \log \frac{2^{J_Y + 1}}{\pi_{J_Y}}
\end{equation}
Up to some constant, $2^{\calC_{\rm access}}$ then bounds the calls made by access procedures of $X_t,Y_t$ to $X_{t-1},Y_{t-1}$ in iteration $t$. We now show that $\calC_{\rm access}$ remains bounded in expectation over the course of any iteration.
\begin{claim}\label{claim:cost-algo-access-after-bucketing}
Consider the context of Claim~\ref{claim:potential-after-bucketing}. We then have
$$
\E [\calC_{\rm access}] \leq 24(d + L_0 + 1).
$$
\end{claim}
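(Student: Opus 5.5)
By linearity of expectation, I would bound separately the expected fiber cost $\E[\calC_{\rm fib}]$ and the two bucketing overheads $\E[\log(2^{J_X+1}/\pi_{J_X})]$ and $\E[\log(2^{J_Y+1}/\pi_{J_Y})]$. Throughout I use $\Delta[X],\Delta[Y]\le L_0$ and the triangle-inequality consequence $\der{X}{X},\der{Y}{Y}\le 2d$ already noted in the proof of Claim~\ref{claim:potential-after-bucketing}.

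\emph{Fiber cost.} For \textsf{SelfSum} and \textsf{CrossSum} the cost is $0$. For \textsf{CrossFiber}, Eq.~\eqref{eq:fiber_cost} of Lemma~\ref{lem:closure-fiber} bounds each of the two terms by $L_0+2d$ in expectation over its label. For \textsf{SelfFiber}, the same equation applied to the pair $(X_0,X_1)$ gives $\E[\log(M_X/p_{X+X}(z_X))]\le \Delta[X]+2\der{X}{X}\le L_0+4d$, and similarly for $Y$. For \textsf{Endgame}, Eq.~\eqref{eq:endgame_cost} of Lemma~\ref{lem:access_endgame} gives at most $2L_0+2d$. In every case, $\E[\calC_{\rm fib}]\le 2L_0+8d$.

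\emph{Bucketing overhead.} Condition on the fiber labels $z$, so that $X'$ is a fixed law with some slack $\Delta[X']$. Then
\[
\E_{J_X}\!\left[\log\frac{2^{J_X+1}}{\pi_{J_X}}\right]=1+\E[J_X]+\entropy[J_X].
\]
Eq.~\eqref{eq:ub_EJ} gives $\E[J_X]\le\Delta[X']$, and Eq.~\eqref{eq:ub_MI_X_J} gives $\entropy[J_X]\le\log(3\Delta[X']+3)\le 3\Delta[X']+3$. So this term is at most $4\Delta[X']+4$, which is linear in $\Delta[X']$. Averaging over $z$ and using Eq.~\eqref{eq:bound_delta_raw_outputs} ($\E_z\Delta[X']\le 2L_0+4d$, and the same for $Y'$), each bucketing term is at most $8L_0+16d+4$ in expectation.

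\emph{Summing.} The total is $2L_0+8d+2(8L_0+16d+4)=18L_0+40d+8$. This exceeds the claimed $24(d+L_0+1)$ in the coefficient of $d$, so the bounds need tightening. The main obstacle is this constant bookkeeping. I would first use $\log(3u+3)\le u+2$ for $u\ge 0$ instead of the crude linear bound, which gives a per-bucket cost of at most $2\Delta[X']+3$. Then I would use the family-specific slack bounds from the proof of Claim~\ref{claim:potential-after-bucketing} rather than the uniform $2L_0+4d$. For example, for \textsf{SelfSum} one has $\Delta\le L_0+4d$ but $\calC_{\rm fib}=0$. A case-by-case check then fits under $24(d+L_0+1)$: \textsf{Endgame}, for instance, gives $2L_0+2d+2(4L_0+8d+3)=10L_0+18d+6$.

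A secondary subtlety is that the bucketing term is computed conditionally on the fiber labels and only afterwards averaged. This is legitimate because the per-bucket bound is linear in $\Delta[X']$, so no Jensen step is needed, unlike in the mutual-information bound of Claim~\ref{claim:potential-after-bucketing}.
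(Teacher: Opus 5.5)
Your proof is correct and follows the paper's argument. It uses the same split into $\calC_{\rm fib}$ plus two bucket overheads, the same inputs $\E[J]\le\Delta[X']$ and $\entropy[J]\le\log(3\Delta[X']+3)$ from Claim~\ref{claim:bucket_regularizing}, and the same averaging over fiber labels via Eq.~\eqref{eq:bound_delta_raw_outputs}.

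The only difference is that you bound $\log(3u+3)\le u+2$ pointwise, where the paper applies Jensen and then $2\log v\le v$. This sharper handling of the logarithm is what absorbs your cruder SelfFiber bound $2L_0+8d$. The paper's total $2L_0+4d$ for that family holds only via the exact identity $\entropy[X+X']-\entropy[X]=\der{X}{X}$, not via Eq.~\eqref{eq:fiber_cost} as cited. Your uniform bound already gives $10L_0+24d+6\le 24(d+L_0+1)$, so the case-by-case check is unnecessary.
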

\begin{proof}
Note that for $f$ being $\textsf{SelfFiber}$, $\textsf{CrossFiber}$, $\textsf{Endgame}$ involves sampling fiber labels as part of implementing $X',Y'$. We denote these natural labels as $z$ and the corresponding random variable(s) as $Z$. We now condition on $Z=z$ so that $X',Y'$ are fixed laws, and let $J_X,J_Y$ be their independent bucket labels. We denote $\pi_{J_X}(j) := \Pr[J_X = j]$ and $\pi_{J_Y}(j) := [J_Y = j]$.

To bound the expression in the claim statement, we will bound the terms separately. From Ruzsa's triangle inequality~(Eq.~\eqref{eq:ruzsa_triangle}), we have that $\ruzsadist[X;X],\ruzsadist[Y;Y]\leq 2d$. We now observe that from Lemmas~\ref{lem:closure-sum}--\ref{lem:access_endgame} that 
\begin{equation}\label{eq:bound_cost_fiber}
\E_{z \sim Z}[\calC_{\rm fib}] \leq 2L_0 + 4d, 
\end{equation}
For \textsf{SelfSum} and \textsf{CrossSum}, the expression is equal to $0$ since no fiber sampling is involved~(Lemma~\ref{lem:closure-sum}). For $\textsf{SelfFiber}$ and $\textsf{CrossFiber}$, the bounds are $L_0 + 4d$ and $L_0 + 2d$, respectively~(Lemma~\ref{lem:closure-fiber}). For $\textsf{Endgame}$, the bound is $2L_0 + 2d$~(Lemma~\ref{lem:access_endgame}). This verifies Eq.~\eqref{eq:bound_delta_raw_outputs} for every case of $f$.

We now note that
\begin{equation}\label{eq:int1_ub_second_term_cost}
\E_{J_X}[J_X + \log(1/\pi_{X,J_X})] \leq \Delta[X'] + \entropy[J_X] \leq \Delta[X'] + \log(3 \Delta[X'] + 3),
\end{equation}
where we have used $\E_{J_X}[J_X] \leq \Delta[X']$ (Eq.\eqref{eq:ub_EJ} from proof of Claim~\ref{claim:bucket_regularizing}) and $\E_{J_X}[\log(1/\pi_{X,J_X})] = \entropy[J_X]$ since $\pi_{X,j} = \Pr[J_X = j]$ in the second inequality followed by Eq.~\eqref{eq:ub_MI_X_J} (from proof of Claim~\ref{claim:bucket_regularizing}). Similarly, we can show that
$$
\E_{J_Y}[J_Y + \log(1/\pi_{Y,J_Y})] \leq \Delta[Y'] + \entropy[J_Y] \leq \Delta[Y'] + \log(3 \Delta[Y'] + 3).
$$
Combining Eq.~\eqref{eq:bound_cost_fiber} and Eq.~\eqref{eq:int1_ub_second_term_cost} gives us
\begin{align}
& \E_{Z,J_X,J_Y} \left[\calC_{\rm fib} + \log \frac{2^{J_X}}{\pi_{X,J_X}} + \log \frac{J_Y}{\pi_{Y,J_Y}}  \right]  \nonumber \\
\leq \, & (2L_0 + 4d) + \E_{z \sim Z}\bigl[\Delta[X'] + \Delta[Y']\bigr] + \E_{z \sim Z} \bigl[ \log(3\Delta[X'] + 3) + \log(3\Delta[Y'] + 3) \bigr] \nonumber \\
\leq \, & (2L_0 + 4d) +  (4L_0 + 8d) + \log \Big( \E_{z \sim Z}[3\Delta[X'] + 3]\Big) + \log \Big( \E_{z \sim Z}[3\Delta[Y'] + 3]\Big) \nonumber \\
\leq \,& 6L_0 + 12 d+ 2 \log(6L_0 + 12d + 3) \\
\leq \,& 12L_0 + 24d + 3,
\end{align}
where we have used $\E_{z \sim Z}\bigl[\Delta[X']\bigr], \E_{z \sim Z}\bigl[\Delta[Y']\bigr] \leq 2L_0 + 4d$ from Eq.~\eqref{eq:bound_delta_raw_outputs} (proof of Claim~\ref{claim:potential-after-bucketing}) in the third line along with application of Jensen's inequality after noting the concavity of $\log$. We used Eq.~\eqref{eq:bound_delta_raw_outputs} again in the fourth line.
\end{proof}

\subsubsection{Guarantees of a good iteration}
In Algorithm~\ref{algo:ggmt-recursion}, we choose the GGMT family to be applied uniformly at random. We thus do not have the promise that the energy $\Phi$ will reduce for all choices made. Moreover, even if $\Phi$ reduced, it is possible that the algorithmic access procedures of the resulting variables $X_t,Y_t$ in iteration $t$ require prohibitively many calls to previous access procedures making them expensive to implement. We thus define a \emph{good} iteration as one where $\Phi$ reduces, the slack of resulting $X_t,Y_t$ remain bounded and thereby the cost of their algorithmic access remains bounded. We will make this more formal shortly. Let us quickly comment on the promise of any iteration in Algorithm~\ref{algo:ggmt-recursion}.

We show that an iteration in Algorithm~\ref{algo:ggmt-recursion} is good with high probability. In particular, we consider the two cases of $(i)$ when the entropic Ruzsa distance $d_{t-1} := \der{X_{t-1}}{Y_{t-1}}$ at the beginning of an iteration $t$ is higher than the specified terminal Ruzsa distance $d_{\rm term}$, and $(ii)$ when $d_{t-1} \leq d_{\rm term}$. 

In case $(i)$, we show that Algorithm~\ref{algo:ggmt-recursion} will choose a pair $X_t,Y_t$ which will reduce the potential $\Phi$ and keep the cost of algorithmic access bounded, with high probability. In case $(ii)$, since $d_{t-1} \leq d_{\rm term}$ for $t < T_{\max}$, it is possible that $X_t,Y_t$ are chosen such that the potential increases and we escape the terminal regime. We show that Algorithm~\ref{algo:ggmt-recursion} is \emph{stable} and that with high probability, $X_t, Y_t$ are chosen such that we remain in the terminal regime. The above two stated results are shown formally in the claim below.
\begin{claim}\label{claim:good-iteration}
Let $\eta \in (0,1)$ be the promise of Theorem~\ref{thm:ggmt-decrement}. Suppose we are in iteration $t$. Define absolute constants $L_0 = 256/\eta$, $\xi = 768/\eta$ and
$d_{\rm term}=\max\{L_0 + 1, 16/\eta \log(64/\eta)\}$. Suppose that we have algorithmic access to $X_{t-1},Y_{t-1}$ such that
$$
\Delta[X_{t-1}],\Delta[Y_{t-1}]\leq L_0,\qquad d_{t-1} :=\der{X_{t-1}}{Y_{t-1}}.
$$
Define $d := \max\{d_{t-1}, d_{\rm term}\}$. Then, running one iteration of steps~\ref{algo_line:choose_candidate}--\ref{algo_line:bucket} of
Algorithm~\ref{algo:ggmt-recursion}, starting from the current pair
$X_{t-1},Y_{t-1}$ outputs $X_t,Y_t$ with probability $\geq \eta/40$ such that
$$
\Phi_{X_{t-1},Y_{t-1}}(X_t, Y_t) \leq (1 - \eta/2)d, \quad \Delta[X_t], \Delta[Y_t] \leq L_0, \quad \calC_{\rm access} \leq \xi(d+1).
$$
\end{claim}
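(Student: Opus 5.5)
The argument has three steps. First condition on the uniformly random family choice and use the expectation bounds already established. Then convert each expectation bound into a constant-probability event by Markov's inequality. Finally take a union bound over the three bad events.

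\textbf{Step 1: choosing a good family.} With probability $1/5$ the algorithm picks the family $f^\star$ guaranteed by Theorem~\ref{thm:ggmt-decrement}. Condition on this choice and write $(X',Y')$ for its raw output, with fiber labels $Z$.

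\emph{Case $d_{t-1}>d_{\rm term}$.} Theorem~\ref{thm:ggmt-decrement} gives $\E_Z[\Phi(X',Y')]\le(1-\eta)d_{t-1}=(1-\eta)d$.

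\emph{Case $d_{t-1}\le d_{\rm term}$.} The theorem gives $\E_Z[\Phi]\le(1-\eta)d_{t-1}\le(1-\eta)d$ as well. The decrement is relative to $d_{t-1}$, which is at most $d$, so no separate stability argument is needed. This is why $d$ is defined as $\max\{d_{t-1},d_{\rm term}\}$.

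Claim~\ref{claim:potential-after-bucketing} then gives
\[
\E[\Phi(X_t,Y_t)]\le(1-\eta)d+2\log(6L_0+12d+3).
\]
Since $d\ge d_{\rm term}\ge \max\{L_0+1,(16/\eta)\log(64/\eta)\}$, we have $6L_0+12d+3\le 21d$. So it suffices to show $2\log(21d)\le \eta d/8$ for all $d\ge d_{\rm term}$. The function $d/\log d$ is increasing, so this reduces to checking it at $d=d_{\rm term}$. This threshold computation is where the constants $16/\eta$ and $64/\eta$ in $d_{\rm term}$ are meant to be used; if they turn out slightly too small, the constant in $d_{\rm term}$ can be enlarged without affecting anything else. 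We obtain
\[
\E[\Phi(X_t,Y_t)]\le(1-7\eta/8)d.
\]

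\textbf{Step 2: turning expectations into probabilities.} Each event gets its own Markov argument.

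\emph{Potential.} $\Phi\ge0$, but Markov applied directly to $\Phi$ is too weak. Instead, Markov on $\Phi$ at level $(1-\eta/2)d$ gives failure probability at most $(1-7\eta/8)/(1-\eta/2)\le 1-3\eta/8$. So $\Phi(X_t,Y_t)\le(1-\eta/2)d$ holds with probability at least $3\eta/8$.

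\emph{Slack.} Claim~\ref{claim:bucket_regularizing} gives $\E[\Delta[X_t]]<8$, and likewise for $Y_t$. Markov then gives $\Pr[\Delta[X_t]>L_0]\le 8\eta/256=\eta/32$, and the same for $Y_t$.

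\emph{Cost.} Claim~\ref{claim:cost-algo-access-after-bucketing} gives $\E[\calC_{\rm access}]\le 24(d+L_0+1)\le 48(d+1)$, using $L_0\le d$. Markov at level $\xi(d+1)=(768/\eta)(d+1)$ gives failure probability at most $48\eta/768=\eta/16$.

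\textbf{Step 3: union bound.} Conditional on choosing $f^\star$, all three good events hold simultaneously with probability at least
\[
3\eta/8-2\cdot\eta/32-\eta/16=\eta/4.
\]
Multiplying by $1/5$ gives overall probability at least $\eta/20\ge\eta/40$.

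\textbf{Main obstacle.} The delicate point is Step 1: verifying that the logarithmic bucketing loss is absorbed by a $\Theta(\eta d)$ margin uniformly for all $d\ge d_{\rm term}$. A secondary subtlety is that Claim~\ref{claim:potential-after-bucketing} and Claim~\ref{claim:cost-algo-access-after-bucketing} are expectations jointly over fiber labels and bucket labels. Markov must be applied to that same joint distribution, not conditionally on $Z$, and I would check that the claims are indeed stated at that level.
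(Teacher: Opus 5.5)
Your plan follows the paper's proof step for step: choose $F^\star$ with probability $1/5$, add the bucketing loss, apply three Markov bounds, and take a union bound. Your uniform handling of both cases via $d_{t-1}\le d$ is fine. The problem is Step~1. You aim for $\mathbb{E}[\Phi]\le(1-7\eta/8)d$, which requires $2\log(21d)\le\eta d/8$ at $d=d_{\rm term}$. With the stated constants this is false for small $\eta$.

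\emph{Where it fails.} Whenever $d_{\rm term}=(16/\eta)\log(64/\eta)$, put $x=\log(64/\eta)$, so that $\log(1/\eta)=x-6$. Then $\eta d_{\rm term}/8=2x$, while $2\log(21d_{\rm term})=2x+2\log(336x)-12>2x$. It also fails in part of the regime $d_{\rm term}=L_0+1$. For example, at $\eta=2^{-10}$ you get $\eta d_{\rm term}/8\approx 32$ against $2\log(21d_{\rm term})\approx 44.8$.

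\emph{Why your fallback does not help.} Enlarging $d_{\rm term}$ changes the claim. The threshold enters the conclusion through $d=\max\{d_{t-1},d_{\rm term}\}$, so a larger threshold proves a weaker statement than the one given.

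\emph{The repair is to ask for less.} You only need $2\log(21d)\le\eta d/4$ (the paper uses $18d$). This gives $\mathbb{E}[\Phi]\le(1-3\eta/4)d$, and it does hold at $d_{\rm term}$:
\begin{itemize}
\item In the regime $d_{\rm term}=(16/\eta)\log(64/\eta)$, it reduces to $336x\le 64\cdot 2^x$, which is true for $x\ge 6$.
\item In the regime $d_{\rm term}=L_0+1$, the condition $L_0+1\ge(16/\eta)\log(64/\eta)$ forces $\eta>2^{-11}$. Hence $\eta d_{\rm term}/4\ge 64$, while $2\log(21d_{\rm term})<48$.
\end{itemize}

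\emph{The rest then closes.} Markov gives $\Pr[\Phi\le(1-\eta/2)d]\ge\frac{\eta/4}{1-\eta/2}\ge\eta/4$. Your slack and cost bounds cost $\eta/32$ for each of $X_t,Y_t$ and $\eta/16$ for $\mathcal{C}_{\rm access}$. The union bound leaves $\eta/4-\eta/16-\eta/16=\eta/8$, which is $\eta/40$ after the factor $1/5$. This is exactly the stated bound and exactly how the paper closes.

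The constants $L_0,\xi,d_{\rm term}$ are calibrated to an $\eta/4$ margin in Step~1. Your Step~3 had a factor of two to spare ($\eta/20$ versus $\eta/40$). Tightening the margin to $\eta/8$ was therefore unnecessary, and it is what broke the argument.
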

\begin{proof}
To simplify the notation, let us denote $X:= X_{t-1}$ and $Y:= Y_{t-1}$ as the variables at the start of iteration $t$. Let the outputs from step~\ref{algo_line:choose_candidate} of Algorithm~\ref{algo:ggmt-recursion} be $X':= X'_{t}$ and $Y':= Y_{t}$, which are obtained before bucketing. Let $Z$ be the random variable(s) denoting any of the natural fiber labels chosen as part of obtaining $X',Y'$. Let $\widehat{X} := X_t$ and $\widehat{Y} := Y_t$ be the variables obtained after applying bucketing. We will still denote $d_{t-1} := \der{X_{t-1}}{Y_{t-1}}$.

By \Cref{thm:ggmt-decrement}, there is a family $F^{\star}$ such that, 
$$
\E_{z \sim Z} \bigl[\Phi_{X,Y}(X',Y') \mid F^{\star}\bigr] \leq (1-\eta)d_{t-1}.
$$
Step~\ref{algo_line:choose_candidate} of Algorithm~\ref{algo:ggmt-recursion} chooses $F^{\star}$ with probability at least $1/5$. From Claim~\ref{claim:potential-after-bucketing}, we then have that
\begin{equation}\label{eq:ub_exp_Phi}
\E_{Z,J_X,J_Y}[\Phi_{X,Y}(\widehat{X},\widehat{Y}) \mid F^{\star}] \leq (1-\eta)d_{t-1} + 2\log(6L_0 + 12d_{t-1} + 3).    
\end{equation}
Let us analyze the upper bound of the above expression by considering two cases: $(i)$ $d_{t-1} > d_{\rm term}$, and $(ii)$ $d_{t-1} \leq d_{\rm term}$.

For case $(i)$, we now upper bound the right hand side of Eq.~\eqref{eq:ub_exp_Phi}. Choosing $d_{\rm term} = \max\{L_0 + 1, 16/\eta \log(64/\eta)\}$ ensures that 
$$
2\log(6 L_0 + 12 d_{t-1} + 3) \leq 2\log(18 d_{t-1})
$$ 
since $d_{\rm term} \leq d_{t-1}$. To bound $2 \log (18 d_{t-1})$, we use the following observation. Consider the function $g(x) := 2 \log(18x)/x, \forall x > 0$. We note that its derivative is $g'(x) = 2(1 - \ln(18x))/(x^2 \ln 2) < 0$ for $x \geq \exp(1)/18$ and hence $g(x)$ is a decreasing function for $x \geq \exp(1)/18$. Let us define $x_0 := 16/\eta \log(64/\eta)$ which we note is greater than $\exp(1)/18$ for $\eta \leq 1$. We then have that
$$
\frac{2 \log(18 d_{t-1})}{d_{t-1}} \leq \frac{2 \log(18x_0)}{x_0} \leq \frac{4 \log (64/\eta)}{x_0} = \frac{\eta}{4} \implies 2 \log(18 d_{t-1}) \leq \frac{\eta}{4} d_{t-1},
$$
where we used $18 x_0 \leq (64/\eta)^2$ in the second inequality, and definition of $x_0$ in the final inequality before the implication. Combining the above bound with Eq.~\eqref{eq:ub_exp_Phi} then gives us that
\begin{equation}\label{eq:ub_exp_Phi_high}
\E\bigl[\Phi_{X,Y}(\widehat{X},\widehat{Y})\mid F^{\star}\bigr] \leq\left(1-\frac{3\eta}{4}\right)d_{t-1}.    
\end{equation}
In case $(ii)$, we can then bound the expression from Eq.~\eqref{eq:ub_exp_Phi} as
$$
\E_{J_X,J_Y}[\Phi_{X,Y}(\widehat{X},\widehat{Y}) \mid F^{\star}] \leq (1-\eta)d_{\rm term} + 2\log(6L_0 + 12d_{\rm term} + 3).
$$
Let us choose $d_{\rm term} = \max\{L_0 + 1, 16/\eta \log(64/\eta)\}$ as before. This ensures that $2\log(6 L_0 + 12 d + 3) \leq 2 \log(18d_{\rm term})$. Given our choice, we can write $d_{\rm term} = 16x/\eta$ for some $x \geq \log(64/\eta) \geq 6$ since $\eta < 1$. We note that $6x \leq 2^{x}$ since $x \geq 6$ and hence $1/\eta \leq 2^x/64$. This gives us $2 \log(18 d_{\rm term}) = 2 \log(288x/\eta)$ and $288x/\eta \leq 2^{2x}$ from our earlier observations. This overall gives us that $2 \log(18 d_{\rm term}) \leq 4x = \eta d/4$. We then have
\begin{equation}\label{eq:ub_exp_Phi_stab}
\E\bigl[\Phi_{X,Y}(\widehat{X},\widehat{Y})\mid F^{\star}\bigr] \leq\left(1-\frac{3\eta}{4}\right)d_{\rm term}.    
\end{equation}

Let us define $d := \max\{d_{t-1},d_{\rm term}\}$. We then have for both $(i)$ $d_{t-1} > d_{\rm term}$ (Eq.~\eqref{eq:ub_exp_Phi_high}) and $(ii)$ $d_{t-1} \leq d_{\rm term}$ (Eq.~\eqref{eq:ub_exp_Phi_stab}) that
\begin{equation}\label{eq:ub_exp_Phi_all}
\E\bigl[\Phi_{X,Y}(\widehat{X},\widehat{Y})\mid F^{\star}\bigr] \leq\left(1-\frac{3\eta}{4}\right)d.    
\end{equation}
Since the potential is nonnegative, Markov's inequality now gives
\begin{equation}\label{eq:markov_potential}
\Pr[\Phi_{X,Y}(\widehat{X},\widehat{Y}) \leq (1-\eta/2) d \mid F^{\star}] \geq 1 - \frac{\E[\Phi_{X,Y}(\widehat{X},\widehat{Y})\mid F^{\star}\bigr]}{1- \eta/2} \geq 1 - \frac{1-3\eta/4}{1-\eta/2} = \frac{\eta/4}{1-\eta/2} \geq \eta/4,    
\end{equation}
where we have used Eq.~\eqref{eq:ub_exp_Phi_all} in the second inequality and the fact that $\eta \in (0,1)$ in the final inequality. 

Noting that after the natural label $Z=z$ has been fixed, the laws of $X',Y'$ are fixed and we can then apply Claim~\ref{claim:bucket_regularizing} to obtain (which is true regardless of the value of $d_{t-1}$ in comparison with $d_{\rm term}$)
$$
\E_{J_X} \left[ \Delta[\widehat{X}] \mid Z = z, F^\star \right] < 8, \quad \E_{J_Y} \left[ \Delta[\widehat{Y}] \mid Z = z, F^{\star} \right] < 8.
$$
Noting that the above expression is true for all $Z=z$, we then have after adding both expressions that
$$
\E_{Z,J_X,J_Y} \left[\Delta[\widehat{X}] + \Delta[\widehat{Y}] \mid F^{\star} \right] < 16.
$$
Applying Markov's inequality again, we obtain
\begin{equation}\label{eq:ub_prob_Delta_hatX_hatY}
\Pr[(\Delta[\widehat{X}] + \Delta[\widehat{Y}]) \geq L_0 \mid F^{\star} ] \leq 16/L_0.    
\end{equation}
Finally, from Claim~\ref{claim:cost-algo-access-after-bucketing} which holds true for any family $f$, we have that
$$
\E[\calC_{\rm access} \mid F^{\star}] \leq 24(L_0 + d_{t-1} + 1) \leq 48d,
$$
where we have used the fact that $d_{\rm term} \geq L_0 + 1$ and used the definition $d = \max\{d_{t-1},d_{\rm term}\}$. Let $\xi > 0$ be a parameter to be determined later. Applying Markov's inequality considering the above expression for $\calC_{\rm access}$, we then have that
\begin{equation}\label{eq:ub_prob_cost_access}
    \Pr[\calC_{\rm access} \geq \xi(d+1) \mid F^{\star}] \leq \frac{48 d}{\xi (d+1)}.
\end{equation}
Let us define $E_1$ as the event that $\Phi_{X,Y}(\widehat{X},\widehat{Y}) \leq (1-\eta/2)d$, $E_2$ as the event that $\Delta[\widehat{X}] + \Delta[\widehat{Y}] \leq L_0$, and $E_3$ as the event that $\calC_{\rm access} \leq \xi(d+1)$. Let us denote $\overline{E}$ as the complement of the event $E$. Applying an union bound, we have that
\begin{align}
& \Pr[ \cap_{i \in [3]} E_i \mid F^{\star}] \nonumber \\
\geq \, &  1 - \sum_{i \in [3]} \Pr[\overline{E}_i \mid F^{\star}] \nonumber \\
\geq \, & \Pr[\Phi_{X,Y}(\widehat{X},\widehat{Y}) \leq (1-\eta/2) d \mid F^{\star}] - \Pr[(\Delta[\widehat{X}] + \Delta[\widehat{Y}]) \geq L_0 \mid F^{\star}] - \Pr[\calC_{\rm access} \geq \xi(d+1) \mid F^{\star}] \nonumber \\
\geq \, & \frac{\eta}{4} - \frac{16}{L_0} - \frac{48d}{\xi(d+1)},
\end{align}
where we have used Eqs~\eqref{eq:markov_potential},\eqref{eq:ub_prob_Delta_hatX_hatY},and \eqref{eq:ub_prob_cost_access} in the last inequality. Choosing $L_0 = 256/\eta$ and $\xi = 768/\eta$ ensures that
$$
\Pr[\cap_{i \in [3]} E_i \mid F^{\star}] \geq \eta/8.
$$
Noting that $\cap_{i \in [3]} E_i$ is the desired good iteration and $F^{\star}$ is chosen uniformly at random (independent of other choices), we then have that
$$
\Pr[\cap_{i \in [3]} E_i] \geq \eta/40.
$$
This completes the proof.
\end{proof}

\subsubsection{Large fraction of trajectories are good and hit terminal regime}
We have so far shown that any iteration in Algorithm~\ref{algo:ggmt-recursion} is good (Claim~\ref{claim:good-iteration}). We now show that there is a large fraction of trajectories (or sequence of iterations) that terminate quickly and where each iteration in the trajectory is promised to be good. Let us define the maximum number of iterations that Algorithm~\ref{algo:ggmt-recursion} is allowed to run for as
\begin{equation}\label{eq:Tmax}
T :=\min\{t\geq0:(1- \eta/2)^t\log K \leq d_{\rm term}\}.    
\end{equation}
\begin{claim}
\label{claim:many-good-trajectories}
Consider the context of Theorem~\ref{thm:ggmt-recursion}. Running Algorithm~\ref{algo:ggmt-recursion} for $T = O(\log \log K)$ iterations ensures that with probability $\geq \Omega(1/\polylog K)$, each iteration is good in the sense of Claim~\ref{claim:good-iteration} and
$$
d_T \leq d_{\rm term}, \quad \Delta[X_T],\Delta[Y_T] \leq L_0
$$
and
$$
\sum_{t < T} (\widetilde{d}_t + 1) = O(\log K), \quad \der{U_A}{X_T} = O(\log K),
$$
where $\widetilde{d}_t := \max\{\der{X_t}{Y_t}, d_{\rm term}\}$.
\end{claim}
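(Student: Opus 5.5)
We prove Claim~\ref{claim:many-good-trajectories} by chaining Claim~\ref{claim:good-iteration} across the $T$ iterations and multiplying the conditional success probabilities. First, we verify the base case: $X_0=Y_0=U_A$ has $\Delta[X_0]=\Delta[Y_0]=0\le L_0$ and $d_0=\der{U_A}{U_A}\le \log K$. Next, we define the event $G_t$ that iteration $t$ is good in the sense of Claim~\ref{claim:good-iteration}, i.e.\ $\Phi_{X_{t-1},Y_{t-1}}(X_t,Y_t)\le(1-\eta/2)\widetilde d_{t-1}$, $\Delta[X_t],\Delta[Y_t]\le L_0$, and $\calC_{\rm access}\le\xi(\widetilde d_{t-1}+1)$. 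Conditioned on $G_1,\dots,G_{t-1}$ and the whole history, the slack hypothesis of Claim~\ref{claim:good-iteration} holds at step $t$. Hence $\Pr[G_t\mid G_{<t},\text{history}]\ge\eta/40$, and by the chain rule of probability $\Pr[\bigcap_{t\le T}G_t]\ge(\eta/40)^T$. Since $T=\min\{t:(1-\eta/2)^t\log K\le d_{\rm term}\}=O(\log\log K)$, this is $2^{-O(\log\log K)}=1/\polylog K$.

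On the all-good event, we argue deterministically. Since $\Phi\ge\der{X_t}{Y_t}$, each good step gives $d_t\le(1-\eta/2)\widetilde d_{t-1}$. By induction we claim $\widetilde d_t\le\max\{(1-\eta/2)^t\log K,\,d_{\rm term}\}$. Indeed, if $\widetilde d_{t-1}=d_{t-1}>d_{\rm term}$ the distance contracts geometrically, and if $\widetilde d_{t-1}=d_{\rm term}$ then $d_t\le(1-\eta/2)d_{\rm term}\le d_{\rm term}$, which is the stability case. By the choice of $T$, this gives $d_T\le d_{\rm term}$. The sum $\sum_{t<T}(\widetilde d_t+1)$ is at most a geometric series $\sum_t(1-\eta/2)^t\log K=O(\log K/\eta)$ plus $T(d_{\rm term}+1)=O(\log\log K)$, so it is $O(\log K)$.

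For the movement bound, each good step also gives $\eta\,\der{X_{t-1}}{X_t}\le\Phi\le(1-\eta/2)\widetilde d_{t-1}$, so $\der{X_{t-1}}{X_t}\le\widetilde d_{t-1}/\eta$. Applying the Ruzsa triangle inequality (Lemma~\ref{lem:ruzsa-triangle}) along the chain $U_A=X_0,X_1,\dots,X_T$ gives $\der{U_A}{X_T}\le\sum_{t<T}\widetilde d_t/\eta=O(\log K)$, since $\eta$ is an absolute constant.

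The main obstacle is the conditioning step. Claim~\ref{claim:good-iteration} is stated for a fixed current pair, while the trajectory's randomness (family choices, fiber labels, bucket labels) determines $(X_{t-1},Y_{t-1})$. We must check that the $\eta/40$ bound holds uniformly over every history consistent with $G_{<t}$. This follows because the claim's only hypotheses are the slack bounds $\Delta[X_{t-1}],\Delta[Y_{t-1}]\le L_0$, which are exactly guaranteed by $G_{t-1}$, together with fresh randomness at step $t$. A second, minor point is that the base case uses $\widetilde d_0$ possibly equal to $d_{\rm term}$ rather than $\log K$. This is harmless because $T$ is defined via $\log K$ and every bound above already takes the maximum with $d_{\rm term}$.
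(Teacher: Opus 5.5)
Your proposal is correct and follows essentially the same route as the paper: you chain Claim~\ref{claim:good-iteration} through the chain rule to get $(\eta/40)^T=\Omega(1/\polylog K)$, then argue deterministically on the all-good event, with geometric decay of $d_t$ down to $d_{\rm term}$, a geometric-series bound on $\sum_{t<T}(\widetilde d_t+1)$, and the Ruzsa triangle inequality along $X_0,\dots,X_T$ using $\eta\,\der{X_{t-1}}{X_t}\le\Phi$. Your uniform induction $\widetilde d_t\le\max\{(1-\eta/2)^t\log K,\,d_{\rm term}\}$ replaces the paper's case split on the first hitting time $\tau$, and your explicit check that the $\eta/40$ bound holds for every history consistent with $G_{<t}$ spells out a step the paper leaves implicit; both are cosmetic improvements, and the underlying argument is the same.
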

\begin{proof}
At $t=0$, we set the initial random variables $X_0$ and $Y_0$ to have the distribution of $U_A$. The initial entropic Ruzsa distance is then
\begin{equation}\label{eq:d0}
d_0 := \der{X_0}{Y_0} = \entropy[U_A+U_A'] - \entropy[U_A] \leq\log|A+A| - \log|A| \leq\log K,
\end{equation}
where we have used that $(U_A + U_A')$ is supported on $(A+A)$ in the second inequality and that $A$ has doubling constant $K$ in the final inequality. The initial slack is zero from a quick calculation:
$$
\Delta[X_0] = \Delta[Y_0] = \entropy[U_A] + \log M_{U_A} = \log |A| - \log |A| = 0,
$$
where we noted that $M_{U_A} = 1/|A|$ by definition. Let $G_s$ be the event that the iteration $s \in [T]_0$ is good in the sense of Claim~\ref{claim:good-iteration} (i.e., decreases the potential, has constant slack, and has bounded cost of algorithmic access). The event $\cap_{s < t} G_s$ then corresponds to all the iterations from $0$ up to $t$ being good. Note that at the end of iteration $s$, the resulting current pair $X_s,Y_s$ satisfies the input requirements to Claim~\ref{claim:good-iteration}. We then have from Claim~\ref{claim:good-iteration} that
\begin{equation}\label{eq:cond_prob_good_t}
\Pr[G_t \mid \cap_{s < t} G_s] \geq \eta/40
\end{equation}
Let us define $p:= \eta/40$. Applying the chain rule for conditional probabilities over an entire $T$-length trajectory then gives us
\begin{equation}\label{eq:prob_good_trajectory}
\Pr[\cap_{s \in [T]} G_s ] = \prod_{t=1}^T \Pr[G_t \mid \cap_{s < t} G_s] \geq p^T,
\end{equation}
where we applied Eq.~\eqref{eq:cond_prob_good_t} in the second inequality. Let us define $\rho := (1-\eta/2)$ and recall the definition of $T = \min\{t \geq 0 : \rho^t \log K \leq d_{\rm term}\}$ from Eq.~\eqref{eq:Tmax}. Suppose that $\log K > d_{\rm term}$ or else $X_0$ already satisfies the required guarantees of Theorem~\ref{thm:ggmt-recursion} and we can set $T=0$. We then have that
\begin{equation}\label{eq:ub_Tmax}
T \leq \frac{\log\Big(\log K/d_{\rm term}\Big)}{\log(1/\rho)} + 1 \leq 1 + \frac{2}{\eta} \log \log K = O\left(\eta^{-1} \log \log K \right),
\end{equation}
where we used the fact that $d_{\rm term} \geq 1$ and that $\log(1/\rho) = \log(1/(1-\eta/2)) \geq \eta/2$ since $-\ln(1-x) \geq x, \forall x < 1$.
Substituting the above bound on $T$ back into Eq.~\eqref{eq:prob_good_trajectory}, we then have that the probability of a good trajectory is 
\begin{equation}\label{eq:lb_prob_good_trajectory}
\Pr[\cap_{s \in [T]} G_s ] \geq p^T \geq p (p)^{2/\eta \log \log K}
 \geq p(\log K)^{2/\eta \log p} \geq \frac{\eta/40}{(\log K)^{2/\eta \log (40/\eta)}} = \Omega(1/\polylog K),
\end{equation}
where we used $0 < p \leq 1$ and Eq.~\eqref{eq:ub_Tmax} in the second inequality, the fact that $p \geq \eta/40$ in the fourth inequality and $\eta=O(1)$ in the final inequality.

We now derive the promises on $X_T,Y_T$ obtained at the end of a good trajectory in $\mathcal{G} := \cap_{s \in [T]} G_s$. Claim~\ref{claim:good-iteration} promises us that on every good iteration, we have
\begin{equation}\label{eq:implication_on_der}
d_{t+1} \leq\Phi_{X_t,Y_t}(X_{t+1},Y_{t+1}) \leq\rho\widetilde d_t,
\end{equation}
where $\widetilde{d}_t = \max\{\der{X_t}{Y_t},d_{\rm term}\}$ and noted the definition of $\Phi$ for the lower bound. Thus in every iteration $t$, while $d_t>d_{\rm term}$, we have $d_{t+1}\leq\rho d_t$. If $\tau < T$ is the first iteration that $d_\tau \leq d_{\rm term}$, then for a good trajectory in $\mathcal{G}$, $d_T \leq d_{\rm term}$ due to Eq.~\eqref{eq:implication_on_der}. If there is no such iteration $\tau < T$, then we have that
$$
d_T \leq \rho^T d_0 \leq \rho^T \log K \leq d_{\rm term},
$$
by the definition of $T$ (see Eq.~\eqref{eq:Tmax}). Thus, we have that for every trajectory in $\mathcal{G}$, $d_T \leq d_{\rm term}$.

To evaluate $\sum_{t<T}(\widetilde{d}_t+1)$ for a trajectory in $\mathcal{G}$, we will again define $\tau$ to be the first iteration such that $d_{\tau} \leq d_{\rm term}$. We then have that
\begin{equation}\label{eq:ub_sum_dt}
\sum_{t<T}(\widetilde{d}_t+1) \leq \sum_{t<\tau}\rho^t \log K + \sum_{\tau \leq t < T} d_{\rm term} + T \leq  \frac{\log K}{1 - \rho} +(d_{\rm term}+1)T = O(\log K)
\end{equation}
where we used Eq.~\eqref{eq:implication_on_der} and the fact that $d_0 \leq \log K$ followed by the upper bound on $T$ from Eq.~\eqref{eq:ub_Tmax} along with the fact that $\eta=O(1)$.

To bound the entropic Ruzsa distance $\der{U_A}{X_T}$, we note that $X_0 = U_A$ and use the entropic Ruzsa triangle inequality to obtain
$$
\der{U_A}{X_T} \leq \sum_{t=1}^T d[X_{t-1};X_t] \leq\frac{\rho}{\eta} \sum_{t=1}^T \widetilde d_t = O(\log K),
$$
where in the second inequality used that the definition of the potential $\Phi$ and Claim~\ref{claim:good-iteration} implies $\eta \der{X_{t-1}}{X_{t}} \leq\Phi_{X_{t-1},Y_{t-1}}(X_{t},Y_{t}) \leq\rho\widetilde d_{t-1}$ and then used Eq.~\eqref{eq:ub_sum_dt} in the final inequality. The bound on $\Delta[X_T],\Delta[Y_T]$ follows from the promise of the last good iteration (Claim~\ref{claim:good-iteration}) if $T>0$ or from the fact that $\Delta[X_0]=\Delta[Y_0] = 0$ if $T=0$. This completes the proof.
\end{proof}

\subsubsection{Cost of good trajectories}
\begin{claim}\label{claim:cost_good_trajectories}
Consider the context of Theorem~\ref{thm:ggmt-recursion}. Let $\mathcal{G}$ be the set of trajectories satisfying the promise of Claim~\ref{claim:many-good-trajectories}. Then, Algorithm~\ref{algo:ggmt-recursion} implements algorithmic access to $X_T,Y_T$ as the end of a length-$T$ trajectory $\Gamma$ satisfying the promise of Claim~\ref{claim:many-good-trajectories} with cumulative cost of $\calC_S$ sample complexity to $U_A$, $\calC_Q$ query complexity to $1_A$, and time complexity $\calC_{TC}$ such that
\begin{align*}
\E[\calC_S \mid \Gamma \in \mathcal{G}] = P_1(K), \quad 
\E[\calC_Q \mid \Gamma \in \mathcal{G}] = P_2(K), \text{ and }
\E[\calC_{TC} \mid \Gamma \in \mathcal{G}] = O(n \cdot P_3(K)),
\end{align*}
where $P_1,P_2,P_3$ are effective polynomials independent of $\Gamma$.
\end{claim}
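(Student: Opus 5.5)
The plan is to bound, level by level, the expected number of calls that one invocation of an access procedure for $(X_t,Y_t)$ makes to the procedures for $(X_{t-1},Y_{t-1})$. These per-level overheads compose multiplicatively down to the root. At the root, $\Sample_{U_A}$ costs one sample and $\Coin_{U_A}$ costs one membership query.

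Write $c_t$ for the expected cost, in root calls, of one call to $\Sample_{X_t},\Coin_{X_t},\Sample_{Y_t},\Coin_{Y_t}$ (taking the maximum over the four). By the construction in Lemmas~\ref{lem:closure-sum}--\ref{lem:access_endgame} and Lemma~\ref{lem:bucket}, one call at level $t$ makes at most $O(1)\cdot 2^{\calC^{(t)}_{\rm access}}$ calls to level-$(t-1)$ procedures in expectation. Here $\calC^{(t)}_{\rm access}$ is the quantity \eqref{def:cost-access-after-bucketing} for iteration $t$, and the bound holds conditionally on the fiber and bucket labels chosen in that iteration. Concretely, a fiber sampler makes $M_Y/p_Z(z)$ expected trials (Lemma~\ref{lem:closure-fiber}), each costing $O(1)$ parent calls. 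A bucketed sampler makes $O(2^{j+1}/\pi_j)$ parent calls, and a bucketed coin makes at most $2^{j+1}$ parent calls (Claim~\ref{claim:bucket_algo_access}). Sums and coins of fibers cost $O(1)$ parent calls.

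The calls made at level $t$ use fresh randomness independent of the level-$(t-1)$ procedures. By Wald's identity, expectations therefore multiply once the labels are fixed:
\[
c_t\le C\,2^{\calC^{(t)}_{\rm access}}\,c_{t-1},\qquad c_0=1 .
\]
Iterating gives $c_T\le C^T\,2^{\sum_{t\le T}\calC^{(t)}_{\rm access}}$. On a good trajectory, Claim~\ref{claim:good-iteration} bounds each $\calC^{(t)}_{\rm access}\le\xi(\widetilde d_{t-1}+1)$. Claim~\ref{claim:many-good-trajectories} then gives $\sum_t(\widetilde d_{t-1}+1)=O(\log K)$. Since $T=O(\log\log K)$, we have $C^T=\polylog K$. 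Hence $c_T\le K^{O(1)}$, with the exponent depending only on $\eta$, which is the effective polynomial.

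The construction cost of the trajectory is handled the same way. At iteration $t$, the algorithm draws the fiber labels (a constant number of samples from sums) and the bucket label (one sample plus a geometric number of coins). Each of these is a constant expected number of level-$(t-1)$ calls, except for the coin run inside $\mathsf{Bucket}$. That run has expected length $\E[N]\le\E_x[1/r(x)]$, which is exponential in the slack only via a Jensen-type bound. I would instead bound it by $2^{J+1}$ on the event $J=j$, which is already absorbed in $\calC_{\rm access}$. Summing over $t\le T$ gives total construction cost $\sum_t O(2^{\calC^{(t)}})c_{t-1}\le T\cdot K^{O(1)}$.

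Time complexity picks up an extra factor of $O(n)$ per call, since each level adds or compares vectors in $\F_2^n$. This yields $O(n\cdot P_3(K))$.

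The main obstacle is the conditioning on $\Gamma\in\mathcal G$. The event depends on the chosen labels, and the good-iteration bounds hold pointwise on those labels. Because of this, I will first fix the labels and then take expectations only over the internal randomness of the procedures, which is independent of the goodness event. This is legitimate because goodness is defined through $\calC_{\rm access}$, $\Delta$, and $\Phi$ evaluated at the labels alone. The one subtle point is that the expected cost of calling a fiber sampler conditioned on goodness is still $M_Y/p_Z(z)$, because the run happens after the labels are fixed. The geometric run used to sample $J$ is not independent of $J$, and I will bound it pointwise given $J=j$ by $2^{j+1}$.
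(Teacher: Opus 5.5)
Your per-call recursion is sound, and it is a legitimate alternative to the paper's. For fresh calls made after the labels are fixed, an optional-stopping (Wald) argument with unconditional expected costs gives $c_t\le C\,2^{\mathcal{C}^{(t)}_{\rm access}}c_{t-1}$. This needs $c_t$ to be a supremum over coin inputs as well, since parent-coin inputs are chosen adaptively from earlier outputs. That covers the per-call cost of $X_T$.

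\textbf{The gap: conditioning the construction cost on $\Gamma\in\mathcal{G}$.} You justify this step by fixing the labels and averaging over internal randomness, which you say is independent of goodness. That is false for the calls that \emph{generate} the labels:
\begin{itemize}
\item the two sampler calls whose sum is the fiber label $z$;
\item the point $x\gets\mathsf{Sample}_{X'_t}$ drawn inside $\mathsf{Bucket}$, which is a fiber-sampler call made before $J$ is fixed, contrary to your remark that such runs happen after the labels are fixed;
\item the coin run defining $N_x$.
\end{itemize}
Their outputs determine $\gamma_t=(F_t,Z_t,J_{X,t},J_{Y,t})$, and hence goodness. So conditioning on $\Gamma=G$ conditions each of these calls on its own output. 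Your $c_{t-1}$ only controls the cost averaged over outputs. In general $\mathbb{E}[\text{cost}\mid\text{output}=v]$ can be as large as $\mathbb{E}[\text{cost}]/\Pr[\text{output}=v]$, and for instance $p_{X+Y}(z)$ can be exponentially small. The pointwise bound $N_x<2^{j+1}$ on $\{J=j\}$ limits only the \emph{number} of coin calls in the bucket experiment, not their cost conditional on the observed outcomes.

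\textbf{How the paper closes it.} It builds output-conditioning into the cost measure. $\textsf{Cost}_{R,t}$ is the maximum, over both laws, all coin inputs and all returned values, of the expected cost conditioned on the returned value. This measure composes because of one fact: a rejection sampler with one-trial acceptance probability $a$, conditioned on returning $v$, still uses a $\mathrm{Geom}(a)$ number of trials. In addition, the bucketed procedures use at most $2^{j+1}$ parent calls per trial pointwise. With this measure you may condition on the labels freely.

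\textbf{An alternative repair.} You can keep unconditional costs, but then bound $\mathbb{E}[\mathcal{C}_R\mathbf{1}\{\Gamma\in\mathcal{G}\}]$ iteration by iteration, conditional on a good prefix.
\begin{itemize}
\item For the label draws, drop the indicator, which is at most $1$ against a nonnegative cost.
\item Use goodness only to truncate: $\mathcal{C}_{\rm fib}(z)\le\xi(d+1)$ for the bucket-point draw, and $N_x<2^{\xi(d+1)}$ for the coin run via optional stopping.
\item Finally divide by $\Pr[\Gamma\in\mathcal{G}]=\Omega(1/\mathrm{polylog}(K))$.
\end{itemize}
One of these two repairs is needed; as written, the conditioning step does not go through.
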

\begin{proof}
Let us consider a trajectory $\Gamma$ and denote $F_t$ to be the selected GGMT family from Theorem~\ref{thm:ggmt-decrement}, $Z_t$ to be the tuple of natural fiber/endgame labels required by $F_t$ (or $\emptyset$ tuple for a sum) and let $J_{X,t},J_{Y,t}$ be the two bucket labels. We will collectively denote the tuple $\gamma_t := (F_t, Z_t, J_{X,t},J_{Y,t})$. We can then describe the trajectory $\Gamma$ as
$$
\Gamma := \{\gamma_t\}_{t \in [T]} = \{(F_t,Z_t,J_{X,t},J_{Y,t}\}_{t \in [T]}.
$$
Note that with the description of $\Gamma$ up to iteration $t$, we can recursively algorithmically access $X_t,Y_t$ even though this is not held explicitly. We also do not record rejected proposals as part of the construction of $\Gamma$. We will use $\Gamma_{\leq t} := \{\gamma_s\}_{s \in [t]}$ to denote the trajectory up to iteration $t$. 

We now determine the cost of a good trajectory $\Gamma \in \mathcal{G}$ satisfying Claim~\ref{claim:many-good-trajectories}. For resource $R \in \{S,Q,TC\}$, let $R(\calA)$ be the number of samples from
$U_A$, queries to $1_A$, or time complexity, respectively,
used by one call to an access procedure $\calA$ (which is either $\Sample$ or $\Coin$). Let $\calC_R^{(t)}$ be the $R$-cost of implementing algorithmic access to $X_{t},Y_{t}$ in iteration $t$ (which involves choosing $F_{t}$, generating the natural labels $Z_{t}$, generating bucket labels $J_{X,t},J_{Y,t}$, and defining the resulting composed sampler and coin descriptions). Let the per-call cost of accessing the resulting samplers be denoted as $\textsf{SampleCost}_{R,t}$ and density coins be denoted as $\textsf{CoinCost}_{R,t}$, which we define for a fixed supported trajectory $\Gamma_{\leq t} = G_{\leq t}$ as
\begin{align}
\textsf{SampleCost}_{R,t}(G_{\leq t}) &:=\max_{V\in\{X_t,Y_t\}} \max_{v\in\supp(V)} \E\!\left[ R(\Sample_V) \,\middle|\, \Gamma_{\leq t}= G_{\leq t}, \Sample_V=v \right], \label{eq:cost_sample_t} \\
\textsf{CoinCost}_{R,t}(G_{\leq t}) &:=\max_{V\in\{X_t,Y_t\}} \sup_{x\in\mathbb F_2^n} \max_{b\in \supp(\Coin_V(x))} \E\!\left[ R(\Coin_V(x)) \,\middle|\, \Gamma_{\leq t}= G_{\leq t}, \Coin_V(x)=b \right] \label{eq:cost_coin_t}.
\end{align}
We will also define 
\begin{equation}\label{eq:cost_R_t}
\textsf{Cost}_{R,t}(G_{\leq t}) :=\max\bigl\{ \textsf{SampleCost}_{R,t}(G_{\leq t}), \textsf{CoinCost}_{R,t}(G_{\leq t}) \bigr\}
\end{equation}
as the maximum cost over one call to any sampler or coin available at iteration $t$. Note that this maximizes over the two laws $X_t,Y_t$, the type of access, every coin input, and every supported returned sample or bit. The expectation still averages over the internal randomness of that call. For a depth-$t$ access call $\mathcal A$, let $O$ be its outputs. Then we have that
\begin{align}
\E[R(\mathcal A)\mid \Gamma_{\leq t}=G_{\leq t}] 
&= \sum_o \Prob(O=o \mid\Gamma_{\leq t} = G_{\leq t}) \cdot \E[R(\mathcal A)\mid\Gamma_{\leq t}=G_{\leq t},O=o] \nonumber \\
& \leq \max_o \E[R(\mathcal A) \mid\Gamma_{\leq t}= G_{\leq t},O=o] \nonumber \\
& \leq \textsf{Cost}_{R,t}(G_{\leq t}). \label{eq:cost_R_traj}
\end{align}

Fix a supported good trajectory $G\in\mathcal G$, and let
$C_t(G)$ be the value of $C_{{\rm access},t}$ from
Claim~\ref{claim:cost-algo-access-after-bucketing} in iteration $t+1$. We now describe the effect of the involved rejection samplers. Suppose one component is a rejection sampler with one-trial acceptance probability $a$, accepting trial $M$, and returned sample $\mathsf{Out}$. If
$r_v:=\Prob(\text{one trial accepts and returns }v)$, then
$$
\Prob(M=m\mid\mathsf{Out}=v) = \frac{(1-a)^{m-1}r_v}{r_v/a} =a(1-a)^{m-1}, \quad  \E[M\mid\mathsf{Out}=v] =a^{-1}.
$$
Thus, if one trial has conditional expected \(R\)-cost at most \(c\)
for every outcome of its lower-level calls, then the tower property
gives
\begin{equation*}
 \E\!\left[
   \text{\(R\)-cost of the rejection sampler}
   \,\middle|\,\mathsf{Out}=v
 \right]
 \leq \frac{c}{a}.
\end{equation*}
This is the only reason that output conditioning is mentioned: it
shows that the usual reciprocal-acceptance multiplier \(a^{-1}\)
remains valid for $\textsf{Cost}_{R,t+1}$.

We now bound $\textsf{Cost}_{R,t}$. Recall that algorithmic access procedures of $\Sample$ or $\Coin$ for $X_t,Y_t$ at iteration $t$ are built by calling procedures from iteration $t-1$. At the root, $X_0=Y_0=U_A$. A sampler call makes one call to $U_A$, a density-coin call makes one membership query to $1_A$, and either call requires $O(n)$ time complexity. Hence
\begin{equation}\label{eq:cost_at_t0}
\textsf{Cost}_{S,0}\leq1,\qquad
\textsf{Cost}_{Q,0}\leq1,\qquad
\textsf{Cost}_{T,0}=O(n).
\end{equation}
Let us now fix a supported good trajectory $\Gamma = G \in \mathcal{G}$ and let $C_t(G)$ be the value of $C_{{\rm access},t}$ from Claim~\ref{claim:cost-algo-access-after-bucketing} for iteration $t$. Sum operations use only constantly many parent calls. A fiber sampler with one-trial acceptance probability $a$ uses a geometric number of trials of mean $a^{-1}$. Noting that $\log a^{-1} \leq C_{\rm fib}$ (defined in Eq.~\eqref{def:cost-access-after-bucketing}, and hence $a^{-1} \leq 2^{C_{\rm fib}} \leq 2^{C_t(G)}$.

For a bucket label $j$, with probability $\pi_j$, Claim~\ref{claim:bucket_algo_access} gives a sample overhead of $O(2^{j+1}/\pi_j)$ and overall uses at most $2^{j+1}$ parent-coin calls. Note that the logarithms of all these multipliers for the two output laws $X_t,Y_t$ are included in $C_t(G)$. Therefore, the number of calls to algorithmic access procedures at iteration-$t$ by access procedures at iteration-$(t+1)$ is $D \cdot 2^{C_t(G)}$ for an absolute constant $D \geq 1$. We then have
\begin{align}
\textsf{Cost}_{S,t+1}(G_{\leq t+1}) &\leq D \cdot 2^{C_{t}(G)} \bigl(\textsf{Cost}_{S,t}(G_{\leq t})+1\bigr),\nonumber\\
\textsf{Cost}_{Q,t+1}(G_{\leq t+1}) &\leq D \cdot 2^{C_{t}(G)} \bigl(\textsf{Cost}_{Q,t}(G_{\leq t})+1\bigr), \nonumber \\
\textsf{Cost}_{TC,t+1}(G_{\leq t+1}) &\leq D \cdot 2^{C_{t}(G)} \bigl(\textsf{Cost}_{TC,t}(G_{\leq t})+O(n)\bigr).
\label{eq:all_costs_R_t}
\end{align}
Since $\Gamma = G \in \mathcal{G}$, we have from Claim~\ref{claim:good-iteration} and Claim~\ref{claim:many-good-trajectories} that
\begin{equation}\label{eq:sum_Caccess}
\sum_{t<T}C_{t}(G) \leq\xi\sum_{t<T} (\widetilde{d}_t + 1) =O(\log K)
\end{equation}
Iterating Eq.~\eqref{eq:all_costs_R_t} with initial condition of Eq.~\eqref{eq:cost_at_t0} gives
\begin{align}\label{eq:recursive_cost_R_t}
\textsf{Cost}_{S,t}(G_{\leq t}), \textsf{Cost}_{Q,t}(G_{\leq t}) & \leq (2D)^t 2^{\sum_{s<t} C_{s}(G)}, \quad 
\textsf{Cost}_{TC,t}(G_{\leq t}) \leq O(n)(2D)^t2^{\sum_{s<t}C_{s}(G)}.
\end{align}
Used the fact that $2^{\sum_{s<T}C_s(G)}=K^{O(1)}$ from Eq.~\eqref{eq:sum_Caccess} and $(2D)^T=(\log K)^{O(1)}$ from Claim~\ref{claim:many-good-trajectories} gives us 
$$
\textsf{Cost}_{S,T}(G)=\poly(K), \quad \textsf{Cost}_{Q,T}(G)=\poly(K), \quad \textsf{Cost}_{TC,T}(G)=O(n \cdot \poly(K)) \,\, \forall G \in \mathcal{G}.
$$
From Eq.~\eqref{eq:cost_R_traj}, we also have this applies to the cost not conditioned on the returned outcomes.

We have so far bounded the cost of one call to the implemented algorithmic access procedures in any iteration $t \in [T]$. We now bound the cumulative cost $\calC_R$ over the $T$ rounds to generate all the natural/bucket labels and thereby implement the required access procedures as part of the trajectory. Let $\calC_R^{(t)}$ be the portion of $\calC_R$ spent in iteration $t$ i.e., $\calC_R=\sum_{t<T}C_R^{(t)}$. Generating the natural labels in iteration $t+1$ require only constantly many calls to the access procedures from iteration $t$. Conditional on a fixed bucket label $j$, its label experiment uses fewer than $2^{j+1}$ current coin calls. Those multipliers are again bounded by $2^{C_t(G)}$. Hence, for an absolute constant $D_1 \geq 1$,
\begin{align}
\E[\calC_S^{(t+1)}\mid\Gamma=G] &\leq D_1 2^{C_t(G)}
\bigl(\textsf{Cost}_{S,t}(G_{\leq t}) +1\bigr),\nonumber\\
\E[\calC_Q^{(t+1)}\mid\Gamma=G] &\leq D_1 2^{C_t(G)}
\bigl(\textsf{Cost}_{Q,t}(G_{\leq t}) +1\bigr), \nonumber \\
\E[\calC_{TC}^{(t+1)}\mid\Gamma=G] &\leq D_1 2^{C_t(G)}
\bigl(\textsf{Cost}_{TC,t}(G_{\leq t})+O(n)\bigr). \nonumber
\end{align}
Substituting Eq.~\eqref{eq:recursive_cost_R_t} into the above equation and then summing over $t<T$ gives us
\begin{align}
\E[\calC_S\mid\Gamma=G],\
\E[\calC_Q\mid\Gamma=G]
&\leq D_2T(2D)^T2^{\sum_{s<T}C_s(G)} = \poly(K),\nonumber\\
\E[\calC_{TC}\mid\Gamma=G]
&\leq O(n)D_2T(2D)^T2^{\sum_{s<T}C_s(G)}
=O(n \cdot \poly(K)),
\end{align}
for an absolute constant $D_2 \geq 1$. Finally, averaging over the
conditional distribution of the good trajectory as
\begin{align*}
\E[\calC_R\mid\Gamma\in\mathcal G] = \E_{G\sim(\Gamma\mid\Gamma\in\mathcal G)} \left[\E[\calC_R\mid\Gamma=G]\right]
\end{align*}
gives us
$$
\E[\calC_S\mid\Gamma\in\mathcal G] = P_1(K), \quad \E[\calC_Q\mid\Gamma\in\mathcal G] = P_2(K), \quad \E[\calC_{TC}\mid\Gamma\in\mathcal G] = O(n \cdot P_3(K)),
$$
for some polynomials $P_1,P_2,P_3$. This completes the proof.
\end{proof}

In the previous claim, we bounded the expected complexities for trajectories executed by Algorithm~\ref{algo:ggmt-recursion} conditioned on them being good as per Claim~\ref{claim:many-good-trajectories}. We now show that by imposing a complexity budget on Algorithm~\ref{algo:ggmt-recursion}, we can still obtain an algorithm that succeeds with high probability.
\begin{claim}\label{claim:ggmt_tree_with_limited_budget}
Consider the context of Theorem~\ref{thm:ggmt-recursion}. There exists a complexity budget $B$ that can be imposed on Algorithm~\ref{algo:ggmt-recursion} to ensure that it always uses $\poly(K)$ samples from $U_A$, $\poly(K)$ membership queries to $1_A$, and $O(n \poly(K))$ time complexity such that with probability $\Omega(1/\polylog K)$,
the algorithm does not abort and returns exact algorithmic access to $X_T$
satisfying
\begin{align*}
\der{X_T}{X_T} \leq 2d_{\rm term}=O(1), \quad \der{U_A}{X_T} =O(\log K), \quad \Delta[X_T] \leq L_0=O(1).
\end{align*}
On this event, each later call to $\Sample_{X_T}$ or $\Coin_{X_T}$ has expected sample complexity $\poly(K)$, membership-query query complexity $\poly(K)$, and time complexity $O(n \cdot \poly(K))$.
\end{claim}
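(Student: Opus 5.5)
The plan is to obtain the budgeted algorithm from Claims~\ref{claim:many-good-trajectories} and~\ref{claim:cost_good_trajectories} by a Markov-inequality truncation. Let $\mathcal G$ be the event of a good trajectory. By Claim~\ref{claim:many-good-trajectories}, $\Pr[\Gamma\in\mathcal G]\geq p_{\mathcal G}=\Omega(1/\polylog K)$. By Claim~\ref{claim:cost_good_trajectories}, the conditional expected costs are
\[
\E[\calC_S\mid \Gamma\in\mathcal G]\le P_1(K),\quad \E[\calC_Q\mid \Gamma\in\mathcal G]\le P_2(K),\quad \E[\calC_{TC}\mid \Gamma\in\mathcal G]\le O(n P_3(K)).
\]
Set the budgets to
\[
B_S:=6P_1(K)/p_{\mathcal G},\quad B_Q:=6P_2(K)/p_{\mathcal G},\quad B_{TC}:=6\,O(nP_3(K))/p_{\mathcal G}.
\]
These are still $\poly(K)$, $\poly(K)$ and $O(n\poly(K))$, because $1/p_{\mathcal G}=\polylog K$. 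The counters in Algorithm~\ref{algo:ggmt-recursion} are checked after every iteration. To make the bound deterministic rather than just "per iteration", I would have the counters checked at every elementary operation, that is, every call to $U_A$, every query to $1_A$, and every time step. The algorithm then always halts within $B_R+O(1)$ resources.

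The success probability follows by conditioning on $\mathcal G$ and applying Markov's inequality to each cost. This gives $\Pr[\calC_R> B_R\mid\mathcal G]\le p_{\mathcal G}/6$ for each $R$. A union bound shows that, conditioned on $\mathcal G$, no budget is exceeded except with probability at most $p_{\mathcal G}/2\le 1/2$. Hence the algorithm returns access to $X_T$ without aborting, on a good trajectory, with probability at least $p_{\mathcal G}/2=\Omega(1/\polylog K)$.

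Two points need care. First, the budgeted and unbudgeted runs must be coupled: the truncated run coincides with the unbudgeted run whenever no budget is exceeded, so the event "good and within budget" is well defined on a common probability space. Second, the cost bound in Claim~\ref{claim:cost_good_trajectories} is stated conditional on the whole trajectory being good, which includes future labels. This is fine for Markov's inequality, since we only need the joint event. The access we return is exact, because truncation only aborts and never alters the sampler or coin laws.

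On this event the guarantees come from Claim~\ref{claim:many-good-trajectories}: we have $d_T=\der{X_T}{Y_T}\le d_{\rm term}$, $\Delta[X_T]\le L_0$, and $\der{U_A}{X_T}=O(\log K)$. For the self-distance, the triangle inequality (Lemma~\ref{lem:ruzsa-triangle}) gives $\der{X_T}{X_T}\le \der{X_T}{Y_T}+\der{Y_T}{X_T}\le 2d_{\rm term}$, using symmetry of $\ruzsadist$.

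The per-call cost statement comes from the bound on $\textsf{Cost}_{R,T}(G)$ in the proof of Claim~\ref{claim:cost_good_trajectories}. That bound is $\poly(K)$, respectively $O(n\poly(K))$, uniformly for every $G\in\mathcal G$, and through Eq.~\eqref{eq:cost_R_traj} it also holds without conditioning on the returned outcome. The main obstacle I expect is the subtlety that this per-call bound holds for every good trajectory, not merely on average. Conditioning additionally on "not aborted" selects a subset of $\mathcal G$, so a uniform-over-$\mathcal G$ bound survives this conditioning. I would emphasize this point rather than rely on the averaged bound.
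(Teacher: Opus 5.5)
Your proposal is correct and follows essentially the same route as the paper. Both apply conditional Markov to each cumulative cost given $\Gamma\in\mathcal G$, take a union bound, and lower-bound $\Pr[\mathcal G]\cdot\Pr[\text{no abort}\mid\mathcal G]$; the guarantees and per-call costs are then read off from Claims~\ref{claim:many-good-trajectories} and~\ref{claim:cost_good_trajectories}, using the triangle inequality for $\der{X_T}{X_T}$. The only difference is the extra factor $1/p_{\mathcal G}$ in your budgets. It is harmless but unnecessary, since the paper's budgets $\lceil 6P_R\rceil$ already give conditional abort probability at most $1/2$, which suffices for an $\Omega(1/\polylog K)$ success probability.
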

\begin{proof}
We will re-use the notation from Claim~\ref{claim:cost_good_trajectories} and its proof. Suppose we run Algorithm~\ref{algo:ggmt-recursion} for $T$ many iterations where $T$ is fixed according to Claim~\ref{claim:many-good-trajectories}. We then have that the cumulative complexities $\calC_S, \calC_Q, \calC_{TC}$ for generating a $T$-length trajectory (Claim~\ref{claim:cost_good_trajectories}) satisfies
\begin{equation}\label{eq:int1_up_cond_costs}
\E[\calC_S \mid \Gamma\in\mathcal G]\leq P_1(K),\qquad \E[\calC_Q\mid\Gamma\in\mathcal G]\leq P_2(K),\qquad \E[\calC_{TC}\mid\Gamma\in\mathcal G]\leq P_3(n,K).   
\end{equation}
Let us define the following budgets $B_S, B_Q, B_{TC}$ corresponding to the sample complexity, query complexity and time complexity, respectively
\begin{equation}\label{def:budgets_R}
B_S:=\lceil6P_1(K)\rceil,\qquad
 B_Q:=\lceil6P_2(K)\rceil,\qquad
 B_{T_C}:=\lceil6P_3(n,K)\rceil.   
\end{equation}
In Algorithm~\ref{algo:ggmt-recursion}, we thus define three counters from the start of iteration $1$ till the end of iteration $T$ corresponding to each resource and update them over the course of an iteration and trajectory. If an operation is to be taken that would make any counter exceed the corresponding budget as defined above, Algorithm~\ref{algo:ggmt-recursion} aborts and returns $\perp$.

We now comment on those trajectories whose complexities would not exceed the defined budgets of Eq.~\eqref{def:budgets_R} and satisfies Claim~\ref{claim:many-good-trajectories}. Using the conditional
Markov inequality, we have that for every $R\in\{S,Q,TC\}$,
\begin{equation}
\Prob[\calC_R > B_R \mid \Gamma\in\mathcal G] \leq \frac{\E[\calC_R\mid\Gamma\in\mathcal G]}{B_R} \leq \frac{1}{6}.
\end{equation}
By an union bound, we then have that
$$
\Prob\Big[\calC_S > B_S\ \text{or}\ \calC_Q>B_Q \ \text{or}\ \calC_{T_C} > B_{T_C} \mid\Gamma\in\mathcal G\Big] \leq \frac{1}{2}.
$$
We then obtain that
$$
\Prob\Big[\Gamma\in\mathcal G\text{ and no abort}\Big] = \Prob[\Gamma\in\mathcal G] \cdot
   \Prob[\text{no abort}\mid\Gamma\in\mathcal G] \geq \frac{1}{2} \Omega(1/\polylog K) = \Omega(1/\polylog K),
$$
where we used Claim~\ref{claim:many-good-trajectories}. Note that on this event, the complexities $\calC_R$ of Algorithm~\ref{algo:ggmt-recursion} will be within the budgets defined in Claim~\ref{claim:cost_good_trajectories}. Since $\Gamma \in \mathcal{G}$, we have from Claim~\ref{claim:many-good-trajectories} that 
$$
\der{X_T}{Y_T} \leq d_{\rm term}, \quad \Delta[X_T] \leq L_0, \quad \text{and} \quad \der{U_A}{X_T} = O(\log K).
$$
Finally, applying the entropic Ruzsa triangle inequality gives us
$$
\der{X_T}{X_T} \leq 2 \der{X_T}{Y_T} \leq 2 d_{\rm term}.
$$
This proves the desired results of $X_T$. The exact returned access and its expected per-call resource bounds follow from Claim~\ref{claim:cost_good_trajectories}. This proves the overall claim.
\end{proof}

\subsection{Putting everything together}
We are now ready to prove Theorem~\ref{thm:ggmt-recursion}.
\begin{proof}[Proof of Theorem~\ref{thm:ggmt-recursion}]
We use Algorithm~\ref{algo:ggmt-recursion} with initialized variables
of $X_0=Y_0=U_A$. The sample access for each is simply $U_A$ and the density coin for each is queries to $1_A$. We note that the slack $\Delta[X_0]=\Delta[Y_0]=0$ and $d_0 := \der{X_0}{Y_0} \leq \log K$. We run Algorithm~\ref{algo:ggmt-recursion} for $T=O(\log \log K)$ many iterations as in Claim~\ref{claim:many-good-trajectories} and under the budgeted complexity of Claim~\ref{claim:ggmt_tree_with_limited_budget}.

From Claim~\ref{claim:ggmt_tree_with_limited_budget}, we that have the promise that with probability $\Omega(1/\operatorname{polylog}K)$, the algorithm outputs $X_T$ along with exact algorithmic access to it, satisfying
$$
\der{X_T}{X_T} = O(1), \qquad \der{U_A}{X_T} = O(\log K), \qquad \Delta[X_T]=O(1).
$$
Applying the entropic PFR theorem (Theorem~\ref{thm:entropic-pfr}) to $X_T$ then gives us the promise that there exists a subspace $H\leq\mathbb F_2^n$ such that
$$
d[X_T;U_H]=O(d[X_T;X_T])=O(1)
$$
The algorithm uses the following complexity (Claim~\ref{claim:ggmt_tree_with_limited_budget}): $\poly(K)$ samples from
$U_A$, $\poly(K)$ membership queries to $1_A$, and $O(n \cdot \poly(K))$ time. From Claim~\ref{claim:many-good-trajectories}, we have that the algorithm implements exact algorithmic access to $X_T$ and the complexity of any call to $\Sample_{X_T},\Coin_{X_T}$ obey the same bounds as of the overall algorithm (Claim~\ref{claim:cost_good_trajectories}). This completes the proof.
\end{proof}

\section{Extracting a PFR subspace}
\label{sec:extract-pfr-subspace}
In Theorem~\ref{thm:ggmt-recursion} of Section~\ref{sec:ggmt-recursion}, we showed how using sample and query access to $A$ which is a set with doubling constant $K$, we could output algorithmic access (Definition~\ref{def:algo_access}) to a random variable $X$ for which there exists some subspace $H \leq \FF_2^n$ such that
$$
\der{X}{U_H} = O(1), \quad \der{U_A}{X}=O(\log K), \quad \text{and} \quad \Delta[X]=O(1).
$$
From the entropic Ruzsa triangle inequality, we also thus have $\der{X}{X} = O(1)$. Towards proving Theorem~\ref{thm:main}, our goal is to now learn the the PFR subspace $H$ for $A$. We do this in two steps:
\begin{enumerate}
\item \Cref{sec:terminalsubspace} takes $X$ such that $\der{X}{X} = O(1)$ and outputs a basis for a subspace $V \le \FF_2^n$ such that $\der{X}{U_V} = O(1)$ and therefore $\der{U_A}{U_V} = O(\log K)$.
\item Since the learned $V$ may be larger than $A$, in \Cref{sec:finalsubspaceoutput}, we show how to pass to a suitable $V'$ with $V' \leq |V$ and $|V'|\leq |A|$. Using samples and membership queries to $A$, we find a translate $a+V'$ and verify if $|A \cap (a+V')|/|A| \geq K^{-O(1)}$. 
\end{enumerate}

\subsection{Extracting a candidate subspace}
\label{sec:terminalsubspace}
Taking the span of samples from $X$ is too crude: even when $\der{X}{X}=O(1)$, that span can be much larger than $A$. We instead pass to the self-sum $X+X'$, retain only sums of sufficiently high probability, and span those heavy self-sums. The self-sum also removes irrelevant affine translations in the support of $X$.
The entropic PFR theorem gives a subspace $W_0$ with $\der{X}{U_{W_0}}=O(1)$, so all but a sufficiently small constant fraction of the mass of $X$ lies in $O(1)$ cosets of $W_0$. It turns out that enlarging $W_0$ by their coset directions gives an auxiliary subspace $W$ that contains every sufficiently heavy self-sum. These heavy self-sums also carry a constant fraction of the mass of $X+X'$. The span $V$ of $O(n)$ retained heavy self-sums therefore has constant index in $W$. By index, we mean that $[W:V] := |W|/|V| = O(1)$. It satisfies $\der{X}{U_V}=O(1)$, so the triangle inequality gives $\der{U_A}{U_V}=O(\log K)$. Additionally, the bound on the slack of $X$ helps in controlling how efficiently heavy self-sums can be detected. We now formally state the terminal subspace extraction result below for which we will use Algorithm~\ref{alg:terminal-subspace}.
\begin{lemma}
\label{lem:extract-subspaces}
Let $C \geq 0$ be a known absolute constant. Suppose that we are given algorithmic access to a random variable $X$ satisfying $\der{X}{X} + \Delta[X] \leq C$. Then, there is an algorithm that, with probability at least $0.99$, outputs a basis for a linear subspace $V\leq\FF_2^n$ satisfying $\der{X}{U_V}=O(1)$. The algorithm makes $O(n \log n)$ calls to $\Sample_X,\Coin_X$ and runs in $O(n^3)$ time.
\end{lemma}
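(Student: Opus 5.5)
We will follow the outline in the preceding paragraph: sample heavy self-sums of $X$, detect heaviness with the coin for $X+X'$, and take their span. Concretely, Lemma~\ref{lem:closure-sum} (applied to two independent copies $X,X'$ of $X$) gives algorithmic access to $S:=X+X'$ with envelope $M_X$. Thus $r_S(s)=p_S(s)/M_X$. Moreover, $\Delta[S]\le \Delta[X]+2\der{X}{X}\le 2C$, so $\E_{s\sim S}\log(1/r_S(s))\le 2C$.

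The algorithm draws $m=O(n)$ samples $s_i\sim S$. For each $s_i$ it runs $\Coin_S(s_i)$ some $O(\log n)$ times and declares $s_i$ \emph{heavy} if the empirical acceptance rate exceeds a threshold $\theta$, a small absolute constant depending on $C$. It then outputs a basis (by Gaussian elimination, in $O(n^3)$ time) of the span $V$ of the heavy $s_i$. Since $m\cdot O(\log n)=O(n\log n)$ coin calls are made, a Chernoff bound and a union bound over the $m$ samples show that, with probability $0.999$, every accepted $s_i$ has $r_S(s_i)\ge\theta/2$. The same argument shows that every $s_i$ with $r_S(s_i)\ge 2\theta$ is accepted.

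The structural part uses Theorem~\ref{thm:entropic-pfr}, which gives $W_0$ with $\der{X}{U_{W_0}}=O(1)$. Passing to the quotient $\FF_2^n/W_0$ and using Lemma~\ref{lem:basic-entropic-facts}(2) and (1), the image $\bar X$ has $\entropy[\bar X]\le 2\der{X}{U_{W_0}}=O(1)$. By Lemma~\ref{lem:collision-at-most-shannon}, all but an $\varepsilon$ fraction of the mass of $X$ then lies in a set $T$ of $2^{O(1/\varepsilon)}$ cosets of $W_0$. Let $W:=W_0+\mathrm{span}\{c-c':c,c'\in T\}$, so $[W:W_0]=O(1)$. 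Any $s$ with $p_S(s)\ge\theta M_X/2$ receives mass from pairs $(x,x')$ with $x+x'=s$. If $s\notin W$, then every such pair has $x\notin T$ or $x'\notin T$, so $p_S(s)\le 2\varepsilon\,\|p_X\|_\infty\le 2\varepsilon M_X$. Choosing $\varepsilon<\theta/4$ forces every accepted $s_i$ to lie in $W$, hence $V\le W$.

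For the lower bound on $V$, Markov's inequality applied to $\log(1/r_S)$ shows that $\Pr_{s\sim S}[r_S(s)\ge 2\theta]\ge 1/2$ once $\theta\le 2^{-4C-1}$. Hence the heavy set $\mathcal H$ carries constant mass $\beta\ge 1/2$ of $S$. We must also pick $\varepsilon<\theta/4$, and since $\theta$ depends only on $C$, these choices are compatible. Now take any proper subspace $U<W$ with $\Pr[S\in U\cap\mathcal H]<\beta/2$. Each accepted heavy sample escapes $U$ with probability at least $\beta/2$, so $O(n)$ samples avoid lying entirely in $U$ with high probability. This is an incremental argument: each fresh heavy sample escapes the current span with constant probability until the span captures half of the heavy mass, so $O(n)$ samples suffice by a Chernoff bound on the number of dimension increases.

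The conclusion is therefore that $V$ contains a subspace $V_1\le W$ with $\Pr[S\in V_1]\ge\beta/2$. The main obstacle is converting this into $\der{X}{U_V}=O(1)$. I would argue as follows. Since $S=X+X'$ puts constant mass on $V$, some coset $x+V$ carries constant mass of $X$, by the collision bound $\sum_c p_{\bar X_V}(c)^2\ge\Pr[S\in V]$ in the quotient by $V$. Then $\der{X}{U_V}$ is controlled by $\entropy$ of $X$ modulo $V$ together with $\log[W:V]$, using $\der{X}{U_W}\le\der{X}{U_{W_0}}+\der{U_{W_0}}{U_W}=O(1)$ via Lemma~\ref{lem:basic-entropic-facts}(3) and the triangle inequality. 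To bound the index $[W:V]$, note that the heavy mass inside $W$ is spread, since $\entropy[S]\ge\entropy[X]\approx\log|W|-O(1)$. A subspace of index $q$ captures at most roughly $O(1)/\log q$ of the mass of $S$ unless $q=O(1)$. Making this index bound quantitative, via the entropy of $S$ restricted to the coset structure, is the step I expect to need the most care.
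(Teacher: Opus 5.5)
Everything up through your incremental spanning argument is sound and matches the paper's proof: self-sum access with envelope $M_X$, Markov on $\log(1/r_S)$ for constant heavy mass, the subspace $W$ built from the PFR subspace $W_0$ plus differences of $O(1)$ typical cosets, the bound $p_S(s)\le 2\varepsilon M_X$ for $s\notin W$, and spanning heavy samples until $\Pr[S\in V]\ge\beta/2$. The gap is the step you flag yourself: turning ``$V\le W$ carries constant mass of $S$'' into $[W:V]=O(1)$. The route you sketch for it does not work.

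A Shannon-entropy bound such as $\entropy[S]\ge\entropy[X]\ge\log|W|-O(1)$ cannot limit how much mass sits on a small subspace. Take a variable with half its mass uniform on an index-$q$ subspace of $W$ and half uniform on about $q|W|$ points outside $W$. Its entropy is exactly $\log|W|+1$ for every $q$. Note also that $S$ is not confined to $W$; only its heavy part is. So the claim that an index-$q$ subspace captures only $O(1)/\log q$ of the mass needs a different input.

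The missing ingredient, and the paper's way of finishing, is the pointwise envelope combined with the slack hypothesis. Your sketch uses the slack only in the Markov step. Since $p_S(s)=\sum_x p_X(x)p_X(x+s)\le M_X$ for all $s$, every subspace $U$ satisfies $\Pr[S\in U]\le M_X|U|$. The entropy-difference bound of Lemma~\ref{lem:basic-entropic-facts} gives
\[
\log(M_X|W|)=\Delta[X]+\entropy[U_W]-\entropy[X]\le C+2\der{X}{U_W}=O(1).
\]
In your setup this closes the argument immediately. From $\beta/2\le\Pr[S\in V]\le M_X|V|$ you get $[W:V]\le 2M_X|W|/\beta=O(1)$, and then $\der{X}{U_V}\le\der{X}{U_W}+\tfrac12\log[W:V]=O(1)$. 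The paper phrases the spanning argument directly as growing the span until $|V_i|\ge 1/(8M_X)$, using exactly $\Pr[Z\in V_i]\le M_X|V_i|$.

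If you prefer to stay purely entropic, your collision observation can be made rigorous, but through additive structure rather than the entropy of $S$:
\begin{itemize}
\item In $\FF_2^n/V$, the image $\pi_V(X)$ has an atom of mass at least $\beta/2$.
\item The image $\pi_V(U_W)$ is uniform on $q=[W:V]$ points.
\item Compute $\der{\pi_V(X)}{\pi_V(U_W)}$ conditionally on the $W$-coset of $X$ and apply contraction. This gives $2\der{X}{U_W}\ge\tfrac{\beta}{2}\log q-1$, so again $q=O(1)$.
\end{itemize}
As written, though, your proof stops short of this step.
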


\begin{myalgorithm}
\begin{algorithm}[H]
    \caption{Terminal subspace extraction}
    \label{alg:terminal-subspace}
    \setlength{\baselineskip}{1.8em} %
    \DontPrintSemicolon %
    \KwInput{Algorithmic access to $X$ (Definition~\ref{def:algo_access}) satisfying $\der{X}{X} + \Delta[X] \leq C$ where $C \geq 0$ is a known absolute constant.}
    \KwOutput{A basis for a linear subspace $V\leq\FF_2^n$ satisfying $\der{X}{U_V}=O(1)$ with probability at least $0.99$.}
    \vspace{2mm}
    Set parameters $\tau := 2^{-2(C+1)}$, $m_1=\ceil{16(n+8)}$ and $m_2=\ceil{48 \tau^{-2} \ln(400 m_1)}$. \\
    Use \Cref{lem:closure-sum} to construct algorithmic access to $Z=X+X'$, where $X'$ is an independent copy of $X$.  \\ %
    Draw $z_1,\ldots,z_{m_1}$ independently from $\Sample_Z$. For each $i$, let $\widehat{r}_i$ be the empirical mean of $m_2$ independent calls to $\Coin_Z(z_i)$, and retain $z_i$ if $\widehat r_i\geq\tau/2$. \\
    \textbf{Return} A basis for the span of the retained vectors.
\end{algorithm}
\end{myalgorithm}

Before proving Lemma~\ref{lem:extract-subspaces}, we first comment on the structural properties of the heavy self-sums.
\begin{claim}
\label{clm:heavy-self-sums}
Fix an absolute constant $C\geq0$. Suppose that $X$ has envelope $M_X$ and $\der{X}{X}+\Delta[X]\leq C$. For $Z=X+X'$, where $X'$ is an independent copy of $X$, set
\[
    \tau:=2^{-2(C+1)},
    \qquad
    r(z):=\frac{p_Z(z)}{M_X},
    \qquad
    V_0:=\Span\{z:r(z)\geq\tau/4\}.
\]
Then, $\Pr[r(Z)\geq\tau]\geq1/2$. Moreover, every subspace $V\leq V_0$ with $|V|\geq1/(8M_X)$ satisfies $\der{X}{U_V}=O(1)$.
\end{claim}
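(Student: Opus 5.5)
The plan is to handle the two assertions separately. The first is a Markov argument on the slack of the self-sum. The second combines the entropic PFR theorem (\cref{thm:entropic-pfr}) with a coset-counting argument, which shows that all heavy self-sums lie in a bounded-index enlargement of the PFR subspace.

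\emph{First assertion.} Since $p_Z(z)=\sum_x p_X(x)p_X(z-x)\leq M_X$, we have $r(z)\leq 1$, so $\log(1/r(Z))\geq 0$. Its expectation is
\[
\E\log\frac{1}{r(Z)}=\entropy[Z]+\log M_X=\bigl(\entropy[X+X']-\entropy[X]\bigr)+\entropy[X]+\log M_X=\der{X}{X}+\Delta[X]\leq C .
\]
Markov's inequality then gives $\Pr[\log(1/r(Z))>2C+2]\leq C/(2C+2)<1/2$. This is exactly $\Pr[r(Z)\geq\tau]\geq 1/2$ for $\tau=2^{-2(C+1)}$.

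\emph{Size of a PFR subspace.} By \cref{thm:entropic-pfr}, there is $W_0$ with $\der{X}{U_{W_0}}=O(\der{X}{X})=O(1)$. We first relate $|W_0|$ to $M_X$. Since $p_X\leq M_X$, we have $\entropy[X]\geq\log(1/M_X)$. By the definition of slack, $\entropy[X]=\Delta[X]-\log M_X\leq C+\log(1/M_X)$. The entropy-difference bound in \cref{lem:basic-entropic-facts} gives $|\log|W_0|-\entropy[X]|=O(1)$, and together these give $|W_0|=2^{\Theta(1)}/M_X$.

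\emph{Few cosets carry most of the mass.} Let $\pi$ be the quotient map modulo $W_0$. Then
\[
\entropy[X+U_{W_0}]=\log|W_0|+\entropy[\pi(X)] .
\]
Expanding $\der{X}{U_{W_0}}$ and using $\entropy[X]\leq\log|W_0|+\entropy[\pi(X)]$ yields $\entropy[\pi(X)]\leq 2\der{X}{U_{W_0}}=O(1)$. Fix $\varepsilon:=\tau/16$. \Cref{lem:collision-at-most-shannon} gives a set $T$ of $O(1)$ cosets of $W_0$ (at most $2^{O(1)/\varepsilon}$, which is $O(1)$ since $C$ is fixed) such that $\Pr[\pi(X)\in T]\geq 1-\varepsilon$.

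\emph{All heavy self-sums lie in a bounded-index enlargement.} Let $W:=W_0+\Span\{t+t_0: t\in T\}$ for a fixed $t_0\in T$. Then $[W:W_0]\leq 2^{|T|}=O(1)$. If $x,x'$ both lie in cosets from $T$, then $x+x'\in W$. Hence for $z\notin W$, every representation $z=x+x'$ has $x$ or $x'$ outside the $T$-cosets, and so
\[
p_Z(z)\leq 2\sum_{x:\pi(x)\notin T}p_X(x)\,p_X(z-x)\leq 2\varepsilon M_X<\tfrac{\tau}{4}M_X .
\]
Thus every $z$ with $r(z)\geq\tau/4$ lies in $W$, giving $V_0\leq W$ and $|W|\leq O(1)\cdot|W_0|=O(1)/M_X$.

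\emph{Conclusion.} Let $V\leq V_0$ with $|V|\geq 1/(8M_X)$. Then $V\leq W$ and $[W:V]=O(1)$, so \cref{lem:basic-entropic-facts} gives
\[
\der{U_V}{U_W}=\tfrac12\log[W:V]=O(1),\qquad \der{U_{W_0}}{U_W}=\tfrac12\log[W:W_0]=O(1).
\]
Two applications of the triangle inequality (\cref{lem:ruzsa-triangle}) give
\[
\der{X}{U_V}\leq\der{X}{U_{W_0}}+\der{U_{W_0}}{U_W}+\der{U_W}{U_V}=O(1).
\]

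The main obstacle is the coset step: one must choose $\varepsilon$ small relative to $\tau$ so that self-sums outside $W$ are provably light, while keeping the number of cosets in $T$, and hence $[W:W_0]$, bounded by a constant depending only on $C$.
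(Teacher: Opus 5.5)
Your proof is correct and follows essentially the same route as the paper: Markov's inequality applied to $\log(1/r(Z))$, then entropic PFR together with a high-probability set of $W_0$-cosets, then the observation that any self-sum outside the enlargement $W$ has mass at most $2\varepsilon M_X<\tfrac{\tau}{4}M_X$. The differences are cosmetic. You derive $\entropy[\pi(X)]\le 2\der{X}{U_{W_0}}$ directly rather than via contraction, and you bound $M_X|W_0|$ before passing to $W$. You also get the sharper index bound $[W:W_0]\le 2^{|T|}$, although your $W_0+\Span\{t+t_0\}$ is the same subspace as the paper's $\pi^{-1}(\Span(S+S))$.
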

\begin{proof}
For every $z$, the envelope property gives $p_Z(z)=\sum_xp_X(x)p_X(x+z)\leq M_X\sum_xp_X(x)=M_X$. Hence $\log(1/r(Z))$ is nonnegative and has expectation $\entropy[Z]+\log M_X=\der{X}{X}+\Delta[X]\leq C$. Markov's inequality gives $\Pr[r(Z)<\tau] = \Pr[ \log(1/r(Z)) > 2(C+1)] \leq C/(2(C+1))<1/2$.

By \Cref{thm:entropic-pfr}, there exists a a subspace $W_0$ such that $\der{X}{U_{W_0}}=O(1)$. Let $\pi$ be the quotient map modulo $W_0$ and put $Q:=\pi(X)$. Contraction (\Cref{lem:basic-entropic-facts}) gives $\entropy[Q]/2=\der{Q}{0}\leq\der{X}{U_{W_0}}=O(1)$. Applying the entropy-tail bound in \Cref{lem:basic-entropic-facts} with $\varepsilon:=\tau/16$, choose a set $S$ such that $\Pr[Q\in S]\geq1-\varepsilon$ and $|S|\leq2^{\entropy[Q]/\varepsilon}=O(1)$. Set $W:=\pi^{-1}(\Span(S+S))$. Since $W/W_0=\Span(S+S)$ and $|S+S|\leq|S|^2$,
\[
    [W:W_0]=|\Span(S+S)|
    \leq2^{|S+S|}\leq2^{|S|^2}=O(1).
\]

The triangle inequality and the nested-subspace identity in \Cref{lem:basic-entropic-facts} now give
\[
    \der{X}{U_W}
    \leq\der{X}{U_{W_0}}+\der{U_{W_0}}{U_W}
    =\der{X}{U_{W_0}}+\frac12\log[W:W_0]
    =O(1).
\]
The entropy-difference bound in the same lemma states that $|\entropy[X]-\entropy[U_W]|\leq2\der{X}{U_W}$. Since $\entropy[U_W]=\log|W|$, it follows that
\[
    \log(M_X|W|)
    =\Delta[X]+\entropy[U_W]-\entropy[X]
    \leq C+2\der{X}{U_W}
    =O(1).
\]
We next show that every self-sum at the cutoff defining $V_0$ lies in $W$. This will make any sufficiently large $V\leq V_0$ have constant index in $W$. Let $E:=\{x:\pi(x)\notin S\}$. If $z\notin W$, then at least one of $x$ and $x+z$ lies in $E$ for every $x$: otherwise $\pi(z)=\pi(x)+\pi(x+z)\in S+S\subseteq\Span(S+S)$. Therefore
\[
\begin{aligned}
    p_Z(z)
    &\leq \sum_{x\in E}p_X(x)p_X(x+z)
       + \sum_{x:x+z\in E}p_X(x)p_X(x+z)\\
    &\leq 2M_X\Pr[X\in E]
    \leq \frac{\tau}{8}M_X.
\end{aligned}
\]
Thus every $z$ with $r(z)\geq\tau/4$ lies in $W$, and hence $V_0\leq W$.

If $V\leq V_0$ and $|V|\geq1/(8M_X)$, then
\[
    [W:V]\leq8M_X|W|=O(1),
    \qquad
    \der{X}{U_V}\leq\der{X}{U_W}+\frac12\log[W:V]=O(1),
\]
as required. This completes the proof.
\end{proof}

\begin{proof}[Proof of \Cref{lem:extract-subspaces}]
Let $C \geq 0$ be the absolute constant such that $\der{X}{X}+\Delta[X]\leq C$ and take $\tau := 2^{-2(C+1)}$ as in \Cref{clm:heavy-self-sums}. We also set the constants $m_1 := \ceil{16(n+8)}$ and $m_2 = \ceil{48 \tau^{-2} \ln(400 m_1)}$. Let $r,V_0$ be as in Claim~\ref{clm:heavy-self-sums} and let us define $H:=\{z:r(z)\geq\tau\}$. Conditional on $z_i$, the estimate $\widehat r_i$ is the mean of $m_2$ Bernoulli variables of mean $r(z_i)$. Hence the additive Chernoff bound and a union bound give
\[
    \Pr\!\left[\max_{i\in[m_1]}|\widehat r_i-r(z_i)|>\frac{\tau}{4}\right]
    \leq 2m_1\exp\!\left(-\frac{m_2\tau^2}{48}\right)\leq0.005,
\]
where the last inequality follows from our choices of $m_1,m_2$. On the complementary event, every proposal in $\calH$ is retained and every retained proposal lies in $V_0$. For $i\in[m_1]$, let $V_i$ be the span of the proposals in $H$ among $z_1,\ldots,z_i$. If $V$ is the output, then $V_{m_1}\leq V\leq V_0$.

For $i\in[m_1-1]$, while $|V_i|<1/(8M_X)$, independence of the next proposal and the envelope bound give
\[
    \Pr[z_{i+1}\in H\setminus V_i\mid z_1,\ldots,z_i]
    \geq \Pr[Z\in H]-\Pr[Z\in V_i]
    \geq \frac12-M_X|V_i|>\frac38.
\]
Every such proposal increases $\dim V_i$. Since $M_X\geq\lVert p_X\rVert_\infty\geq2^{-n}$, at most $n$ increases are needed to reach $|V_i|\geq1/(8M_X)$. Binomial domination and a Chernoff bound therefore give $\Pr[|V_{m_1}|\geq1/(8M_X)]\geq0.995$.

On the intersection of these events, $V\leq V_0$ and $|V|\geq|V_{m_1}|\geq1/(8M_X)$. The second conclusion of \Cref{clm:heavy-self-sums} therefore gives $\der{X}{U_V}=O(1)$, and the union bound gives success probability at least $0.99$.

The complexity of Algorithm~\ref{alg:terminal-subspace} is as follows. By \Cref{lem:closure-sum}, we have that each call to $\Sample_Z$ uses $2$ calls to $\Sample_X$, and each call to $\Coin_Z$ uses one call to $\Sample_X$ along with one call to $\Sample_X$. Since we make $m_1$ calls to $\Sample_Z$ and $m_1 m_2$ calls to $\Coin_Z$ in Algorithm~\ref{alg:terminal-subspace}, overall we make $O(m_1 m_2) = O(n \log n)$ many calls to $\Sample_X,\Coin_X$. The generation of proposals and estimation of densities $\widehat{r}_i$ consume $O(n)$ time complexity but the main contribution is in computing the output basis over $O(n)$ retained vectors of length $n$, which consumes $O(n^3)$ time using Gaussian elimination. This completes the proof.
\end{proof}

\subsection{Verifying a good PFR subspace}
\label{sec:finalsubspaceoutput}

We now have a linear subspace $V$ that is close to a good terminal distribution $X$ in entropic Ruzsa distance. A good GGMT trajectory also guarantees that $X$ is within $O(\log K)$ of $U_A$, so by triangle inequality we have 
$$
\der{U_A}{U_V} \le O(\log K).
$$
Since $\der{U_A}{U_V} \le O(\log K)$, the distribution of the $V$-coset of a uniform point of $A$ has small entropy, so a noticeable fraction ($1/\poly(K)$) of $A$ lies in a coset of $V$.
However, $\der{U_A}{U_V} \le O(\log K)$ does not guarantee the $|V| \le |A|$ promise of \Cref{thm:main}. Further, the algorithm chooses a good trajectory with probability $1/ \polylog K$, so to amplify the success probability via repeated trials, we need a way to detect whether the candidate $V$ obtained is indeed a good PFR subspace for $A$. To finish the algorithm, we do the~following:
\begin{enumerate}
    \item We pick an arbitrary subspace $V' \le V$ of codimension $O(\log K)$ such that either $V'=\{0\}$ or $|V'| \le |A|/2$. Now we have a small enough subspace $V'$. Further, the heavy $V$-coset splits into at most $\poly(K)$ cosets of $V'$, so there is a heavy $V'$-coset. We then observe that a random point $a \sim U_A$ is likely to lie in such a heavy $V'$-coset with constant probability.
    \item It remains only to recognize when this has happened. To this end, for a candidate $a+V'$, we estimate
\[
    \alpha:=\frac{|A\cap (a+V')|}{|A|}
    \qquad\text{and}\qquad
    \beta:=\frac{|A\cap (a+V')|}{|V'|}.
\]
The first quantity checks that $a+V'$ captures a noticeable fraction of $A$, while the relation $ \frac{\beta}{\alpha}=\frac{|A|}{|V'|}$ certifies that $V'$ is not larger than $A$. Both quantities can be estimated with $K^{O(1)}$ samples and membership queries, yielding an efficiently verifiable PFR candidate. We make this rigorous below.
\end{enumerate}

\begin{lemma}
\label{lem:detect-pfr-subspace}
Let $A\subseteq\FF_2^n$ be nonempty, let $R\geq1$, and let $0<\delta\leq1/2$. Given uniform-sample and membership-oracle access to $A$ and a basis for a subspace $V\leq\FF_2^n$, there is an algorithm that either rejects or accepts and outputs an affine coset $a+V'$ with $V'\leq V$ and $\dim V'=\dim V-O(R)$. It has the following guarantees.
\begin{itemize}
    \item \textbf{Completeness.} If $\der{U_A}{U_V}\leq R$, then the algorithm accepts with probability $\Omega(1)$.
    \item \textbf{Soundness.} Except with probability $\delta$, every accepted output satisfies
    \[
        |V'|\leq|A|,\qquad
        \frac{|A\cap(a+V')|}{|A|}\geq2^{-O(R)},\qquad
        \frac{|A\cap(a+V')|}{|V'|}\geq2^{-O(R)}.
    \]
\end{itemize}
The algorithm uses $2^{O(R)}\poly(n,\log(1/\delta))$ time, samples, and membership queries.
\end{lemma}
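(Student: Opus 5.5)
Our plan is to make the algorithm sketched just before the statement concrete: shrink $V$ to $V'$, pick a random anchor $a\sim U_A$, and test the coset $a+V'$ by estimating $\alpha$ and $\beta$.

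\emph{Structural step.} First we show that if $\der{U_A}{U_V}\leq R$ then a heavy coset of $V$ exists. Let $\pi$ be the quotient map modulo $V$ and set $Q:=\pi(U_A)$. Contraction (\Cref{lem:basic-entropic-facts}) gives $\entropy[Q]=2\der{Q}{0}\leq 2\der{U_A}{U_V}\leq 2R$. By \Cref{lem:collision-at-most-shannon}, $\sum_q p_Q(q)^2\geq 2^{-2R}$. Hence $\E_{a\sim U_A}[\Pr_{a'}[a'\in a+V]]\geq 2^{-2R}$, and with probability at least $2^{-2R-1}$ the coset of $a$ carries mass $\geq 2^{-2R-1}$. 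This probability is only $2^{-O(R)}$, which is too weak for completeness. So we repeat the anchor choice $2^{O(R)}$ times inside one run and accept the first anchor that passes. This keeps the $2^{O(R)}$ budget and gives constant completeness. The entropy-difference bound also gives $|\log|V|-\log|A||\leq 2R$, so $|V|\leq 2^{2R}|A|$.

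\emph{Shrinking without knowing $|A|$.} Since $|A|$ is unknown, we cannot choose the codimension of $V'$ directly. Instead we try every codimension $c\in\{0,1,\dots,2R+1\}$, taking $V'_c$ to be an arbitrary codimension-$c$ subspace of $V$ (for instance, by dropping basis vectors). For the correct $c^\ast=\lceil\log(|V|/|A|)\rceil+1$, clipped to $[0,\dim V]$, we get $|A|/4<|V'|\leq|A|/2$, or $V'=\{0\}$. A heavy coset $a+V$ of mass $\alpha_V$ splits into $2^{c}$ cosets of $V'$. By the same collision argument applied inside the coset, a uniform $a$ from $A\cap(a+V)$ lands in a $V'$-coset of relative mass $\geq 2^{-c}$ with probability $\geq 2^{-c}$. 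Multiplying the losses, a random $a\sim U_A$ gives, with probability $2^{-O(R)}$, a $V'$-coset with $\alpha\geq 2^{-O(R)}$ and $\beta=\alpha|A|/|V'|\geq 2\alpha$.

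\emph{Testing and soundness.} For each candidate $(c,a)$ we estimate $\alpha$ from uniform samples of $A$ by testing membership in $a+V'$ via linear algebra. We estimate $\beta$ from uniform samples of $a+V'$ by membership queries. Both estimates use $2^{O(R)}\log(1/\delta')$ samples to within a multiplicative factor of $1+1/10$ at the threshold $\theta=2^{-\Theta(R)}$. We accept if $\widehat\alpha\geq\theta$, $\widehat\beta\geq\theta$, and $\widehat\beta/\widehat\alpha\geq 1.5$. Since $\beta/\alpha=|A|/|V'|$, a passing test certifies $|V'|\leq|A|$ whenever the estimates are accurate. For the $V'=\{0\}$ case we note $|\{0\}|=1\leq|A|$ automatically. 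The number of candidates is $2^{O(R)}$, so a union bound with $\delta'=\delta 2^{-O(R)}$ gives soundness with probability $1-\delta$, at cost $2^{O(R)}\poly(n,\log 1/\delta)$.

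\emph{Main obstacle.} The main obstacle we expect is the completeness analysis of the split: showing that the correct-codimension $V'$ retains a coset of mass $2^{-O(R)}$ hit by a random anchor, and that its $\beta/\alpha$ ratio clears the acceptance gap robustly. We will handle this with the second-moment (collision) bound above, applied twice, and by choosing the gap factor (here $2$ versus $1.5$) so that multiplicative estimation error cannot flip the decision.
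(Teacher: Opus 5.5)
Your proposal is correct and follows the paper's route: shrink $V$ by $O(R)$ dimensions, anchor the coset at a sample $a\sim U_A$, and certify $|V'|\le|A|$ through $\beta/\alpha=|A|/|V'|$, with the $V'=\{0\}$ case treated separately. Two of your detours are unnecessary, and the paper avoids both:
\begin{itemize}
\item You already derive $|V|\le 2^{2R}|A|$. Hence the fixed codimension $r=\min\{\dim V,\lceil 2R\rceil+1\}$ already guarantees $V'=\{0\}$ or $|V'|\le|A|/2$ without knowing $|A|$, so there is no need to enumerate over $c$.
\item Applying Markov's inequality to $\log(1/p_{Q'}(Q'))$, whose mean is $\entropy[Q']\le\entropy[Q]+r=O(R)$ for the $V'$-coset label $Q'$, shows that a single anchor lands in a coset of mass $2^{-O(R)}$ with constant probability. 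Your collision bound only gives success probability $2^{-O(R)}$ per anchor, which is what forces the $2^{O(R)}$ repeated anchors and the extra union bound over candidates.
\end{itemize}
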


\begin{claim}
\label{clm:coset-localization}
Let $A\subseteq\FF_2^n$ be nonempty, let $R\geq 1$, and suppose that we are given a sampler for $U_A$ and a basis for a linear subspace $V\leq\FF_2^n$ satisfying $\der{U_A}{U_V}\leq R$. Then, there is an algorithm that, with probability $\Omega(1)$, outputs an affine coset $a+V'$, where $a\sim U_A$ and $V'\leq V$ satisfies $\dim V' = \dim V-O(R)$ and either $V'=\{0\}$ or $|V'|\leq |A|/2$, such that
\begin{align*}
    \frac{|A \cap (a+V')|}{|A|} \ge 2^{-O(R)}.
\end{align*}
Further the algorithm uses $\poly(n)$ time and one sample from $U_A$.
\end{claim}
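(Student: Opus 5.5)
We need to prove Claim~\ref{clm:coset-localization}. The idea is to first use $\der{U_A}{U_V}\le R$ to pin down $|V|$ relative to $|A|$, then trim $V$ to a subspace $V'$ of the right size, and finally show that a uniform point of $A$ lands in a heavy $V'$-coset with constant probability.

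\textbf{Step 1 (choosing $V'$).} By the entropy-difference bound in \Cref{lem:basic-entropic-facts}, $|\log|A|-\log|V||\le 2R$, so $|V|\le 2^{2R}|A|$. Let $k:=\min\{\dim V,\max\{0,\lceil 2R\rceil+1\}\}$, and let $V'$ be spanned by the first $\dim V-k$ vectors of the given basis; this is Gaussian-elimination-free and takes $\poly(n)$ time. Then either $V'=\{0\}$, or $|V'|=|V|2^{-k}\le 2^{2R}|A|/2^{2R+1}=|A|/2$. In either case $\dim V'=\dim V-O(R)$.

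\textbf{Step 2 (heavy cosets).} Let $\pi$ be the quotient map modulo $V'$. Since $\pi(U_V)$ is uniform on $V/V'$, which has size $2^k$, contraction under homomorphisms (\Cref{lem:basic-entropic-facts}) gives $\der{\pi(U_A)}{\pi(U_V)}\le R$. The entropy-difference bound then gives $\entropy[\pi(U_A)]\le k+2R=O(R)$. Write $q(c):=|A\cap c|/|A|$ for cosets $c$ of $V'$, so that $\pi(U_A)$ has law $q$. Since $\E_{a\sim U_A}[\log(1/q(\pi(a)))]=\entropy[\pi(U_A)]=O(R)$, Markov's inequality gives $\Pr_a[q(\pi(a))\ge 2^{-2\entropy[\pi(U_A)]}]\ge 1/2$. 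The threshold here is $2^{-O(R)}$. Alternatively, one can use the tail set of \Cref{lem:collision-at-most-shannon} with $\varepsilon=1/2$.

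\textbf{Step 3 (the algorithm).} Draw a single $a\sim U_A$ and output $a+V'$. By Step 2, with probability at least $1/2$ we have $|A\cap(a+V')|/|A|=q(\pi(a))\ge 2^{-O(R)}$. The cost is one sample and $\poly(n)$ time.

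The only subtle point is the bookkeeping in Step 2. Contraction must be applied to the pair $(U_A,U_V)$ rather than to $U_A$ alone, so that the entropy bound on the quotient is $k+2R$ and not something depending on $|A|$. Also, the codimension $k$ is itself $O(R)$, which keeps the threshold at $2^{-O(R)}$.
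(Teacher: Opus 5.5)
Your proof is correct and has the same skeleton as the paper's: bound $|V|\le 2^{2R}|A|$, remove about $2R+1$ dimensions from $V$, bound the entropy of the $V'$-coset of a uniform point of $A$ by $O(R)$, and finish with Markov's inequality on $\log(1/q)$. The only difference is in the bookkeeping: the paper obtains $\entropy[Q]\le 2R$ for the $V$-coset label from the chain-rule identity $\der{U_A}{U_V}=\entropy[Q]+\frac12\log(|V|/|A|)$ and then adds $r$ bits to refine to $V'$-cosets, whereas you reach the same bound $k+2R$ directly via contraction to $\FF_2^n/V'$ and the entropy-difference bound.
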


\begin{proof}
Let $Q$ denote the random variable corresponding to the coset of $V$ containing a uniform sample from $A$, that is, $Q = U_A + V$. If $a \sim U_A$ and $v \sim U_V$, then $a+v$ conditioned on $Q = a + V$ is uniform on this coset of $V$. Therefore, by the entropy chain rule, we have that
\begin{align*}
    \der{U_A}{U_V}
    &= \entropy[U_A + U_V]
       - \frac{1}{2} (\entropy[U_A] + \entropy[U_V]) \\
    &= \entropy[Q] + \entropy[U_V]
       - \frac{1}{2}\log(|A||V|)
     = \entropy[Q] +\frac12\log\frac{|V|}{|A|}.
\end{align*}
Therefore, $|V|\leq2^{2R}|A|$. Further, $\entropy[U_A] = \entropy[U_A + U_V | U_V] \le \entropy[U_A + U_V]  = \entropy[Q]+\log|V|$, so the displayed identity also gives $\entropy[Q]\leq2R$. Let $V'\leq V$ be any subspace of codimension $r:=\min\{\dim V,\lceil2R\rceil+1\}$. Then either $V'=\{0\}$ or $|V'|\leq|A|/2$. If $Q'$ denotes the $V'$-coset containing a uniform sample from $A$, then $Q$ is determined by $Q'$ and each $V$-coset contains $[V:V']=2^r$ cosets of $V'$. Hence $\entropy[Q']\leq\entropy[Q]+r=O(R)$. Draw one sample $a\sim U_A$ and output $a+V'$. Its $U_A$-mass is $p_{Q'}(Q')=|A\cap(a+V')|/|A|$. Since $\E[\log(1/p_{Q'}(Q'))]=\entropy[Q']=O(R)$, Markov's inequality shows that this mass is at least $2^{-O(R)}$ with probability $\Omega(1)$.
\end{proof}

\begin{claim}[Verifying a candidate subspace]
\label{clm:verify-affine-candidate}
Let $A \subseteq \FF_2^n$, and let $0 < \eta \le 1$ and $0 < \delta < 1$. There is an algorithm that, given membership-oracle and uniform-sample access to $A$ and an affine representation of a candidate affine subspace $a+V'$, satisfies the following guarantees:
\begin{itemize}
    \item \textbf{Completeness.} The algorithm accepts with probability at least $1 -\delta$ if
    \begin{align*}
        \left(V'=\{0\}\ \text{or}\ |V'|\leq|A|/2\right)
        \quad \text{and} \quad
        \frac{|A\cap(a+V')|}{|A|}\geq\eta.
    \end{align*}
    \item \textbf{Soundness.} The algorithm rejects with probability at least $1 - \delta$ if $a+V'$ does not satisfy
    \begin{align*}
     |V'|\leq|A|,\qquad
     \frac{|A\cap(a+V')|}{|A|}=\Omega(\eta),\qquad
     \frac{|A\cap(a+V')|}{|V'|}=\Omega(\eta).
    \end{align*}
\end{itemize}
The algorithm uses $\poly(n,1/\eta,\log(1/\delta))$ time, samples, and membership queries.
\end{claim}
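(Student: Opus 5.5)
We will verify $a+V'$ by empirically estimating the two quantities from the overview,
\[
\alpha:=\frac{|A\cap(a+V')|}{|A|},\qquad \beta:=\frac{|A\cap(a+V')|}{|V'|},
\]
and then thresholding $\alpha$, $\beta$ and their ratio. To estimate $\alpha$, we draw $s$ uniform samples $x\sim U_A$ and record the fraction with $x+a\in V'$. Membership in $V'$ is decided by Gaussian elimination against the given basis, in $\poly(n)$ time. To estimate $\beta$, we draw $s$ uniform points $a+v$ with $v\sim U_{V'}$, obtained as random combinations of the basis, and record the fraction accepted by the membership oracle for $A$. Each estimate is an average of i.i.d.\ Bernoulli variables. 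A multiplicative Chernoff bound therefore shows that with $s=O(\eta^{-1}\log(1/\delta))$ samples each estimate is within a factor $2$ of its mean whenever that mean is at least $\eta/8$. The same bound shows that when a mean is below $\eta/8$, the estimate falls below $\eta/4$, except with probability $\delta/4$ per estimate.

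The acceptance rule is as follows. We accept iff $\widehat\alpha\geq\eta/2$, $\widehat\beta\geq\eta/4$, and $\widehat\beta/\widehat\alpha\geq c$ for a suitable constant $c$. We take $c=1$, using the identity $\beta/\alpha=|A|/|V'|$ together with a margin of $4$ on the ratio.

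\emph{Completeness.} Suppose $\alpha\geq\eta$. If $|V'|\leq|A|/2$, then $\beta=\alpha|A|/|V'|\geq2\alpha\geq2\eta$. If instead $V'=\{0\}$, then $a+V'=\{a\}$ and $\beta=\mathbf 1_A(a)$. Since $\alpha>0$, we have $a\in A$, so $\beta=1$, $|V'|=1\leq|A|$, and $\beta/\alpha=|A|\geq1$. In either case, on the good Chernoff event, $\widehat\alpha\geq\alpha/2\geq\eta/2$ and $\widehat\beta\geq\beta/2\geq\eta/4$, which gives the first two tests. The ratio test is the delicate one, because a factor-$2$ estimate on each quantity only yields $\widehat\beta/\widehat\alpha\geq\tfrac14\cdot\beta/\alpha$. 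In the $|V'|\leq|A|/2$ case this gives $\widehat\beta/\widehat\alpha\geq\tfrac14\cdot2=\tfrac12$, which falls short of $c=1$.

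The fix is to sharpen the estimates to relative accuracy $1\pm\epsilon$ with a small constant $\epsilon$, say $\epsilon=1/10$. This costs only $s=O(\epsilon^{-2}\eta^{-1}\log(1/\delta))$ samples, since both means are at least about $\eta$ in the relevant cases. We then set the ratio threshold to $c=1.5$. In the completeness case $\beta/\alpha\geq2$, so the estimated ratio is at least $2\cdot\tfrac{0.9}{1.1}>1.5$. The case $V'=\{0\}$ needs no special treatment, since $\widehat\beta=1$ exactly and $|A|\geq1$. When $|A|=1$, the ratio $\beta/\alpha$ equals $1$, so $c=1.5$ would wrongly reject. We handle this degenerate case by adding a separate rule: if $\dim V'=0$, accept on the test $\mathbf 1_A(a)=1$ together with $\widehat\alpha\geq\eta/2$. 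That case is sound, because $|V'|=1\leq|A|$ and $\beta=1$.

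\emph{Soundness.} Suppose we accept with $\dim V'>0$, on the good event (with $\epsilon=1/10$). The test $\widehat\alpha\geq\eta/2$ forces $\alpha\geq\eta/8$, since otherwise $\widehat\alpha<\eta/4$. Similarly, $\widehat\beta\geq\eta/4$ forces $\beta\geq\eta/8$. With both means at least $\eta/8$, the $(1\pm\epsilon)$ bounds hold for both estimates, so $\beta/\alpha\geq1.5\cdot\tfrac{0.9}{1.1}>1$, that is, $|A|/|V'|>1$. This gives all three soundness conditions with constants $1/8$. A union bound over the two estimates makes the total failure probability at most $\delta$.

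The main obstacle is the ratio test. The gap between the promised $|V'|\leq|A|/2$ and the certified $|V'|\leq|A|$ is only a factor $2$. Handling it requires constant-factor relative accuracy for both estimates, and that accuracy is available only because of the lower bounds $\alpha,\beta=\Omega(\eta)$ enforced by the other two tests. The total cost is $\poly(n,1/\eta,\log(1/\delta))$.
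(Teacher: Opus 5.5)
Your proposal is correct and follows the paper's proof. Both estimate $\alpha$ and $\beta$ by sampling from $U_A$ and $U_{a+V'}$, accept when $\widehat\alpha\geq\eta/2$ and $\widehat\beta\geq\tfrac32\widehat\alpha$, and use a separate rule for $V'=\{0\}$ to cover the degenerate case $|A|=1$. The differences are cosmetic: you use relative-error Chernoff bounds (justified because the threshold tests force $\alpha,\beta\geq\eta/8$) where the paper uses additive error $\eta/16$, and your extra test $\widehat\beta\geq\eta/4$ and the query $\mathbf 1_A(a)$ are redundant, since $\widehat\alpha>0$ already forces $a\in A$ when $V'=\{0\}$.
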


\begin{proof}
Given its affine representation, we can implement membership-oracle access as well as sample access for $a+V'$ in time $\poly(n)$. Using $m = \poly(1/\eta, \log (1/\delta))$ independent samples from $U_A$ and $U_{a+V'}$, we estimate the quantities
\[
 \alpha:=\frac{|A\cap(a+V')|}{|A|}
 =\Pr_{x\sim U_A}[x\in a+V'],
 \quad \text{ and } \quad
 \beta:=\frac{|A\cap(a+V')|}{|V'|}
 =\Pr_{y\sim U_{a+V'}}[y\in A]
\]
to get empirical estimates $\hat{\alpha}$ and $\hat{\beta}$. If $V'=\{0\}$, we accept exactly when $\hat{\alpha}\geq\eta/2$. Otherwise, we accept exactly when $\hat{\alpha} \ge \eta/2$ and $\hat{\beta} \ge 3\hat{\alpha}/2$. Completeness and soundness of this test can be argued as follows. By Chernoff's inequality and a union bound, with probability at least $1-\delta$, both estimates satisfy $|\hat{\alpha} - \alpha| \le \eta/16$ and $|\hat{\beta} - \beta| \le \eta/16$. If the completeness hypothesis holds, then $\hat{\alpha}\geq\eta/2$. If $V'=\{0\}$, the algorithm therefore accepts. Otherwise, $\beta=\alpha |A|/|V'|\geq2\alpha$, and
\begin{align*}
    \hat{\beta}\geq2\alpha-\frac{\eta}{16}
    >\frac32\left(\alpha+\frac{\eta}{16}\right)
    \geq\frac32\hat{\alpha},
\end{align*}
so the algorithm again accepts with probability at least $1-\delta$. Conversely, suppose that the algorithm accepts on the same concentration event. This implies that $\alpha \ge 7\eta/16$. If $V'=\{0\}$, then $a\in A$, so $\beta=1$ and $|V'|=1\leq|A|$. Otherwise, $\beta \ge 11 \eta/16$ and
\begin{align*}
    \beta - \alpha \ge \frac{1}{2} \hat{\alpha} - \frac{2 \eta}{16} > 0,
\end{align*}
so $|V'| / |A| = \alpha / \beta < 1$. Therefore, with probability at least $1-\delta$, every accepted candidate satisfies $|V'|\leq|A|$ and the required density bounds.
\end{proof}

\begin{proof}[Proof of \Cref{lem:detect-pfr-subspace}]
Apply \Cref{clm:coset-localization} to obtain an affine coset $a+V'$, and then apply \Cref{clm:verify-affine-candidate} to $a+V'$ with $\eta=2^{-O(R)}$ and failure probability $\delta$, accepting exactly when that test accepts.
 If $\der{U_A}{U_V}\leq R$, then \Cref{clm:coset-localization} shows with probability $\Omega(1)$ that $a+V'$ satisfies the completeness hypotheses of \Cref{clm:verify-affine-candidate}. Conditional on this event, the test accepts with probability at least $1-\delta$, proving completeness. The soundness conclusion follows directly from the soundness part of that claim.
 The localization step uses $\poly(n)$ time and one sample from $U_A$, while the test uses $2^{O(R)}\poly(n,\log(1/\delta))$ resources.
\end{proof}

\section{Proof of main theorem}
We prove \Cref{thm:main} by combining the GGMT traversal of \Cref{sec:ggmt-recursion} with extraction and verification from \Cref{sec:extract-pfr-subspace}. Each trial constructs algorithmic access to a random variable $X_T$ at the leaf of the GGMT tree, extracts a candidate subspace $V$, and verifies a coset $a+V'$ with $V'\leq V$. Verification lets us amplify the success probability by repeating these steps independently and accepting only cosets with the required size and overlap bounds. Since several subroutines have a rejection sampling component that might not terminate, we impose a sufficiently large polynomial budget on the time, sample, and membership-query complexity of each trial, aborting and starting a new trial if the budget is exceeded.

Fix sufficiently large effective universal constants $C_1,C_2,C_3$ and set
\[
    R:=C_1\log(2K),\qquad
    N:=\left\lceil C_3(\log(2K))^{C_2}\log\frac{2}{\delta}\right\rceil,
    \qquad
    \delta_{\rm ver}:=\frac{\delta}{2N}.
\]

\begin{myalgorithm}
\begin{algorithm}[H]
    \caption{Algorithmic PFR}
    \label{algo:algo-pfr}
    \setlength{\baselineskip}{1.5em}
    \DontPrintSemicolon
    \KwInput{Uniform sampling and membership-oracle access to a nonempty set $A\subseteq\FF_2^n$ satisfying $|A+A|\leq K|A|$, and $0<\delta<1$.}
    \KwOutput{An affine coset $a+V'$, represented by $a$ and a basis for $V'$, or failure.}
    \vspace{2mm}
    \RepTimes{$N$}{
        Run Algorithm~\ref{algo:ggmt-recursion} to obtain algorithmic access to $X_T$. If it aborts, start the next trial. \\
        Run Algorithm~\ref{alg:terminal-subspace} on $X_T$ with $C=2d_{\rm term}+L_0$, under the extraction budgets, to obtain a basis for $V$. If a budget is exceeded, start the next trial. \\
        Apply the verification procedure of \Cref{lem:detect-pfr-subspace} to $V$ with parameters $R$ and $\delta_{\rm ver}$. \\
        \If{verification accepts and outputs $a+V'$}{
            \textbf{return} $a$ and a basis for $V'$.
        }
    }
    \textbf{return} failure.
\end{algorithm}
\end{myalgorithm}

\begin{proof}[Proof of \Cref{thm:main}]
We first bound the success probability of one trial from below. By \Cref{claim:ggmt_tree_with_limited_budget}, with probability $\Omega(1/\polylog(2K))$, the traversal returns algorithmic access to $X_T$ satisfying
\[
    \der{X_T}{X_T}\leq2d_{\rm term},\qquad
    \der{U_A}{X_T}=O(\log K),\qquad
    \Delta[X_T]\leq L_0,
\]
where $d_{\rm term}$ and $L_0$ are the absolute constants of \Cref{claim:good-iteration}. Conditioning on this event, the hypothesis of \Cref{lem:extract-subspaces} holds with $C=2d_{\rm term}+L_0$, so extraction returns a basis for $V\leq\FF_2^n$ satisfying $\der{X_T}{U_V}=O(1)$ with probability at least $0.99$. Since it makes $\poly(n)$ calls to the algorithmic access of $X_T$, and each call has $\poly(n,K)$ complexity, for a sufficiently large polynomial budget, Markov's inequality shows that extraction succeeds within the budget with probability at least constant probability. When extraction succeeds, the triangle inequality gives
\[
    \der{U_A}{U_V}
    \leq\der{U_A}{X_T}+\der{X_T}{U_V}
    =O(\log(2K))\leq R.
\]
By the completeness guarantee of \Cref{lem:detect-pfr-subspace}, verification then accepts with conditional probability $\Omega(1)$, and therefore one trial accepts with probability $\Omega(1/\polylog(2K))$. By the choice of $N$, the probability that no trial accepts is at most
\begin{align*}
    \left(1-\frac{1}{C_3(\log(2K))^{C_2}}\right)^N
    &\leq\exp \left(-\frac{N}{C_3(\log(2K))^{C_2}}\right) \le \frac{\delta}{2}.
\end{align*}
The soundness guarantee of \Cref{lem:detect-pfr-subspace} applies to every candidate $V$, including those from unsuccessful traversals or extractions. Except with probability $\delta_{\rm ver}$ per verification call, an accepted coset $a+V'$ satisfies
\[
    |V'|\leq|A|,\qquad
    \frac{|A\cap(a+V')|}{|A|}\geq2^{-O(R)}=K^{-O(1)},
    \qquad
    \frac{|A\cap(a+V')|}{|V'|}\geq2^{-O(R)}=K^{-O(1)}.
\]
There are at most $N$ verification calls, so the probability that any accepted output violates these bounds is at most $N\delta_{\rm ver}=\delta/2$. Together with the preceding acceptance bound, this shows that the algorithm returns a coset satisfying all three bounds with probability at least $1-\delta$.

The imposed budgets for traversal and extraction bound the complexity of each trial by $\poly(n,K)$. Verification uses
\[
    2^{O(R)}\poly\!\left(n,\log\frac{1}{\delta_{\rm ver}}\right)
    =\poly\!\left(n,K,\log\frac{1}{\delta}\right)
\]
time, samples, and membership queries. This proves the desired complexity bounds in \Cref{thm:main}.

Finally, we prove the covering conclusion via a standard argument. Setting $B:=A\cap(a+V')$ and choosing a maximal collection of pairwise disjoint translates $t+B$ with $t\in A$. Since $B\subseteq A$, these translates lie in $A+B\subseteq A+A$, so their number is at most
\[
    \frac{|A+B|}{|B|}\leq\frac{K|A|}{|B|}=K^{O(1)}.
\]
For every $x\in A$, maximality ensures that $x+B$ intersects a chosen translate $t+B$. Since $B \subseteq a + V'$, and since we are in characteristic $2$, we have that $x\in t+B+B\subseteq t+V'$. The corresponding cosets $t+V'$ therefore cover $A$, proving the required conclusion for $V'$.
\end{proof}

\section{Quadratic Goldreich-Levin}
\begin{theorem}\label{thm:qgl}
    Let $\epsilon, \delta > 0$.
    There exists a universal constant $c_1, c_2 > 0$ and a randomized algorithm which, given oracle access to a function $f : \FF_2^n \to [-1,1]$, either outputs a quadratic form $q$ or $\bot$ satisfying:
    \begin{itemize}
        \item \textbf{Completeness:} If $\| f \|_{U^3} \ge \epsilon$, then with probability at least $1 - \delta$, the algorithm outputs a quadratic form $q$ such that
        \begin{align*}
            \E_x \left[f(x) \cdot (-1)^{q(x)}\right] \ge c_1 \epsilon^{c_2}.
        \end{align*}
        \item \textbf{Soundness:} For every input $f$, including those not satisfying the promise, the probability of outputting a quadratic form $q$ such that
        \begin{align*}
            \E_x \left[f(x) \cdot (-1)^{q(x)}\right] < c_1 \epsilon^{c_2}/2
        \end{align*}
        is at most $\delta$.
    \end{itemize}
    Further, the algorithm uses $\poly\left(n, {1}/{\epsilon}, \log {1}/{\delta} \right)$ queries and time.
\end{theorem}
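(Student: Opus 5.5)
The plan is to follow the Samorodnitsky / Green--Tao route to the $U^3$ inverse theorem, making each step algorithmic, with \Cref{thm:main} replacing the existential PFR step.

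\textbf{Step 1: find a large-spectrum set.} Start from $\|f\|_{U^3}^8=\E_x\sum_\xi\widehat{f_x}(\xi)^4\ge\epsilon^8$. Since $\sum_\xi\widehat{f_x}(\xi)^2\le1$, for an $\Omega(\epsilon^8)$ fraction of directions $x$ there is a frequency $\xi$ with $|\widehat{f_x}(\xi)|\ge\epsilon^{O(1)}$. Running Goldreich--Levin (\Cref{thm:goldreich-levin}) on $f_x$, which is queryable with two queries to $f$, returns a list of $O(\epsilon^{-O(1)})$ candidates for each $x$. Because $f$ is only $[-1,1]$-valued, I will use the bounded-function version of Goldreich--Levin.

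Following Samorodnitsky, let $\Gamma=\{(x,\xi):|\widehat{f_x}(\xi)|\ge\gamma\}\subseteq\FF_2^{2n}$. This set has the property that among its elements there are $\epsilon^{O(1)}|\Gamma|^3$ additive quadruples. The standard Balog--Szemer\'edi--Gowers argument then yields a subset $\Gamma'\subseteq\Gamma$ with $|\Gamma'+\Gamma'|\le\epsilon^{-O(1)}|\Gamma'|$.

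\textbf{Step 2: build oracle access to a small-doubling set.} \Cref{thm:main} needs uniform samples and membership queries for a set with small doubling. For $\Gamma$ itself this is easy. Membership of $(x,\xi)$ is tested by estimating $\widehat{f_x}(\xi)$. Near-uniform samples come from drawing a random $x$, running Goldreich--Levin on $f_x$, and picking a random list element. These samples are only $\poly(1/\epsilon)$-uniform, and the threshold is fuzzy because it comes from estimates. To handle this, I would randomize the threshold $\gamma$ so that, with good probability, few points lie near it, and then rely on the approximate PFR variant with $R$-uniform sampling described in the introduction.

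The BSG step is the hardest part. BSG subsets are usually defined through path counts, so I would use an algorithmic BSG, in the style of Tulsiani--Wolf, that gives membership and approximate sampling via estimated path-counts. That adds a second layer of approximation. As an alternative, I would try to avoid BSG altogether: run the GGMT machinery directly on the distribution of $(x,\xi)$ weighted by $\widehat{f_x}(\xi)^2$, whose entropic self-distance is $O(\log 1/\epsilon)$ by the energy bound, together with a density coin obtained by estimating $\widehat{f_x}(\xi)^2$. The coin framework of \Cref{def:algo_access} is flexible enough that this route may be cleaner.

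\textbf{Step 3: from the subspace to a quadratic phase.} \Cref{thm:main} returns a subspace $V\le\FF_2^{2n}$ with $|V|\le|\Gamma'|\le2^n$ whose $\epsilon^{-O(1)}$ translates cover $\Gamma'$. Some translate therefore captures an $\epsilon^{O(1)}$ fraction of $\Gamma'$, and that translate can be found by sampling. From it, I would extract by linear algebra an affine map $x\mapsto Tx+b$ with $\E_x\widehat{f_x}(Tx+b)^2\ge\epsilon^{O(1)}$. The step is to project $V$ to the $x$-coordinates, use $|V|\le2^n$ to see that the graph part dominates, and take any linear section.

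\Cref{clm:removing-affine-offset} removes the offset $b$. The usual symmetry argument replaces $T$ by a symmetric matrix with zero diagonal, which is computed with polynomial linear algebra. We then get $q(x)=x^{\mathsf T}Ux$ with $U+U^{\mathsf T}$ related to $T$, and Cauchy--Schwarz shows that $f\cdot(-1)^{q}$ has a large Fourier coefficient. A final Goldreich--Levin call on $f\cdot(-1)^{q}$ finds a linear term $\ell$, and $q+\ell$ is the output. Negative correlation is handled by complementing $q$.

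\textbf{Soundness and amplification.} The algorithm estimates $\E_x f(x)(-1)^{q(x)}$ empirically and outputs $q$ only if the estimate is at least $\tfrac34c_1\epsilon^{c_2}$. A Chernoff bound gives soundness. Completeness is boosted by repeating all the steps $\poly(1/\epsilon)\log(1/\delta)$ times.
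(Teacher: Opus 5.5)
Your overall architecture matches the paper's. You have a Tulsiani--Wolf/Samorodnitsky front end producing a small-doubling set of pairs $(x,\xi)$ with $|\widehat{f_x}(\xi)|$ large, and \Cref{thm:main} in place of existential PFR. After that come projection of the subspace onto the $x$-coordinates, \Cref{clm:removing-affine-offset}, symmetrization, integration plus Goldreich--Levin, and a final correlation estimate for soundness.

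The gap is Step 2, which you flag as the hardest part and leave open. \Cref{thm:main} needs an \emph{exact, consistent} membership oracle for one fixed set whose doubling is controlled. So does \Cref{thm:approxmain}, which relaxes only the sampling. A test that estimates $\widehat{f_x}(\xi)$ or BSG path counts with fresh randomness does not define such a set. Randomizing the threshold does not repair this, for two reasons.
\begin{itemize}
\item The PFR algorithm queries membership at adaptively chosen points: the coins inside fiber samplers, and the uniform points of $a+V'$ in verification. So ``few points lie near the threshold'' does not bound the probability of querying them.
\item $\Gamma$ itself is not known to have small doubling; only its BSG subset is, and membership in that subset is just as fuzzy.
\end{itemize}

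The missing idea, which the paper imports from \cite[Section~4]{tulsiani2014quadratic}, is a sandwich with a frozen random tape. There are sets $A^-\subseteq A^+$ inside the graph $\{(x,\phi(x))\}$ of an efficiently computable $\phi$, with $|A^-|\geq\epsilon^{O(1)}2^n$ and $|A^++A^+|\leq\epsilon^{-O(1)}2^n$. There is also a test that errs with probability at most $2^{-n-2}$ on points of $A^-$ and on points outside $A^+$.
\begin{itemize}
\item Fixing the test's random tape defines a deterministic set $S$.
\item A union bound over the $2^n$ graph points gives $A^-\subseteq S\subseteq A^+$ with probability $3/4$. Hence $|S+S|\leq|A^++A^+|\leq\epsilon^{-O(1)}|S|$, however the ambiguous points fall.
\item Since $S$ lies in a graph, exact uniform sampling is rejection sampling on uniform $x$, with $\epsilon^{-O(1)}$ expected trials.
\end{itemize}
So neither threshold randomization nor approximate PFR is needed. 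This is also why one keeps a single $\xi$ per $x$ rather than all of $\Gamma$.

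Your fallback of running GGMT on $\mu(x,\xi)=2^{-n}\widehat{f_x}(\xi)^2$ does not work either. The energy bound is a collision-type statement that certifies structure only on part of $\mu$. The entropic self-distance of $\mu$ can be $\Omega(n)$ even with $\|f\|_{U^3}\geq1/2$. For example, take $f$ to be a quadratic phase on $\{y_1=0\}$ and random on $\{y_1=1\}$. Then about a quarter of the mass of $\mu$ sits on an $n$-dimensional subspace, and the rest is nearly uniform on $\FF_2^{2n}$. This is precisely why the BSG step cannot be skipped. Separately, an estimate of $\widehat{f_x}(\xi)^2$ is neither the exact sampler nor the exact Bernoulli coin that \Cref{def:algo_access} requires.

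A smaller point: the paper runs the argument for Boolean $f$ and reaches $[-1,1]$-valued $f$ through \Cref{lem:boolean-rounding}. Using a bounded-function Goldreich--Levin does not by itself cover the structural steps, such as the quoted symmetrization claim, which are stated for Boolean $f$.
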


The remainder of this section is devoted to proving \Cref{thm:qgl}. First, \Cref{lem:qgl-pfr-input} uses \cite[Section 4]{tulsiani2014quadratic} to extract from $f$ a function $\phi : \FF_2^n \to \FF_2^n$ and a set $S$ of pairs $(x, \phi(x))$ such that $|\widehat{f_x}(\phi(x))|$ is large for every $(x, \phi(x)) \in S$, and $S$ has small doubling constant. This guarantee is saying that $\phi(x)$ is a large Fourier coefficient of the derivative function $f_x$. When $f$ is close to quadratic, $\phi(x)$ is close to linear and therefore the set of pairs $(x, \phi(x))$ is close to a linear subspace that has high overlap with the set $S$.

Since $S$ has small doubling constant, we can use the PFR algorithm in \Cref{thm:main} to recover a subspace $W$ that covers $S$ with a small number of cosets. From $W$, we can use \Cref{lem:qgl-pfr-output-post-processing} to recover a quadratic $q$ that is correlated with $f$ in three steps:
\begin{enumerate}
  \item First, \Cref{clm:qgl-linearization} shows that the existence of a large set $S$ with large Fourier coefficients implies the existence of a linear map $T$ such that the average squared derivative Fourier coefficient $\E_x|\widehat{f_x}(Tx)|^2$ is large.
  \item Then, \Cref{clm:qgl-symmetrization} symmetrizes the linear map $T$ to get a symmetric zero-diagonal matrix $B$ such that $\E_x|\widehat{f_x}(Bx)|^2$ is large. 
  \item Finally, \Cref{clm:qgl-quadratic-recovery} converts the symmetric zero-diagonal matrix $B$ into a quadratic polynomial $q$ such that $\E_y \left[ f(y)(-1)^{q(y)}\right]$ is large.
\end{enumerate}
The algorithm described so far works for Boolean-valued functions $f : \FF_2^n \to \{-1, 1\}$; we use the rounding procedure of \cite{tulsiani2014quadratic} (see \Cref{lem:boolean-rounding}) to extend the algorithm to functions $f : \FF_2^n \to [-1, 1]$.

\begin{lemma}\label{lem:qgl-pfr-input}
Let $\epsilon > 0$ be an arbitrary real number. 
Let $f : \FF_2^n \to \{-1,1\}$ be a function such that $\| f \|_{U^3} \ge \epsilon$. 
Then there exists a randomized algorithm that uses $\poly(n, 1/\epsilon)$ time and queries to $f$ to provide exact membership-oracle access and independent uniform-sampling access to a set $S\subseteq \FF_2^n \times \FF_2^n$, for which the following properties hold with probability at least $\epsilon^{O(1)}$:
\begin{itemize}
  \item The set $S$ is large $|S| \ge  \epsilon^{O(1)} \cdot 2^n$ and has small doubling constant, i.e., $|S + S| \le K |S|$ for some $K = \epsilon^{-O(1)}$.
  \item There exists a function $\phi : \FF_2^n \to \FF_2^n$ such that $S \subseteq \{(x, \phi(x)) : x \in \FF_2^n\}$, and for every $(x, \phi(x)) \in S$, we have $|\widehat{f_x}(\phi(x))| \ge \epsilon^{O(1)}$.
\end{itemize}
Further, each call to the constructed membership and uniform-sampling oracles for $S$ uses $\poly(n, 1/\epsilon)$ expected time and queries to $f$.
\end{lemma}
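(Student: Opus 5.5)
We plan to follow the Tulsiani--Wolf approach (\cite[Section 4]{tulsiani2014quadratic}), which goes back to Samorodnitsky, and to make the random choices explicit so that both membership and sampling for $S$ are efficient.

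\textbf{Step 1: many large derivative coefficients.} By the identity $\|f\|_{U^3}^8=\E_x\sum_\xi\widehat{f_x}(\xi)^4$ and Parseval ($\sum_\xi \widehat{f_x}(\xi)^2=1$), we get $\E_x\max_\xi\widehat{f_x}(\xi)^2\geq\epsilon^8$. Hence for at least an $\epsilon^8/2$ fraction of $x$, some $\xi$ has $|\widehat{f_x}(\xi)|\geq\epsilon^4/2$.

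\textbf{Step 2: a randomized, consistent choice of $\phi$.} A plain Goldreich--Levin list (\Cref{thm:goldreich-levin}) is not a function of $x$, since its output depends on internal randomness. We therefore fix a random seed up front and define $\phi(x)$ deterministically in $x$. We draw a shared random hash $h$ (say pairwise independent) from candidate frequencies to $[0,1]$, together with a threshold $\gamma$ chosen uniformly from a grid in $[\epsilon^4/4,\epsilon^4/2]$. We then run Goldreich--Levin on $f_x$ with threshold $\gamma/2$, using a per-$x$ seed derived from shared randomness, estimate every listed coefficient, and let $\phi(x)$ be the listed $\xi$ with $|\widehat{f_x}(\xi)|\geq\gamma$ that minimizes $h(\xi)$. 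If no such $\xi$ exists, $x$ is excluded. Membership of $(x,y)$ in $S$ is then the test ``$x$ is not excluded and $y=\phi(x)$''. Uniform sampling from $S$ is done by rejection: draw $x$ uniformly and accept if $x$ is not excluded. This costs $O(\epsilon^{-8})$ expected trials because $|S|\geq\epsilon^{O(1)}2^n$.

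The main nuisance in this step is that estimated coefficients near the threshold make $\phi$ ill-defined. We handle it with the random threshold, so that with constant probability few $x$ have coefficients within the estimation error of $\gamma$, and we make the estimates deterministic given the shared seed. With the seed fixed, $S$ is a genuine set, and the oracles we build are exact for that set.

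\textbf{Step 3: small doubling (the main obstacle).} This is the content of Samorodnitsky's/Tulsiani--Wolf's argument. The $U^3$ structure gives many additive quadruples in the graph of $\phi$: $\E_{x,y}\sum_{\xi,\xi'}\widehat{f_x}(\xi)^2\widehat{f_y}(\xi')^2\widehat{f_{x+y}}(\xi+\xi')^2\geq\epsilon^{O(1)}$. Restricting to the chosen $\phi$, this becomes $E(S)\geq\epsilon^{O(1)}|S|^3$. From this we would apply Balog--Szemer\'edi--Gowers to pass to a subset $S'\subseteq S$ with $|S'+S'|\leq\epsilon^{-O(1)}|S'|$. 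That subset is not efficiently accessible, however. To avoid passing to a subset, we would follow the Tulsiani--Wolf route, where the randomized choice of a single large coefficient already yields $|S+S|\leq\epsilon^{-O(1)}|S|$ directly with probability $\epsilon^{O(1)}$. If this fails, we would fall back on an algorithmic BSG, restricting $S$ by a random popular-neighborhood test implemented with membership queries and sampling. That fallback preserves exact oracle access at $\poly(n,1/\epsilon)$ cost. Combining these steps gives all the listed properties with probability $\epsilon^{O(1)}$.
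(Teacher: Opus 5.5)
\textbf{Gap in Step 3.} The content of this lemma is the small-doubling bound, and Step 3 does not establish it.

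Your primary route says that the Tulsiani--Wolf choice of one large coefficient per $x$ already gives $|S+S|\le\epsilon^{-O(1)}|S|$ for the whole graph. That is false in general, and it is not what Tulsiani--Wolf prove. Large energy $E(S)\ge\epsilon^{O(1)}|S|^3$ does not bound the doubling of $S$ itself. For example, take the graph of a linear map over a subspace of half the $x$'s, together with a random graph over the remaining $x$'s. Its energy is at least $|S|^3/8$, but $|S+S|$ is of order $4^n$, i.e.\ doubling $\Omega(2^n)$. This is exactly why their Section~4 runs an algorithmic Balog--Szemer\'edi--Gowers step.

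The energy bound is also not justified for your particular $\phi$. Samorodnitsky's averaging needs the choices at $x$, $y$, $x+y$ to be (3-wise) independent, for instance $\xi$ chosen with probability $\widehat{f_x}(\xi)^2$. A globally shared min-hash on frequencies is not independent in this sense. If the lists at $x$ and $y$ are both $\{a,b\}$ and the list at $x+y$ is $\{a+b\}$, it always picks the same frequency at $x$ and $y$, so it never yields $\phi(x)+\phi(y)=\phi(x+y)$.

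\textbf{Missing mechanism in the fallback.} Everything therefore rests on your fallback, where you only assert that algorithmic BSG ``preserves exact oracle access''. A popular-neighbourhood test estimates counts by sampling. The most it guarantees is three things:
\begin{itemize}
\item points of a large set $A^-$ are accepted with high probability;
\item points outside a set $A^+\supseteq A^-$ with $|A^++A^+|\le\epsilon^{-O(1)}2^n$ are rejected with high probability;
\item points of $A^+\setminus A^-$ are handled arbitrarily.
\end{itemize}
Merely fixing a seed for a test with constant per-point error does not help. It gives a well-defined set, but that set includes a constant fraction of the points outside $A^+$, so $|S+S|/|S|$ need not be small.

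The paper supplies the missing step. It amplifies the test's error to $2^{-n-2}$, which costs only $O(n)$ repetitions. It then fixes the random tape and takes a union bound over all $2^n$ inputs, so that $A^-\subseteq S\subseteq A^+$ with probability $3/4$. From this sandwich, $|S|\ge|A^-|\ge\epsilon^{O(1)}2^n$ and $|S+S|\le|A^++A^+|\le\epsilon^{-O(1)}|S|$. The coefficient bound holds for \emph{every} point of $S$ because $S\subseteq A^+$.

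The same amplify-to-$2^{-n}$-and-fix device is what your Step~2 needs in place of the random-threshold heuristic. That heuristic only makes most $x$ behave, so it cannot give the ``for every $(x,\phi(x))\in S$'' guarantee.
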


\begin{proof}
By Tulsiani and Wolf~\cite[Section~4]{tulsiani2014quadratic}, given oracle access to $f$, there exists an algorithm that simulates oracle access to a function $\phi: \FF_2^n \to \FF_2^n$, and a randomized algorithm $\calS$ such that with probability at least $\epsilon^{O(1)}$, the following guarantees hold:
\begin{itemize}
  \item There exist sets $A^- \subseteq A^+ \subseteq \{(x, \phi(x)) : x \in \FF_2^n\}$ such that
  \begin{align*}
    |A^-|\geq\epsilon^{O(1)}2^n, \qquad |A^++A^+|\leq\epsilon^{-O(1)}2^n,
  \end{align*}
  and $|\widehat{f_x}(\phi(x))|\geq\epsilon^{O(1)}$ for every $(x, \phi(x)) \in A^+$.
  \item For all $(x, \phi(x)) \in A^-$, the algorithm $\calS$ accepts with probability at least $1 - 2^{-n-2}$, and for all $(x, \phi(x)) \notin A^+$, the algorithm $\calS$ accepts with probability at most $2^{-n-2}$.
\end{itemize}
Further, the algorithm that evaluates $\phi$, as well as $\calS$ use $\poly(n, 1/\epsilon)$ time and queries to $f$. For the rest of this proof, we will condition on the good event that the above guarantees hold.

We fix a random tape for $\calS$ so that it is deterministic, and define the set
\begin{align*}
  S = \{(x, \phi(x)) : x \in \FF_2^n \text{ and } \calS (x, \phi(x)) = 1\}.
\end{align*}
A point in $A^-$ is omitted from $S$ with probability at most $2^{-n-2}$, while a point outside $A^+$ is included in $S$ with probability at most $2^{-n-2}$. Therefore a union bound gives that conditioned on the good event, with probability at least $1 - 2^n \cdot 2^{-n-2} = \frac{3}{4}$, we have $A^- \subseteq S \subseteq A^+$. On this event, which occurs with probability $\epsilon^{O(1)}$, we have
\begin{align*}
  |S| &\ge |A^-| \ge \epsilon^{O(1)} 2^n, \\
  |S + S| &\le |A^+ + A^+| \le \epsilon^{-O(1)} 2^n \le \epsilon^{-O(1)} |S|,
\end{align*}
and for every $(x, \phi(x)) \in S$, since $S \subseteq A^+$, we have $|\widehat{f_x}(\phi(x))| \ge \epsilon^{O(1)}$.

Finally, we can implement a membership oracle for $S$ as follows: on input $(x, y)$, compute $\phi(x)$, and check if $y = \phi(x)$. If not, reject. Otherwise, return the value of $\calS(x, \phi(x))$. A uniform sampler can be implemented by repeatedly drawing a fresh uniform $x \in \FF_2^n$ and accepting exactly when $(x, \phi(x)) \in S$. On the good event above, the expected number of trials until acceptance is at most $\epsilon^{-O(1)}$. Each membership and sample query therefore uses $\poly(n, 1/\epsilon)$ time and queries to $f$.
\end{proof}

\begin{lemma}
\label{lem:qgl-pfr-output-post-processing}
Let $0<\alpha,\gamma, \delta \leq1$ be arbitrary real numbers, and let $L\geq1$.  Let $f:\FF_2^n\to\{-1,1\}$. Suppose that there is a set $S\subseteq \{(x, \phi(x)) : x \in \FF_2^n\} \subseteq \FF_2^n\times\FF_2^n$ for some function $\phi:\FF_2^n\to\FF_2^n$ such that $|S|\geq\alpha2^n$, and $|\widehat{f_x}(\phi(x))|\geq\gamma$ for every $(x,\phi(x))\in S$. Suppose further that there exists a linear subspace $W\leq\FF_2^n\times\FF_2^n$, $|W|\leq|S|$ such that $S$ is covered by at most $L$ cosets of $W$.

Then, there is a randomized algorithm which, given oracle access to $f$ and the basis for $W$, outputs with probability at least $1-\delta$, a polynomial $q:\FF_2^n\to\FF_2$ of degree at most two satisfying
\[
 \E_y \left[ f(y)(-1)^{q(y)} \right] \geq \frac{\alpha\gamma^2}{2L^2}.
\]
The algorithm uses $\poly(n,L,1/\alpha,1/\gamma,\log(1/\delta))$ time and queries to $f$.
\end{lemma}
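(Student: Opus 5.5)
The plan follows the three-step outline given before \Cref{lem:qgl-pfr-input}: linearization, then symmetrization, then quadratic recovery, with random trials and empirical verification for amplification.

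\textbf{Step 1: linearization (\Cref{clm:qgl-linearization}).} Since $S$ is covered by $L$ cosets of $W$, some coset $c+W$ satisfies $|S\cap(c+W)|\geq |S|/L$. Let $P\leq\FF_2^n$ be the projection of $W$ onto the first coordinate, and let $W_0:=\{y:(0,y)\in W\}$. Fix a linear map $T_0$ with $(x,T_0x)\in W$ for all $x\in P$, and extend it arbitrarily to all of $\FF_2^n$. Then every $(x,\phi(x))\in S\cap(c+W)$ has $\phi(x)\in T_0x+b+W_0$ for a fixed offset $b$ determined by $c$.

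Since $S$ is a graph, $|P|\geq|S\cap(c+W)|\geq|S|/L$. Combined with $|W|=|P|\,|W_0|\leq|S|$, this gives $|W_0|\leq L$. Now randomize: let $T:=T_0+R$, where $R$ is a uniformly random linear map $\FF_2^n\to W_0$. For $x\neq0$, $\Pr_R[Tx+b=\phi(x)]=1/|W_0|\geq1/L$. Hence
\[
\E_R\E_x|\widehat{f_x}(Tx+b)|^2\geq \frac{|S|/L}{2^n}\cdot\frac{\gamma^2}{L}-2^{-n}\geq\frac{\alpha\gamma^2}{2L^2},
\]
up to the negligible $x=0$ term. \Cref{clm:removing-affine-offset} then removes $b$, so a random $R$ gives $\E_x|\widehat{f_x}(Tx)|^2\geq\alpha\gamma^2/(2L^2)$ with probability about $\alpha\gamma^2/L^2$. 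The algorithm can compute $T_0$, $b$ and $W_0$ from the basis of $W$ by Gaussian elimination. It does not know which coset $c$ is heavy, so it tries cosets $c$ drawn from samples of $S$ if available, or otherwise enumerates offsets as random choices. It then estimates $\E_x|\widehat{f_x}(Tx)|^2$ by sampling $x,y,y'$, and repeats $\poly(L/\alpha\gamma)$ times.

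\textbf{Step 2: symmetrization (\Cref{clm:qgl-symmetrization}).} This step turns $T$ into a symmetric zero-diagonal $B$ with $\E_x|\widehat{f_x}(Bx)|^2$ still large. I would follow the standard Green--Tao/Samorodnitsky argument. Expand $|\widehat{f_x}(Tx)|^2$ into fourfold products of $f$ and apply Cauchy--Schwarz to compare $T$ with $T^{\mathsf T}$. This shows that $\langle Tx,y\rangle+\langle Ty,x\rangle$ behaves like the bilinear form $B=T+T^{\mathsf T}$, which is symmetric with zero diagonal over $\FF_2$. Each Cauchy--Schwarz costs a polynomial power, which is why the final bound is only polynomial. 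I expect this step to be the main obstacle: keeping the losses polynomial, and making sure the nonnegativity of Fourier coefficients from \Cref{clm:removing-affine-offset} is reused rather than lost, requires care. If the direct route leaks, the fallback is to absorb extra powers of $\gamma$ and $\alpha$ into the constants.

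\textbf{Step 3: quadratic recovery (\Cref{clm:qgl-quadratic-recovery}).} Since $B$ has zero diagonal, write $B=U+U^{\mathsf T}$ with $U$ strictly upper triangular, and set $q_0(y):=\langle y,Uy\rangle$. Then $q_0(y)+q_0(y+x)=\langle x,Bx\rangle\text{-free part}+\langle Bx,y\rangle+q_0(x)$, so $g:=f\cdot(-1)^{q_0}$ satisfies $|\widehat{g_x}(0)|=|\widehat{f_x}(Bx)|$. Therefore
\[
\E_x|\E_y g(y)g(y+x)|^2=\E_x|\widehat{f_x}(Bx)|^2,
\]
and by \Cref{clm:qgl-autocorrelation-identity} this equals $\sum_\xi\widehat g(\xi)^4\leq\max_\xi\widehat g(\xi)^2$. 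Goldreich--Levin (\Cref{thm:goldreich-levin}) applied to $g$ finds such a $\xi$, possibly with $-g$ to fix the sign, and $q(y):=q_0(y)+\langle\xi,y\rangle$ is the output. Finally, all candidate $q$ produced across trials are checked by empirically estimating $\E_y f(y)(-1)^{q(y)}$ to accuracy a constant fraction of the target, which boosts the success probability to $1-\delta$. All steps use $\poly(n,L,1/\alpha,1/\gamma,\log(1/\delta))$ time and queries.
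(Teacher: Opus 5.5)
The gap is in Step 2. You take $B=T+T^{\mathsf T}$. Over $\FF_2$ this matrix is indeed symmetric with zero diagonal, but it measures how far $T$ is from being symmetric; it is not a symmetrization of $T$, and it can destroy the correlation entirely. Take $f(y)=(-1)^{\langle y,My\rangle}$ with $M$ strictly upper triangular, and $T=M+M^{\mathsf T}$. Then $\widehat{f_x}(\xi)=\pm\mathbf{1}[\xi=Tx]$, so $\E_x|\widehat{f_x}(Tx)|^2=1$. Meanwhile $T+T^{\mathsf T}=0$ and $\E_x|\widehat{f_x}(0)|^2=\Pr_x[Tx=0]=2^{-\mathrm{rank}\,T}$, which can be as small as $2^{-n}$. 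In the standard Samorodnitsky/Tulsiani--Wolf argument, $T+T^{\mathsf T}$ is only the obstruction to symmetry that has to be controlled, and $B$ is obtained by modifying $T$ itself. The paper does not reprove this step. It invokes \Cref{clm:qgl-symmetrization} (Lemma~4.16 of~\cite{tulsiani2014quadratic}) as a black box with the exact loss $\eta\mapsto\eta^2$. That exact loss is needed, so your fallback of absorbing extra powers of $\alpha,\gamma$ is not available: the lemma asserts the explicit bound $\alpha\gamma^2/(2L^2)$. If you cite that claim in place of your sketch, your Step 3 coincides with \Cref{clm:qgl-quadratic-recovery} and the chain $\eta\to\eta^2\to\eta/2$ closes.

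Your Step 1 is valid but takes an unnecessary detour, which costs you constants. The paper keeps the deterministic map $T_0$ computed from the basis of $W$. Each point of the heavy coset satisfies $\phi(x)=T_0x+b+w_x$ with $w_x\in W_0$ and $|W_0|\le L$. By pigeonhole, a single $w^*$ is shared by at least $|S|/L^2$ points, so $b+w^*$ is one affine offset, which \Cref{clm:removing-affine-offset} removes. This gives $\E_x|\widehat{f_x}(T_0x)|^2\ge\alpha\gamma^2/L^2$ directly. It needs no random $R$, no estimate-and-retry loop, and no $x=0$ correction. It also needs no knowledge of the heavy coset, which enters only through $b$; note too that the lemma provides no samples of $S$ for you to search over. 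Your reverse-Markov step halves the bound to $\alpha\gamma^2/(2L^2)$ before estimation error. As written, your final correlation therefore lands slightly below the stated $\alpha\gamma^2/(2L^2)$, unless you raise the threshold toward the mean.
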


\begin{claim}\label{clm:qgl-linearization}
Under the hypotheses of \Cref{lem:qgl-pfr-output-post-processing}, there is an algorithm that computes, from the basis of $W$, in $\poly(n)$ time, a linear map $T$ such that
\begin{align*} %
 \E_x\big|\widehat{f_x}(Tx)\big|^2
 \geq\frac{\alpha\gamma^2}{L^2}.
\end{align*}
\end{claim}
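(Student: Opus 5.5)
Following the standard Tulsiani--Wolf/Samorodnitsky linearization argument, we pass from the cover of $S$ by $L$ cosets of $W$ to a single heavy coset, then to a graph of an affine map, and finally remove the affine offset with \Cref{clm:removing-affine-offset}.

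\emph{Step 1: a heavy coset.} Since $S$ is covered by at most $L$ cosets of $W$, pigeonhole gives a coset $c+W$ with $|S\cap(c+W)|\geq |S|/L\geq \alpha 2^n/L$. Let $W_1:=\{u:(u,v)\in W\}$ be the projection of $W$ to the first coordinate and $W_2:=\{v:(0,v)\in W\}$ its kernel part. Points of $S$ lie on the graph of $\phi$, so $S\cap(c+W)$ contains at most one point per $x$. Hence $|S\cap(c+W)|\leq |W_1|=|W|/|W_2|\leq |S|/|W_2|\leq 2^n/|W_2|$. Combined with the lower bound, this gives $|W_2|\leq L/\alpha$. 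Using $|W|\leq|S|\leq 2^n$ we may also record that $|W_1|\geq \alpha2^n/L$.

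\emph{Step 2: choose a linear section.} Compute by Gaussian elimination a linear map $T_0:W_1\to\FF_2^n$ with $(u,T_0u)\in W$ for all $u\in W_1$, and extend it arbitrarily to a linear $T$ on $\FF_2^n$. Every point $(x,\phi(x))$ of $c+W$ then satisfies $\phi(x)=Tx+b+w_2$, where $b$ is a fixed affine offset determined by $c$ and $w_2\in W_2$. For each of the at least $\alpha2^n/L$ such $x$, we have $|\widehat{f_x}(Tx+b+w_2)|^2\geq\gamma^2$. Summing over the at most $L/\alpha$ choices of $w_2\in W_2$ gives
\[
\sum_{w_2\in W_2}\E_x|\widehat{f_x}(Tx+b+w_2)|^2\geq \frac{\alpha\gamma^2}{L},
\]
so some single offset $b'=b+w_2$ satisfies $\E_x|\widehat{f_x}(Tx+b')|^2\geq \alpha^2\gamma^2/L^2$.

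\emph{Step 3: remove the offset.} Apply \Cref{clm:removing-affine-offset} to obtain $\E_x|\widehat{f_x}(Tx)|^2\geq\E_x|\widehat{f_x}(Tx+b')|^2$. Note that $T$ is computed from $W$ alone, independently of $c$ and $b'$, so the algorithm never needs to know the offset.

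\emph{Main obstacle.} The bookkeeping in Step 2 produces $\alpha^2\gamma^2/L^2$ rather than the claimed $\alpha\gamma^2/L^2$. To remove the extra factor of $\alpha$, I would avoid splitting over $W_2$. Instead, apply the bound of Step 3 for every $w_2$ simultaneously, so that each of the $|W_2|$ summands is at most $F_T(0)$, giving
\[
|W_2|\,F_T(0)\geq \frac{1}{2^n}\,\bigl|S\cap(c+W)\bigr|\,\gamma^2.
\]
It then remains to sharpen the bound on $|W_2|$. This uses $|S\cap(c+W)|\leq|W_1|$ together with $|W|=|W_1||W_2|\leq|S|$, which gives $|W_2|\leq |S|/|S\cap(c+W)|\leq L$. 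Substituting, $F_T(0)\geq \gamma^2\,|S\cap(c+W)|^2/(2^n|S|)$. Since $|S\cap(c+W)|\geq|S|/L$ and $|S|\geq\alpha 2^n$, the right-hand side is at least $\gamma^2|S|/(2^nL^2)\geq\alpha\gamma^2/L^2$, which is exactly the claimed bound.
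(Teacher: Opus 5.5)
Your final argument (Steps 1--3 plus the ``main obstacle'' fix) is correct and is essentially the paper's proof. It uses a heavy coset, a linear section $T$ of $W$ computed by Gaussian elimination, the kernel bound $|W_2|\leq |W|/|S\cap(c+W)|\leq |S|/|S\cap(c+W)|\leq L$, and \Cref{clm:removing-affine-offset}. The only difference is that the paper pigeonholes to a single $w^\ast\in W_2$ instead of summing over all of $W_2$, which is equivalent. The factor of $\alpha$ you lost in Step 2 came entirely from the crude bound $|W_2|\leq L/\alpha$, not from splitting over $W_2$.
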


\begin{proof}
  Define $\pi(x, y) = x$ to be the first-coordinate projection, and define $\ker \pi_W := \{ w \in \FF_2^n : (0, w) \in W\}$. The algorithm first computes a basis $W$ of the form 
  \begin{align*}
    (0, w_1), \ldots, (0, w_k), (u_1, v_1), \ldots, (u_\ell, v_\ell),
  \end{align*}
  where $\{w_1, \ldots, w_k\}$ is a basis for $\ker \pi_W$, and $\{u_1, \ldots, u_\ell\}$ is a basis for $\pi(W)$. The algorithm then defines the linear map $T$ by setting $T(u_i) = v_i$ for $i \in [\ell]$, and extending $T$ to be zero on the complement of $\pi(W)$ in $\FF_2^n$. Therefore, for every $x \in \pi(W)$, we have that $(x, T(x)) \in W$. Such a linear map exists, because the vectors $u_i$'s are linearly independent. This can be done in $\poly(n)$ time using Gaussian elimination.

  Since $S$ is covered by at most $L$ cosets of $W$, there exists a coset $a + W$ such that $|S \cap (a + W)| \ge |S|/L$. Since $(S \cap (a+W)) \subseteq \{(x, \phi(x)) : x \in \FF_2^n\}$, $\pi$ is injective on $S \cap (a + W)$, and therefore $|S \cap (a+W)| \le |\pi(a+W)| = |\pi(W)| = |W| / |\ker \pi_W|$. Rearranging gives
  \begin{align}\label{eqn:small-kernel}
    |\ker \pi_W| \le |W| / |S \cap (a+W)| \le |S| / |S \cap (a+W)| \le L.
  \end{align}
  
  We now claim that linear map $T$ agrees with $\phi$ on (the first-coordinate projections of) many points in $S \cap (a + W)$, up to an affine translation. To see this, write $a = (a_1, a_2)$. Then, since we are in characteristic $2$, for every $(x, \phi(x)) \in S \cap (a + W)$, we have that $(x + a_1, \phi(x) + a_2) \in W$. This means that $x + a_1 \in \pi(W)$, and so $(x+a_1 , T(x + a_1)) \in W$. Since both $(x + a_1, \phi(x) + a_2)$ and $(x + a_1, T(x + a_1))$ are in $W$, their difference is in $\ker \pi_W$. Therefore, there exists some $w \in \ker \pi_W$ such that
  \begin{align*}
    \phi(x) + a_2 = T(x + a_1) + w = Tx + (Ta_1 + w + a_2).
  \end{align*}
  By \Cref{eqn:small-kernel}, there exists a fixed $w^* \in \ker \pi_W$ for which the above equation holds for at least $|S \cap (a + W)| / |\ker \pi_W| \ge |S| / L^2$ points $(x, \phi(x)) \in S \cap (a + W)$.
  Letting $b := Ta_1 + w^* + a_2$, we have concluded that for at least $|S| / L^2$ points $(x, \phi(x)) \in S$, we have that $\phi(x) = Tx + b$.

  Finally, we have that
  \begin{align*}
    \E_x\left|\widehat{f_x}(Tx)\right|^2 &\ge \E_x\left|\widehat{f_x}(Tx+b)\right|^2 = \frac{1}{2^n}\sum_{x \in \FF_2^n}
      \left|\widehat{f_x}(Tx+b)\right|^2 \ge \frac{1}{2^n}\cdot\frac{|S|}{L^2}\cdot\gamma^2 \ge \frac{\alpha\gamma^2}{L^2},
  \end{align*}
  where first inequality follows by \ref{clm:removing-affine-offset},   finishing the proof.
\end{proof}

\begin{claim}[Lemma 4.16 of \cite{tulsiani2014quadratic}, taken from the proof of Theorem 2.3 of \cite{samorodnitsky2007low}]
  \label{clm:qgl-symmetrization}
For a Boolean $f$ and $0<\eta\leq1$, any linear map $T$ satisfying
$\E_x|\widehat{f_x}(Tx)|^2\geq\eta$ can be converted in $\poly(n)$ time
into a symmetric zero-diagonal matrix $B$ satisfying
\begin{align*}
 \E_x\big|\widehat{f_x}(Bx)\big|^2\geq\eta^2.
\end{align*}
\end{claim}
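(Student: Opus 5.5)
The plan is to follow Samorodnitsky's argument and take the explicit choice
\[
 B:=T+T^{\mathsf T}.
\]
This is computable in $O(n^2)$ time. It is symmetric, and it has zero diagonal because over $\FF_2$ we have $B_{ii}=2T_{ii}=0$. It remains to show $\E_x|\widehat{f_x}(Bx)|^2\geq\eta^2$.

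\textbf{Step 1: rewrite as a cube average with a bilinear phase.} Expand the square exactly as in the proof of \Cref{clm:removing-affine-offset} and substitute $y'=y+h$. This gives
\[
 \E_x\bigl|\widehat{f_x}(Sx)\bigr|^2
 =\E_{x,h}\,\Delta(x,h)\,(-1)^{\langle Sx,h\rangle},
 \qquad
 \Delta(x,h):=\E_y f(y)f(y+x)f(y+h)f(y+x+h),
\]
valid for any linear map $S$. The weight $\Delta$ is symmetric in $(x,h)$. Swapping the roles of $x$ and $h$ therefore shows that the same quantity for $S=T$ equals the one for $S=T^{\mathsf T}$, since $\langle Th,x\rangle=\langle T^{\mathsf T}x,h\rangle$. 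Write $\beta(x,h):=(-1)^{\langle Tx,h\rangle}$. Then
\[
 \eta\leq\E\,\Delta\,\beta=\E\,\Delta\,\beta^{\mathsf T},
 \qquad
 \beta\,\beta^{\mathsf T}=(-1)^{\langle Bx,h\rangle},
\]
where $\beta^{\mathsf T}(x,h):=\beta(h,x)$.

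\textbf{Step 2: Cauchy--Schwarz to merge the two phases.} The aim is
\[
 \Bigl(\E_{x,h}\Delta\,\beta\Bigr)^2\leq\E_{x,h}\Delta\,\beta\,\beta^{\mathsf T}.
\]
The route I would try first is to fix $y$ and change variables to $u=y+x$, $v=y+h$. The summand then becomes $f(y)f(u)f(v)f(u+v+y)$ times a phase that is bilinear in $(u+y,v+y)$. One Cauchy--Schwarz over the pair of variables not sitting inside $f(u+v+y)$ should double the configuration, and the two copies should combine into the phase $\langle Tx,h\rangle+\langle Th,x\rangle=\langle Bx,h\rangle$. This is the Gowers--Cauchy--Schwarz / box-norm manipulation from the proof of Theorem~2.3 in \cite{samorodnitsky2007low}. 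Booleanity of $f$, meaning $f^2=1$, is what allows the duplicated $f$-factors to cancel.

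\textbf{Step 3: conclude.} Once the merged inequality holds, Step~1 applied with $S=B$ gives $\E_x|\widehat{f_x}(Bx)|^2\geq\eta^2$, which is the claim.

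\textbf{Main obstacle.} The main risk is Step~2: setting up the Cauchy--Schwarz so that the two copies produce exactly the phase $\beta\beta^{\mathsf T}$ with the nonnegative weight $\Delta$, rather than some other four-point average. If the direct change of variables does not close, I would fall back on Fourier expansion. Write $\Delta(x,h)=\sum_\xi\widehat{f_x}(\xi)^2(-1)^{\langle\xi,h\rangle}$, which is the autocorrelation identity of \Cref{sec:boolean-fourier}. Then bound the bilinear form $\sum_{\xi}\E_x\widehat{f_x}(\xi)^2\,\mathbf 1[\xi=Tx]$ against its transpose by Cauchy--Schwarz in $x$, using $\sum_\xi\widehat{f_x}(\xi)^2=1$.
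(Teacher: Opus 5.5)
\textbf{Step 2 is false, and so is the choice $B=T+T^{\mathsf T}$.} Here is a counterexample. Take $n=2$, $f(y)=(-1)^{y_1y_2}$ and $T(x_1,x_2):=(x_2,x_1)$.
\begin{itemize}
\item Then $f_x(y)=(-1)^{x_1x_2+\langle y,Tx\rangle}$, so $\widehat{f_x}(\xi)^2=\mathbf 1[\xi=Tx]$ and $\E_x|\widehat{f_x}(Tx)|^2=1$, i.e.\ $\eta=1$.
\item But $T+T^{\mathsf T}=0$ over $\FF_2$, and $\E_x|\widehat{f_x}(0)|^2=\Pr_x[Tx=0]=1/4<\eta^2$.
\item For $f(y)=(-1)^{\sum_i y_{2i-1}y_{2i}}$ the same choice gives $2^{-n}$.
\end{itemize}
In your notation $\Delta=\beta$ here. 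So $\E\,\Delta\beta=1$, while $\E\,\Delta\beta\beta^{\mathsf T}=\E_{x,h}(-1)^{\langle Tx,h\rangle}=1/4$. Hence no arrangement of Cauchy--Schwarz can prove $(\E\Delta\beta)^2\le\E\Delta\beta\beta^{\mathsf T}$, and neither can your Fourier fallback.

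The reason is structural: adding $T^{\mathsf T}$ cancels the symmetric part of $T$, which is exactly the part that carries the correlation. For $f=(-1)^{y^{\mathsf T}My}$ one has $\E_x|\widehat{f_x}(Sx)|^2=2^{-\mathrm{rank}(S+A)}$ with $A=M+M^{\mathsf T}$. So a good $T$ has the form $A+E$ with $E$ of low rank, and then $T+T^{\mathsf T}=E+E^{\mathsf T}$ is close to $0$ rather than to $A$. What is true in the direction of your Step 2 concerns only the antisymmetric part. One can show $\E_{x,h}(-1)^{\langle (T+T^{\mathsf T})x,h\rangle}\geq\eta^2$, i.e.\ $\mathrm{rank}(T+T^{\mathsf T})\leq 2\log(1/\eta)$. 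This is your inequality with the weight $\Delta$ deleted on the right-hand side.

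\textbf{A construction that works} keeps $T$ where it is already symmetric, instead of symmetrizing it.
\begin{itemize}
\item Put $V:=\ker(T+T^{\mathsf T})$, so $T^{\mathsf T}x=Tx$ for $x\in V$.
\item Substituting $y\mapsto y+x$ gives $\widehat{f_x}(\xi)=(-1)^{\langle x,\xi\rangle}\widehat{f_x}(\xi)$. Hence only $x$ with $\langle x,Tx\rangle=0$ contribute.
\item On $V$ the map $x\mapsto\langle x,Tx\rangle$ is linear, so these $x$ form a subspace $V'$.
\item Choose a complement $C$ of $V'$. For $v,v'\in V'$ and $c,c'\in C$, the form $\omega(v+c,v'+c'):=\langle Tv,v'\rangle+\langle Tv,c'\rangle+\langle Tv',c\rangle$ is symmetric and alternating.
\item Hence $\omega(x,h)=\langle Bx,h\rangle$ for a symmetric zero-diagonal $B$, found by Gaussian elimination, with $Bv=Tv$ on $V'$.
\item This gives $\E_x|\widehat{f_x}(Bx)|^2\geq 2^{-n}\sum_{x\in V}\widehat{f_x}(Tx)^2$.
\end{itemize}
The real content of the claim is then the inequality $2^{-n}\sum_{x\in V}\widehat{f_x}(Tx)^2\geq\eta^2$; this is the step your proposal would have to supply. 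It is an uncertainty-principle statement about the distribution $(x,\xi)\mapsto 2^{-n}\widehat{f_x}(\xi)^2$. That distribution is self-dual, since $\sum_{x,\xi}\widehat{f_x}(\xi)^2(-1)^{\langle a,x\rangle+\langle b,\xi\rangle}=2^n\widehat{f_b}(a)^2$, so Poisson summation over the subspace $\{(x,Tx):x\in V\}$ is one available tool.
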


\begin{claim}\label{clm:qgl-quadratic-recovery}
For a Boolean $f$ and $0<\eta,\delta\leq1$, if a symmetric zero-diagonal
matrix $B$ satisfies $\E_x|\widehat{f_x}(Bx)|^2\geq\eta^2$, then there
is a randomized algorithm which, given $B$ and oracle access to $f$,
outputs with probability at least $1-\delta$ a quadratic $q$ satisfying
\[
 \E_y \left[ f(y)(-1)^{q(y)} \right]\geq\frac\eta2.
\]
The algorithm uses $\poly(n,1/\eta,\log(1/\delta))$ time and queries.
\end{claim}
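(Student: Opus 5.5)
The plan is to absorb the quadratic form encoded by $B$ into $f$, so that the hypothesis turns into a statement about the ordinary Fourier spectrum of a single Boolean function. After that, the classical Goldreich--Levin algorithm (\Cref{thm:goldreich-levin}) finds the remaining linear part.

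\textbf{Step 1: the quadratic form attached to $B$.} Since $B$ is symmetric with zero diagonal over $\FF_2$, I will define
\[
Q(y):=\sum_{i<j}B_{ij}y_iy_j .
\]
A direct expansion, using symmetry and $B_{ii}=0$, gives the polarization identity
\[
Q(y)+Q(y+x)=\langle Bx,y\rangle+Q(x)\qquad\text{for all }x,y\in\FF_2^n .
\]
Now set $g(y):=f(y)(-1)^{Q(y)}$, which is Boolean and can be queried given $B$ and $f$. Then
\[
g_x(y)=f_x(y)(-1)^{\langle Bx,y\rangle+Q(x)},
\qquad\text{so}\qquad
\widehat{g_x}(0)=(-1)^{Q(x)}\,\widehat{f_x}(Bx).
\]
Combining this with \Cref{clm:qgl-autocorrelation-identity} yields
\[
\sum_\xi\widehat g(\xi)^4=\E_x\bigl|\E_y g(y)g(y+x)\bigr|^2=\E_x|\widehat{f_x}(Bx)|^2\geq\eta^2 .
\]
Parseval gives $\sum_\xi\widehat g(\xi)^2=1$, hence $\sum_\xi\widehat g(\xi)^4\le\max_\xi\widehat g(\xi)^2$. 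Therefore some $\xi^\star$ satisfies $|\widehat g(\xi^\star)|\geq\eta$.

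\textbf{Step 2: finding $\xi^\star$.} I will run \Cref{thm:goldreich-levin} on $g$ with $\gamma=\eta$ and failure probability $\delta/2$. This returns a list of $O(\eta^{-2})$ candidates that contains $\xi^\star$. For each candidate $\xi$, I will estimate
\[
\widehat g(\xi)=\E_y f(y)(-1)^{Q(y)+\langle\xi,y\rangle}
\]
to additive accuracy $\eta/8$ using $O(\eta^{-2}\log(1/(\eta\delta)))$ random queries. A Hoeffding bound plus a union bound over the list make all estimates accurate with probability at least $1-\delta/2$. I then pick any candidate whose estimate has absolute value at least $3\eta/4$; $\xi^\star$ itself qualifies. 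The chosen candidate satisfies $|\widehat g(\xi)|\ge 5\eta/8\ge\eta/2$. The output is
\[
q(y)=Q(y)+\langle\xi,y\rangle+c,
\]
where $c\in\{0,1\}$ is chosen to match the sign of the estimate. This makes $\E_y f(y)(-1)^{q(y)}=|\widehat g(\xi)|\geq\eta/2$.

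\textbf{Main obstacle and complexity.} The only delicate point is the polarization identity, which is exactly why $B$ must be symmetric with zero diagonal. Everything else is Parseval plus standard Goldreich--Levin and sampling. The running time and number of queries are $\poly(n,1/\eta,\log(1/\delta))$.
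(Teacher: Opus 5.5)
Your proposal is correct and is essentially the paper's proof: your $Q(y)=\sum_{i<j}B_{ij}y_iy_j$ is exactly the paper's $\langle y,My\rangle$ with $M$ the strictly upper-triangular part of $B$, and the autocorrelation identity plus Parseval then gives a Fourier coefficient of $g$ of size at least $\eta$, found by Goldreich--Levin followed by sign-corrected coefficient estimation. The remaining differences are cosmetic: you run Goldreich--Levin at threshold $\eta$ rather than $\eta/2$, and you select any candidate whose estimate is at least $3\eta/4$ rather than the one with the largest estimate.
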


We use the same integration argument originally described in \cite{green2008inverse} and used in \cite[Lemma~4.17]{tulsiani2014quadratic}, followed by a Goldreich--Levin step (see \Cref{thm:goldreich-levin}) to recover the linear part of the quadratic. We give the algorithm and its analysis here for completeness.

\begin{proof}
The algorithm is as follows: given symmetric $B$ with zero diagonal, define the strictly upper triangular matrix $M$  such that $M + M^{\mathsf T} = B$. Then, define the function $$g(y) = f(y)(-1)^{\langle y, My \rangle}$$ and simulate oracle access to $g$ by making one query to $f$ and computing the quadratic phase. Next, run the Goldreich--Levin algorithm on $g$ with threshold $\eta/2$ and failure probability $\delta/2$. This returns a list of $O(\eta^{-2})$ frequencies which, with probability at least $1 - \delta/2$, contains every $\xi$ such that $|\widehat g(\xi)| \geq \eta/2$. Finally, using fresh random queries, the algorithm estimates the Fourier coefficient at every frequency in the list to additive error $\eta/8$, with total failure probability at most $\delta/2$, and selects a frequency $\xi$ with largest estimated absolute value. It then chooses $c \in \FF_2$ so that multiplying the estimated coefficient by $(-1)^c$ makes it nonnegative, and outputs
\[
 q(x):=\langle x,Mx\rangle+\langle\xi,x\rangle+c.
\]
If the list is empty, the algorithm outputs $\bot$. This completes the description of the algorithm.

We now prove correctness. We first show that $g$ must have a large Fourier coefficient.
\begin{align*}
  \max_{\xi\in\FF_2^n}|\widehat g(\xi)|^2 &= \left(\max_{\xi\in\FF_2^n}|\widehat g(\xi)|^2\right)
        \sum_{\xi\in\FF_2^n}\widehat g(\xi)^2\ge \sum_{\xi\in\FF_2^n}\widehat g(\xi)^4 = \E_x\left|\E_y \left[ g(y)g(y+x) \right]\right|^2 \ge \eta^2,
\end{align*}
where first equality is by Parseval's identity, second equality is by Claim~\ref{clm:qgl-autocorrelation-identity} and
 the last inequality follows from the hypothesis, and because
\begin{align*}
    \E_y \left[ g(y)g(y+x) \right] &= \E_y f_x(y) (-1)^{\langle y, My \rangle + \langle y+x, M(y+x) \rangle} \\
    &=\E_y f_x(y) (-1)^{\langle x, Mx \rangle + \langle y, (M + M^{\mathsf T})x \rangle} \\
    &=\E_y f_x(y) (-1)^{\langle x, Mx \rangle + \langle y, Bx \rangle} \\
    &= (-1)^{\langle x,Mx\rangle}\widehat{f_x}(Bx).
\end{align*}
Hence some Fourier coefficient of $g$ has magnitude at least $\eta$, and when the Goldreich--Levin call and subsequent estimates succeed, the selected frequency $\xi$ has estimated magnitude at least $7\eta/8$, and true magnitude at least $3\eta/4$. The estimated sign is correct, and so $(-1)^c\widehat g(\xi)=|\widehat g(\xi)| \ge \eta/2$, in particular.

Therefore, the output is a polynomial of degree at most two, and on these success events its correlation with $f$ is
\begin{align*}
 \E_y f(y)(-1)^{q(y)}
 &=(-1)^c\E_y g(y)(-1)^{\langle\xi,y\rangle} =(-1)^c\widehat g(\xi)
  =|\widehat g(\xi)|
  \geq\frac\eta2.
\end{align*}

The Goldreich--Levin call and the simultaneous estimates each fail with probability at most $\delta/2$. A union bound therefore gives overall success probability at least $1-\delta$. Constructing $M$ takes $\poly(n)$ time. Goldreich--Levin uses $\poly(n,1/\eta,\log(1/\delta))$ time and queries, and Chernoff's bound followed by a union bound over the $O(\eta^{-2})$ outputted frequencies shows that the subsequent Fourier estimation can be done within the same polynomial resource bound.
\end{proof}

\begin{proof}[Proof of \Cref{lem:qgl-pfr-output-post-processing}]
Setting $\eta:=\frac{\alpha\gamma^2}{L^2}$, \Cref{clm:qgl-linearization} computes a linear map $T$ with $\E_x\big|\widehat{f_x}(Tx)\big|^2 \ge \eta$. Applying \Cref{clm:qgl-symmetrization} to $T$ gives a symmetric zero-diagonal matrix $B$ with $\E_x\big|\widehat{f_x}(Bx)\big|^2 \ge \eta^2$.  Finally, \Cref{clm:qgl-quadratic-recovery} returns, with probability at least $1-\delta$, a quadratic $q$ whose correlation with $f$ is at least $\eta/2$. All these algorithms use $\poly(n,1/\eta,\log(1/\delta))$ time and queries.
\end{proof}

\begin{lemma}[Boolean rounding]\label{lem:boolean-rounding}
Let $0<\epsilon,\eta,\delta\leq1$.  Suppose there is a randomized algorithm which, given oracle access to any function $\widetilde f:\FF_2^n\to\{-1,1\}$ satisfying $\|\widetilde f\|_{U^3}\geq\epsilon/2$, outputs, with probability at least $1-\delta$, a quadratic polynomial $q$ such that
\[
\E_x[\widetilde f(x)(-1)^{q(x)}]\geq\eta.
\]
Then there is a randomized algorithm which, given oracle access to any function $f:\FF_2^n\to[-1,1]$ satisfying $\|f\|_{U^3}\geq\epsilon$, outputs, with probability at least $1-2\delta$, a quadratic polynomial $q$ such that
\[
\E_x [f(x)(-1)^{q(x)}]\geq {\eta}/{2}.
\]
The new algorithm has the same running time and query complexity as the Boolean algorithm, up to polynomial overhead in $n$, $1/\epsilon$, $1/\eta$, and $\log(1/\delta)$.
\end{lemma}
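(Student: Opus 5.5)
The plan is the standard randomized-rounding argument of Tulsiani--Wolf. Given $f:\FF_2^n\to[-1,1]$, draw a random Boolean function $\widetilde f$ by letting $\widetilde f(x)=1$ with probability $(1+f(x))/2$ and $-1$ otherwise, independently for each $x$. Then $\E[\widetilde f(x)]=f(x)$. We cannot store $\widetilde f$, so we simulate it lazily: when the Boolean algorithm queries $x$ for the first time, we query $f(x)$, flip the biased coin, and record the answer so that repeated queries are consistent. The new algorithm runs the Boolean algorithm on this simulated $\widetilde f$ and outputs its $q$. Overhead is only the bookkeeping table, which is polynomial in the number of queries.

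The first key step is to show that, with probability at least $1-\delta$ over the rounding, $\|\widetilde f\|_{U^3}\geq\epsilon/2$. The expectation of $\|\widetilde f\|_{U^3}^8$ equals the average over cubes of the product of the expectations. For nondegenerate cubes (all 8 vertices distinct) this product is $\prod f$. Degenerate cubes are a $O(2^{-n})$ fraction and contribute at most that in absolute value. Hence $\E\|\widetilde f\|_{U^3}^8\geq\epsilon^8-O(2^{-n})$.

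For concentration, $\|\widetilde f\|_{U^3}^8$ is a function of $2^n$ independent bits. Changing one bit $\widetilde f(x)$ changes only cubes containing $x$, which are an $O(2^{-n})$ fraction. So the bounded-differences constant is $c_x=O(2^{-n})$, $\sum_x c_x^2=O(2^{-n})$, and McDiarmid gives deviation $\epsilon^8/2$ with probability at most $\exp(-\Omega(\epsilon^{16}2^n))$. This is at most $\delta$ once $n\geq C\log(1/(\epsilon\delta))$; for smaller $n$ we can instead brute-force exactly in $\poly(1/\epsilon,1/\delta)$ time over all quadratics. Since $(\epsilon^8/2)^{1/8}\geq\epsilon/2$, the required $U^3$ lower bound follows. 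The main technical care is this $U^3$ concentration, in particular handling degenerate cubes and the small-$n$ regime.

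The second key step is transferring correlation back to $f$. Conditioned on the Boolean algorithm succeeding, it outputs $q$ with $\E_x[\widetilde f(x)(-1)^{q(x)}]\geq\eta$. But $q$ depends on $\widetilde f$, so we need uniform closeness over all quadratics. For a fixed $q$, $\E_x[(\widetilde f(x)-f(x))(-1)^{q(x)}]$ is an average of $2^n$ independent mean-zero terms bounded by 2. Hoeffding gives deviation $\eta/2$ with probability $\exp(-\Omega(\eta^2 2^n))$. A union bound over the $2^{O(n^2)}$ quadratics fails here, so instead we restrict to the outputs actually produced. The algorithm makes at most $Q=\poly$ queries, and its output is determined by its random tape and the queried values. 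I would handle this in one of two ways:

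1. Argue that for a fixed random tape of the Boolean algorithm, the output is a function of the queried answers, so at most $2^Q$ candidate $q$'s arise. Union-bound over these, which works when $\eta^2 2^n\gg Q$.
2. More robustly, after obtaining $q$, re-estimate $\E_x[f(x)(-1)^{q(x)}]$ with fresh samples of $f$ and accept only if it is at least $3\eta/4$. We then repeat $O(\log 1/\delta)$ times with fresh roundings, which also absorbs the $2\delta$ budget.

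I expect route 2 to be cleanest, with route 1 giving the needed constant success probability per repetition. A union bound over the two failure events gives total failure probability at most $2\delta$.
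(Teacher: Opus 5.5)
Your lazy rounding, your $U^3$ concentration step, and your final union bound over the rounding failure and the Boolean algorithm's failure are the paper's argument. You add useful detail: degenerate cubes form an $O(2^{-n})$ fraction, and McDiarmid applies with $c_x=O(2^{-n})$.

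You diverge from the paper when transferring correlation back to $f$, and your reason for diverging is mistaken: the union bound over all quadratics does not fail. For a fixed $q$, Hoeffding gives failure probability $2\exp(-\Omega(\eta^2 2^n))$, and there are only $2^{1+n+n(n-1)/2}=2^{O(n^2)}$ quadratics. The union bound is therefore at most $\delta$ once $2^n\ge C(n^2+\log(1/\delta))/\eta^2$. That is the same ``$2^n$ is large'' regime you already need for the $U^3$ step, and it is exactly what the paper does when it writes $2^{O(n^2)}=\exp(o(2^n))$.

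Your route 1 is still a valid alternative. For a fixed random tape, the possible outputs form a set of at most $2^Q$ quadratics that does not depend on $\widetilde f$. Conditioning on the tape and union-bounding over that set works once $2^n\ge C(Q+\log(1/\delta))/\eta^2$. Compared with the paper:
\begin{itemize}
\item it gains generality, since it never counts the output class;
\item it needs a worst-case query bound $Q$, so you must truncate if only expected bounds are available;
\item its largeness threshold depends on the algorithm rather than only on $n$.
\end{itemize}
Route 2 adds nothing. Verification and repetition only filter candidates, so you still need route 1 or the paper's union bound to know that a single run succeeds. Once you have that, the $1-2\delta$ guarantee holds without repeating.

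One claim in your proposal is false: the small-$n$ fallback is not polynomial. There are $2^{\Theta(n^2)}$ quadratics, so for $n=\Theta(\log(1/(\epsilon\delta)))$ exhaustive search costs $2^{\Theta(\log^2(1/(\epsilon\delta)))}$. Even reading $f$ on all of $\FF_2^n$ costs $\poly(1/\delta)$ rather than $\poly(\log(1/\delta))$. Your threshold is also loose, since McDiarmid only needs $2^n\ge C\epsilon^{-16}\log(1/\delta)$, but brute force is quasi-polynomial even there.

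The paper implicitly works in the regime where $2^n$ dominates and never treats small $n$, so this corner is open in both arguments. To cover it, lift instead of brute-forcing:
\begin{enumerate}
\item Set $F(x,y):=f(x)$ on $\FF_2^{n+m}$ with $2^m=\poly(1/\epsilon,1/\eta,\log(1/\delta))$. Then $\|F\|_{U^3}=\|f\|_{U^3}$.
\item Run the large-$n$ argument on $F$, with slightly tighter constants in the concentration step.
\item Restrict the returned quadratic to a good slice $y$, found by sampling $y$ and estimating. This works because each $x\mapsto q(x,y)$ is quadratic and $\E_{x,y}F(x,y)(-1)^{q(x,y)}$ is the average over $y$ of the slice correlations.
\end{enumerate}
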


\begin{proof}
Independently for each $x\in\FF_2^n$, define a random Boolean function $\widetilde f$ by
\[
\Pr[\widetilde f(x)=1]=\frac{1+f(x)}{2}.
\]
The algorithm implements oracle access to $\widetilde f$ lazily: the first time $x$ is queried, it samples $\widetilde f(x)$ using one query to $f(x)$ and caches the result.  Thus repeated queries receive the same answer and the oracle has exactly the law of one fixed random Boolean function. We use the concentration argument from \cite[Lemma~3.3]{tulsiani2014quadratic}.  Since $\E[\widetilde f(x)]=f(x)$ and there are only $2^{O(n^2)}=\exp(o(2^n))$ quadratic phases, with probability at least $1-\delta$ the random function $\widetilde f$ simultaneously satisfies
\[
\|\widetilde f\|_{U^3}\geq\|f\|_{U^3}-{\epsilon}/{2}
\]
and, for every quadratic polynomial $q$,
\[
\left|\E_x\bigl(\widetilde f(x)-f(x)\bigr)(-1)^{q(x)}\right|
\leq{\eta}/{2}.
\]
The first inequality follows by applying concentration to the degree-eight polynomial defining the eighth power of the $U^3$ norm.  The second follows from Hoeffding's inequality for each fixed quadratic phase, followed by a union bound over all quadratic phases.

Assume that this concentration event occurs.  Since $\|f\|_{U^3}\geq\epsilon$, we have $\|\widetilde f\|_{U^3}\geq\epsilon/2$, so the Boolean algorithm returns, with probability at least $1-\delta$, a quadratic polynomial $q$ satisfying
\[
\E_x[\widetilde f(x)(-1)^{q(x)}]\geq\eta.
\]
For the same $q$, the uniform correlation bound above gives
\begin{align*}
\E_x[ f(x)(-1)^{q(x)}]
&\geq \E_x[\widetilde f(x)(-1)^{q(x)}]
   -\left|\E_x\bigl(\widetilde f(x)-f(x)\bigr)(-1)^{q(x)}\right| \\
&\geq\eta-\frac{\eta}{2}
=\frac{\eta}{2}.
\end{align*}
A union bound over the concentration event and the Boolean algorithm gives success probability at least $1-2\delta$. 
Finally, each new query to $\widetilde f$ uses one query to $f$, and cached values can be retrieved in polynomial time.  The concentration argument is used only in the analysis and requires no additional oracle queries, which proves the claimed complexity bound.
\end{proof}

\begin{proof}[Proof of Theorem~\ref{thm:qgl}]
First, we handle the case where $f:\FF_2^n\to\{-1,1\}$ is Boolean-valued. By \Cref{lem:qgl-pfr-input}, with probability $\epsilon^{O(1)}$ we construct oracle access to a set $S\subseteq\FF_2^{2n}$ such that
\[
 |S|\geq\alpha2^n,\qquad |S+S|\leq K|S|,
 \qquad \alpha,\gamma\geq\epsilon^{O(1)},
 \qquad K\leq\epsilon^{-O(1)},
\]
and every $(x,\phi(x))\in S$ satisfies $|\widehat{f_x}(\phi(x))|\geq\gamma$. Conditioning on the event above, we apply \Cref{thm:main} to $S$, with any fixed constant failure probability. It returns a subspace $W$ with $|W|\leq|S|$ such that $S$ is covered by at most $L=K^{O(1)}=\epsilon^{-O(1)}$ cosets of $W$. Finally, applying \Cref{lem:qgl-pfr-output-post-processing} with a fixed constant failure probability, we obtain a quadratic polynomial $q$ satisfying
\[
 \E_x f(x)(-1)^{q(x)}
 \geq {\alpha\gamma^2}/{2L^2}
 \geq \epsilon^{O(1)}.
\]
Thus one trial finds a quadratically correlated phase with probability $\epsilon^{O(1)}$ in expected cost $\poly(n,1/\epsilon)$. (The expected cost comes from sampling from $S$.) Truncating each trial to a sufficiently large polynomial number of steps, and repeating $\epsilon^{-O(1)}\log(1/\delta)$ independent trials, we obtain a quadratic with correlation at least $\epsilon^{O(1)}$ with probability at least $1-\delta/2$. Using fresh uniform queries, we can estimate the correlation of each candidate quadratic with $f$, obtaining the stated completeness and soundness guarantees for the Boolean-valued case. Finally, we bootstrap the Boolean-valued algorithm to functions $f:\FF_2^n\to[-1,1]$ by \Cref{lem:boolean-rounding}. Estimating the correlation of the returned quadratic with $f$ to additive error at most half of the Boolean correlation guarantee gives the stated completeness and soundness guarantees for the general case.  The total time and query complexity are $\poly(n,1/\epsilon,\log(1/\delta))$.
\end{proof}

\section{Improper agnostic tomography of stabilizer states}
In this section, we will present some applications of algorithmic PFR theorem to quantum learning. It was established in~\cite{SCforstabilizers,arunachalam2026tomography} that \emph{approximate} algorithmic PFR in polynomial time has implications for learning stabilizer-like states. We first state the main combinatorial approximate algorithmic PFR theorem and prove it, before  stating our applications. To state our theorem, we need the notion of approximately uniform distributions.
\begin{definition}[Approximately uniform distribution]
\label{def:flat-distribution}
Let $A$ be finite and nonempty. A distribution $\mu$ supported on $A$ is
$R$-uniform if for every $x\in A$, we have
\[
\frac{1}{R|A|}
    \leq \mu(x)
    \leq \frac{R}{|A|}
\]
\end{definition}

In approximate algorithmic PFR theorem, we do not have access to uniform samples from $A$, but rather have access to samples from $R$-uniform distributions supported on $A$ (observe that $R=1$ is precisely uniform sampling access to $A$). We show that as long as $R$ is polynomial in the other parameters, one can recover the same complexity as the usual algorithmic PFR theorem that we proved above. 

\begin{theorem}[Approximate algorithmic PFR]
\label{thm:approxmain}
Let $0<\delta<1$, and let $A\subseteq \F_2^n$ be nonempty and satisfy $|A+A|\leq K|A|$. Given  membership-oracle access to $A$ and independent samples from an $R$-uniform
distribution on $A$, there is a randomized algorithm that, with probability at least $1-\delta$, outputs a basis for a subspace $V\leq \F_2^n$ such that $|V| \le |A|$ and $A$ can be covered by at most $K^{O(1)}$ translates of $V$.
The algorithm uses $\poly(n,K,R,\log(1/\delta))$ time, samples, and membership queries.
\end{theorem}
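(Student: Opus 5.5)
The plan is a reduction to \Cref{thm:main}. I would turn each $R$-uniform sample into a sample whose law is within total variation $\varepsilon$ of $U_A$, using a Markov chain on $A$ whose stationary law is exactly $U_A$. The algorithm of \Cref{thm:main} is then run unchanged on these near-uniform samples, with the membership oracle used as before. The reduction is sound because that algorithm (traversal, extraction and verification, all with the budgets of \Cref{claim:ggmt_tree_with_limited_budget}) uses a deterministically bounded number $N=\poly(n,K,\log(1/\delta))$ of uniform samples. If each sample is produced by a fresh, independent chain with TV error at most $\varepsilon:=\delta/(2N)$, a coupling argument shows that the run differs from an ideal run with exact uniform samples with probability at most $N\varepsilon=\delta/2$. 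Running \Cref{thm:main} with failure probability $\delta/2$ then gives the claimed guarantee.

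\textbf{The chain.} Let $\nu$ be the law of $a+b$ with $a,b\sim\mu$ independent. From the current state $x\in A$ the chain does the following: with probability $1/2$ it stays put (laziness). Otherwise it draws $s\sim\nu$ using two fresh samples, moves to $x+s$ if $x+s\in A$ (one membership query), and stays otherwise. The proposal kernel $P(x,x')=\nu(x+x')$ is symmetric, and accepting exactly when the proposal lies in $A$ keeps it symmetric. Hence the chain is reversible with uniform stationary law on $A$, which gives $U_A$ with no knowledge of $|A|$. The start is one sample from $\mu$. By the upper bound in \Cref{def:flat-distribution} its density relative to $U_A$ is at most $R$, so it is an $R$-warm start. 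The Lov\'asz--Simonovits warm-start bound then gives TV distance at most $\sqrt{R}\,(1-\Phi^2/2)^t$ after $t$ steps, where $\Phi$ is the conductance. So $t=O(\Phi^{-2}\log(R/\varepsilon))$ steps suffice, each costing two samples and one query.

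\textbf{Main obstacle: lower-bounding the conductance.} I expect a bound of the form $\Phi\ge 1/\poly(K,R)$. The lower bound $\mu\ge 1/(R|A|)$ gives $\nu(s)\ge r_A(s)/(R^2|A|^2)$, where $r_A(s)=\#\{(a,b)\in A^2:a+b=s\}$. Fix $S\subseteq A$ with $|S|\le|A|/2$ and write $S^c=A\setminus S$. Rewriting $x+x'=a+b$ as $x+a=x'+b$, the ergodic flow satisfies
\[
Q(S,S^c)\;\ge\;\frac{1}{2R^2|A|^3}\sum_{t} f_S(t)\,f_{S^c}(t),
\qquad f_T(t):=\#\{(x,a)\in T\times A:\ x+a=t\}.
\]
Since the stationary law is uniform, the conductance is $Q(S,S^c)/(|S|/|A|)$. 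It therefore suffices to show
\[
\sum_t f_S f_{S^c}\;\ge\;|S|\,|A|^3/\poly(K).
\]
I would first try to prove this from the facts that $f_S+f_{S^c}=r_A\le|A|$, that both functions are supported on $A+A$ of size at most $K|A|$, and that $\sum_t f_S(t)=|S||A|$. The difficulty is excluding an $S$ whose translates $S+A$ occupy parts of $A+A$ where $S^c+A$ is nearly absent. This needs genuine additive input beyond counting, such as additive energy $E(A)\ge|A|^3/K$ plus a Cauchy--Schwarz argument, or a Pl\"unnecke-type expansion of $S+A$ inside $A+A$. If that fails, my fallback is to use a different proposal whose conductance is easier to control. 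Such a proposal would either choose $s$ from a richer sumset, or exploit structure supplied by PFR (\Cref{thm:combinatorial_pfr}) only in the analysis. Any such proposal must keep the kernel symmetric so that stationarity at $U_A$ is preserved. Once $\Phi\ge 1/\poly(K,R)$ is established, each uniformized sample costs $\poly(K,R,\log(N/\delta))$, and the total complexity is $\poly(n,K,R,\log(1/\delta))$.
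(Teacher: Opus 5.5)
Your reduction, your chain and your warm-start observation are exactly the paper's. The step you single out as the main obstacle, however, cannot be carried out: this chain has no conductance lower bound of the form $\Phi\ge 1/\poly(K,R)$. Your displayed target is also off by a factor of $|A|$. Trivially $\sum_t f_S(t)f_{S^c}(t)\le |A|\sum_t f_S(t)=|S|\,|A|^2$, so the requirement should read $|S|\,|A|^2/\poly(K)$.

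Even the corrected target fails for small $S$. Let $A=H\cup\{x_0\}$ with $H$ a proper subspace of $\F_2^n$ and $x_0\notin H$; then $|A+A|=2|H|$, so $K<2$. From $x_0$ the chain moves to a new state only if $x_0+U+V\in H$, i.e.\ exactly one of $U,V$ equals $x_0$. This happens with probability at most $2\mu(x_0)\le 2R/|A|$. Thus $S=\{x_0\}$ has conductance at most $R/|A|$, exponentially small in $n$, and the bound $\sqrt{R}\,(1-\Phi^2/2)^t$ gives nothing. Your fallback does not rescue this either: any proposal assembled from boundedly many $\mu$-samples can leave $x_0$ only when $x_0$ itself is drawn. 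The defect lies in the analysis tool, not in the chain.

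The missing idea, which the paper uses, is that a warm start makes small sets irrelevant. Apply the $s$-conductance form of Lov\'asz--Simonovits, $\|\sigma P^t-\pi\|_{\rm TV}\le Ms+M\exp(-\Phi_s^2t/2)$, where $\Phi_s$ minimizes $Q(B,B^c)/(\pi(B)-s)$ only over sets with $s<\pi(B)\le 1/2$. The additive input is then just the Cauchy--Schwarz step you anticipated. Writing $r_{S,S^c}(g):=\#\{(x,y)\in S\times S^c:x+y=g\}$, one has $\sum_t f_S(t)f_{S^c}(t)=\sum_g r_{S,S^c}(g)\,r_A(g)$. Using $r_A\ge r_{S,S^c}+r_{S^c,S}=2r_{S,S^c}$ and the fact that $r_{S,S^c}$ is supported in $S+S^c\subseteq A+A$, this is at least $2\sum_g r_{S,S^c}(g)^2\ge 2|S|^2|S^c|^2/(K|A|)$.

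This gives $Q(S,S^c)\ge b^2(1-b)^2/(R^2K)$ with $b=|S|/|A|$. The bound is quadratic in $b$, which is precisely why ordinary conductance degenerates on small sets. Still, $b^2\ge 4s(b-s)$ yields $\Phi_s\ge s/(R^2K)$, and choosing $s=\varepsilon/(2R)$, $t=O(R^6K^2\varepsilon^{-2}\log(R/\varepsilon))$ steps suffice.

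This step count is polynomial in $1/\varepsilon$, so your choice $\varepsilon=\delta/(2N)$ would make the cost $\poly(1/\delta)$. To retain $\log(1/\delta)$, take $\varepsilon=1/\poly(n,K)$ so that a single trial of the main algorithm still succeeds with probability $\Omega(1/\polylog K)$. Then amplify via verification: its empirical estimate of $|A\cap(a+V')|/|A|$ tolerates a per-sample bias of $\varepsilon\le K^{-O(1)}$ without a union bound over samples.
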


\subsection{Algorithmic PFR  with approximate sampling access}

We first record that, when $A$ has small doubling, samples from an
$R$-uniform distribution can be efficiently converted into nearly uniform
samples from $A$.

\begin{lemma}
\label{lem:flat-uniformization}
Let $A\subseteq\F_2^m$ be nonempty with
$    |A+A|\leq K|A|,$
and let $\mu$ be an $R$-uniform distribution on $A$.
Given membership access to $A$ and independent samples from $\mu$, for
every $\zeta\in(0,1/2)$ one can sample from a distribution
$\widetilde U_A$ satisfying
\[
    \|\widetilde U_A-U_A\|_{\rm TV}\leq\zeta
\]
using
$  \widetilde{O}\!\left(
        {R^6K^2/}{\zeta^2}
    \right)$
samples from $\mu$ and membership queries.
\end{lemma}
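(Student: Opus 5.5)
We will follow the strategy sketched in the introduction and run a lazy Markov chain on $A$ whose stationary distribution is $U_A$, started from a sample of $\mu$. Concretely, from the current state $x\in A$ we draw $y\sim\mu$ independently. With probability $1/2$ we stay at $x$. Otherwise we propose $x':=x+y+y'$, where $y'\sim\mu$ is a second fresh sample, and we accept the move iff $x'\in A$, which costs one membership query.

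The kernel $P(x,x')$ for $x\neq x'$ is then proportional to $\sum_{s}\mu*\mu(s)\,\mathbf 1[x'=x+s]$. Since $\mu*\mu$ is symmetric under $s\mapsto s$ in characteristic $2$, we get $P(x,x')=P(x',x)$. So the chain is symmetric, hence reversible with respect to $U_A$, and laziness rules out periodicity. The initial distribution $\mu$ satisfies $\mu(x)/U_A(x)\le R$, which is an $R$-warm start.

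The key step is a lower bound on the conductance $\Phi$. Fix $S\subseteq A$ with $|S|\le|A|/2$. The ergodic flow out of $S$ is
\[
Q(S,A\setminus S)=\frac{1}{2|A|}\sum_{x\in S}\Pr_{y,y'\sim\mu}\bigl[x+y+y'\in A\setminus S\bigr].
\]
Using $\mu\ge 1/(R|A|)$ on $A$, we lower bound this by $\frac{1}{2R^2|A|^3}\,\#\{(x,y,y')\in S\times A\times A: x+y+y'\in A\setminus S\}$.

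We then need a combinatorial expansion statement for small-doubling sets: for $S\subseteq A$ with $|S|\le|A|/2$, the number of solutions to $s+y+y'=t$ with $s\in S$, $t\in A\setminus S$, $y,y'\in A$ is at least $\Omega(|S||A|^2/K)$. To prove it, fix $y\in A$ and count pairs $(s,t)$ with $s+t\in y+A$, i.e. $t\in s+y+A$. Averaging over $y$, the count equals $\sum_{v}r_{A+A}(v)\,r_{S,A\setminus S}(v)$, where $r$ denotes representation counts.

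We argue via energy and Cauchy--Schwarz. The total $\sum_v r_{A+A}(v)\,r_{S,A}(v)=\sum_v r_{A+A}(v)\,r_{S+A}(v)$ is at least $|S||A|^2/K$ by Cauchy--Schwarz on the support $A+A$ of size $\le K|A|$. We must show that a constant fraction of this does not land back in $S$. Write $\sum_v r_{A+A}(v)\,r_{S,S}(v)=\#\{a+a'=s+s'\}\le|S|^{3/2}|A|^{3/2}$ by Cauchy--Schwarz on energies. If this were almost all of $|S||A|^2/K$, then $|S|\ge|A|/K^2$ and $S$ would be nearly closed, and we need more care here.

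An alternative we would try first is to bound $\Phi$ spectrally. The chain is a Cayley-type walk restricted to $A$, so we can compare it to the walk on $\langle A\rangle$ and use that $|A|\ge|\langle A\rangle|/K^{O(1)}$ by Freiman--Ruzsa-type density. This only yields $\exp(K)$ bounds, so we will instead use the direct counting argument and aim for $\Phi\ge\Omega(1/(R^2K))$.

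Given $\Phi$, the Lov\'asz--Simonovits warm-start bound gives
\[
\|\mu P^t-U_A\|_{\rm TV}\le\sqrt{R}\,\bigl(1-\Phi^2/2\bigr)^t,
\]
so $t=O(\Phi^{-2}\log(R/\zeta))=\widetilde O(R^4K^2)$ steps suffice. Each step uses $O(1)$ samples and one query, which is within the claimed $\widetilde O(R^6K^2/\zeta^2)$ budget; the stated polynomial is looser and leaves room for a weaker conductance estimate.

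The main obstacle is the expansion statement. Small doubling alone does not obviously prevent $A$ from being two far-apart clumps. For example, $A=H\cup(x+H)$ has doubling $3/2$, but there sums of three elements do mix the cosets. The right formulation therefore uses three-fold sums $x+y+y'$ rather than $x+y$, and we must show via Pl\"unnecke-type counting that $S+A+A$ cannot concentrate inside $S$.
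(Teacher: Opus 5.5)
The Markov chain, the reversibility argument, and the $R$-warm start in your proposal all match the paper's proof. The step you flag as the main obstacle, however, is false and not merely unproven. Both the uniform conductance bound $\Phi\ge\Omega(1/(R^2K))$ and the expansion statement behind it fail, and no Pl\"unnecke-type counting can rescue them.

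Here is a counterexample. Take $A=H\cup\{e\}$ with $H$ a subspace and $e\notin H$, so $|A+A|=2|H|<2|A|$. For $S=\{e\}$, any solution of $e+y+y'=t$ with $t\in A\setminus S$ and $y,y'\in A$ needs exactly one of $y,y'$ to equal $e$. So there are only $2|H|$ solutions, not $\Omega(|A|^2/K)$. In terms of the walk, it leaves $e$ only if exactly one of its two fresh samples equals $e$, which has probability at most $2R/|A|$. Hence $\Phi\le R/|A|$, which is exponentially small.

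The conclusion you aim for is also false, not just your route to it. Take $\mu(e)=R/|A|$ with $R\ge2$. Then the mass at $e$ stays above $3R/(4|A|)$ for the first $|A|/(4R)$ steps, so getting total-variation distance below $1/(2|A|)$ takes $\Omega(|A|/R)$ steps. Thus no bound of the form $\mathrm{poly}(R,K)\log(1/\zeta)$, such as your $\widetilde O(R^4K^2)$, can hold for all $\zeta$. Some polynomial dependence on $1/\zeta$ is unavoidable for this chain.

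Your counting sketch also does not give $|S||A|^2/K$. Cauchy--Schwarz on a support of size at most $K|A|$ yields $\sum_v r_{S,A}(v)^2\ge|S|^2|A|^2/|S+A|\ge|S|^2|A|/K$, which is quadratic in $|S|$. As an aside, $H\cup(x+H)$ is itself a subspace and has doubling $1$, not $3/2$.

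The paper keeps this quadratic dependence and compensates with warmness:
\begin{itemize}
\item Write $\nu=\mu*\mu$, $C=A\setminus B$, $N=|A|$ and $b=|B|/N$. Then $\nu(g)\ge r_{A,A}(g)/(R^2N^2)\ge 2r_{B,C}(g)/(R^2N^2)$.
\item Cauchy--Schwarz on the support $B+C\subseteq A+A$ gives $Q(B,C)\ge b^2(1-b)^2/(R^2K)$.
\item Since this degenerates as $b\to0$, the paper switches to the $s$-conductance $\Phi_s=\min_{s<b\le1/2}Q(B,C)/(b-s)$. The inequality $b^2\ge4s(b-s)$ gives $\Phi_s\ge s/(R^2K)$.
\item The warm-start Lov\'asz--Simonovits bound $\|\mu P^t-U_A\|_{\rm TV}\le Rs+R\exp(-\Phi_s^2t/2)$ charges only an additive $Rs$ for ignoring small sets such as $\{e\}$.
\item Taking $s=\zeta/(2R)$ gives $\Phi_s\ge\zeta/(2R^3K)$, so $t=O\bigl(R^6K^2\zeta^{-2}\log(R/\zeta)\bigr)$ steps suffice. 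Each step uses two samples and one membership query, which is exactly the stated complexity.
\end{itemize}
Replacing your conductance step with this $s$-conductance argument repairs the proof.
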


\begin{proof}
Consider the following lazy random walk on $A$. Starting from $x\in A$, 
with probability $1/2$ remain at $x$; otherwise sample
$U,V\sim\mu$, set $y=x+U+V$, and move to $y$ if $y\in A$. Write
\[
    \nu(g):=\Pr_{U,V\sim\mu}[U+V=g].
\]
For distinct $x,y\in A$, the transition probability is
\[
    P(x,y)=\frac12\nu(x+y)=P(y,x),
\]
so the uniform distribution $U_A$ is stationary and the chain is
reversible. Moreover, since
\[
    \mu(x)\leq \frac{R}{|A|}=R\,U_A(x),
\]
the initial distribution $\mu$ is an $R$-warm start with respect to
$U_A$ by definition.   For
$B,C\subseteq A$, define the stationary flow from $B$ to $C$ by
\[
    Q(B,C)
    :=
    \sum_{x\in B}\sum_{y\in C}U_A(x)P(x,y).
\]
Thus $Q(B,A\setminus B)/U_A(B)$ is the probability that one step leaves
$B$ when the current state is uniform on $B$.  A small value means that
$B$ can trap the walk for a long time.  To allow
us to ignore very small sets, for $s\in(0,1/2)$ define the \emph{$s$-conductance} as
\[
    \Phi_s
    :=
    \min_{\substack{B\subseteq A\\ s<U_A(B)\leq1/2}}
    \frac{Q(B,A\setminus B)}{U_A(B)-s}.
\]
Below we use a  standard warm-start mixing bound shows that a lower bound on
$\Phi_s$, together with the $R$-warmness above, gives an explicit bound
on the number of steps needed for the walk to approach $U_A$ in total
variation distance.  It therefore remains to lower-bound $\Phi_s$.
 For $B\subseteq A$, write
$C=A\setminus B$, $N=|A|$, and $b=|B|/N$. Let
\[
    r_{B,C}(g)
    :=
    |\{(x,y)\in B\times C:x+y=g\}|.
\]
Since $\mu$ is $R$-uniform, every $u\in A$ satisfies
$   \mu(u)\geq \frac{1}{RN}.$ 
Hence, for any $g\in\F_2^m$,
\[
\begin{aligned}
    \nu(g)
    =\Pr_{U,V\sim\mu}[U+V=g] =\sum_{\substack{u,v\in A\\u+v=g}}\mu(u)\mu(v) \geq
    \sum_{\substack{u,v\in A\\u+v=g}}
    \frac{1}{R^2N^2}
    =
    \frac{r_A(g)}{R^2N^2}.
\end{aligned}
\]
Moreover, every representation $g=x+y$ with
$(x,y)\in B\times C$ gives the distinct reversed representation
$g=y+x$ with $(y,x)\in C\times B$.  Since $B$ and $C$ are disjoint,
these two ordered representations are distinct.  Therefore
\[
    r_A(g)
    \geq
    r_{B,C}(g)+r_{C,B}(g)
    =
    2r_{B,C}(g),
\]
and hence
\[
    \nu(g)
    \geq
    \frac{2r_{B,C}(g)}{R^2N^2}.
\]
 Hence the stationary flow across the cut satisfies
\begin{align}
\label{eq:lowerboundonQBC}
    Q(B,C)
    =
    \frac1{2N}\sum_g r_{B,C}(g)\nu(g)
    \geq
    \frac1{R^2N^3}\sum_g r_{B,C}(g)^2.
\end{align}
The support of $r_{B,C}$ is contained in
$B+C\subseteq A+A$, and therefore Cauchy--Schwarz gives
\[
    \sum_g r_{B,C}(g)^2
    \geq
    \frac{|B|^2|C|^2}{|B+C|}
    \geq
    \frac{|B|^2|C|^2}{KN}.
\]
Plugging into Eq.~\eqref{eq:lowerboundonQBC} gives
\[
    Q(B,C)
    \geq
    \frac{b^2(1-b)^2}{R^2K}.
\]
Recall that we proved
\[
    Q(B,A\setminus B)
    \geq
    \frac{b^2(1-b)^2}{R^2K},
    \qquad
    b:=\frac{|B|}{N}.
\]
Now fix a set $B$ appearing in the definition of $\Phi_s$, so that
$    s<b\leq\frac12.$ 
Since $b\leq1/2$, we have $1-b\geq1/2$, and therefore
$    (1-b)^2\geq\frac14.$ 
Hence
\[
\begin{aligned}
    \frac{Q(B,A\setminus B)}{b-s}
    &\geq
    \frac{b^2(1-b)^2}{R^2K(b-s)} \geq
    \frac{b^2}{4R^2K(b-s)}\geq \frac{s}{R^2 K}
\end{aligned}
\]
where the final inequality used  $    b^2-4s(b-s)
    =
    (b-2s)^2
    \geq0$. 
Therefore every set $B$ in the minimization defining $\Phi_s$ satisfies
\[
    \frac{Q(B,A\setminus B)}{b-s}
    \geq
    \frac{s}{R^2K}.
\]
Taking the minimum over all such $B$ gives
\[
    \Phi_s
    \geq
    \frac{s}{R^2K}.
\]
We will use the following standard warm-start form of the
Lov\'asz--Simonovits mixing bound~\cite{lovaszsimonovits1993}. 
\begin{lemma}[Warm-start mixing bound]
\label{lem:warm-start-mixing}
Let $P$ be a finite lazy reversible Markov chain with stationary
distribution $\pi$. Suppose the initial distribution $\sigma$ is
$M$-warm with respect to $\pi$, meaning \text{for every state }x,
$    \sigma(x)\leq M\pi(x).$ 
For $s\in(0,1/2)$, let
\[
    \Phi_s
    :=
    \min_{\substack{B: s<\pi(B)\leq1/2}}
    \frac{Q(B,B^c)}{\pi(B)-s},
\]
where
$    Q(B,C)
    :=
    \sum_{x\in B}\sum_{y\in C}\pi(x)P(x,y).$ 
Then, for every $t\geq0$,
\[
    \|\sigma P^t-\pi\|_{\rm TV}
    \leq
    Ms
    +
    M\exp\!\left(-\frac{\Phi_s^2t}{2}\right).
\]
\end{lemma}
As we saw above, $\mu$ is $R$-warm with respect to $U_A$, we may apply
\Cref{lem:warm-start-mixing}.  Taking $s=\zeta/(2R)$ and using
\[
    \Phi_s\geq\frac{s}{R^2K}
    =
    \frac{\zeta}{2R^3K},
\]
we obtain
\[
    \|\mu P^t-U_A\|_{\rm TV}
    \leq
    \frac{\zeta}{2}
    +
    R\exp\!\left(
        -\frac{\zeta^2t}{8R^6K^2}
    \right).
\]
Hence the total-variation distance is at most $\zeta$ once
\[
    t
    \geq
    \frac{8R^6K^2}{\zeta^2}
    \log\frac{2R}{\zeta}.
\]
Thus $t=poly(R,K,1/\zeta,\log(R/\zeta))$ steps suffice. Each step uses two fresh samples from $\mu$ and one query.
\end{proof}

We can now transfer any polynomial-time PFR algorithm requiring uniform
samples to the approximately uniform-sample setting.

\begin{proof}[Proof of Theorem~\ref{thm:approxmain}]
Run our main algorithmic PFR results algorithm, replacing each requested uniform
sample from $A$ by an independent output of
\Cref{lem:flat-uniformization}.
Let
\[
    Q:=(mK+2)^{C_{\rm PFR}}
\]
be an upper bound on the total number of samples requested by the PFR
algorithm, and generate each replacement sample to total-variation
accuracy
$ \zeta:=1/{100Q}$.  
By the coupling characterization of total variation distance, each
replacement sample can be coupled to an exactly uniform sample so that
they disagree with probability at most $\zeta$. A union bound over the
at most $Q$ sample requests therefore couples the entire execution to
the exact-uniform execution with failure probability at most
$   Q\zeta\leq1/{100}.$
Consequently the success probability is at least
$2/3-1/100\geq 3/5$.  By \Cref{lem:flat-uniformization}, each replacement sample uses
\[
    O\!\left(
        R^6K^2Q^2\log(RQ)
    \right)
\]
operations and oracle calls. Since $Q$ itself is polynomial in $n$ and
$K$, the total complexity is bounded by
$\poly(n,K,R)$
 The structural
guarantee on $H$ is exactly that supplied by our algorithmic PFR theorem.
\end{proof}

\subsection{Quantum implications}
We present some implication of the algorithmic PFR theorem in this section.
\subsubsection{Self-correction.} In a recent work,~\cite{SCforstabilizers} considered the task of \emph{self-correction}, a weakening of agnostic learning. Recall that in agnostic learning, for an unknown state $\ket{\psi}$ which is promised to be $\opt$-close to a state in a class $\mathcal{C}$, the goal is, given copies of $\ket{\psi}$ output an element of $\mathcal{C}$ which is $\opt-\varepsilon$ close to $\ket{\psi}$. In the self-correction task, they consider if the task becomes easier if the goal was to output an element in $\mathcal{C}$ which is $\poly(\opt)$ close to $\ket{\psi}$. To this end, they proved the following theorem.\footnote{We remark that in~\cite{SCforstabilizers}, they invoke the algorithmic PFR under the sampling access arising from its BSG reduction reduction step, while our theorem is stated for approximately-uniform sampling distributions. This causes no essential difference since one can use the quantitative bounds in the BSG construction in their paper to imply that, after conditioning on the relevant set, the induced sampling distribution is \(\gamma^{-O(1)}\)-flat. Thus our approximate PFR algorithm applies with only \(\poly(1/\gamma)\) overhead, and the remainder of the self-correction argument in~\cite{SCforstabilizers} goes through unchanged.}
\begin{theorem}[\cite{SCforstabilizers}]
\label{thm:SC}
    Let $\uptau>0$. Let~$\ket{\psi}$ be an unknown $n$-qubit quantum state such that 
    $$
    \max_{\ket{\phi}\in \textsf{stabilizer}}|\langle \psi|\phi\rangle|^2 \geq \uptau.$$ Assuming the approximate algorithmic PFR conjecture, there is a protocol that with probability $1-\delta$, outputs a $\ket{\phi}\in \textsf{stabilizer}$ such that $|\langle\phi |\psi\rangle|^2 \geq  \tau^C$ (for a universal constant $C>1$) using $\poly(n,1/\uptau,\log(1/\delta))$ time and copies of~$\ket{\psi}$.
\end{theorem}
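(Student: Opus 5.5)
The plan follows the reduction of~\cite{SCforstabilizers}, invoking \Cref{thm:approxmain} wherever that work calls for algorithmic PFR. We work with the characteristic distribution $p_\psi(x)=2^{-n}\langle\psi|P_x|\psi\rangle^2$ on $\F_2^{2n}$, which can be sampled with $O(1)$ copies per sample by Bell (difference) sampling. Let $\ket{\phi}$ be the stabilizer state attaining fidelity at least $\uptau$, and let $S\le\F_2^{2n}$ be its Lagrangian stabilizer group. By Parseval-type identities for $p_\psi$, $\E_{x\in S}\langle\psi|P_x|\psi\rangle^2\geq\uptau^{O(1)}$. Consequently the set
\[
A^\ast:=\{x:\langle\psi|P_x|\psi\rangle^2\geq\uptau^{O(1)}\}
\]
carries $p_\psi$-mass $\uptau^{O(1)}$ and meets $S$ in at least $\uptau^{O(1)}|S|$ points. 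The additive structure comes from the identity $\sum_{x}p_\psi(x)p_\psi(x+y)$ having $p_\psi$ as its own convolution. This gives $A^\ast$ additive energy $\geq\uptau^{O(1)}|A^\ast|^3$.

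The first step is to apply Balog--Szemer\'edi--Gowers to $A^\ast$. This yields a subset $A\subseteq A^\ast$ with $|A|\geq\uptau^{O(1)}|A^\ast|$ and $|A+A|\leq\uptau^{-O(1)}|A|$. We want $A$ to be given algorithmically, as a set cut out by an efficiently testable predicate. Following~\cite{SCforstabilizers}, this means the predicate is defined by thresholding estimated quantities: for a Pauli, its expectation $\langle\psi|P_x|\psi\rangle^2$, and for a pair, the number of popular paths, estimated via Bell samples. We use randomized thresholds and a fixed random tape, as in the proof of \Cref{lem:qgl-pfr-input}, so that membership is an exact oracle for a fixed set with high probability.

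The sampler is obtained by drawing $x\sim p_\psi$ and accepting iff $x\in A$. Since $p_\psi(x)\in[\uptau^{O(1)}2^{-n},2^{-n}]$ on $A^\ast$, the induced law on $A$ is $R$-uniform with $R=\uptau^{-O(1)}$. This is exactly the input model of \Cref{thm:approxmain}. The step I expect to be the main obstacle is verifying this flatness and exact membership through the BSG construction. The BSG subset is defined via path counts, and we must check that the conditioned Bell distribution stays $\uptau^{-O(1)}$-flat on it. I would handle this by the footnoted observation that every retained point has $p_\psi$-mass within $\uptau^{O(1)}$ of $2^{-n}$, and that BSG only conditions on a set of density $\uptau^{O(1)}$.

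Next, applying \Cref{thm:approxmain} with $K,R=\uptau^{-O(1)}$ returns in $\poly(n,1/\uptau)$ time a subspace $V$ with $|V|\leq|A|\leq 2^n$, such that $\uptau^{-O(1)}$ cosets of $V$ cover $A$. Some coset then contains an $\uptau^{O(1)}$ fraction of $A$. As in \Cref{clm:qgl-linearization}, we pass to a subspace of $V$ that is nearly isotropic and contains $\uptau^{O(1)}\cdot|V|$ points on which $\langle\psi|P_x|\psi\rangle^2$ is large. We then extend it to a Lagrangian subspace $L$ by symplectic Gram--Schmidt, losing only an $\uptau^{-O(1)}$ index. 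We need $\E_{x\in L}\langle\psi|P_x|\psi\rangle^2\geq\uptau^{O(1)}$, and for this I would use the fact that $\sum_{x\in L}\langle\psi|P_x|\psi\rangle^2$ equals $2^n$ times the mass of the distribution of $\ket\psi$ measured in the stabilizer basis of $L$, summed over sign patterns.

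The final step measures $\ket\psi$ in the joint eigenbasis of $L$, with $\poly(1/\uptau)$ repetitions. This produces the signs of a stabilizer state $\ket{\phi'}$ whose squared overlap with $\ket\psi$ is $\uptau^{O(1)}$ with constant probability. We then estimate the fidelity of each candidate by a swap test and keep the best one. Repeating $\log(1/\delta)$ times gives the stated success probability, with total cost $\poly(n,1/\uptau,\log(1/\delta))$ and $|\langle\phi|\psi\rangle|^2\geq\uptau^C$.
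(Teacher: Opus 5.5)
Your architecture is the reduction of \cite{SCforstabilizers} that the paper imports: heavy Paulis, many additive triples, BSG, \Cref{thm:approxmain}, an isotropic subspace extended to a Lagrangian, and a measurement in its stabilizer basis. The paper does not reprove the theorem. It only asserts, in a footnote, that the sampling law induced on the BSG set is $\tau^{-O(1)}$-flat.

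\textbf{The gap is at that flatness step, which you flagged yourself.} Neither primitive you name samples $p_\psi$ for a general state.
\begin{itemize}
\item Bell sampling on $\ket{\psi}^{\otimes 2}$ returns $x$ with probability $2^{-n}|\langle\psi^{*}|P_x|\psi\rangle|^2$, which equals $p_\psi(x)$ only for real states.
\item Bell difference sampling returns $q_\psi(x)=\sum_y p_\psi(y)p_\psi(x+y)$.
\end{itemize}
So the sampler you can actually run is $q_\psi$ conditioned on $A$. Your $R$-uniformity argument uses only $p_\psi(x)\in[\tau^{O(1)}2^{-n},2^{-n}]$ on $A^\ast$, so it concerns the wrong law.

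The upper bound does transfer, since $q_\psi\leq\max_y p_\psi(y)\leq 2^{-n}$. The rest does not:
\begin{itemize}
\item \Cref{lem:flat-uniformization} needs the pointwise lower bound $\mu(x)\geq 1/(R|A|)$, because it drives the conductance estimate.
\item Warmness and efficient rejection sampling need $q_\psi(A)\geq\tau^{O(1)}|A|2^{-n}$.
\end{itemize}
Neither follows from bounds on $p_\psi$, because $q_\psi/p_\psi$ can be exponentially small. For $\ket{F}^{\otimes n}$, with $\ket F$ of Bloch vector $(1,1,1)/\sqrt3$, one has $p_\psi(0)=2^{-n}$ but $q_\psi(0)=3^{-n}$. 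That state violates the fidelity promise, so it only shows the transfer is not automatic. Under the promise you still need a real argument that $q_\psi(x)\geq\tau^{O(1)}2^{-n}$ on the BSG set, or a testable $q_\psi$-heaviness condition built into the set. This is exactly what the paper delegates to the quantitative BSG bounds of \cite{SCforstabilizers}, and your sketch does not supply it.

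\textbf{Several other steps are fixable but misstated.}
\begin{itemize}
\item \emph{Source of additive structure.} It is not that $p_\psi$ is its own convolution; in fact $p_\psi*p_\psi=q_\psi$. It is that $p_\psi$ is self-dual under the symplectic Fourier transform: $\sum_x p_\psi(x)(-1)^{[x,a]}=2^n p_\psi(a)$, with $[\cdot,\cdot]$ the symplectic form. This gives $\sum_{x,y}p_\psi(x)p_\psi(y)p_\psi(x+y)=2^n\sum_x p_\psi(x)^3\geq 2^{-n}\tau^6$. With $|A^\ast|\leq\tau^{-O(1)}2^n$, this yields $\tau^{O(1)}|A^\ast|^2$ Schur triples in $A^\ast$, hence the energy bound.
\item \emph{Consistent membership.} A fixed random tape cannot freeze quantum measurement outcomes. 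Cache each membership answer instead, as in \Cref{lem:boolean-rounding}. The oracle is then distributed as a fixed random set, which a union bound over $4^n$ points sandwiches between the two threshold sets.
\item \emph{Subspace step.} You cannot extend a non-isotropic subspace to a Lagrangian. Proceed in order:
\begin{itemize}
\item Pass from the heavy coset $a+V$ to $V$ using nonnegativity of the symplectic Fourier coefficients of $x\mapsto\langle\psi|P_x|\psi\rangle^2$, the analogue of \Cref{clm:removing-affine-offset}.
\item Use $\sum_{x\in V}\langle\psi|P_x|\psi\rangle^2\leq 2^{\dim V-k}$, where $2k$ is the symplectic rank of $V$. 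Together with this sum being $\geq\tau^{O(1)}2^n$ and $|V|\leq\tau^{-O(1)}2^n$, this forces $k=O(\log(1/\tau))$.
\item Pass to the radical $V\cap V^{\perp}$, losing a factor $2^{2k}$ by the same coset inequality, and only then extend to a Lagrangian.
\end{itemize}
\item \emph{Final measurement.} The identity you need is $\sum_{x\in L}\langle\psi|P_x|\psi\rangle^2=2^n\sum_s|\langle\phi_s|\psi\rangle|^4$, the collision probability of measuring in the basis of $L$. A good outcome therefore appears with probability $\tau^{O(1)}$ rather than constant, which your repetition already absorbs.
\end{itemize}
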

Using our approximate algorithm PFR result in Theorem~\ref{thm:approxmain}, we remove the assumption in their work 
\begin{corollary}\label{cor:poly-time-SC}
 The guarantees of Theorem~\ref{thm:SC} hold without any assumptions.
\end{corollary}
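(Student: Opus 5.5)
The plan is to plug \Cref{thm:approxmain} into the reduction of \cite{SCforstabilizers} exactly where that reduction invokes the approximate algorithmic PFR conjecture. Everything else in their proof of \Cref{thm:SC} is unconditional, so it suffices to check three things: the sampling access the reduction provides satisfies the hypotheses of \Cref{thm:approxmain}, with parameters $K$ and $R$ polynomial in $1/\uptau$; the output of \Cref{thm:approxmain} has the form their remaining steps expect; and the total cost stays $\poly(n,1/\uptau,\log(1/\delta))$.

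\textbf{Recalling the reduction.} In \cite{SCforstabilizers}, Bell difference sampling on copies of $\ket{\psi}$ yields samples from the characteristic distribution $p_\psi$ over $\F_2^{2n}$. A Balog--Szemer\'edi--Gowers step then isolates a set $A$ of Paulis with large $|\langle\psi|P|\psi\rangle|^2$. This set satisfies $|A+A|\le\uptau^{-O(1)}|A|$. Membership in $A$ is decided by estimating $\langle\psi|P|\psi\rangle^2$ from copies of $\ket{\psi}$, and samples come from $p_\psi$ conditioned on $A$.

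\textbf{Verifying the hypotheses.} I would show that this conditioned law is $R$-uniform on $A$ with $R=\uptau^{-O(1)}$. The BSG bounds confine every retained Pauli to a window $|\langle\psi|P|\psi\rangle|^2\in[\uptau^{O(1)},1]$, and the normalization $\sum_P p_\psi(P)$ together with $|A|$ then pins each mass between $1/(R|A|)$ and $R/|A|$; this is the content of the footnote. Applying \Cref{thm:approxmain} with $K,R=\uptau^{-O(1)}$ returns a subspace $V$ with $|V|\le|A|$ whose $\uptau^{-O(1)}$ translates cover $A$, in $\poly(n,1/\uptau,\log(1/\delta))$ time. From such a $V$, \cite{SCforstabilizers} passes to a heavy coset, an isotropic subspace, and a stabilizer state of fidelity $\uptau^{C}$, all in polynomial time. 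Summing the costs and combining the failure probabilities by a union bound (with constant-probability success boosted by repetition and fidelity estimation) gives the corollary.

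\textbf{Main obstacle.} The hard part is the interface. The membership oracle is implemented statistically, so it is only approximately exact: points near the threshold may be answered inconsistently. I would handle this as in \Cref{lem:qgl-pfr-input}. Fix the estimator's randomness, with accuracy boosted so that it errs only on points between an inner set $A^-$ and an outer set $A^+$ with error probability $2^{-2n-2}$. This defines a single deterministic set $A^-\subseteq S\subseteq A^+$, which still has doubling $\uptau^{-O(1)}$ and $R$-uniform conditioned samples, again with $R=\uptau^{-O(1)}$. \Cref{thm:approxmain} is then applied to $S$.
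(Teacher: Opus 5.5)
Your proposal is correct and takes the same route as the paper: it substitutes \Cref{thm:approxmain} for the conjectured primitive in the reduction of \cite{SCforstabilizers}, and justifies the interface only by a footnote asserting that the BSG-conditioned sampling law is $\tau^{-O(1)}$-uniform. Your handling of the statistically implemented membership oracle goes beyond what the paper writes and is sound, with one adjustment: the estimator's randomness comes from measuring fresh copies of $\ket{\psi}$ rather than from a classical tape, so ``fixing'' it should be done by memoizing each point's first answer, which by deferred decisions is equivalent to drawing a single set $S$ with $A^-\subseteq S\subseteq A^+$ (with probability at least $3/4$) for which both the membership oracle and the rejection sampler are exact.
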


\subsubsection{Improper agnostic tomography.} In a later work,~\cite{arunachalam2026tomography} provided a boosting framework that took the self-correction procedure to give an algorithm for learning stabilizer decompositions of any quantum state $\ket{\psi}$ given copies. Particularly, the following is now true as a consequence of Corollary~\ref{cor:poly-time-SC} applied to \cite[Theorem~1.2]{arunachalam2026tomography}.
\begin{corollary}
Let $\varepsilon,\delta \in (0,1)$. Suppose $\ket{\psi}$ is an unknown $n$-qubit state. Then, there is an algorithm that uses copies of $\ket{\psi}$ and with probability $\geq 1-\delta$, outputs a list of $\kappa=\poly(1/\varepsilon)$ many states $\{\ket{\phi_i}\}_{i \in [\kappa]}$ and coefficients $\beta \in \mathbb{C}^k$ such that $\ket{\psi}$ can be expressed (up to a global phase) as
$$
\ket{\psi} = \sum_{i=1}^\kappa \beta_i \ket{\phi_i} + \alpha \ket{\phi_R}, \text{ where } |\alpha|^2 \cdot \max_{\ket{\varphi} \in \textsf{stabilizer}} |\langle \varphi | \psi \rangle|^2 \leq \varepsilon.
$$
The sample and time complexity of this algorithm is $\poly(n,1/\varepsilon)$.
\end{corollary}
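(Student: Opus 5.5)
The plan is to combine the self-correction algorithm of \Cref{cor:poly-time-SC} with the boosting framework of \cite[Theorem~1.2]{arunachalam2026tomography}, treating the latter as a black box. That theorem states that if one has a subroutine that, given copies of any state $\ket{\psi'}$ with stabilizer fidelity at least $\tau$, outputs a stabilizer state with fidelity at least $\poly(\tau)$ with $\psi'$, in time $\poly(n,1/\tau,\log(1/\delta))$, then one can learn a stabilizer decomposition of the stated form. So the work reduces to checking that \Cref{cor:poly-time-SC} meets this interface with polynomial parameters. The outcome should be that the resulting list length, sample complexity and time complexity are all $\poly(n,1/\varepsilon)$.

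\textbf{Boosting loop.} I would recall the boosting structure as follows. The algorithm maintains a current list $\ket{\phi_1},\dots,\ket{\phi_i}$ and coefficients $\beta$, and considers the residual $\ket{\psi}-\sum_j\beta_j\ket{\phi_j}$, renormalized as $\alpha\ket{\phi_R}$. Copies of $\ket{\phi_R}$, or a measurement-based proxy for it, are prepared from copies of $\ket{\psi}$ by projecting away the span of the known stabilizer states. This projection is done with postselection, at a cost of $\poly(1/|\alpha|)$ copies per residual copy.

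\textbf{Per-round step and progress.} At each round:
\begin{itemize}
\item If $|\alpha|^2\max_{\varphi}|\langle\varphi|\psi\rangle|^2>\varepsilon$, then $\ket{\phi_R}$ has stabilizer fidelity at least roughly $\varepsilon$. Indeed, the overlap of the best stabilizer state with the residual is controlled by the product in the statement, and $|\alpha|\le1$.
\item Running \Cref{cor:poly-time-SC} with $\tau=\poly(\varepsilon)$ therefore returns a new stabilizer state with fidelity at least $\varepsilon^{C}$ with the residual.
\item Appending this state and re-fitting $\beta$ by least squares removes at least $\poly(\varepsilon)$ of the residual squared norm $|\alpha|^2$. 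This gives a potential argument: after $\kappa=\poly(1/\varepsilon)$ rounds the stopping condition must hold.
\item The coefficient re-fitting needs the Gram matrix of the stabilizer states, which is computed classically from their stabilizer tableaux, together with the overlaps $\langle\phi_j|\psi\rangle$, which are estimated by swap or Hadamard tests to inverse-polynomial accuracy.
\end{itemize}

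\textbf{Error control.} For the failure probability, I would run each self-correction call with confidence $\delta/\kappa$ and take a union bound over the $\kappa$ rounds, so the overall failure probability is at most $\delta$.

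The main obstacle is verifying that the framework's hypothesis is satisfied unconditionally. Two things need checking:
\begin{itemize}
\item The self-correction guarantee of \Cref{thm:SC} must hold for the \emph{residual} states the framework feeds it, including mixed or approximately prepared ones, and not only for pure input copies.
\item The fidelity-loss exponent $C$ must remain a fixed constant, so that the number of rounds stays polynomial.
\end{itemize}
Given that \Cref{cor:poly-time-SC} is stated for arbitrary pure states with only the fidelity promise, and that the framework of \cite{arunachalam2026tomography} already accounts for the state-preparation errors, I expect this to go through by direct invocation. That yields the stated $\poly(n,1/\varepsilon)$ bound, with $\log(1/\delta)$ absorbed into it.
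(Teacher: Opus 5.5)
This is the paper's proof. The paper obtains the corollary exactly by feeding the now-unconditional polynomial-time self-correction of \Cref{cor:poly-time-SC} into the boosting framework of \cite[Theorem~1.2]{arunachalam2026tomography} as a black box, so your reduction is the whole argument.

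One caveat concerns only your sketch of the loop, which carries no weight since you use the framework as a black box. The inference ``product $>\varepsilon$ implies $\ket{\phi_R}$ has stabilizer fidelity at least $\varepsilon$'' holds, via $|\alpha|\le 1$, for the evidently intended quantity $|\alpha|^2\max_{\varphi}|\langle\varphi|\phi_R\rangle|^2$. It fails for $|\alpha|^2\max_{\varphi}|\langle\varphi|\psi\rangle|^2$ as literally printed: for $\ket{\psi}\propto\ket{s}+\ket{h}$ with $\ket{s}$ a stabilizer state and $\ket{h}$ Haar-random, that product stays around $1/4$ whichever $\poly(1/\varepsilon)$ stabilizer states are removed.
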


Moreover, this implies a polynomial-time agnostic learner for the class of stabilizer states, albeit the output of the agnostic learner is \emph{improper}, i.e., need not be a stabilizer state itself. The following was shown in \cite{arunachalam2026tomography}.
\begin{theorem}[{\cite[Theorem~1.3]{arunachalam2026tomography}}]
\label{thm:improperagn}
Let $\delta, \varepsilon \leq \opt \in (0,1)$. Suppose $\ket{\psi}$ is an unknown $n$-qubit state with (unknown) optimal stabilizer fidelity 
$$
\opt=\max_{\ket{\phi}\in \textsf{stabilizer}}|\langle \psi|\phi\rangle|^2 
$$
Assuming that Self-correction is solvable in $\poly(n,1/\varepsilon)$ time, there is an algorithm, that with probability~$\geq 1-\delta$,  uses copies of $\ket{\psi}$ to output a state $\ket{\phi}$, which is a sum of $\kappa = \poly(1/\varepsilon)$ many $\ket{\phi_i}\in \textsf{stabilizer}$ such~that
$$
|\langle \phi | \psi \rangle|^2 \geq \opt - \varepsilon,
$$
The sample and time complexity of the algorithm is $\poly(n,1/\varepsilon,\log 1/\delta)$.
\footnote{We remark that their main result is stated in terms of \emph{model class}, which has three requirements: $(i)$ the class is efficiently representable, $(ii)$ it admits a self-correction algorithm, $(iii)$ there is an efficient circuit that can prepare the states in the class. For the class of stabilizer state, $(i,iii)$ are well-known in literature and only $(ii)$ was a bottleneck, which is stated explicitly in this theorem statement.} 
\end{theorem}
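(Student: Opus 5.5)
The statement to prove is Theorem~\ref{thm:improperagn}, the improper agnostic tomography guarantee of \cite{arunachalam2026tomography}, conditional on polynomial-time self-correction. I will take a boosting/peeling route. I maintain a current approximation $\ket{\widehat\phi_k}=\sum_{i\le k}\beta_i\ket{\phi_i}$, which lies in the span of the stabilizer states found so far. I also maintain the residual $\ket{r_k}\propto \ket{\psi}-\Pi_k\ket{\psi}$, where $\Pi_k$ is the projector onto that span. At each round I call the self-correction procedure (now unconditional by Corollary~\ref{cor:poly-time-SC}) on the normalized residual. If the residual still has stabilizer fidelity at least some threshold $\tau$, this returns a stabilizer state $\ket{\phi_{k+1}}$ with $|\langle\phi_{k+1}|r_k\rangle|^2\ge\tau^C$. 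I then add it to the list and re-project.

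The first step is to relate the residual to $\opt$. Let $\ket{\phi^\star}$ be an optimal stabilizer state, and write $\ket{\psi}=\Pi_k\ket{\psi}+\|r_k\|\ket{r_k}$. Then
\[
\langle\phi^\star|\Pi_k|\psi\rangle+\|r_k\|\langle\phi^\star|r_k\rangle=\langle\phi^\star|\psi\rangle.
\]
Suppose the final overlap $|\langle\widehat\phi|\psi\rangle|^2=\|\Pi_k\psi\|^2$ is still below $\opt-\varepsilon$. I will argue that either $\Pi_k$ already captures most of the overlap with $\ket{\phi^\star}$, or $\|r_k\|^2\,|\langle\phi^\star|r_k\rangle|^2\ge \poly(\varepsilon)$. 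In the latter case the residual has stabilizer fidelity at least $\tau=\poly(\varepsilon)$. For the output state I take the normalized projection $\Pi_k\ket{\psi}/\|\Pi_k\psi\|$; since it lies in the span, its fidelity with $\ket\psi$ is exactly $\|\Pi_k\psi\|^2$.

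The second step is progress accounting. Each successful round increases $\|\Pi_k\psi\|^2$ by at least $\|r_k\|^2\tau^C$, because the new component of $\ket{\phi_{k+1}}$ orthogonal to the span captures that much of the residual. Since $\|\Pi_k\psi\|^2\le1$, there are at most $\poly(1/\varepsilon)$ rounds, so $\kappa=\poly(1/\varepsilon)$.

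The remaining issue is implementation. Self-correction needs copies of the residual, not of $\ket\psi$. I would prepare the residual by projecting $\ket\psi$ onto the orthogonal complement of the span of the known stabilizer states. This can be done with a measurement built from Clifford circuits preparing each $\ket{\phi_i}$, succeeding with probability $\|r_k\|^2\ge\varepsilon$ when we have not yet stopped. The coefficients $\beta_i$ come from estimating the Gram matrix and the overlaps $\langle\phi_i|\psi\rangle$ (via swap/Hadamard tests) to inverse-polynomial accuracy. Finally, I would estimate the current fidelity to decide when to stop, with a union bound over rounds for the $\log(1/\delta)$ dependence.

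I expect the main obstacle to be the first step. Making rigorous that a bad approximation forces the residual to have fidelity $\poly(\varepsilon)$ with some stabilizer state is delicate: $\ket{\phi^\star}$ itself may be mostly inside the span while the target overlap is still not reached. The resolution I would try first is to stop as soon as the fidelity of the residual is below $\poly(\varepsilon)$, and then bound $|\langle\phi^\star|\psi\rangle|^2-\|\Pi_k\psi\|^2$ by a Cauchy--Schwarz argument on the displayed decomposition. Since $|\langle\phi^\star|\Pi_k\psi\rangle|\le\|\Pi_k\psi\|$, the gap is controlled by the cross term $\|r_k\|\,|\langle\phi^\star|r_k\rangle|$, which is small in that case.
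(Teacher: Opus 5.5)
\textbf{What already works.} Your Hilbert-space argument is sound. The step you flag as the main obstacle is closed by your own displayed identity. Since $|\langle\phi^\star|\Pi_k\psi\rangle|\le\|\Pi_k\psi\|$, you get $\sqrt{\opt}\le\|\Pi_k\psi\|+\|r_k\|\,|\langle\phi^\star|r_k\rangle|$. Hence if $\|r_k\|^2\max_{\varphi}|\langle\varphi|r_k\rangle|^2\le(\varepsilon/3)^2$, then $\|\Pi_k\psi\|^2\ge\opt-\varepsilon$. Otherwise the normalized residual has stabilizer fidelity at least $(\varepsilon/3)^2$, self-correction applies, and your progress bound gives at most $(3/\varepsilon)^{2C+2}$ rounds. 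Because $\opt$ is unknown, the stopping rule should be ``a verified run of self-correction on the residual fails'', not an estimate of the current fidelity. The paper does not prove this theorem; it imports it from the boosting framework of \cite{arunachalam2026tomography}. That framework's decomposition guarantee (the corollary just before the theorem) asks only that the residual have small weighted stabilizer fidelity, and does not assert orthogonality to the $\ket{\phi_i}$. So your argument has to stand on its own, and it does not yet do so at the implementation step.

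\textbf{The gap: preparing the residual and forming the coefficients.} Both of your bounds use $\ket{r_k}\perp\mathrm{span}\{\ket{\phi_1},\dots,\ket{\phi_k}\}$. So you need copies of $(I-\Pi_k)\ket{\psi}$, normalized, and you have not shown how to get them in polynomial time.
\begin{itemize}
\item What Clifford preparations give you directly are the rank-one measurements $\{P_i,I-P_i\}$ with $P_i=|\phi_i\rangle\langle\phi_i|$. For non-orthogonal stabilizer states these do not commute. Post-selecting on the complement outcomes in sequence yields $(I-P_k)\cdots(I-P_1)\ket{\psi}$, which is orthogonal only to $\ket{\phi_k}$. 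Your Step~1 and Step~2 arguments both need orthogonality to the whole span, which this state lacks.
\item Realizing $I-\Pi_k$ itself has cost governed by the conditioning of the Gram matrix $G_{ij}=\langle\phi_i|\phi_j\rangle$. With QSVT on a block encoding of $\sum_iP_i$, the cost depends on $1/\lambda_{\min}(G)$. With alternating projections, the level-$k$ procedure calls the level-$(k-1)$ procedure roughly $\sigma_k^{-2}$ times, where $\sigma_k=\|(I-\Pi_{k-1})\ket{\phi_k}\|$, so the costs compound across rounds.
\item The coefficients suffer the same problem. Forming $\beta=G^{-1}v$ from estimates $\widehat v$ puts an error of norm up to $\lambda_{\min}(G)^{-1/2}\|\widehat v-v\|$ into the output state.
\end{itemize}
Your greedy step only guarantees $\sigma_{k+1}\ge\tau^{C/2}$, i.e., a lower bound on the diagonal of the Cholesky factor of $G$. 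This does not prevent $\lambda_{\min}(G)$ from being exponentially small in $k$, for example if each new state lies mostly along the previous Gram--Schmidt direction. Since $k=\poly(1/\varepsilon)$, the procedure as written is not shown to be polynomial. To close the proof you need one of the following:
\begin{itemize}
\item a conditioning bound for the states the process finds;
\item a way to sample the orthogonal residual whose cost does not compound across rounds; or
\item a residual built from implementable operations, with the stopping and progress accounting redone for that residual.
\end{itemize}

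\textbf{A smaller point.} With copies of $\ket{\psi}$ alone there is no Hadamard test for $\langle\phi_i|\psi\rangle$. You can estimate only phase-invariant products $\langle\phi_i|\psi\rangle\langle\psi|\phi_j\rangle$, for example by swap tests against the normalized superpositions $\ket{\phi_i}\pm\ket{\phi_j}$ and $\ket{\phi_i}\pm\mathrm{i}\ket{\phi_j}$. That suffices, since the global phase of $\ket{\psi}$ is irrelevant.
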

Since Theorem~\ref{thm:approxmain} gave a polynomial-time algorithm for self-correction, we also obtain an improper agnostic learner for stabilizer states. 
\begin{corollary}
\label{cor:agnosticimproper}
 The guarantees of Theorem~\ref{thm:improperagn} hold without any assumptions.
\end{corollary}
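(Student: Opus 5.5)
The corollary is a direct composition of two results already stated: \Cref{thm:improperagn}, which is conditional only on a $\poly(n,1/\varepsilon)$-time self-correction procedure for stabilizer states, and \Cref{cor:poly-time-SC}, which supplies such a procedure unconditionally. The plan is to discharge the hypothesis of \Cref{thm:improperagn} with \Cref{cor:poly-time-SC}, taking care that the parameters, the failure probabilities, and the access model match when one is plugged into the other.

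\textbf{Step 1: instantiate self-correction.} By \Cref{thm:SC} together with \Cref{cor:poly-time-SC}, for every $\uptau>0$ and every $\ket{\psi}$ with stabilizer fidelity at least $\uptau$, we obtain an algorithm that outputs a stabilizer state $\ket{\phi}$ with $|\langle\phi|\psi\rangle|^2\geq\uptau^C$. It succeeds with probability $1-\delta'$ and runs in $\poly(n,1/\uptau,\log(1/\delta'))$ time and copies. The unconditional status comes from \Cref{thm:approxmain}. The underlying reduction of~\cite{SCforstabilizers} produces samples from a distribution that is $\gamma^{-O(1)}$-uniform on the relevant small-doubling set, where $\gamma=\poly(\uptau)$, as noted in the footnote preceding \Cref{thm:SC}. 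Therefore $R=\poly(1/\uptau)$ and $K=\poly(1/\uptau)$, and the approximate PFR algorithm runs in $\poly(n,1/\uptau)$.

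\textbf{Step 2: match the self-correction requirement of the boosting framework.} The boosting framework of~\cite{arunachalam2026tomography} only ever invokes self-correction at fidelity thresholds $\uptau\geq\poly(\varepsilon)$: it stops once the residual stabilizer fidelity drops below $\varepsilon$. Each call therefore costs $\poly(n,1/\varepsilon)$. The framework also requires (i) efficient classical descriptions of stabilizer states and (ii) efficient preparation circuits for them. Both are standard for stabilizer states, via tableau representations and Clifford synthesis.

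\textbf{Step 3: account for failure probability and total cost.} The framework makes $\kappa=\poly(1/\varepsilon)$ calls to self-correction. We set $\delta'=\delta/(2\kappa)$ and take a union bound over all calls. Since each call has polylogarithmic dependence on $1/\delta'$, the total complexity is $\poly(n,1/\varepsilon,\log(1/\delta))$. With all hypotheses of \Cref{thm:improperagn} verified, its conclusion follows.

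\textbf{Main obstacle.} The only delicate point is Step 1's interface claim: that the sampling access produced inside~\cite{SCforstabilizers} really is $R$-uniform on a set with small doubling, with $R$ and $K$ polynomial in $1/\uptau$, and that a membership oracle for that set is available. I would verify this by tracing the BSG step in that paper and confirming that its quantitative density bounds give both the upper and lower flatness conditions of \Cref{def:flat-distribution}. If one side fails, I would first try to restore it by conditioning on a sub-level set, which costs only constant factors in the density bounds.
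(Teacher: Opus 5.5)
Your proposal is correct and follows the paper's route. The paper simply plugs the unconditional polynomial-time self-correction of Corollary~\ref{cor:poly-time-SC} (obtained via Theorem~\ref{thm:approxmain}) into the self-correction hypothesis of Theorem~\ref{thm:improperagn}; your Steps 2--3 (the thresholds $\tau\geq\poly(\varepsilon)$ and the union bound over $\kappa$ calls) re-derive bookkeeping that Theorem~\ref{thm:improperagn} already handles internally, and the interface issue you flag in Step 1 is the same point the paper settles in the footnote preceding Theorem~\ref{thm:SC}.
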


\subsubsection{Learning states with bounded extent} Finally, in~\cite{arunachalam2026tomography} they also used the boosting framework on top of self-correction to obtain a tomography algorithm for the class of states with stabilizer extent $\xi$.  For an arbitrary state $\ket{\psi}$, we define the stabilizer extent of $\ket{\psi}$as\footnote{We remark that in literature~\cite{bravyi2016trading,bravyi2019simulation} extent was defined the \emph{squared} $\ell_1$ norm. For the purposes of this paper, the squared $\ell_1$ norm or the $\ell_1$ norm as defined below will not change the main results.}
\begin{equation}\label{def:extent_C}
\xi(\ket{\psi})=\min\Big\{ \sum_i |c_i| : \ket{\psi}=\sum_i c_i \ket{\phi_i}, \ket{\phi_i}\in \textsf{stabilizer}\Big\}.  
\end{equation}

\begin{theorem}[\cite{arunachalam2026tomography}]
\label{result:learn_stab_extent}
Let $\xi > 0,\varepsilon \in (0,1)$. Suppose $\ket{\psi}$ is an unknown $n$-qubit state with stabilizer extent $\xi(\ket{\psi})\leq \xi$. Then, there exists an algorithm that learns $\ket{\psi}$  up to trace distance $\varepsilon$ in time and sample complexity $\poly(n,\xi,1/\varepsilon)$  \emph{assuming} an algorithmic~$\textsf{PFR}$~conjecture. 
Furthermore, every $0.1$-error tomography protocol   needs $\Omega(n+\xi^2)$ copies.
\end{theorem}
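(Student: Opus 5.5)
The upper bound would follow by combining self-correction (\Cref{thm:SC}, now unconditional by \Cref{cor:poly-time-SC} via \Cref{thm:approxmain}) with a greedy boosting loop. The lower bound would come from two separate packing arguments.

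\textbf{Upper bound: greedy loop.} I would maintain a list of stabilizer states $\ket{\phi_1},\dots,\ket{\phi_k}$ and let $P_k$ be the projector onto their span. The loop has three steps.
\begin{enumerate}
\item Using copies of $\ket\psi$, simulate copies of the normalized residual $\ket{r_k}\propto(I-P_k)\ket\psi$. I would do this by measuring $\{P_k,I-P_k\}$, implementing $P_k$ through a Gram--Schmidt description of the span, and postselecting. While the residual norm is at least $\varepsilon$, this costs $\poly(1/\varepsilon)$ copies per residual copy.
\item Run self-correction on $\ket{r_k}$ and append the returned stabilizer state.
\item Stop once the estimated residual norm $\|(I-P_k)\ket\psi\|$ drops below $\varepsilon$. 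Then output the state $P_k\ket\psi$ normalized, with coefficients obtained by estimating the overlaps $\langle\phi_i|\psi\rangle$ (e.g.\ by Hadamard/swap-type tests). These coefficients are the source of the trace-distance error.
\end{enumerate}

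\textbf{Upper bound: progress and termination.} Write $\ket\psi=\sum_i c_i\ket{\varphi_i}$ with $\sum_i|c_i|\le\xi$. For the residual $\ket{u}=(I-P_k)\ket\psi$ we have
\[
\|u\|^2=\langle u|\psi\rangle=\sum_i c_i\langle u|\varphi_i\rangle,
\]
so some stabilizer state $\ket{\varphi_i}$ has $|\langle u|\varphi_i\rangle|\ge\|u\|^2/\xi$. After normalizing, the residual therefore has stabilizer fidelity $\tau\ge\|u\|^2/\xi^2\ge\varepsilon^2/\xi^2$ as long as the loop has not stopped. Self-correction then returns $\ket{\phi_{k+1}}$ with $|\langle\phi_{k+1}|r_k\rangle|^2\ge\tau^C$. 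Projecting this out lowers the captured residual mass $\|u\|^2$ by at least $\|u\|^2\tau^C$. Hence after $\poly(\xi,1/\varepsilon)$ rounds the residual norm falls below $\varepsilon$. Each round costs $\poly(n,\xi,1/\varepsilon)$, which gives the claimed total bound.

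\textbf{Main obstacle.} I expect the hardest step to be robustness. The projector $P_k$ is built from estimated objects, and the simulated residual copies are only approximate. Self-correction must tolerate this small noise in its input copies. I would handle it by running each round with accuracy $\poly(\varepsilon/\xi)$ and arguing that self-correction is stable under trace-distance perturbations of its input, by a hybrid argument over its $\poly$ many copies. This is exactly the content of the boosting framework of \cite{arunachalam2026tomography}, so I would import it as a black box with self-correction as the oracle.

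\textbf{Lower bound $\Omega(n)$.} Stabilizer states have extent $1$. Fix any $2^{\Omega(n)}$ computational basis states; they are pairwise orthogonal, hence pairwise at trace distance $1$. A $0.1$-error tomography protocol therefore identifies one of them, carrying $\Omega(n)$ bits. By Holevo's bound applied to $m$ copies of an $n$-qubit state, a Fano-type argument then forces $m=\Omega(n)$.

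\textbf{Lower bound $\Omega(\xi^2)$.} I would use states on roughly $\log\xi^2$ qubits padded with $\ket0$'s. Take $d=\Theta(\xi^2)$ and a packing of $\exp(\Omega(d))$ states that are pairwise far in trace distance. Any such state has extent at most $\sum_x|\alpha_x|\le\sqrt d$ when expanded in the computational basis. For instance, use the $\pm$-phase states $d^{-1/2}\sum_x(-1)^{s_x}\ket x$ with $s$ ranging over a good code. Holevo's bound then caps the information obtained from $m$ copies by $\log\binom{m+d-1}{m}\approx m\log(d/m)$ when $m\ll d$. Distinguishing $\exp(\Omega(d))$ hypotheses forces $m=\Omega(d/\log)$. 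I expect that this logarithmic loss can be removed with a more careful symmetric-subspace/Fano argument, or by a direct reduction from known $\Omega(d)$ lower bounds for pure-state tomography in dimension $d$. That reduction is the route I would try first.
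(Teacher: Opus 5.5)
The upper-bound half is fine at the level at which the paper treats this result: the paper gives no proof and imports it from \cite{arunachalam2026tomography}, with self-correction as the oracle, and your progress computation is correct. The genuine gap is in your $\Omega(n)$ lower bound. Computational basis states cannot give it, because a single copy of $\ket{x}$, measured in the computational basis, reveals $x$ exactly. In Holevo terms, $m$ copies of an $n$-qubit state can carry up to $mn$ bits, so needing $\Omega(n)$ bits only forces $m=\Omega(1)$.

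You need a far-apart family of size $2^{\Omega(n^2)}$, and the natural one is the set of all $n$-qubit stabilizer states. There are $2^{(1/2+o(1))n^2}$ of them, each has extent $1$, and any two distinct ones satisfy $|\langle\phi|\phi'\rangle|^2\le 1/2$. Hence they are pairwise at trace distance at least $1/\sqrt2>0.2$. A $0.1$-error protocol therefore identifies the state, so Fano demands $\Omega(n^2)$ bits. Holevo supplies at most $mn$ bits, and $m=\Omega(n)$ follows.

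For $\Omega(\xi^2)$, your own extent bound already gives the log-free route you were hoping for. Let $d$ be the largest power of two with $d\le\xi^2$, assuming, as one must, $\xi^2\le 2^n$. By Cauchy--Schwarz, \emph{every} state supported on $d$ computational basis vectors (padded with $\ket{0}$'s) has extent at most $\sqrt d\le\xi$. So a protocol for the class is in particular a pure-state tomography protocol in dimension $d$. The optimal average fidelity $(m+1)/(m+d)$ for estimating a Haar-random $d$-dimensional pure state from $m$ copies then forces $m=\Omega(d)=\Omega(\xi^2)$, with no need for the packing/Holevo count.

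Finally, a caution if you ever open the black box for the upper bound. Measuring $\{P_k,I-P_k\}$ for the span of non-orthogonal stabilizer states does not follow automatically from a Gram--Schmidt description, because the Gram matrix of the greedily chosen states can be badly conditioned. Handling this, and not only the noise in residual copies, is part of what you are importing from \cite{arunachalam2026tomography}.
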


Using our approximate algorithm PFR result in Theorem~\ref{thm:approxmain}, we remove the assumption in their work 
\begin{corollary}
The guarantees of Theorem~\ref{result:learn_stab_extent} hold without any assumptions.
\end{corollary}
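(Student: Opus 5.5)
The statement in question is the final corollary: the guarantees of Theorem~\ref{result:learn_stab_extent} hold without any assumptions. The plan is a black-box reduction. The only conditional ingredient in the extent-tomography algorithm of~\cite{arunachalam2026tomography} is the algorithmic PFR primitive invoked inside self-correction. Everything else is unconditional: the boosting loop, the copy-complexity accounting, and the $\Omega(n+\xi^2)$ lower bound. So it suffices to show that the primitive assumed there is supplied, with polynomial overhead, by \Cref{thm:approxmain}, or equivalently by the now-unconditional self-correction of \Cref{cor:poly-time-SC}.

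The steps, in order, are as follows.

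First, I would recall the structure of the extent-boosting argument. At each round, the residual state $\ket{\psi}-\sum_i \beta_i\ket{\phi_i}$ (prepared implicitly, or accessed through the measurements used in~\cite{arunachalam2026tomography}) has stabilizer fidelity at least $\poly(1/\xi,\varepsilon)$ whenever its norm is still at least $\varepsilon$. This holds because bounded extent forces some stabilizer state to have overlap at least the residual norm divided by $\xi$.

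Second, at each round the framework calls self-correction with threshold $\tau=\poly(\varepsilon/\xi)$. By \Cref{cor:poly-time-SC}, this call returns a stabilizer state with fidelity $\tau^{C}$ in $\poly(n,\xi,1/\varepsilon,\log(1/\delta'))$ time and copies.

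Third, the number of boosting rounds is $\poly(\xi,1/\varepsilon)$. I would set the per-round failure probability $\delta'$ to $\delta$ divided by the number of rounds, and take a union bound over the rounds.

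Fourth, the lower bound is information-theoretic and unaffected, so it carries over verbatim.

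The main obstacle is confirming that the specific form of ``algorithmic PFR conjecture'' assumed in~\cite{arunachalam2026tomography} matches \Cref{thm:approxmain}. There are two points to check. The sampling distribution produced by their Bell-difference-sampling/BSG step must be $R$-uniform with $R=\poly(\xi/\varepsilon)$. In addition, the membership oracle is only approximately implementable, via estimated Pauli expectations.

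For the first point, I would argue as in the footnote preceding \Cref{thm:SC}: the quantitative BSG bounds make the conditional distribution $\gamma^{-O(1)}$-flat on the relevant set, which is exactly what \Cref{thm:approxmain} needs.

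For the second point, I would fix the randomness of the estimator, as in the proof of \Cref{lem:qgl-pfr-input}. This yields an exact set $S$ that is sandwiched between two sets $A^-\subseteq S\subseteq A^+$ with comparable sizes and small doubling. I would then apply \Cref{thm:approxmain} to $S$. One caveat is that the restriction of the sampling distribution to $S$ remains $R$-uniform only up to a constant-factor loss, and this needs to be verified as part of the argument.
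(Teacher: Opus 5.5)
Your proposal is correct and matches the paper's argument. The paper's proof is just the black-box substitution of Theorem~\ref{thm:approxmain}, via the now-unconditional self-correction of Corollary~\ref{cor:poly-time-SC}, for the algorithmic-PFR hypothesis in the extent-boosting framework of~\cite{arunachalam2026tomography}, with the flatness of the BSG sampling distribution handled exactly as in the footnote preceding Theorem~\ref{thm:SC}.

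The paper does not go beyond this: it takes the cited reduction's access model to be precisely that of Theorem~\ref{thm:approxmain}, so your treatment of approximate membership oracles is extra care rather than a step the corollary relies on. If you do pursue it, note that restricting an $R$-uniform law to a subset of its support gives an $R^2$-uniform law. The loss is therefore polynomial rather than constant, which is still harmless.
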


\subsubsection{High stabilizer-dimension states}
We have so far described tomography results for stabilizer states or quantum states that admit stabilizer structure. We now turn our attention to states with high stabilizer dimension. We say that an $n$-qubit pure quantum state $\ket{\psi}$ has stabilizer dimension of $k$ if $\ket{\psi}$ is stabilized by an Abelian group of $2^k$ Pauli operators and denote this class of states as $\mathcal{S}(k)$. We then have the following self-correction protocol for states with stabilizer dimension $n-t$ i.e., $\calS(n-t)$.
\begin{corollary}\label{cor:poly-time-T-doped}
Let $\varepsilon,\delta \in (0,1)$ and $0 \leq t \leq n$. Suppose that $\ket{\psi}$ is an unknown $n$-qubit state with the promise $\max_{\ket{\varphi} \in \calS(n-t)} |\langle \varphi | \psi \rangle|^2 \geq \tau$. Then, there is an algorithm that uses copies of $\ket{\psi}$ to output a stabilizer state $\ket{\phi}$ such that with probability $\geq 1-\delta$,
$$
|\langle \phi | \psi \rangle|^2 \geq \poly(2^{-t} \tau).
$$
The algorithm consumes sample and time complexity $\poly(n,2^t,1/\tau)$.
\end{corollary}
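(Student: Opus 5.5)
The proof plugs our approximate algorithmic PFR into the existing reduction of~\cite{SCforstabilizers} for high stabilizer-dimension states. That reduction was stated conditionally on an efficient algorithmic PFR.

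Recall how it works. Given copies of $\ket{\psi}$, Bell difference sampling produces samples $x\in\F_2^{2n}$ from the characteristic distribution $p_\psi(x)\propto \langle\psi|P_x|\psi\rangle^2$. If $\ket{\psi}$ has fidelity at least $\tau$ with some $\ket{\varphi}\in\calS(n-t)$, the stabilizer group of $\ket{\varphi}$ has $2^{n-t}$ elements. A Cauchy--Schwarz/Fourier argument then shows that the set
\[
S_\gamma:=\{x:\langle\psi|P_x|\psi\rangle^2\ge\gamma\},\qquad \gamma=\poly(2^{-t}\tau),
\]
carries $p_\psi$-mass at least $\poly(2^{-t}\tau)$. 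It also has additive energy at least $\poly(2^{-t}\tau)|S_\gamma|^3$.

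The reduction next applies Balog--Szemer\'edi--Gowers to obtain a subset $A\subseteq S_\gamma$ with $|A+A|\le K|A|$, where $K=\poly(2^t/\tau)$. Membership in $A$ is tested by estimating $\langle\psi|P_x|\psi\rangle^2$ with measurements on copies of $\ket{\psi}$. Samples are taken from $p_\psi$ conditioned on $A$. Finally, the reduction maps a PFR subspace for $A$ to a Lagrangian subspace, hence a stabilizer state, and finds the correct phases by a Goldreich--Levin-type search.

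The plan is to substitute \Cref{thm:approxmain} for the assumed algorithmic PFR and then check that the parameters stay polynomial in $2^t/\tau$. There are three steps.
\begin{enumerate}
\item Verify that the sampling distribution on $A$ is $R$-uniform with $R=\poly(2^t/\tau)$, following the footnote attached to \Cref{thm:SC}. Points of $A\subseteq S_\gamma$ have $p_\psi(x)\propto\langle\psi|P_x|\psi\rangle^2\in[\gamma,1]$, so after conditioning the ratio of maximum to minimum density is at most $1/\gamma$. The only correction needed is a normalization by the mass of $A$, which is polynomially bounded.
\item Handle the fact that membership is decided by an estimator. Either the estimator is correct with probability $1-2^{-\Omega(n)}$ on every queried point and we union-bound over the $\poly$ many queries, or we fix the random tape and sandwich the set between $A^-\subseteq S\subseteq A^+$, as in \Cref{lem:qgl-pfr-input}. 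Both routes cost only a polynomial number of copies.
\item Run \Cref{thm:approxmain} with $K,R=\poly(2^t/\tau)$. This yields a subspace $V$ with $|V|\le|A|$ such that $K^{O(1)}$ translates of $V$ cover $A$, in time $\poly(n,2^t,1/\tau)$. Feed $V$ into the remaining, unchanged steps of~\cite{SCforstabilizers}, which output a stabilizer state $\ket{\phi}$ with $|\langle\phi|\psi\rangle|^2\ge\poly(2^{-t}\tau)$. Boost the success probability to $1-\delta$ by repetition, checking each candidate's fidelity with a swap test or direct estimation.
\end{enumerate}

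The main obstacle is the first step, confirming the $R$-uniformity. The BSG step in~\cite{SCforstabilizers} may restrict to a subset defined through paths or energy counts rather than through density thresholds. In that case the induced law need not be flat automatically. If so, I would intersect $A$ with a dyadic level set of $\langle\psi|P_x|\psi\rangle^2$, choosing the level by pigeonhole among $O(\log(1/\gamma))$ candidates. This makes the density flat to within a factor of $2$ and costs only a polynomial loss in the doubling constant and in the mass. It can be implemented by estimating the characteristic values.
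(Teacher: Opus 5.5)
Your route matches the paper's. The corollary comes from substituting \Cref{thm:approxmain} for the assumed algorithmic PFR in the reduction of~\cite[Section~5]{SCforstabilizers}, which (via the local $U^3$ inverse theorem) supplies the $\poly(2^{-t}\tau)$ guarantee. The only compatibility check is that the sampling law the reduction induces on the small-doubling set is polynomially flat, which the paper asserts in the footnote to \Cref{thm:SC}.

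The gap is in your verification of exactly that check. Write $p_\psi(x)=2^{-n}\langle\psi|P_x|\psi\rangle^2$. With copies of $\ket{\psi}$ alone, the reduction's samples come from Bell difference sampling, which returns $x$ with probability $q_\psi(x)=\sum_y p_\psi(y)p_\psi(x+y)$, not $p_\psi(x)$. Sampling $p_\psi$ itself would need copies of $\ket{\psi^*}$, or real amplitudes. On $S_\gamma$ the law $q_\psi$ is controlled only from above, $q_\psi(x)\le\|p_\psi\|_\infty\le 2^{-n}$. There is no pointwise lower bound in terms of $\langle\psi|P_x|\psi\rangle^2$, even at high stabilizer fidelity. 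For example, take $\ket{T}=(\ket{0}+e^{i\pi/4}\ket{1})/\sqrt2$ and $\ket{\psi}\propto\sqrt{1-\epsilon}\,\ket{0^n}+\sqrt{\epsilon}\,\ket{T}^{\otimes n}$. Its stabilizer fidelity is about $1-\epsilon$, and for $P_x=X\otimes I^{\otimes(n-1)}$ one has $\langle\psi|P_x|\psi\rangle^2\approx\epsilon^2/2$, yet $2^nq_\psi(x)=2^{-\Omega(n)}$.

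So ``max/min density at most $1/\gamma$'' does not follow from $A\subseteq S_\gamma$. The lower bound $\mu(x)\ge 1/(R|A|)$ that fails here is precisely what the conductance estimate in \Cref{lem:flat-uniformization} uses. Your dyadic fallback does not repair this. It bins by $\langle\psi|P_x|\psi\rangle^2$, which is already flat to within $1/\gamma$ on $S_\gamma$, and it says nothing about $q_\psi$.

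What is needed, and what the paper's footnote appeals to, is a lower bound on $q_\psi$ over the BSG output that comes from the combinatorics of that construction. For instance, suppose every $x\in A$ has at least $\kappa 2^n$ representations $x=y+y'$ with $y,y'\in S_\gamma$. Then $q_\psi(x)\ge\kappa 2^n(\gamma 2^{-n})^2=\kappa\gamma^2 2^{-n}$. Together with $q_\psi\le 2^{-n}$, this makes $q_\psi$ conditioned on $A$ $(\kappa\gamma^2)^{-1}$-uniform.

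A smaller point of the same kind concerns step~2. The ``fix the random tape'' route, borrowed from \Cref{lem:qgl-pfr-input}, is not available when membership is decided by measuring fresh copies of $\ket{\psi}$. Points near the threshold can receive inconsistent answers across queries. The exact, consistent membership oracle that \Cref{thm:approxmain} requires therefore has to be inherited from the construction in~\cite{SCforstabilizers}, not obtained this way.
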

We quickly remark that the promise of $\poly(2^{-t} \tau)$ is a consequence of the local inverse theorem of Gowers-$3$ norm of quantum states and refer the reader to \cite[Section~5]{SCforstabilizers}.

Using the boosting framework of \cite{arunachalam2026tomography}, we can then use the above self-correction result to obtain an improper agnostic learner of states with high stabilizer dimension.
\begin{corollary}
Let $\delta, \varepsilon \leq \opt \in (0,1)$ and $0 \leq t \leq n$. Suppose $\ket{\psi}$ is an unknown $n$-qubit state with (unknown) optimal fidelity 
$$
\opt=\max_{\ket{\phi}\in \calS(n-t)} |\langle \psi|\phi\rangle|^2. 
$$
Then, there is an algorithm that with probability~$\geq 1-\delta$,  uses copies of $\ket{\psi}$ to output a state $\ket{\phi}$, which is a sum of $\kappa = \poly(2^t/\varepsilon)$ many $\ket{\phi_i}\in \textsf{stabilizer}$ such~that
$$
|\langle \phi | \psi \rangle|^2 \geq \opt - \varepsilon,
$$
The sample and time complexity of the algorithm is $\poly(n,2^t,1/\varepsilon,\log 1/\delta)$.
\end{corollary}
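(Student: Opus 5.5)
The plan is a reduction: I would chain the self-correction corollary for states in $\calS(n-t)$ (\Cref{cor:poly-time-T-doped}) into the boosting framework of \cite{arunachalam2026tomography}, exactly as \Cref{cor:agnosticimproper} was obtained from \Cref{cor:poly-time-SC}. The substantive work is to set the parameters so that the boosting guarantee is stated relative to the comparison class $\calS(n-t)$ while the output lies in the span of stabilizer states.

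\textbf{Step 1: the boosting interface.} I would recall the abstract form of the boosting theorem of \cite{arunachalam2026tomography}. It takes a ``model class'' $\mathcal C$ together with a weak self-correction oracle. On a residual state $\ket{\psi_r}$ whose fidelity with $\mathcal C$ is at least some threshold, the oracle returns an efficiently preparable and efficiently representable state with fidelity $\poly(\text{threshold})$. Boosting maintains a residual $\ket{\psi_r}$, obtained by projecting out the span of the states found so far. It repeatedly calls the oracle and stops once the residual's fidelity with $\mathcal C$ drops below $\varepsilon$. I would use it in a mildly generalized form: comparison class $\mathcal C=\calS(n-t)$, and output states that are stabilizer states (a subset of $\calS(n-t)$). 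The three model-class requirements then hold. Representability and preparation are standard for stabilizer states. Self-correction is \Cref{cor:poly-time-T-doped}, with weak guarantee $\poly(2^{-t}\tau)$.

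\textbf{Step 2: the potential argument.} While the final inequality $|\langle\phi|\psi\rangle|^2\ge \opt-\varepsilon$ fails, the residual still has fidelity at least about $\varepsilon$ with some state in $\calS(n-t)$. Self-correction then finds a stabilizer $\ket{\phi_i}$ capturing $\poly(2^{-t}\varepsilon)$ of the residual's squared norm. The captured mass is a decreasing potential bounded by $1$, so the loop runs at most $\kappa=\poly(2^t/\varepsilon)$ rounds. The final state is then the normalized projection of $\ket{\psi}$ onto the span of the $\ket{\phi_i}$. The fidelity bound follows from the argument of \cite[Theorem~1.3]{arunachalam2026tomography}: the optimal $\ket{\varphi}\in\calS(n-t)$ decomposes into a component in the span and a component orthogonal to it, and the orthogonal component's overlap with the residual is below $\varepsilon$.

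\textbf{Step 3: complexity.} Each round costs $\poly(n,2^t,1/\varepsilon)$. It needs copies of the residual, which are implemented by projecting copies of $\ket{\psi}$ through the known stabilizer span with postselection or estimating overlaps. Multiplying by $\kappa$ rounds and amplifying the failure probability to $\delta/\kappa$ per round gives $\poly(n,2^t,1/\varepsilon,\log1/\delta)$ overall.

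The main obstacle is checking that the boosting framework tolerates the mismatch between the comparison class $\calS(n-t)$, which is not closed in the same way, and the stabilizer output class. Its analysis only uses that the oracle's output captures residual mass polynomial in the residual's fidelity with the comparison class. I would verify this by rereading the potential argument of \cite{arunachalam2026tomography} and confirming that no closure property of $\mathcal C$ under the residual projection is used.
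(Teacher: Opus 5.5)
Your proposal is correct and follows the paper's route. The paper proves this corollary in one line, by feeding the $\calS(n-t)$ self-correction guarantee of \Cref{cor:poly-time-T-doped} into the boosting framework of \cite{arunachalam2026tomography}; your observation that the returned stabilizer states already lie in $\calS(n-t)$, so no closure property of the comparison class is needed, is exactly why that black-box application is legitimate. One small fix: the stopping threshold on the residual's fidelity should be polynomial in $\varepsilon$, roughly $\varepsilon^2$ rather than $\varepsilon$, because the final fidelity loss is governed by the amplitude $|\langle\varphi|(I-P)\psi\rangle|$ rather than its square; this leaves the $\poly(2^t/\varepsilon)$ round count and the overall complexity unchanged.
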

Note that the previously best known result for agnostic learning of states in $\calS(n-t)$ had time complexity $\poly(n (2^t/\varepsilon)^{O(\log(1/\varepsilon))})$~\cite{chen2025stabilizer} and we are able to remove this quasipolynomial dependence on $\varepsilon$. Moreover, Corollary~\ref{cor:poly-time-T-doped} can be utilized to learn states which have bounded extent $\xi$ with respect to $\calS(n-t)$ in $\poly(n\cdot \xi/\varepsilon)$ time.

\subsubsection{Learning Clifford structure in unitaries and Hamiltonians}
We now describe some implications of the polynomial-time self-correction protocol of Corollary~\ref{cor:poly-time-SC} for agnostic tomography of Clifford unitaries and tomography of Clifford-structured objects. Note that Clifford unitaries have previously been shown to be efficiently testable~\cite{gross2021schur,hinsche2025clifford}. We briefly introduce some relevant notation. We will utilize the Choi–Jamiolkowski isomorphism and denote the Choi state of an $n$-qubit unitary $U$ as $\choiket{U} := (U \otimes I)\ket{\mathrm{EPR}_n}$ where $\ket{\mathrm{EPR}_n} = 2^{-n/2} \sum_{x \in \FF_2^n} \ket{x}\ket{x}$ is the state over $n$ many $\mathrm{EPR}$ pairs. We will say that $U$ has Clifford fidelity $\opt$ if $\max_{V \in \textsf{Clifford}} |\langle\!\langle V \choiket{U}|^2 = \opt$.

As a consequence of Corollary~\ref{cor:poly-time-SC} and Theorem~A.1 of \cite{dutt2026learning}, the following is now true regarding polynomial-time self-correction of $n$-qubit Clifford unitaries.
\begin{corollary}\label{cor:poly-time-SC-for-Cliffs}
Let $\tau,\delta \in (0,1)$. Given an unknown $n$-qubit unitary $U$ that has Clifford fidelity $\geq \tau$, there is a quantum algorithm that with probability $\geq 1-\delta$ outputs a Clifford unitary $V$ such that $|\langle\!\langle V \choiket{U}|^2 \geq \poly(\tau)$. The algorithm uses $\poly(n,1/\tau)$ query and time complexity.
\end{corollary}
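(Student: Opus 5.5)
This follows the same pattern as Corollary~\ref{cor:poly-time-SC}: plug the now-unconditional stabilizer self-correction into the black-box reduction of \cite[Theorem~A.1]{dutt2026learning}. That theorem says that a $\poly(n,1/\tau)$-time self-correction procedure for $2n$-qubit stabilizer states yields a $\poly(n,1/\tau)$-query Clifford self-correction procedure.

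The key observation is that the Choi state $\choiket{V}$ of a Clifford unitary $V$ is a $2n$-qubit stabilizer state. Indeed, $\ket{\mathrm{EPR}_n}$ is a stabilizer state and $V\otimes I$ is Clifford.

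Given query access to $U$, we prepare copies of $\choiket{U}$ with one query each. The state $\choiket{U}$ has stabilizer fidelity at least its Clifford fidelity, which is at least $\tau$. Applying Corollary~\ref{cor:poly-time-SC} with $2n$ qubits yields a stabilizer state $\ket{\phi}$ with $|\langle\phi\choiket{U}|^2\geq\tau^C$. This uses $\poly(n,1/\tau,\log(1/\delta))$ copies and time.

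The main obstacle is that $\ket{\phi}$ need not be the Choi state of a Clifford unitary. It must be maximally entangled across the $n|n$ cut, whereas an arbitrary stabilizer state may have low entanglement. This is exactly the rounding step handled in \cite[Theorem~A.1]{dutt2026learning}, and we cite it rather than reprove it.

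I would sketch the rounding as follows. Write the stabilizer group of $\ket{\phi}$, restrict to its part that pairs $X$ and $Z$ across the cut, and complete it to a Clifford $V$ using the symplectic structure. The fidelity loss is at most polynomial, because the reduced state of $\choiket{U}$ on the first $n$ qubits is maximally mixed. Since $\choiket{U}$ is maximally entangled, its overlap with a low-entanglement state is bounded by $2^{-\Omega(\text{entanglement deficit})}$. Hence the deficit is $O(\log(1/\tau))$, and filling in the missing generators costs only $\poly(\tau)$. Optionally, we can estimate the fidelity of candidate Cliffords by swap/Bell tests and keep the best one.

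Composing the polynomial bounds gives $|\langle\!\langle V\choiket{U}|^2\geq\poly(\tau)$. Success probability $1-\delta$ follows from standard repetition and verification. The total query and time complexity is $\poly(n,1/\tau)$.
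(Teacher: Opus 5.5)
Your proposal is correct and takes the same approach as the paper. The paper's argument feeds the now-unconditional stabilizer self-correction of Corollary~\ref{cor:poly-time-SC}, applied to the $2n$-qubit Choi state of $U$, into the reduction of \cite[Theorem~A.1]{dutt2026learning}; your rounding sketch (entanglement deficit $O(\log(1/\tau))$, then completion to a Clifford at a $\poly(\tau)$ loss) is not needed, since the cited theorem supplies exactly that step.
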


We can then use the above result to give polynomial-time algorithms for learning Clifford decompositions of unitaries and learning unitaries with bounded Clifford extent. We will not give the details here and refer the reader to \cite[Section~4]{dutt2026learning}. We however quickly state the following implication regarding tomography of $n$-qubit Hamiltonians which are Clifford-structured, obtained from applying Corollary~\ref{cor:poly-time-SC-for-Cliffs} to \cite[Theorem~4.4]{dutt2026learning}.
\begin{corollary}
Let $\varepsilon,\delta \in (0,1)$ and $\xi \geq 1$. Suppose $H$ is an unknown $n$-qubit Hamiltonian wwith the promise $\xi = \min\{\norm{c}_1 : H = \sum_i c_i V_i \text{ where } V_i \in \textsf{Cliffords}\}$ and satisfies $\Tr[H]=0$. Then, there is an algorithm that given query access to $\exp(-iHt)$ outputs $\widehat{H}$ expressable as a linear combination of $\poly(\xi/\varepsilon)$ many Clifford unitaries with probability $\geq 1-\delta$ such that
$$
\|\widehat{H} - H \|_{\overline 2} \leq \varepsilon,
$$
where $\norm{A}_{\overline{2}} = \sqrt{2^{-n} \Tr[A^\dagger A]}$ is normalized Frobenius norm. The algorithm uses $t = \poly(\varepsilon/\xi)$ and total time evolution of $\poly(n \cdot \xi/\varepsilon \log(1/\delta))$. The algorithm uses query and time complexity $\poly(n \cdot \xi/\varepsilon \log(1/\delta))$.
\end{corollary}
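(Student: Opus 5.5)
This corollary is a composition of three earlier results: \Cref{cor:poly-time-SC}, the boosting framework of~\cite{arunachalam2026tomography}, and the Clifford-learning reduction of~\cite{dutt2026learning}. I would verify that the interfaces match and that every parameter stays polynomial. The chain runs as follows:
\begin{itemize}
\item \Cref{cor:poly-time-SC} gives unconditional polynomial-time stabilizer self-correction.
\item Theorem~A.1 of~\cite{dutt2026learning} turns this into Clifford self-correction for Choi states, which is \Cref{cor:poly-time-SC-for-Cliffs}.
\item \cite[Theorem~4.4]{dutt2026learning} is a Hamiltonian-learning reduction. It assumes exactly such a Clifford self-correction oracle running in time $\poly(n,1/\tau)$.
\end{itemize}
So the proof amounts to instantiating Theorem~4.4 with the oracle from \Cref{cor:poly-time-SC-for-Cliffs} and tracking parameters.

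\textbf{Step 1: reduce to a unitary with Clifford extent structure.} From query access to $e^{-iHt}$ with small $t=\poly(\varepsilon/\xi)$, the reduction of~\cite{dutt2026learning} works with the unitary $U_t=e^{-iHt}$. Write $U_t=I-iHt+O(t^2\|H\|^2)$. Since $H=\sum_i c_iV_i$ with $\|c\|_1=\xi$, $U_t$ is close to a combination of Cliffords whose $\ell_1$-weight is $1+t\xi=O(1)$. The trace-free condition $\Tr[H]=0$ makes the $-iHt$ component of the Choi state orthogonal to $\choiket{I}$. Hence, after projecting out the identity component, one obtains a normalized object whose Clifford extent is $O(\xi)$. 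I would take these facts from \cite[Section~4]{dutt2026learning} rather than reprove them.

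\textbf{Step 2: boosting.} Run the extent-boosting loop of~\cite{arunachalam2026tomography} as adapted to unitaries in~\cite{dutt2026learning}. Each round does three things:
\begin{itemize}
\item call \Cref{cor:poly-time-SC-for-Cliffs} on the current residual to obtain a Clifford $V$ with fidelity $\poly(\tau)$, where $\tau\geq\poly(\varepsilon/\xi)$ follows from bounded extent and a large residual;
\item estimate the coefficient of $V$ from queries;
\item subtract it.
\end{itemize}
Bounded extent forces the residual to shrink, so $\poly(\xi/\varepsilon)$ rounds suffice. This yields $\widehat H=\sum_j \hat c_j V_j$ with $\poly(\xi/\varepsilon)$ terms and normalized Frobenius error $\varepsilon$. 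Error bookkeeping uses the usual rescaling by $1/t$: the $O(t\xi^2)$ Taylor error, divided by $t$, must be at most $\varepsilon$. This is what fixes $t=\poly(\varepsilon/\xi)$, and it implies total evolution time $\poly(n,\xi/\varepsilon,\log(1/\delta))$.

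\textbf{Main obstacle: implementing self-correction on residuals.} The residual is not a Choi state directly available to us. The step I expect to be delicate is showing that the self-correction oracle can still be run on it. Concretely, Bell sampling, or the approximately uniform sampling needed by \Cref{thm:approxmain}, must be implementable on the residual with only polynomial overhead. I would handle this exactly as~\cite{dutt2026learning} does, by running the oracle on controlled or projected versions of the queried unitary. The approximate-uniform tolerance of \Cref{thm:approxmain} is what makes the resulting non-flat sampling distributions acceptable. Failure probabilities across the $\poly(\xi/\varepsilon)$ rounds are combined with a union bound after setting per-call confidence $\delta/\poly(\xi/\varepsilon)$. This gives the stated $\log(1/\delta)$ dependence.
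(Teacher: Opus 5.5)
Your plan matches the paper, which obtains this corollary in one line by plugging the polynomial-time Clifford self-correction of \Cref{cor:poly-time-SC-for-Cliffs} (derived from \Cref{cor:poly-time-SC} via Theorem~A.1 of \cite{dutt2026learning}) into \cite[Theorem~4.4]{dutt2026learning}. The internal mechanics you sketch are all deferred to that citation, so their looseness is harmless, but two corrections apply: the normalized trace-free part has extent up to $\xi/\|H\|_{\overline 2}$ rather than $O(\xi)$, which is fine because the relative accuracy required is $\varepsilon/\|H\|_{\overline 2}$ and so only the ratio $\xi/\varepsilon$ enters; and approximate uniformity is used inside stabilizer self-correction (the BSG step of \cite{SCforstabilizers}), not in handling boosting residuals.
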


\bibliographystyle{alpha}
\bibliography{refs}

\newcommand{\etalchar}[1]{$^{#1}$}
\begin{thebibliography}{CFMdW10}

\bibitem[ACDG26]{arunachalam2026classical}
Srinivasan Arunachalam, Davi {Castro-Silva}, Arkopal Dutt, and Tom Gur.
\newblock Classical and {Q}uantum {P}olynomial {F}reiman-{R}uzsa {A}lgorithms.
\newblock In {\em 17th Innovations in Theoretical Computer Science Conference (ITCS 2026)}, pages 11--1. Schloss Dagstuhl--Leibniz-Zentrum f{\"u}r Informatik, 2026.

\bibitem[AD25]{ad2024tolerant}
Srinivasan Arunachalam and Arkopal Dutt.
\newblock Polynomial-time tolerant testing stabilizer states.
\newblock In {\em Proceedings of the 57th Annual ACM Symposium on Theory of Computing}, STOC '25, page 1234–1241. Association for Computing Machinery, 2025.

\bibitem[AD26a]{SCforstabilizers}
Srinivasan Arunachalam and Arkopal Dutt.
\newblock Learning stabilizer structure of quantum states.
\newblock In {\em Proceedings of the 58th Annual ACM Symposium on Theory of Computing}, STOC '26, page 1949–1959, New York, NY, USA, 2026. Association for Computing Machinery.

\bibitem[AD26b]{arunachalam2026tomography}
Srinivasan Arunachalam and Arkopal Dutt.
\newblock Tomography of quantum states with bounded extent.
\newblock {\em arXiv preprint arXiv:2606.07425}, 2026.

\bibitem[ADL14]{aggarwal2014non}
Divesh Aggarwal, Yevgeniy Dodis, and Shachar Lovett.
\newblock Non-malleable codes from additive combinatorics.
\newblock In {\em Proceedings of the Forty-Sixth Annual ACM Symposium on Theory of Computing}, STOC '14, page 774–783. Association for Computing Machinery, 2014.

\bibitem[AGG{\etalchar{+}}24]{asadi2024quantum}
Vahid~R Asadi, Alexander Golovnev, Tom Gur, Igor Shinkar, and Sathyawageeswar Subramanian.
\newblock Quantum worst-case to average-case reductions for all linear problems.
\newblock In {\em Proceedings of the 2024 Annual ACM-SIAM Symposium on Discrete Algorithms (SODA)}, pages 2535--2567. SIAM, 2024.

\bibitem[AGGS22]{asadi2022worst}
Vahid~R. Asadi, Alexander Golovnev, Tom Gur, and Igor Shinkar.
\newblock Worst-case to average-case reductions via additive combinatorics.
\newblock In {\em Proceedings of the 54th Annual ACM SIGACT Symposium on Theory of Computing}, STOC 2022, page 1566–1574. Association for Computing Machinery, 2022.

\bibitem[BBC{\etalchar{+}}19]{bravyi2019simulation}
Sergey Bravyi, Dan Browne, Padraic Calpin, Earl Campbell, David Gosset, and Mark Howard.
\newblock Simulation of quantum circuits by low-rank stabilizer decompositions.
\newblock {\em Quantum}, 3:181, 2019.

\bibitem[BC26]{briet2026near}
Jop Bri{\"e}t and Davi {Castro-Silva}.
\newblock A near-optimal quadratic {G}oldreich-{L}evin algorithm.
\newblock In {\em Proceedings of the 2026 Annual ACM-SIAM Symposium on Discrete Algorithms (SODA)}, pages 6233--6239. SIAM, 2026.

\bibitem[BDL13]{bhowmick2013new}
Abhishek Bhowmick, Zeev Dvir, and Shachar Lovett.
\newblock New bounds for matching vector families.
\newblock In {\em Proceedings of the Forty-Fifth Annual ACM Symposium on Theory of Computing}, STOC '13, page 823–832. Association for Computing Machinery, 2013.

\bibitem[BFF{\etalchar{+}}01]{batu2001testing}
Tugkan Batu, Eldar Fischer, Lance Fortnow, Ravi Kumar, Ronitt Rubinfeld, and Patrick White.
\newblock Testing random variables for independence and identity.
\newblock In {\em Proceedings 42nd IEEE Symposium on Foundations of Computer Science}, pages 442--451. IEEE, 2001.

\bibitem[BLR14]{ben2014additive}
Eli {Ben-Sasson}, Shachar Lovett, and Noga {Ron-Zewi}.
\newblock An additive combinatorics approach relating rank to communication complexity.
\newblock {\em Journal of the ACM (JACM)}, 61(4):1--18, 2014.

\bibitem[BNOZ25]{bedert2025strong}
Benjamin Bedert, Tamio-Vesa Nakajima, Karolina Okrasa, and Stanislav Zivn\'{y}.
\newblock Strong sparsification for 1-in-3-sat via {P}olynomial {F}reiman-{R}uzsa.
\newblock In {\em 2025 IEEE 66th Annual Symposium on Foundations of Computer Science (FOCS)}, pages 2470--2479, 2025.

\bibitem[BSS16]{bravyi2016trading}
Sergey Bravyi, Graeme Smith, and John~A Smolin.
\newblock Trading classical and quantum computational resources.
\newblock {\em Physical Review X}, 6(2):021043, 2016.

\bibitem[BvDH25]{bao2025tolerant}
Zongbo Bao, Philippe van Dordrecht, and Jonas Helsen.
\newblock Tolerant testing of stabilizer states with a polynomial gap via a generalized uncertainty relation.
\newblock In {\em Proceedings of the 57th Annual ACM Symposium on Theory of Computing}, STOC '25, page 1254–1262. Association for Computing Machinery, 2025.

\bibitem[CBA{\etalchar{+}}26]{castro2026algorithmic}
Davi {Castro-Silva}, Jop Bri{\"e}t, Srinivasan Arunachalam, Arkopal Dutt, and Tom Gur.
\newblock An algorithmic {P}olynomial {F}reiman-{R}uzsa theorem.
\newblock {\em arXiv preprint arXiv:2604.04547}, 2026.

\bibitem[CFMdW10]{chakraborty2010newresultsquantumproperty}
Sourav Chakraborty, Eldar Fischer, Arie Matsliah, and Ronald de~Wolf.
\newblock New results on quantum property testing, 2010.

\bibitem[CGYZ25]{chen2025stabilizer}
Sitan Chen, Weiyuan Gong, Qi~Ye, and Zhihan Zhang.
\newblock Stabilizer bootstrapping: {A} recipe for efficient agnostic tomography and magic estimation.
\newblock In {\em Proceedings of the 57th Annual ACM Symposium on Theory of Computing}, STOC '25, page 429–438, New York, NY, USA, 2025. Association for Computing Machinery.

\bibitem[DJJ{\etalchar{+}}26]{dutt2026learning}
Arkopal Dutt, Dale Jacobs, John Jeang, Saeed Mehraban, and Vladimir Podolskii.
\newblock Learning clifford-structured quantum unitaries and hamiltonians.
\newblock {\em arXiv preprint arXiv:2608.09912}, 2026.

\bibitem[Fel09]{feldman2009distribution}
Vitaly Feldman.
\newblock Distribution-specific agnostic boosting.
\newblock {\em arXiv preprint arXiv:0909.2927}, 2009.

\bibitem[GGMT24]{ggmt2024torsion}
W~Timothy Gowers, Ben Green, Freddie Manners, and Terence Tao.
\newblock Marton's {C}onjecture in abelian groups with bounded torsion.
\newblock {\em arXiv preprint arXiv:2404.02244}, 2024.

\bibitem[GGMT25]{ggmt2025conjecture}
William~Timothy Gowers, Ben Green, Freddie Manners, and Terence Tao.
\newblock On a conjecture of marton.
\newblock {\em Annals of Mathematics}, 201(2):515--549, 2025.

\bibitem[GIKL24]{grewal2023improved}
Sabee Grewal, Vishnu Iyer, William Kretschmer, and Daniel Liang.
\newblock Improved stabilizer estimation via {B}ell difference sampling.
\newblock In {\em Proceedings of the 56th Annual ACM Symposium on Theory of Computing}, STOC 2024, page 1352–1363, New York, NY, USA, 2024. Association for Computing Machinery.

\bibitem[GIKL25]{grewal2023efficient}
Sabee Grewal, Vishnu Iyer, William Kretschmer, and Daniel Liang.
\newblock Efficient {L}earning of {Q}uantum {S}tates {P}repared {W}ith {F}ew {N}on-{C}lifford {G}ates.
\newblock {\em {Quantum}}, 9:1907, November 2025.

\bibitem[GIKL26]{Grewal2026agnostictomography}
Sabee Grewal, Vishnu Iyer, William Kretschmer, and Daniel Liang.
\newblock Agnostic {T}omography of {S}tabilizer {P}roduct {S}tates.
\newblock {\em {Quantum}}, 10:2027, March 2026.

\bibitem[GL89]{goldreich1989hard}
Oded Goldreich and Leonid~A Levin.
\newblock A hard-core predicate for all one-way functions.
\newblock In {\em Proceedings of the twenty-first annual ACM symposium on Theory of computing}, pages 25--32, 1989.

\bibitem[GNW21]{gross2021schur}
David Gross, Sepehr Nezami, and Michael Walter.
\newblock Schur--{W}eyl duality for the {C}lifford group with applications: {P}roperty testing, a robust {H}udson theorem, and de {F}inetti representations.
\newblock {\em Communications in Mathematical Physics}, 385(3):1325--1393, 2021.

\bibitem[GT08]{green2008inverse}
Ben Green and Terence Tao.
\newblock An inverse theorem for the gowers norm.
\newblock {\em Proceedings of the Edinburgh Mathematical Society}, 51(1):73--153, 2008.

\bibitem[GT09]{green2009freiman}
Ben Green and Terence Tao.
\newblock {Freiman}'s theorem in finite fields via extremal set theory.
\newblock {\em Combinatorics, Probability and Computing}, 18(3):335--355, 2009.

\bibitem[HBvD{\etalchar{+}}26]{hinsche2025clifford}
Marcel Hinsche, Zongbo Bao, Philippe van Dordrecht, Jens Eisert, Jop Bri\"{e}t, and Jonas Helsen.
\newblock Clifford testing: Algorithms and lower bounds.
\newblock In {\em Proceedings of the 58th Annual ACM Symposium on Theory of Computing}, STOC '26, page 869–873, New York, NY, USA, 2026. Association for Computing Machinery.

\bibitem[HILL99]{HILL99}
Johan H{\aa}stad, Russell Impagliazzo, Leonid~A. Levin, and Michael Luby.
\newblock A pseudorandom generator from any one-way function.
\newblock {\em SIAM Journal on Computing}, 28(4):1364--1396, 1999.

\bibitem[KM93]{KM93}
Eyal Kushilevitz and Yishay Mansour.
\newblock Learning decision trees using the fourier spectrum.
\newblock {\em SIAM Journal on Computing}, 22(6):1331--1348, 1993.

\bibitem[KSS92]{kearns1992toward}
Michael~J Kearns, Robert~E Schapire, and Linda~M Sellie.
\newblock Toward efficient agnostic learning.
\newblock In {\em Proceedings of the fifth annual workshop on Computational learning theory}, pages 341--352, 1992.

\bibitem[LOH24]{Leone2024learningtdoped}
Lorenzo Leone, Salvatore F.~E. Oliviero, and Alioscia Hamma.
\newblock Learning t-doped stabilizer states.
\newblock {\em {Quantum}}, 8:1361, May 2024.

\bibitem[LS93]{lovaszsimonovits1993}
L{\'a}szl{\'o} Lov{\'a}sz and Mikl{\'o}s Simonovits.
\newblock Random walks in a convex body and an improved volume algorithm.
\newblock {\em Random Structures \& Algorithms}, 4(4):359--412, 1993.

\bibitem[Mon17]{montanaro-bell-sampling}
Ashley Montanaro.
\newblock {Learning stabilizer states by Bell sampling}, 2017.

\bibitem[MT25]{mehraban2024improved}
Saeed Mehraban and Mehrdad Tahmasbi.
\newblock Improved bounds for testing low stabilizer complexity states.
\newblock In {\em Proceedings of the 57th Annual ACM Symposium on Theory of Computing}, STOC '25, page 1222–1233. Association for Computing Machinery, 2025.

\bibitem[Ruz99]{ruzsa1999analog}
Imre Ruzsa.
\newblock An analog of freiman's theorem in groups.
\newblock {\em Ast{\'e}risque}, 258(199):323--326, 1999.

\bibitem[Sam07]{samorodnitsky2007low}
Alex Samorodnitsky.
\newblock Low-degree tests at large distances.
\newblock In {\em Proceedings of the Thirty-Ninth Annual ACM Symposium on Theory of Computing}, STOC '07, page 506–515. Association for Computing Machinery, 2007.

\bibitem[Tao10]{tao2010sumset}
Terence Tao.
\newblock Sumset and inverse sumset theory for shannon entropy.
\newblock {\em Combinatorics, Probability and Computing}, 19(4):603--639, 2010.

\bibitem[Tre04]{TrevisanCoding}
Luca Trevisan.
\newblock Some applications of coding theory in computational complexity.
\newblock {\em Quaderni di Matematica}, 13:347--424, 2004.

\bibitem[TW14]{tulsiani2014quadratic}
Madhur Tulsiani and Julia Wolf.
\newblock Quadratic goldreich--levin theorems.
\newblock {\em SIAM Journal on Computing}, 43(2):730--766, 2014.

\bibitem[ZB11]{zewi2011affine}
Noga Zewi and Eli {Ben-Sasson}.
\newblock From affine to two-source extractors via approximate duality.
\newblock In {\em Proceedings of the Forty-Third Annual ACM Symposium on Theory of Computing}, STOC '11, page 177–186. Association for Computing Machinery, 2011.

\end{thebibliography}

\end{document}